\newcounter{denseversion}
\newcounter{comments}
\newcounter{authorcounter}
\newcounter{adresscounter}
\def\title#1{\gdef\@title{#1}}
\def\@title{}
\def\subtitle#1{\gdef\@subtitle{#1}}
\def\@subtitle{}
\def\authortagsused{0}
\def\adresstag#1{\if!#1!\else$^{\;#1\;}$\fi}
\def\@authorsep#1{
  \ifnum\value{authorcounter}=#1 and \else\unskip, \fi
}
\renewcommand{\author}[2][]{
  \stepcounter{authorcounter}
  \if!#1!\else\gdef\authortagsused{1}\fi
  \ifnum\value{authorcounter}=1
    \def\@authorstringa{#2\adresstag{#1}}
    \def\@authorstringb{#2}
    \def\@authorstringc{#2\adresstag{#1}}
  \else
    \ifnum\value{authorcounter}=2
      \g@addto@macro\@authorstringa{\@authorsep{2}#2\adresstag{#1}}
      \g@addto@macro\@authorstringb{\@authorsep{2}#2}
      \g@addto@macro\@authorstringc{\\#2\adresstag{#1}}
    \else
      \g@addto@macro\@authorstringa{\@authorsep{3}#2\adresstag{#1}}
      \g@addto@macro\@authorstringb{\@authorsep{3}#2}
      \g@addto@macro\@authorstringc{\\#2\adresstag{#1}}
    \fi
  \fi}
\def\@author{\ifnum\value{denseversion}=0\@authorstringa\else\@authorstringb\fi}
\def\@adressstringa{}
\def\@adressstringb{}
\newcommand{\adress}[2][]{
  \stepcounter{adresscounter}
  \ifnum\value{adresscounter}=1
    \g@addto@macro\@adressstringa{\ifnum\authortagsused=0\def\br{\\}\else\def\br{, }\fi\adresstag{#1}#2}
    \g@addto@macro\@adressstringb{\def\br{\\}\adresstag{#1}\parbox[t]{14cm}{#2}}
  \else
    \g@addto@macro\@adressstringa{\\[\bigskipamount]\adresstag{#1}#2}
    \g@addto@macro\@adressstringb{\\[\medskipamount]\adresstag{#1}\parbox[t]{14cm}{#2}}
  \fi}
\def\preprint#1{\gdef\@preprint{#1}}
\def\@preprint{}
\def\keywords#1{\gdef\@keywords{#1}}
\def\@keywords{}
\def\msc#1{\gdef\@msc{#1}}
\def\@msc{}
\def\email#1{
   \gdef\@email{#1}
   \g@addto@macro\@authorstringc{ {\it (#1)}}}
\def\@email{}
\def\dedication#1{\gdef\@dedication{#1}}
\def\@dedication{}
\def\mybaselinestretch#1{
  \gdef\@mybaselinestretch{#1}
  \renewcommand{\baselinestretch}{\@mybaselinestretch}}
\def\myparskip#1{
  \gdef\@myparskip{#1}
  %\setenumstandard
  \setlength{\parskip}{\@myparskip}}
\newlength{\@listleftmargin}
\def\setenumstandard{
  \setlist{leftmargin=\@listleftmargin,itemsep=0pt,topsep=0pt,partopsep=0pt,parsep=\@myparskip}
  \setlist[enumerate]{align=left,labelsep=*,leftmargin=\@listleftmargin,itemsep=0pt,topsep=0pt,partopsep=0pt,parsep=\@myparskip}
}
\def\denseversion{
  \setcounter{denseversion}{1}
  \newgeometry{left=3cm,right=3cm,top=3cm}
  \mybaselinestretch{1.1}
  \myparskip{0.8ex}
  \normalfont
  \def\possiblelinebreak{}
  \fancyfoot[C]{\itshape{--$\,\,$\thepage$\,\,$--}}}
\def\possiblelinebreak{\\}
\renewcommand{\emph}[1]{\def\reserved@a{it}\ifx\f@shape\reserved@a\ul{#1}\else\textit{#1}\fi}
\def\setcrefnames{}
\newcommand{\mytableofcontents}{
   \ifnum\value{denseversion}=0
     \tableofcontents
     \setcrefnames %% \tableofcontents macht die \crefnames kaputt
   \else
     \renewcommand{\baselinestretch}{1.1}
     \setlength{\parskip}{0ex}
     \normalfont
     \begingroup
     \def\addvspace##1{\vskip0.4em}
     \tableofcontents
     \setcrefnames %% \tableofcontents macht die \crefnames kaputt
     \endgroup
     \renewcommand{\baselinestretch}{\@mybaselinestretch}
     \setlength{\parskip}{\@myparskip}
     \normalfont
   \fi}
\newlength{\zeilenlaenge}
\def\putindent#1{
  \settowidth{\zeilenlaenge}{#1}
  \ifnum\zeilenlaenge>\textwidth
    #1
  \else
    \noindent #1
  \fi
}
\def\pdfdaten{
  \hypersetup{
    pdftitle = {\@title},
    pdfauthor = {\@author},
    pdfkeywords = {\@keywords},    
    bookmarksopen = true,
    bookmarksopenlevel = 1
  }}  
\def\showkeywords{\begin{flushleft}\footnotesize\textbf{Keywords}: \@keywords\end{flushleft}}
\def\showmsc{\begin{flushleft}\footnotesize\textbf{MSC 2010}: \@msc\end{flushleft}}
\def\tocsection#1{\section*{#1}\addcontentsline{toc}{section}{#1}}
\def\mytitle{}
\def\zmptitle{
  \begin{tabular}{cc}
    \begin{minipage}[c]{0.4\textwidth}
      \begin{flushleft}
        \includegraphics[width=110pt]{../../tex/zmp}
      \end{flushleft}  
    \end{minipage}&
    \begin{minipage}[c]{0.55\textwidth}
      \begin{flushright}
      {\small\sf\@preprint}
      \end{flushright}
    \end{minipage}
  \end{tabular}
  \vskip 2cm}
\def\maketitle{
  \pdfdaten
  \noindent
  \mytitle
  \begin{center}
    \LARGE\@title\\
    \if!\@subtitle!\else\smallskip\LARGE\@subtitle\\\fi
    \bigskip
    \if!\@author!\else\bigskip\large\@author\\\fi
    \ifnum\value{denseversion}=0
      \if!\@adressstringa!\else\bigskip\normalsize\@adressstringa\\\fi
      \if!\@email!\else\ifnum\value{authorcounter}=1\bigskip\normalsize\textit{\@email}\\\else\fi\fi
    \else
    \fi
    \if!\@dedication!\else\bigskip\normalsize{\@dedication}\\\fi
  \end{center}
  \ifnum\value{denseversion}=0\vskip 1.5cm\else\vskip0.5cm\fi}
\def\kobib#1{
  \begin{raggedright}
  \ifnum\value{denseversion}=0\else\small\fi
  \Oldbibliography{#1/kobib}
  \bibliographystyle{#1/kobib}
  \end{raggedright}
  \ifnum\value{denseversion}=0\else
      \noindent
      \if!\@authorstringc!\else
        \ifnum\authortagsused=0\ifnum\value{authorcounter}>1\normalsize\@authorstringc\\[\medskipamount]\else\fi\else\normalsize\@authorstringc\\[\medskipamount]\fi
      \fi
      \if!\@adressstringb!\else\normalsize\@adressstringb\\{}\fi
      \ifnum\authortagsused=0
        \ifnum\value{authorcounter}=1
          \if!\@email!\else\linebreak\normalsize\textit{\@email}\\{}\fi
        \else
        \fi
      \else
      \fi
  \fi}
\let\Oldbibliography\bibliography
\def\bibliography#1{
  \begin{raggedright}
  \ifnum\value{denseversion}=0\else\small\fi
  \Oldbibliography{#1}
  \end{raggedright}
  \ifnum\value{denseversion}=0\else
      \medskip
      \noindent
      \if!\@authorstringc!\else
        \ifnum\authortagsused=0\ifnum\value{authorcounter}>1\normalsize\@authorstringc\\[\medskipamount]\else\fi\else\normalsize\@authorstringc\\[\medskipamount]\fi
      \fi
      \if!\@adressstringb!\else\normalsize\@adressstringb\\{}\fi
      \ifnum\authortagsused=0
        \ifnum\value{authorcounter}=1
          \if!\@email!\else\linebreak\normalsize\textit{\@email}\\{}\fi
        \else
        \fi
      \else
      \fi
  \fi
}
\newenvironment{commentfigure}{\begin{comment}}{\end{comment}}
\newenvironment{sidewayscommentfigure}{\begin{minipage}}{\end{minipage}}
\newenvironment{displaycomment}{\begin{list}{}{\rightmargin=1cm\leftmargin=1cm}\item\sf\begin{small}}{\end{small}\end{list}}
\def\tocmark#1{}
\def\draftstamp#1{
  \def\tocmark##1{
    \ifnum\c@secnumdepth=0\section{##1}\fi
    \ifnum\c@secnumdepth=1\subsection{##1}\fi
    \ifnum\c@secnumdepth=2\subsubsection{##1}\fi
    \ifnum\c@secnumdepth=3\subsubsection{##1}\fi
  }
  \ifnum\value{comments}=0
    \gdef\@draft{DRAFT - Edited on \today\ by #1 - Comments are not displayed}
  \else
    \gdef\@draft{DRAFT - Edited on \today\ by #1 - Comments are displayed}
  \fi
  \fancyhead[C]{\footnotesize\tt\textcolor{red}{\@draft}}}
\def\skript{
  \renewenvironment{displaycomment}{}{}
  \ifnum\value{comments}=0
    \renewenvironment{example*}{\comment}{\endcomment}
    \renewenvironment{remark*}{\comment}{\endcomment}
  \else\fi
  \parindent=0mm	
}
\newcounter{sectioning}
\def\tsection#1{\ifnum\value{sectioning}=0\section{#1}\fi}
\def\lsection#1{
  \ifnum\value{sectioning}=1
    \clearpage
    \def\thesection{\lektionname~\arabic{section}:}
    \section{#1}
    \def\thesection{\arabic{section}}
  \fi}
\def\tsubsection#1{\ifnum\value{sectioning}=0\subsection{#1}\fi}
\def\lsubsection#1{\ifnum\value{sectioning}=1\subsection{#1}\fi}
\def\tsubsubsection#1{\ifnum\value{sectioning}=0\subsubsection{#1}\fi}
\def\lsubsubsection#1{\ifnum\value{sectioning}=1\subsubsection{#1}\fi}
\def\Z {\mathbb{Z}}
\def\R {\mathbb{R}}
\def\C {\mathbb{C}}
\def\id{\mathrm{id}}
\def\hc#1{\mathrm{h}_{#1}}
\def\h {\mathrm{H}}
\def\subset{\subseteq}
\def\sep{\;|\;}
\def\rk{\mathrm{rk}}
\def\maps{\colon}
\def\df{:=}
\renewcommand{\varepsilon}{\epsilon}
\renewenvironment{proof}[1][\nameProof]
  {\par\pushQED{\qed}%
   \normalfont \topsep6\p@\@plus6\p@\relax
   \trivlist
   \item[\hskip\labelsep
         \itshape
         #1\@addpunct{.}]
  \leavevmode}
  {\popQED\endtrivlist\@endpefalse}
\def\notebox#1#2{\begin{minipage}[b]{#1}\sloppy\renewcommand{\baselinestretch}{0.8}\footnotesize \begin{center}#2\end{center}\end{minipage}}
\def\mquad{\hspace{-2em}}
\def\mqquad{\hspace{-4em}}
\def\mqqquad{\hspace{-6em}}
\def\mqqqquad{\hspace{-8em}}
\newcommand{\arr}[1][r]{\ar@<0.7ex>[#1]\ar@<-0.7ex>[#1]}
\newcommand{\arrr}[1][r]{\ar@<1.4ex>[#1]\ar[#1]\ar@<-1.4ex>[#1]}
\newcommand{\arrrr}[1][r]{\ar@<2.1ex>[#1]\ar@<-2.1ex>[#1]\ar@<0.7ex>[#1]\ar@<-0.7ex>[#1]}
\def\stackref#1#2{\stackrel{\text{\ref{#1}}}{#2}}
\def\eqref#1{\stackref{#1}{=}}
\newlength{\myeqt} % Darin wird die Länge des übergebenen Textes abgespeichert
\newlength{\myeqs} % Darin wird die Länge des übergebenen Symbolds abgespeichert
\newlength{\myeqm} % Wieviel "klein" über die Breite des Symbolds hinausgehen darf
\newlength{\myeqn} % Die Standardbreite für große Boxen
\newcommand\eqtext[2][\myeqn]{\symtext[#1]{#2}{=}}
\newcommand\symtext[3][\myeqn]{
  \settowidth{\myeqt}{#2}
  \settowidth{\myeqs}{$#3$}
  \addtolength{\myeqs}{\the\myeqm}
  \ifdim\myeqt>\myeqs
    %groß
    \stackrel{\hspace{-#1}\notebox{#1}{\medskip #2 \\ $\downarrow$\smallskip}\hspace{-#1}}{#3}
  \else
    %klein
    \stackrel{\text{#2}}{#3}
  \fi}
\newcommand\eqcref[2][\myeqn]{\symcref[#1]{#2}{=}}
\newcommand\symcref[3][\myeqn]{\symtext[#1]{\cref{#2}}{#3}}
\def\brackets#1{\IfStrEq{#1}{-}{}{(#1)}}
\def\subindex#1{\IfStrEq{#1}{-}{}{_{#1}}}
\newcommand{\alxydim}[2]{\begin{aligned}\xymatrix#1{#2}\end{aligned}}
\def\bigset#1#2{\left\lbrace\;\begin{minipage}[c]{#1}\begin{center}#2\end{center}\end{minipage}\;\right\rbrace}
\newlength{\myl}
\def\ddt#1#2#3{\left.\frac{\mathrm{d}^{\IfStrEq{#1}{1}{}{#1}}}{\mathrm{d}#2}\IfStrEq{#2}{#3}{\right.}{\right|_{#3}}}
\newcommand{\ueins}{{\mathrm{U}}(1)}
\def\fun{\mathcal{F}un}
\def\hom{\mathcal{H}\!om}
\def\trivtech{\mathcal{T}\hspace{-0.34em}r\hspace{-0.06em}i\hspace{-0.07em}v}
\def\triv#1{\trivtech\brackets{#1}}
\def\vect#1{\mathcal{V}\hspace{-0.1em}ect\subindex{#1}}
\def\ev{\mathrm{ev}}
\def\pr{{\mathrm{pr}}}
\newlength{\widthtmp}
\def\length#1{\settowidth{\widthtmp}{#1}\the\widthtmp}
\def\ttimes#1#2{\hspace{-0.15em}\tensor[_{#1}]{\times}{_{#2}}}
\def\buntech#1#2{\mathcal{B}\hspace{-0.01em}un_{\hspace{0.05em}#1}^{#2}}
\def\buncon#1#2{\buntech{#1}{\nabla}\hspace{-0.05em}\brackets{#2}}
\def\ubuncon#1{\buncon\relax{#1}}
\def\grbtech#1{\mathcal{G}\hspace{-0.06em}r\hspace{-0.06em}b_{\hspace{-0.07em}{#1}}}
\def\grbcon#1#2{\grbtech{#1}^{\nabla\!}\brackets{#2}}
\def\ugrbcon#1{\grbcon\relax{#1}}
\def\quot#1{``#1''}
\def\quand{\quad\text{ and }\quad}
\def\quomma{\quad\text{, }\quad}
\def\quith{\quad\text{ with }\quad}
\def\nameProof{Proof}
\title{Transgression of D-branes}
\author[a]{Severin Bunk}
\email{severin.bunk@uni-hamburg.de}
\author[b]{Konrad Waldorf}
\email{konrad.waldorf@uni-greifswald.de}
\keywords{}
\def\HVect{\mathcal{H}\vect}
\def\HVbuncon#1{\mathcal{HV}\ubuncon#1}
\def\2hom{2\text{-}\hom}
\def\Des{\Delta}
\def\inf#1{\EuScript{#1}}
\def\pcomp{\star}
\def\lsg{\smash{\mathrm{LBG}}}
\def\tsg{\mathrm{TBG}}
\def\cp{\mathsf{c}}
\def\kfrob{\smash{\mathrm{K}\text{-}\mathrm{Frob}^{(I)}}}
\def\rpkfrob{\smash{\mathrm{RPK}\text{-}\mathrm{Frob}^{(I)}}}
\def\tqft{\mathcal{F}}
\def\tqftlift{\widetilde{\tqft}}
\def\rev{\mathsf{rev}}
\def\rot{\mathsf{rot}}
\begin{document}

%\draftstamp{1.0}{21.3.2013}{Konrad}

{
\raggedleft
\vspace*{-5em}
\sf\small
Hamburger Beiträge zur Mathematik 746
\\
ZMP-HH/18-16
\\
}
\vspace{3em}

\maketitle 

\begin{abstract}
Closed strings can be seen either as one-dimensional objects in a target space or as points in the free loop space. Correspondingly, a B-field can be seen either as a connection on a gerbe over the target space, or as  a connection on a line bundle over the loop space. Transgression establishes an equivalence between these two perspectives. Open strings  require D-branes: submanifolds equipped with  vector bundles twisted by the gerbe. In this paper we develop a loop space perspective on D-branes. It involves  bundles of simple Frobenius algebras over the branes, together with  bundles of bimodules over spaces of paths connecting  two branes. We prove that the classical and our new perspectives on D-branes are equivalent.  Further, we compare our loop space perspective to Moore-Segal/Lauda-Pfeiffer data for  open-closed 2-dimensional topological quantum field theories, and exhibit it as a smooth family of reflection-positive, colored knowledgable Frobenius algebras.
%\showkeywords
%\showmsc
\end{abstract}

\setcounter{tocdepth}{2}
\mytableofcontents

\setsecnumdepth{1}

\section{Introduction}

\label{sec:intro}

The motivation for this article comes from the Lagrangian approach to  2-dimensional field theories,  where we consider smooth maps $\phi:\Sigma \to M$ from a Riemann surface $\Sigma$ to a smooth manifold $M$.
If the surface $\Sigma$ has a boundary, then we specify a family of submanifolds $Q_i\subset M$ (the \quot{D-branes}) and require that $\phi$ map each boundary component to one of these \cite{polchinski}. 

The usual sigma model action functional on the space of maps $\phi$ requires a metric $g$  and a hermitian line bundle gerbe $\mathcal{G}$ with connection over  $M$, the \quot{B-field} \cite{alvarez1,gawedzki3,murray}. In the case that $\Sigma$ has boundaries, the D-branes $Q_i$  have to be equipped with $\mathcal{G}|_{Q_i}$-twisted vector bundles $\mathcal{E}_i$ with connections, known as \quot{twisted Chan-Paton gauge fields} in string theory, \cite{kapustin1,gawedzki1,carey2}. The twisted vector bundles make up the famous relation between D-branes and twisted K-theory \cite{Freed1999,Freed}.

In Wess-Zumino-Witten (WZW) models, for instance, $M$ is a compact Lie group $G$ and the branes  $Q_i$ are conjugacy classes of $G$ \cite{Alekseev1999}. The metric is induced by the Killing form, and the bundle gerbe $\mathcal{G}$ represents the \quot{level} via its Dixmier-Douady class $[\mathcal{G}] \in \h^3(G,\Z)\cong \Z$. The bundle gerbe $\mathcal{G}$ and the twisted vector bundles can be constructed explicitly in a Lie-theoretic way, for all compact simple Lie groups \cite{gawedzki1,meinrenken1,gawedzki2,gawedzki4}.

Especially in the discussion of WZW models it is common to consider, instead of the bundle gerbe $\mathcal{G}$, a hermitian line bundle  with connection over the loop group $LG$. One advantage of this  perspective is, for example, that one can (at least at an informal level) study the geometric quantization of the model. In fact, every smooth manifold $M$ has  a corresponding \emph{transgression} map to its free loop space $LM=C^{\infty}(S^1,M)$ \cite{gawedzki3,brylinski1,waldorf5}
\begin{equation*}
\bigset{13em}{Hermitian line bundle gerbes with connection over $M$} 
\to
\bigset{12em}{Hermitian line bundles with connection over $LM$}\text{.}
\end{equation*}
In good cases, for instance in WZW models with simply-connected target groups, this map induces an equivalence of the corresponding categories, so that the loop space perspective is  equivalent to the original setting.

In cases of more general topology, transgression is neither injective nor surjective, but the equivalence can be re-established by changing the range of transgression \cite{waldorf10}. Two changes are necessary: the first change is to add \emph{fusion products}, a structure defined for a line bundle $\inf L$ over the loop space $LM$ \cite{stolz3,waldorf10}. It consists of fiber-wise isomorphisms
\begin{equation*}
\inf L|_{\gamma_1\cup\gamma_2} \otimes \inf L|_{\gamma_2\cup\gamma_3} \to \inf L|_{\gamma_1 \cup\gamma_3}\text{,}
\end{equation*}
where $(\gamma_1,\gamma_2,\gamma_3)$ is a triple of paths in $M$ with a common initial point and a common end-point, and $\gamma_i \cup \gamma_j$ denotes the loop that first follows $\gamma_i$ and then returns along  the reverse of $\gamma_j$.
The second change is to impose various conditions on the connection on $\inf L$, most importantly the condition that it is \emph{superficial}. This is a property that can be considered for connections on bundles over mapping spaces $C^{\infty}(K,M)$, where  $K$ is a compact manifold (here, $K=S^1$ or $K=[0,1]$ later). If $\Gamma:[0,1] \to C^{\infty}(K,M)$  is a smooth path in such  a mapping space, then there is a corresponding \quot{adjoint} map $\Gamma^{\vee}: [0,1] \times K \to M$, and superficiality requires certain conditions for the parallel transport along $\Gamma$ depending on the rank of the differential of $\Gamma^{\vee}$. We refer to \cite[Sec. 2]{waldorf10} or \cref{sec:superficial} for more details. Taking fusion and superficial connections into account, transgression provides a complete loop space perspective to connections on bundle gerbes (\quot{B-fields}) for arbitrary smooth manifolds $M$.

In this article we establish an analogous loop space perspective to D-branes and their twisted Chan-Paton gauge fields. To that end, we provide a transgression map for  $\mathcal{G}|_{Q_i}$-twisted vector bundles $\mathcal{E}_i$ supported on submanifolds $Q_i$ and prove that it establishes an equivalence between the appropriate categories. Due to the twist represented by the bundle gerbe $\mathcal{G}$, transgression of D-branes will be defined relative to the above-mentioned  transgression of bundle gerbes.

First results in this direction have been obtained by Brylinski \cite{brylinski1}, Gaw\c edzki-Reis \cite{gawedzki1} and Gaw\c edzki \cite{gawedzki4}.  
In the following, we say that a \emph{$\mathcal{G}$-brane} is a pair $(Q,\mathcal{E})$ of a submanifold $Q\subset M$ and a $\mathcal{G}|_{Q}$-twisted vector bundle $\mathcal{E}$ with connection. In \cite{gawedzki1,gawedzki4} it is described how to associate to a pair of two $\mathcal{G}$-branes $(Q,\mathcal{E})$ and $(Q',\mathcal{E}')$ a hermitian vector bundle $\inf R$ with connection over the space $P$ of paths connecting $Q$ with $Q'$. In \cref{sec:transbranes} we give an alternative, but equivalent description of this vector bundle in terms of a modern, bicategorical treatment of bundle gerbes. We also clarify some  regularity aspects of the vector bundle $\inf R$ that have not been treated so far.
More precisely, we work  in the convenient setting of diffeology, which provides a simple and yet powerful framework to study   geometry over spaces of loops and paths.
Furthermore, we exhibit in \cref{sec:trans:superficialconnection} a new property of the connection on $\inf R$, which is analogous to the superficiality for line bundles over loop spaces mentioned above.

While this construction of the vector bundle $\inf  R$ over $P$ is the basis of our transgression map for D-branes, the picture is incomplete in essentially two aspects. First, it seems that the vector bundle $\inf R$ has lost its relation to the twist $\mathcal{G}$, or rather its loop space analogue, the transgressed hermitian line bundle $\inf L$. In \cref{sec:fusionrep} we discover an isomorphism
\begin{equation}
\label{eq:intro:fusionrep}
\inf L|_{\gamma_1\cup\gamma_2} \otimes \inf R|_{\gamma_2} \to \inf R|_{\gamma_1}\text{,}
\end{equation} 
where $\gamma_1,\gamma_2$ are paths with a common initial point and a common end point. It satisfies the axiom of an action with respect to the fusion product on $\inf L$; therefore, we will call it the \emph{fusion representation} of $\inf L$ on $\inf R$. Second, we discover in \cref{sec:liftpathcommp} a homomorphism
\begin{equation}
\label{eq:intro:liftedpathcomp}
\inf R_{23}|_{\gamma_{23}} \otimes \inf R_{12}|_{\gamma_{12}} \to \inf R_{13}|_{\gamma_{23} \pcomp \gamma_{12}} 
\end{equation} 
relating the vector bundles $\inf R_{ij}$ formed from three $\mathcal{G}$-branes $(Q_1,\mathcal{E}_1)$, $(Q_2,\mathcal{E}_2)$ and $(Q_3,\mathcal{E}_3)$, which lifts the concatenation of a path $\gamma_{12}$ between $Q_1$ and $Q_2$ and a path $\gamma_{23}$ from $Q_2$ to $Q_3$ to  the path $\gamma_{23} \pcomp \gamma_{12}$ between $Q_1$ and $Q_3$. The homomorphism of \cref{eq:intro:liftedpathcomp} is accompanied by additional structure rendering it compatible with identity paths and path reversal, and satisfies several compatibility conditions. We call it the \emph{lifted path concatenation}.

We assemble the data obtained by transgression abstractly   into a category $\lsg(M,Q)$ of \emph{loop space brane geometry} (LBG), depending on a target space $M$ and a family $Q=\{Q_i\}_{i\in I}$ of D-branes. Roughly, the objects are tuples consisting of a line bundle $\inf L$ over $LM$ with superficial connection, a fusion product on $\inf L$, vector bundles $\inf R_{ij}$ over the spaces $P_{ij}$ of paths connecting $Q_i$ with $Q_j$, fusion representations for each of these vector bundles, and a lifted path concatenation relating them to each other. A precise definition is given in \cref{sec:loopspacegeometry}. On the other side, the basis of transgression was a bicategory $\tsg(M,Q)$ of \emph{target space brane geometry} (TBG), whose objects consist of a bundle gerbe $\mathcal{G}$ over $M$ with connection and  $\mathcal{G}|_{Q_i}$-twisted vector bundles $\mathcal{E}_i$ with connection. A precise definition of that bicategory and some background about bundle gerbes is given in \cref{sec:targetspacegeometry}.
In \cref{sec:nattrans} we prove that transgression furnishes a functor
\begin{equation*}
\mathscr{T}: \hc1\tsg(M,Q) \to \lsg(M,Q)\text{,}
\end{equation*} 
where $\hc 1$ denotes the 1-truncation of a bicategory (identify 2-isomorphic 1-morphisms).
In \cref{sec:regression} we construct in a  natural way a functor $\mathscr{R}$ in the opposite direction called \emph{regression}.
Our main result is the following theorem.
\begin{maintheorem}
\label{th:main}
Transgression and regression form an equivalence between target space brane geometry and loop space brane geometry,
\begin{equation*}
\hc 1\tsg(M,Q) \cong \lsg(M,Q)\text{.}
\end{equation*}
\end{maintheorem}

The proof of \cref{th:main} is carried out in \cref{sec:equivalence} by constructing explicitly natural isomorphisms $\mathscr{R} \circ \mathscr{T}\cong \id$ and $\mathscr{T}\circ \mathscr{R} \cong \id$. 
\cref{th:main} means that the Lagrangian approach to 2-dimensional topological field theories with D-branes can be pursued in an equivalent way either via TBG or via LBG.

An interesting aspect of the equivalence of \cref{th:main} is the presence of algebra bundles over the D-branes, arising in both perspectives. In TBG, algebra bundles $A_i$ arise as endomorphism bundles of the $\mathcal{G}|_{Q_i}$-twisted vector bundles \cite{Schweigert2014}. They are  well-known for their relation to twisted K-theory
\cite{Karoubia}. In LBG, algebra bundles $\inf A_i$ are obtained by restricting the vector bundles $\inf R_{ii}$ to constant paths in $Q_i$, and turning  these restrictions into algebra bundles using the lifted path concatenation of \cref{eq:intro:liftedpathcomp}, see \cref{sec:algebrabundle:lsg}. We prove the following result. 

\begin{maintheorem}
\label{th:main:2}
\begin{enumerate}[(a)]

\item 
\label{th:main:2:1}
The algebra bundles $\inf A_i$ arising from loop space brane geometry are bundles of  simple Frobenius algebras.

\item
\label{th:main:2:2}
Transgression and regression induce isomorphisms $A_i \cong \inf A_i$ between the algebra bundles that arise independently from target space and loop space brane geometry. 
\end{enumerate}
\end{maintheorem}  

At first sight, it seems that \cref{th:main:2:2*} implies \cref{th:main:2:1*} because  endomorphism algebras are automatically simple Frobenius algebras. However, we prove \cref{th:main:2:1*} loop-space-intrinsically without any reference to endomorphism bundles, see \cref{prop:frob}. In fact, we use \cref{th:main:2:1*} in the construction of the regression functor $\mathscr{R}$ so that it really is an independent result. Part \cref{th:main:2:2*} is proved in \cref{sec:coincidence}.

In \cref{sec:bimodulebundles} we also discover  the following algebraic feature of the vector bundles $\inf R_{ij}$ over the spaces $P_{ij}$ of paths connecting $Q_i$ with $Q_j$. 

\begin{maintheorem}
\label{th:main:3}
\begin{enumerate}[(a)]

\item 
\label{th:main:3:1}
The vector bundles $\inf R_{ij}$ are bundles of bimodules over the algebra bundles $\ev_0^{*}\inf A_i$ and $\ev_1^{*}\inf A_j$, where $\ev_0: P_{ij} \to Q_i$ and $\ev_1: P_{ij} \to Q_j$ are the end point evaluations. 

\item
\label{th:main:3:2}
Lifted path concatenation induces connection-preserving  isomorphisms
$\inf R_{jk} \otimes_{\inf A_j} \inf R_{ij} \cong \inf R_{ik}$. 

\end{enumerate}
\end{maintheorem}

\begin{comment}
Note that in the above statement  \cref{th:main:3:2} we have suppressed some pullbacks for the sake of readability; see \cref{lem:compositionofbimodules}  for the formally correct statement. 
\end{comment}

Our motivation for exhibiting these algebraic features is the \emph{functorial perspective} to topological field theories \cite{segal1,stolz1,lurie1}, where  field theories are regarded as  functors from certain categories of bordisms to certain algebraic categories. The bordisms are equipped with smooth maps to $M$ in order to incorporate a target space. Bunke-Turner-Willerton have shown that 2-dimensional topological  functorial field theories are equivalent to bundle gerbes with connection over $M$ \cite{Bunke2004}.
 In an upcoming article \cite{Bunk2018} we will extend this result to an equivalence between TBG and functorial field  theories with D-branes, whose values will be given by the algebraic data provided in  \cref{th:main:3}.

In the setting of functorial field theories, \emph{quantum} theories are defined on bordisms without a map to a target space. Equivalently, a quantum field theory is one with $M=\{\ast\}$. Quantization is supposed to correspond to the pushforward  to a point in a suitable generalized cohomology theory \cite{stolz1}. For quantum field theories, it is common to describe functorial field theories by algebraic structure obtained from a presentation of the bordism category in terms of generators and relations. For the closed sector and two dimensions, this leads to commutative Frobenius algebras \cite{Abrams1996}. 
For quantum theories with D-branes, the algebraic structure has been determined  by 
Lazaroiu, Moore-Segal, and Lauda-Pfeiffer \cite{Lazaroiu2001,Moore2006,Lauda2008}, of which the last reference termed it a \emph{colored knowledgable Frobenius algebra} (the \quot{colors} are the brane indices $i\in I$). 
\begin{comment}
John Baez writes this at the string coffee table: 

Actually, the main new thing that Lauda and Pfeiffer do is prove their result: Moore and Segal just state theirs. So, Lauda and Pfeiffer deserve credit for turning a plausible conjecture into a theorem.
Its easy to guess a lot of relations that these open-closed 2d cobordisms satisfy, and after a while of playing around one can feel sure that one has gotten a sufficient set. However, proving this takes work - and thats what Lauda and Pfeiffer do.
Im not sure, but its also possible that Moore and Segal forgot one or two relations which Lauda and Pfeiffer found. One virtue of proving theorems is that one catches little mistakes like this.
\end{comment}
One can now perform an interesting consistency check between classical and quantum field theories: the restriction of a classical field theory to a point; here, this means putting $M=Q_i=\{\ast\}$ for all brane indices $i$. Of course, this will not give \quot{the} quantization of the original theory, but it does give \quot{a} quantum theory and thus should fit into that framework. We show that our LBG passes this consistency check in the sense that there is a naturally defined faithful functor
\begin{equation*}
\mathcal{F}:\lsg(\ast,\{\ast\}_{i\in I}) \to \kfrob
\end{equation*}
to the category of $I$-colored knowledgable Frobenius algebras. Our final result determines the image of this functor. We show that all colored knowledgable Frobenius algebras in its image are equipped with an additional structure that we call a \emph{positive reflection structure}.
In our upcoming article \cite{Bunk2018} we will show that it is equivalent -- under the correspondence to topological quantum field theories -- to a positive reflection structure in the sense of functorial field theories \cite{Freed2016}. We prove the following.

\begin{maintheorem}
\label{th:main:4}
The functor $\mathcal{F}$ induces an equivalence between loop space brane geometries of a point and reflection-positive, $I$-colored knowledgable Frobenius algebras whose underlying Frobenius algebra is $\C$. 
\end{maintheorem}

Thus, LBG is a target space family of reflection-positive, colored knowledgable structures on the Frobenius algebra $\C$.   All details and the proof of  \cref{th:main:4} are given out in \cref{sec:mooresegal}.

This paper is organized in the following way. In  \cref{sec:branegeometry} we give precise definitions of the categories $\tsg(M,Q)$ and $\lsg(M,Q)$, and we recall some relevant aspects about bundle gerbes. In \cref{sec:alglsg} we derive the algebraic structures induced by LBG as stated in \cref{th:main:2:1}  and \cref{th:main:3} and discuss the reduction to the work of Moore-Segal \cite{Moore2006} and Lauda-Pfeiffer \cite{Lauda2008}. In \cref{sec:transgression,sec:regression} we construct the transgression and regression functors $\mathscr{T}$ and $\mathscr{R}$, respectively. In \cref{sec:equivalence} we prove \cref{th:main} and \cref{th:main:2:2}. We conclude with an appendix providing technical background material, in particular about diffeological vector bundles, superficial connections on path spaces, and  bundles of algebras and  bimodules. For the benefit of the reader we include a small table of notation on page \pageref{sec:notation}.

\paragraph{Acknowledgements.} KW was supported by the German Research Foundation under project code WA 3300/1-1. SB was partly supported by the RTG 1670 \quot{Mathematics inspired by string theory and quantum field theory}.
We would like to thank Ulrich Bunke, Lukas Müller, Ingo Runkel, Christoph Schweigert, Richard Szabo, Peter Teichner and Stephan Stolz
for valuable conversations. 

\setsecnumdepth{2}

\section{Brane geometries}

\label{sec:branegeometry}

 In the following, a
 \emph{target space} will be a pair $(M,Q)$ of a connected smooth manifold $M$  and  a family $Q=\{Q_i\}_{i\in I}$ of submanifolds $Q_i \subset M$. We  work over a fixed target space; but all definitions and constructions will be natural under maps between target spaces, i.e. smooth maps $f:M \to M'$ such that $f(Q_i)\subset Q_i'$.

\subsection{Target space brane geometry}

\label{sec:targetspacegeometry}

 As motivated in \cref{sec:intro}, an object in the bicategory $\tsg(M,Q)$ is a pair $(\mathcal{G},\mathcal{E})$, consisting of a hermitian line bundle gerbe $\mathcal{G}$ with connection over $M$ and a family $\mathcal{E}=\{\mathcal{E}_i\}_{i\in I}$ of $\mathcal{G}|_{Q_i}$-twisted vector bundles $\mathcal{E}_i$ with connections.

We recall some minimal facts to explain these notions. Hermitian line bundle gerbes with connection over $M$ form a bicategory $\ugrbcon M$ that can be defined in the following very elegant way \cite{nikolaus2}. Let $\HVbuncon M$ denote the symmetric monoidal category of hermitian vector bundles equipped with (unitary) connections, with the morphisms connection-preserving, unitary bundle morphisms. 
The assignment $M \mapsto \HVbuncon M$ forms a sheaf of symmetric monoidal categories over the site of smooth manifolds (with surjective submersions as coverings). 
We consider the following bicategory $\triv-\ugrbcon M$:
\begin{enumerate}[(a)]

\item 
Its objects are 2-forms $B \in \Omega^2(M)$, which we will denote by $\mathcal{I}_B$ in this context. 

\item
The Hom-category $\hom(\mathcal{I}_{B_1},\mathcal{I}_{B_2})$ is by definition the full subcategory $\HVbuncon M^{B_2-B_1}$  of $\HVbuncon M$  on the objects $(E,\nabla)$ with
\begin{equation}
\label{eq:curvcond}
\frac{1}{\mathrm{rk}(E)}\,\mathrm{tr}(\mathrm{curv}(\nabla))=B_2-B_1\text{.}
\end{equation}

\item
Composition is the tensor product in $\HVbuncon M$. 

\end{enumerate}
The assignment $M \mapsto \triv-\ugrbcon M$ is a presheaf of bicategories. The sheaf $\ugrbcon -$ is by definition its sheafification, i.e.
\begin{equation*}
\ugrbcon - := (\triv-\ugrbcon -)^{+}\text{.}
\end{equation*}
The objects $\mathcal{I}_B \in \triv-\ugrbcon M \subset \ugrbcon M$ are now called \emph{trivial} bundle gerbes. 
Spelling this out results exactly in the usual notion of bundle gerbes \cite{murray,murray2,stevenson1} with the bicategorical structure described in \cite{waldorf1}. 
In particular,  1-morphisms in $\ugrbcon M$ have a well-defined rank, and  a 1-morphism is a 1-\emph{iso}morphism (\quot{stable isomorphism}) if and only if it has rank one \cite[Prop. 3]{waldorf1}.

We will use a slightly generalized version of this bicategory, obtained by performing the sheafification in the site of diffeological spaces (with respect to the Grothendieck topology  subductions), and then restricting  again to smooth manifolds. The occurring vector bundles have to be treated as vector bundles over diffeological spaces (see \cref{sec:diffvectorbundle}). This generalization results in an equivalent bicategory, see \cite[Section 3.1]{waldorf10} for more information. It
is necessary because the regression functor defined in \cref{sec:regression} takes values in the generalized version of bundle gerbes.

If $\mathcal{G}$ is a bundle gerbe with connection over $M$, then a \emph{$\mathcal{G}$-twisted vector bundle with connection} is a 2-form $\omega\in \Omega^2(M)$ together with a 1-morphism $\mathcal{E}: \mathcal{G} \to \mathcal{I}_{\omega}$ in  $\ugrbcon M$. This gives precisely the usual notion of a \quot{bundle gerbe module with connection} introduced in \cite{carey2}, see \cite{waldorf1}. A twisted vector bundle of rank one is called a \emph{trivialization} of $\mathcal{G}$.

We return to the definition of the bicategory $\tsg(M,Q)$  
of target space brane geometry, whose objects are pairs $(\mathcal{G},\mathcal{E})$ of a hermitian line bundle gerbe $\mathcal{G}$ with connection over $M$ and a family $\mathcal{E}=\{\mathcal{E}_i\}_{i\in I}$ of $\mathcal{G}|_{Q_i}$-twisted vector bundles $\mathcal{E}_i$ with connections. The 1-morphisms between $(\mathcal{G},\mathcal{E})$ and $(\mathcal{G}',\mathcal{E}')$ are pairs $(\mathcal{A},\psi)$, consisting of a 1-isomorphism $\mathcal{A}: \mathcal{G} \to \mathcal{G}'$ in $\ugrbcon M$ and a family $\psi=\{\psi_{i}\}_{i\in I}$ of 2-morphisms $\psi_i: \mathcal{E}_i \Rightarrow \mathcal{E}_i' \circ \mathcal{A}|_{Q_i}$ in $\ugrbcon {Q_i}$. The 2-morphisms between $(\mathcal{A},\psi)$ and $(\mathcal{A}',\psi')$ are 2-morphisms $\varphi: \mathcal{A} \Rightarrow \mathcal{A}'$ in $\ugrbcon M$ such that the diagram
\begin{equation*}
\small
\alxydim{@R=3em@C=1em}{\quad\mathcal{E}_i \ar@{=>}[rr]^-{\psi_i} \ar@{=>}[dr]_-{\psi_i'} && \mathcal{E}_i' \circ \mathcal{A}|_{Q_i}\mquad \ar@{=>}[dl]^{\id_{\mathcal{E}_i'} \circ \varphi|_{Q_i}} \\ & \mathcal{E}_i' \circ \mathcal{A}'|_{Q_i} &}
\end{equation*}
of 2-morphisms in $\ugrbcon {Q_i}$ is commutative for all $i\in I$.

In the remainder of this subsection we discuss an operation on 1-morphisms between bundle gerbes that will be used frequently  throughout this article. 
Namely, for bundle gerbes $\mathcal{G}$, $\mathcal{H} \in \ugrbcon M$ there is a functor
\begin{equation}
\label{eq:des}
\Des: \hom(\mathcal{G},\mathcal{I}_{\omega_2}) \times \hom(\mathcal{G},\mathcal{I}_{\omega_1})^{op} \to \HVbuncon M^{\omega_2-\omega_1}\text{,}
\end{equation}
that can be seen as   an enriched version of an internal hom, see \cite{Bunk2016} and \cite[Thm. 3.6.3]{Bunk2017}, and \cite{Bunk2017a}  for a less technical overview.
It is defined in the following way. We first consider the functor $()^{*}: \hom(\mathcal{G},\mathcal{H})^{op} \to \hom(\mathcal{H},\mathcal{G})$, which  on the level of the presheaf $\triv-\ugrbcon M$ is just the dualization $E \mapsto E^{*}$. Since this is a morphism of presheaves, it survives the sheafification and induces the claimed functor $()^{*}$.  Now,  $\Des$ is defined by
\begin{multline*}
\alxydim{@C=4em}{\hom(\mathcal{G},\mathcal{I}_{\omega_2}) \times \hom(\mathcal{G},\mathcal{I}_{\omega_1})^{op}\ar[r]^{\id \times ()^{*}} & \hom(\mathcal{G},\mathcal{I}_{\omega_2}) \times \hom(\mathcal{I}_{\omega_1},\mathcal{G})}
\\
\alxydim{@C=4em}{\ar[r]^-{\circ} & \hom(\mathcal{I}_{\omega_1},\mathcal{I}_{\omega_2})\cong \HVbuncon M^{\omega_2-\omega_1}\text{.}}
\end{multline*}

Since the functor $\Des$ is quite important in this article, we spell out its definition in terms of the explicit definition of bundle gerbes. Suppose  $\mathcal{G}$ consists of a surjective submersion $Y \to M$, a hermitian line bundle $L$ with connection over $Y^{[2]}:= Y \times_M Y$, and a bundle gerbe product $\mu$ over $Y^{[3]}$. Twisted vector bundles $\mathcal{E}_i: \mathcal{G} \to \mathcal{I}_{\omega_i}$ ($i=1,2$) consist of hermitian vector bundles $E_i$ with connection over $Y$, and of connection-preserving, unitary bundle isomorphisms
\begin{equation*}
\zeta_i: L \otimes \pr_2^{*}E_i \to \pr_1^{*}E_i
\end{equation*}  
over $Y^{[2]}$, satisfying a compatibility condition with the bundle gerbe product $\mu$ over $Y^{[3]}$, see \cite{waldorf1}. 
In order to define the  vector bundle $\Des(\mathcal{E}_2,\mathcal{E}_1)$, we consider the vector bundle $E_1^{*} \otimes E_2$ over $Y$, and over $Y^{[2]}$ the bundle isomorphism $\tilde\zeta$ defined by 
\begin{equation}
\label{eq:tildeepsilon}
\alxydim{@C=3.8em}{\pr_2^{*}(E_1^{*} \otimes E_2) \cong \pr_2^{*}E_1^{*} \otimes L^{*} \otimes L \otimes \pr_2^{*}E_2 \ar[r]^-{\zeta_1^{tr-1} \otimes \zeta_2} & \pr_1^{*}E_1^{*} \otimes \pr_1^{*}E_2  \cong \pr_1^{*}(E_1^{*} \otimes E_2)\text{.}}
\end{equation} 
Here we have used the coevaluation isomorphisms between a complex line bundle and its dual, and $(..)^{tr}$ denotes the transpose of a linear map. 
\begin{comment}
This coevaluation isomorphism is connection-preserving. To see this, we observe that for $s,t \in \Gamma(E)$, $r\in \Gamma(E^{*})$  and $X\in \mathcal{X}(M)$ we have
\begin{align*}
D_X(r \otimes s)(t) 
&= (D^{*}_X(r) \otimes s + r \otimes D_X(s))(t)
\\&= D^{*}_X(r) (t)\otimes s + r (t)\otimes D_X(s)
\\&= X(r(t))\cdot s-r(D_X(t)) \cdot  s\otimesd   + r (t)\cdot  D_X(s)\text{.}
\end{align*}
Thus, if $d:E^{*} \otimes E \to \C \times M$ denotes the evaluation homomorphism, we get
\begin{equation*}
D_X(r \otimes s) = X(d(r,-))\otimes s - d(r,D_X(-))\otimes s + d(r,-) \otimes D_X(s)
\end{equation*}
and obtain
\begin{equation*}
d(D_X(r \otimes s))=X(r(s)) - r(D_X(s))+r(D_X(s))= X(r(s))=X(d(r \otimes s))\text{.}
\end{equation*}
This shows that $d$ is connection-preserving w.r.t. the  trivial connection on $\C \times M$.
\end{comment}
The conditions for $\zeta_1$ and $\zeta_2$ imply a cocycle condition for $\tilde\zeta$ over $Y^{[3]}$. Since $\tilde \zeta$ is connection-preserving and unitary, $E_1^{*} \otimes E_2$ descends to a hermitian vector bundle $\Des(\mathcal{E}_2,\mathcal{E}_1)$ with  connection over $M$; this gives the definition of $\Des$.  
\begin{comment}
Moreover, we have $\omega_i-B=\frac{1}{\mathrm{rk}(E_i)}\mathrm{tr}(\mathrm{curv}(E_i))$, where $B$ is the curving of $\mathcal{G}$. We have
\begin{align*}
\mathrm{tr}(\mathrm{curv}(E_1^{*} \otimes E_2))
&= \mathrm{tr}(\mathrm{curv}(E_1^{*}) \otimes \id_{E_2} )+( \id_{E_1^{*}} \otimes \mathrm{curv}(E_2))
\\&= -\mathrm{tr}(\mathrm{curv}(E_1)) \cdot \mathrm{rk}(E_2)+(\mathrm{rk}(E_1) \cdot \mathrm{curv}(E_2))
\\&= -\mathrm{rk}(E_1) \cdot (\omega_1-B) \cdot \mathrm{rk}(E_2)+\mathrm{rk}(E_1) \cdot \mathrm{rk}(E_2) \cdot (\omega_2-B)
\\&=\mathrm{rk}(E_1)\mathrm{rk}(E_2)\cdot (\omega_2-\omega_1)\text{.}
\end{align*}
\end{comment}
\begin{comment}
Let us also describe the functor $\Des$ explicitly on the level of morphisms.
If $\varphi_1: \mathcal{E}_1' \Rightarrow \mathcal{E}_1$ and $\varphi_2: \mathcal{E}_2 \Rightarrow \mathcal{E}_2'$ are 2-morphisms in $\ugrbcon M$, then 
\begin{equation*}
\varphi_1^{tr} \otimes \varphi_2: E_1^{*} \otimes E_2 \to E_1'^{*} \otimes E_2
\end{equation*}
is a connection-preserving, unitary bundle morphism that commutes with $\tilde\zeta$, and hence descends to a connection-preserving, unitary bundle morphism $\Des(\varphi_2,\varphi_1):\Des(\mathcal{E}_2,\mathcal{E}_1) \to \Des(\mathcal{E}_2',\mathcal{E}_1')$ over $M$.
\end{comment}

\begin{remark}
\label{rem:propDes}
We observe the following features of the functor $\Des$:
\begin{enumerate}[(a)]

\item 
\label{rem:propDes:1}
For three twisted vector bundles $\mathcal{E}_1,\mathcal{E}_2,\mathcal{E}_3$ we have a  morphism
\begin{equation*}
\Des(\mathcal{E}_3,\mathcal{E}_2) \otimes \Des(\mathcal{E}_2,\mathcal{E}_1) \to \Des(\mathcal{E}_3,\mathcal{E}_1)
\end{equation*}
in $\HVbuncon M$, 
which is an isomorphism if and only if $\mathcal{E}_2$ is invertible. It is induced by a  2-morphism $\mathcal{E}_2^{*} \circ \mathcal{E}_2 \Rightarrow \id_{\mathcal{I}_{\omega_2}}$, which is in turn induced by the evaluation $E_2 \otimes E_2^{*} \to \C$ of dual vector bundles.
\begin{comment}
Indeed,
\begin{equation*}
\Des(\mathcal{E}_3,\mathcal{E}_2) \otimes \Des(\mathcal{E}_2,\mathcal{E}_1)= (\mathcal{E}_3 \circ \mathcal{E}_2^{*}) \otimes (\mathcal{E}_2\circ \mathcal{E}_1^{*})=(\mathcal{E}_3 \circ \mathcal{E}_2^{*}\circ \mathcal{E}_2\circ \mathcal{E}_1^{*})\Rightarrow (\mathcal{E}_3 \circ \mathcal{E}_1^{*})=\Des(\mathcal{E}_3,\mathcal{E}_1)
\end{equation*}
\end{comment}

\item
For  two twisted vector bundles $\mathcal{E}_i:\mathcal{G} \to \mathcal{I}_{\omega_i}$ we have an  isomorphism $\Des(\mathcal{E}_1,\mathcal{E}_2) \to \Des(\mathcal{E}_2,\mathcal{E}_1)^{*}$ in $\HVbuncon M$, induced by the canonical isomorphism $(\mathcal{E}_1\circ \mathcal{E}_2^{*})^{*}=\mathcal{E}_2 \circ \mathcal{E}_1^{*}$.

\item
\label{rem:propDes:2}
For  two twisted vector bundles $\mathcal{E}_i:\mathcal{G} \to \mathcal{I}_{\omega_i}$ and a 1-morphism $\mathcal{A}: \mathcal{G}' \to \mathcal{G}$ in $\ugrbcon M$ we have a morphism
\begin{equation*}
\Des( \mathcal{E}_1\circ \mathcal{A}, \mathcal{E}_2\circ \mathcal{A})\to \Des(\mathcal{E}_1,\mathcal{E}_2)
\end{equation*}
in $\HVbuncon M$, which is an isomorphism if and only if $\mathcal{A}$ is invertible. It is induced from the 2-morphism $\mathcal{A}\circ \mathcal{A}^{*}\Rightarrow \id_{\mathcal{G}}$ and  a 2-isomorphism $(\mathcal{E}_2\circ \mathcal{A})^{*} \Rightarrow \mathcal{A}^{*} \circ \mathcal{E}_2^{*}$, which is in turn induced by the canonical isomorphism $(A \circ E_2)^{*}\cong E_2^{*} \otimes A^{*}$. 
\begin{comment}
Indeed,
\begin{equation*}
\Des( \mathcal{E}_1\circ \mathcal{A}, \mathcal{E}_2\circ \mathcal{A})=(\mathcal{E}_1\circ \mathcal{A} \circ (\mathcal{E}_2 \circ \mathcal{A})^{*})\Rightarrow (\mathcal{E}_1 \circ \mathcal{A} \circ \mathcal{A}^{*}\circ \mathcal{E}_2^{*})\Rightarrow (\mathcal{E}_1\circ \mathcal{E}_2^{*})=\Des(\mathcal{E}_1,\mathcal{E}_2)\text{.}
\end{equation*}
\end{comment}

\end{enumerate}
\end{remark}

\begin{remark}
\label{re:algebrabundle}
If $\mathcal{E}$ is a $\mathcal{G}$-twisted vector bundle, we obtain a hermitian vector bundle $\mathrm{End}(\mathcal{E}) := \Des(\mathcal{E},\mathcal{E})$ with  connection.  \cref{rem:propDes:1} provides a connection-preserving bundle morphism
\begin{equation}
\label{eq:endEmult}
\mathrm{End}(\mathcal{E}) \otimes \mathrm{End}(\mathcal{E}) \to \mathrm{End}(\mathcal{E})\text{,}
\end{equation}
which endows the fibers of $\mathrm{End}(\mathcal{E})$  with (associative, unital, complex) algebra structures. The fact that $\mathrm{End}(\mathcal{E})$ carries a connection for which \cref{eq:endEmult} is connection-preserving, together with the fact that all fibers are simple algebras,  assures that $\mathrm{End}(\mathcal{E})$ is an algebra bundle, see \cref{lem:algloctriv,re:loctriv:c}.  For an explicit construction, one notices that the descent isomorphism $\tilde\zeta$ of \cref{eq:tildeepsilon} is an isomorphism between algebra bundles  \cite{Schweigert2014}. In relation to twisted K-theory the same algebra bundle has been constructed in \cite{Karoubia}.
\begin{comment}
Indeed, $E^{*} \otimes E$ is  an algebra bundle, and the bundle isomorphism $\tilde\zeta$ formed above is an algebra homomorphism, as one can easily check. 
We need to check the commutativity of the diagram
\begin{equation*}
\alxydim{}{\pr_2^{*}(E^{*} \otimes E) \otimes \pr_2^{*}(E^{*} \otimes E) \ar[r] \ar[d]_{\tilde\zeta \otimes \tilde\zeta} & \pr_2^{*}(E^{*} \otimes E) \ar[d]^{\tilde\zeta} \\ \pr_1^{*}(E^{*} \otimes E) \otimes \pr_1^{*}(E^{*} \otimes E) \ar[r] & \pr_1^{*}(E^{*} \otimes E)\text{,}}
\end{equation*}
where the horizontal arrows are given by 
\begin{equation*}
(\lambda_1,v_1) \otimes (\lambda_2,v_2) \mapsto \lambda_2(v_1) \cdot (\lambda_1,v_2)\text{,}
\end{equation*}
which corresponds under the identification $\mathrm{End}(E)=E^{*} \otimes E$ to the composition of endomorphisms. Using elements $b\in L$ and $\beta\in L^{*}$ with $\beta(b)=1$, we obtain clockwise the result
\begin{equation*}
\lambda_2(v_1) \cdot \zeta^{tr-1}(\lambda_1,\beta) \otimes \zeta(b,v_2)
\end{equation*}
and counter-clockwise the result
\begin{equation*}
\zeta^{tr-1}(\lambda_2,\beta)(\zeta(b,v_1)) \cdot \zeta^{tr-1}(\lambda_1,\beta) \otimes \zeta(b,v_2)\text{.}
\end{equation*}
The two scalar prefactors coincide since
\begin{equation*}
\lambda_2(v_1)=(\lambda_2 \otimes\beta)(v_1,b)=\zeta^{tr}(\zeta^{tr-1}(\lambda_2 , \beta))(v_1,b)=\zeta^{tr-1}(\lambda_2,\beta)(\zeta(v_1,b))\text{.}
\end{equation*} 
\end{comment}
\end{remark}

\begin{remark}
Applying the previous remark to a TBG object $(\mathcal{G},\mathcal{E})$, we obtain bundles of central simple algebras $\mathrm{End}(\mathcal{E}_i)$ over the branes $Q_i$.
The assignment of these bundles is functorial with respect to TBG morphisms. Indeed, if $(\mathcal{A},\psi)$ is  a 1-morphism, then we obtain a bundle morphism
\begin{equation*}
\alxydim{@C=4em}{\Des(\mathcal{E}_i,\mathcal{E}_i) \ar[r]^-{\Des(\psi_i,\psi_i^{-1})} & \Des(\mathcal{E}_i' \circ \mathcal{A},\mathcal{E}_i' \circ \mathcal{A}) \ar[r]  & \Des(\mathcal{E}_i',\mathcal{E}_i')}
\end{equation*}
by \cref{rem:propDes:2}. Following  \cref{rem:propDes} it is easy to check that this bundle morphism preserves the algebra structures.
\begin{comment}
That is, the diagram
\begin{equation*}
\alxydim{}{\Des(\mathcal{E}_i,\mathcal{E}_i) \otimes \Des(\mathcal{E}_i,\mathcal{E}_i) \ar[d] \ar[r]^-{\Des(\psi_i,\psi_i) \otimes \Des(\psi_i,\psi_i)} & \Des(\mathcal{E}_i' \circ \mathcal{A},\mathcal{E}_i' \circ \mathcal{A}) \otimes \Des(\mathcal{E}_i' \circ \mathcal{A},\mathcal{E}_i' \circ \mathcal{A}) \ar[d] \ar[r]^-{\Des_{\mathcal{A}} \otimes \Des_{\mathcal{A}}}  & \Des(\mathcal{E}_i',\mathcal{E}_i') \otimes \Des(\mathcal{E}_i',\mathcal{E}_i') \ar[d] \\ \Des(\mathcal{E}_i,\mathcal{E}_i)\ar[r]_-{\Des(\psi_i,\psi_i)} & \Des(\mathcal{E}_i' \circ \mathcal{A},\mathcal{E}_i' \circ \mathcal{A}) \ar[r]_-{\Des_{\mathcal{A}}}  & \Des(\mathcal{E}_i',\mathcal{E}_i')}
\end{equation*}
is commutative.
\end{comment}
Furthermore, one can check that it only depends on the 2-isomorphism class of $(\mathcal{A},\psi)$.
\begin{comment}
The diagram
\begin{equation*}
\alxydim{}{& \Des(\mathcal{E}_i' \circ \mathcal{A},\mathcal{E}_i' \circ \mathcal{A}) \ar[dd]|{\Des(\id_{\mathcal{E}_i'} \circ \varphi,\id_{\mathcal{E}_i'} \circ \varphi)} \ar[rd]^-{\Des_{\mathcal{A}}}  & \\
\Des(\mathcal{E}_i,\mathcal{E}_i) \ar[ur]^{\Des(\psi_i,\psi_i)}  \ar[dr]_{\Des(\psi_i',\psi_i')} && \Des(\mathcal{E}_i',\mathcal{E}_i')\\ & \Des(\mathcal{E}_i' \circ \mathcal{A}',\mathcal{E}_i' \circ \mathcal{A}')\ar[ru]_-{\Des_{\mathcal{A}'}}  & }
\end{equation*}
is commutative.
\end{comment}
Finally, the composition of TBG morphisms induces the composition of algebra bundle homomorphisms. Summarizing, for every brane index $i\in I$ we have constructed a functor $(\mathcal{G},\mathcal{E}) \mapsto \mathrm{End}(\mathcal{E}_i)$ from $\hc 1 \tsg(M,Q)$ to the category of central simple algebra bundles over $Q_i$. 
\end{remark}

\begin{remark}
\label{rem:sectionsandtrivializations}
We recall the following facts about trivializations of bundle gerbes, described in \cite[Lemma 3.2.3]{waldorf10}. Suppose $\mathcal{G}\in \ugrbcon M$  and $s: M \to Y$ is a smooth section into the surjective submersion of $\mathcal{G}$. Then, we obtain a  1-morphism $\mathcal{T}_s: \mathcal{G} \to \mathcal{I}_{s^{*}B}$ in $\ugrbcon M$, where $B$ is the curving of $\mathcal{G}$. Explicitly, it is  defined by the hermitian line bundle $T_s :=  (\id_{Y}, s \circ \pi)^{*}L$ over $Y$ and  the isomorphism $\sigma_s := (\pr_1,\pr_2,s\circ \pi)^{*}\mu$, where $L$  is the line bundle of $\mathcal{G}$, and $\mu$ is its bundle gerbe product. 
\begin{comment}
Fibre-wise we have $T_s|_{y} = L_{y,s(\pi(y))}$ and $\sigma_s|_{y_1,y_2} = \mu_{y_1,y_2,s(\pi(y_1))}$. 
\end{comment}
We call $\mathcal{T}_s$ the \emph{trivialization associated to the section $s$}.
If $s':M \to Y$ is another section, and   $\tilde s: M \to L$ is a parallel unit-length section along $(s,s'):M \to Y^{[2]}$, then we obtain a 2-isomorphism $\psi_{\tilde s}: \mathcal{T}_s \Rightarrow \mathcal{T}_{s'}$ in $\ugrbcon M$, given fiber-wise by
\begin{equation*}
\alxydim{@C=4em}{T_s|_y = L_{y,s(\pi(y))} \ar[r]^-{\id \otimes \tilde s} & L_{y,s(\pi(y))} \otimes L_{s(\pi(y)),s'(\pi(y))}  \ar[r]^-{\mu} &  L_{y,s'(\pi(y))} = T_{s'}|_y}\text{.}
\end{equation*}
\begin{comment}
Note that the existence of $\tilde s$ implies $s^{*}B=s'^{*}B$.
\end{comment}
Now, consider a 1-morphism $\mathcal{E}: \mathcal{G} \to \mathcal{I}_{\rho}$ in $\ugrbcon M$, consisting of a vector bundle $E$ over $Y$ and a bundle isomorphism $\zeta$ over $Y^{[2]}$, and a section $s: M \to Y$.
Then, there is a connection-preserving, unitary bundle isomorphism 
$\varphi_s: \Des(\mathcal{E},\mathcal{T}_s) \to s^{*}E$,
which is induced by 
\begin{equation*}
\small
\alxydim{@C=2cm}{T_s|^{*}_y \otimes E|_y   = L|_{y,s(\pi(y))}^{*}\otimes E|_y \ar[r]^-{\tilde\mu \otimes \id}  &  L|_{s(\pi(y)),y} \otimes E|_y \ar[r]^-{\zeta|_{s(\pi(y)),y}} & E|_{s(\pi(y))}\text{.}}
\end{equation*}  
Here, $\tilde\mu: L|_{y_1,y_2}^{*} \to L|_{y_2,y_1}$ is induced from the bundle gerbe product. 
\begin{comment}
If 
\begin{equation*}
\tilde \varphi_s:  T_s^{*} \otimes E \to \pi^{*}s^{*}E
\end{equation*}
is that isomorphism, 
over $Y^{[2]}$ the diagram
\begin{equation*}
\alxydim{@C=2cm}{\pr_2^{*}(T_s^{*} \otimes E)\ar[d]_{\pr_2^{*}\tilde \varphi_s} \ar[r]^{\sigma_s^{tr-1} \otimes \zeta} & \pr_1^{*}(T_s^{*} \otimes E) \ar[d]^{\pr_1^{*}\tilde\varphi_s} \\ \pr_2^{*}\pi^{*}s^{*}E  \ar@{=}[r] & \pr_1^{*}\pi^{*}s^{*}E }
\end{equation*}
is commutative; this shows that claim. 
\end{comment}
In the presence of a second section $s'$ and a parallel unit-length section $\tilde s$  of $L$ along $(s,s')$ we obtain from the definitions a commutative diagram
\begin{equation*}
\small
\alxydim{@C=4em@R=3em}{\Des(\mathcal{E},\mathcal{T}_{s'}) \ar[d]_{\Des(\id,\psi_{\tilde s})} \ar[r]^-{\varphi_{s'}} & s'^{*}E \ar[d]^{\zeta(\tilde s \otimes -)} \\ \Des(\mathcal{E},\mathcal{T}_s) \ar[r]_-{\varphi_{s}} & s^{*}E\text{.}}
\end{equation*}
\begin{comment}
We have to check the commutativity of the outer shape of the diagram
\begin{equation*}
\alxydim{}{L^{*}|_{y,s'(x)} \otimes E|_y \ar[d]_{\mu^{*} \otimes \id} \ar[r] & L|_{s'(x),y} \otimes E|_y  \ar[r]^-{\zeta_{s'(x),y}} & E|_{s'(x)} \ar[dd]^{\zeta(\tilde s(x),-)} \\ L^{*}|_{s(x),s'(x)} \otimes L^{*}|_{y,s(x)} \otimes E|_y \ar[d]_{d(-,\tilde s(x)) \otimes \id \otimes \id} & L|_{s'(x),s(x)}\otimes L|_{s(x),y}  \otimes E|_y \ar[u]^{\mu \otimes \id} &  \\  L^{*}|_{y,s(x)} \otimes E|_y \ar[r] & L|_{s(x),y} \otimes E|_y \ar[u]^{\tilde s(x) \otimes \id \otimes \id} \ar[r]_-{\zeta_{s(x),y}}   &  E|_{s(x)} }
\end{equation*}
Here $x:= \pi(y)$. The right hand side is exactly the condition for the isomorphism $\zeta$. The left hand side is a diagram only composed of structure derived from the bundle gerbe product $\mu$; hence it must be commutative. 
\end{comment}
\end{remark}

\subsection{Loop space brane geometry}

\label{sec:loopspacegeometry}

We let $LM := C^{\infty}(S^1,M)$ be the  loop space of $M$, considered as a Fréchet manifold or, equivalently, as a diffeological space. We let $PM \subset C^{\infty}([0,1],M)$ be the subset of paths with sitting instants, considered as a diffeological space. Further, we let $P_{ij}\subset PM$ be the subspace of paths $\gamma$ with $\gamma(0)\in Q_i$ and $\gamma(1)\in Q_j$.
Vector bundles and connections on diffeological spaces are discussed in \cref{sec:diffvectorbundle}; essentially, they can be treated just like vector bundles over smooth manifolds.

We now describe the objects of the category $\lsg(M,Q)$, making the announcements of \cref{sec:intro} precise. An object is a septuple $(\inf L,\lambda, \inf R,\phi,\chi,\epsilon,\alpha)$ consisting of  the following structure:
\begin{enumerate}[(1)]

\item
\label{LBGstr:1}
A hermitian line bundle $\inf L$ over  $LM$ with a superficial connection. 

The definition of  \quot{superficial} is \cite[Definition 2.2.1]{waldorf10}. We recall that a \emph{thin homotopy} between loops $\tau,\tau':S^1 \to M$ is a path $h:[0,1] \to LM$ such that the differential of its adjoint map $h^{\vee}:[0,1] \times S^1 \to M$  has rank less than two everywhere. If $\tau,\tau'$ are thin homotopic, then the parallel transport of a superficial connection along a thin homotopy $h$ between $\tau$ and $\tau'$ is independent of the choice of $h$, and hence determines a \emph{canonical} unitary isomorphism $d_{\tau,\tau'}:\inf L|_{\tau} \to \inf L|_{\tau'}$, independent of the thin homotopy.

\item
\label{LBGstr:2}
A  fusion product $\lambda$ on $\inf L$, i.e. unitary isomorphisms
\begin{equation*}
\lambda_{\gamma_1,\gamma_2,\gamma_3}: \inf L|_{\gamma_1 \cup \gamma_2} \otimes \inf L|_{\gamma_2 \cup\gamma_3} \to \inf L|_{\gamma_1\cup\gamma_3}
\end{equation*}
for all triples $(\gamma_1,\gamma_2,\gamma_3)$ of paths in $M$ with a common initial point and a common end point, forming a connection-preserving unitary bundle isomorphism over the 3-fold fibre product $PM^{[3]}$ of $\ev:PM \to M \times M$ with itself. The fusion product is required to be associative over $PM^{[4]}$ and to be symmetrizing
with respect to the connection (\cite[Def. 2.1.5]{waldorf10}); this latter property only plays a minor role in the present article, though.

The fusion product induces a parallel, unit-length section $PM \to \inf L:\gamma \mapsto \nu_\gamma$ along the inclusion of \quot{flat} loops $PM \to LM: \gamma \mapsto \gamma\cup\gamma$, which is neutral with respect to fusion \cite[Lemma 2.1.4]{waldorf10}. Further, we obtain a connection-preserving, unitary isomorphism $\tilde\lambda_{\gamma_1,\gamma_2}:\inf L|_{\gamma_1\cup\gamma_2} \to \inf L^{*}|_{\gamma_2\cup\gamma_1}$  that we will often combine with the identification $\inf L^{*} \cong \overline{\inf L}$ between the dual and the complex conjugate line bundle, induced by the hermitian metric on $\inf L$.

\item
\label{LBGstr:3}
A family $\inf R=\{\inf R_{ij}\}_{i,j\in I}$ of hermitian vector bundles $\inf R_{ij}$ over $P_{ij}$  with superficial connections $pt_{ij}$. 

We refer to \cref{def:superficial} in \cref{sec:superficial} for the definition of \quot{superficial} for connections on vector bundles over path spaces. Similarly to \cref{LBGstr:1}, a superficial connection determines -- via parallel transport -- a canonical unitary isomorphism $d_{\gamma,\gamma'}: \inf R_{ij}|_{\gamma} \to \inf R_{ij}|_{\gamma'}$ between the fibers of $\inf R_{ij}$ over all pairs of fixed-ends thin homotopic paths $\gamma,\gamma' \in P_{ij}$.  

\item 
\label{LBGstr:4}
A family $\phi=\{\phi_{ij}\}_{i,j\in I}$ of  unitary isomorphisms
\begin{equation*}
\phi_{ij}|_{\gamma_1,\gamma_2}: \inf L|_{\gamma_1 \cup \gamma_2} \otimes \inf R_{ij}|_{\gamma_2} \to \inf R_{ij}|_{\gamma_1}
\end{equation*}
for all $\gamma_1,\gamma_2\in P_{ij}$ with common initial point and common end point. These have to form connection-preserving bundle isomorphisms $\phi_{ij}$ over  $P_{ij}^{[2]}:= P_{ij} \times_{Q_i \times Q_j} P_{ij}$.

\item
\label{LBGstr:5}
A family $\chi=\{\chi_{ijk}\}_{i,j,k\in I}$ of linear maps 
\begin{equation*}
\chi_{ijk}|_{\gamma_{12},\gamma_{23}}: \inf R_{jk}|_{\gamma_{23}} \otimes \inf R_{ij}|_{\gamma_{12}} \to \inf R_{ik}|_{\gamma_{23} \pcomp \gamma_{12}}
\end{equation*} 
for all composable paths $\gamma_{12}$ and $\gamma_{23}$. These have to form a connection-preserving bundle morphism $\chi_{ijk}$ over $P_{jk} \times_{Q_j} P_{ij}$.

\item 
\label{LBGstr:6}
A family $\epsilon=\{\epsilon_i\}_{i\in I}$ of  parallel  sections $\epsilon_i: Q_i \to  \inf R_{ii}$ along the inclusion $x \mapsto \cp_x$ of constant paths. 
\item
\label{LBGstr:7}
A family $\alpha=\{\alpha_{ij}\}_{i,j\in I}$ of  unitary isomorphisms
\begin{equation*}
\alpha_{ij}|_{\gamma}: \inf R_{ij}|_{\gamma} \to \overline{\inf R_{ji}}|_{\overline{\gamma}}
\end{equation*}
for all $\gamma\in P_{ij}$, where $\overline{\gamma}$ denotes the reversed path. These have to form a connection-preserving  bundle isomorphism over $P_{ij}$. 
\begin{comment}
Here,  $\overline{V}$ stands for the complex conjugate vector bundle, equipped with the complex conjugated hermitian metric.
Hence, that $\alpha_{ij}$ is unitary means
\begin{equation*}
h_{ij}(\alpha_{ij}(v),\alpha_{ij}(w))=\overline{h_{ji}(v,w)}=h_{ji}(w,v)\text{.}
\end{equation*}
\end{comment}

\end{enumerate}
This structure has to satisfy the following axioms:
\begin{enumerate}[({LBG}1),leftmargin=*,widest=10]

\item
\label{eq:lsg:fusionass}
The isomorphisms $\phi_{ij}$ are compatible with the fusion product $\lambda$; that is, the diagram
\begin{equation*}
\small
\alxydim{@C=2cm@R=3em}{\inf L|_{\gamma_1 \cup \gamma_2} \otimes \inf L|_{\gamma_2\cup\gamma_3} \otimes\inf  R_{ij}|_{\gamma_3} \ar[d]_{\lambda_{\gamma_1,\gamma_2,\gamma_3} \otimes \id} \ar[r]^-{\id \otimes \phi_{ij}|_{\gamma_2,\gamma_3}} & \inf L|_{\gamma_1 \cup \gamma_2} \otimes \inf R_{ij}|_{\gamma_2} \ar[d]^{\phi_{ij}|_{\gamma_1,\gamma_2}} \\ \inf L|_{\gamma_1\cup \gamma_3} \otimes \inf R_{ij}|_{\gamma_3} \ar[r]_-{\phi_{ij}|_{\gamma_1,\gamma_3}} & \inf R_{ij}|_{\gamma_1}}
\end{equation*}
is commutative for all $(\gamma_1,\gamma_2,\gamma_3)\in P_{ij}^{[3]}$. We  say that $\phi_{ij}$ is a \emph{fusion representation} of $(\inf L,\lambda)$ on $\inf R_{ij}$. 

\item
\label{eq:lsg:pentagon}
The maps $\chi_{ijk}$ are associative up to parallel transport along a reparameterization; that is, the diagram
\begin{equation*}
\small
\alxydim{@C=-6.5em@R=3em}{&& \inf R_{kl}|_{\gamma_{34}} \otimes \inf R_{jk}|_{\gamma_{23}} \otimes  \inf R_{ij}|_{\gamma_{12}} \ar[lld]_{\chi_{jkl}|_{\gamma_{23},\gamma_{34}} \otimes \id} \ar[rrd]^{\id \otimes \chi_{ijk}|_{\gamma_{12},\gamma_{23}}} && \\  \inf R_{jl}|_{\gamma_{34} \pcomp \gamma_{23}}  \otimes  \inf R_{ij}|_{\gamma_{12}} \ar[dr]_{\chi_{ijl}|_{\gamma_{12},\gamma_{34} \pcomp \gamma_{23}}} &&&&  \inf R_{kl}|_{\gamma_{34}} \otimes  \inf R_{ik}|_{\gamma_{23} \pcomp \gamma_{12}} \ar[dl]^{\chi_{ikl}|_{\gamma_{23} \pcomp \gamma_{12},\gamma_{34}}} \\ &  \inf R_{il}|_{(\gamma_{34} \pcomp \gamma_{23}) \pcomp \gamma_{12}} \ar[rr]_{d} &\hspace{6cm}&  \inf R_{il}|_{\gamma_{34} \pcomp (\gamma_{23} \pcomp \gamma_{12})}\text{.} }
\end{equation*}
is commutative for all triples $\gamma_{12}\in P_{ij}, \,\gamma_{23}\in P_{jk}$ and $\gamma_{13}\in P_{ik}$ of composable paths, and $d$ is the canonical isomorphism of \cref{LBGstr:3}. We  say that $\chi$ is a \emph{lifted path concatenation} on $\inf R$. 

\item
\label{eq:lsg:comp}
Fusion representation and lifted path concatenation are compatible in the following sense. Suppose $\gamma_{12},\gamma_{12}'\in P_{ij}$ and $\gamma_{23},\gamma_{23}'\in P_{jk}$  connect three points in the following way:
 \begin{equation*}
\small
\alxydim{@C=2cm}{x \ar@/^1.5pc/[r]^{\gamma_{12}}\ar@/_1.5pc/[r]_{\gamma_{12}'} & y \ar@/^1.5pc/[r]^{\gamma_{23}}\ar@/_1.5pc/[r]_{\gamma_{23}'} & z\text{.}}
\end{equation*}
Then, the diagram
\begin{equation*}
\small
\alxydim{@C=1.2cm@R=3em}{\inf L|_{\gamma_{23} \cup \gamma_{23}'} \otimes \inf L|_{\gamma_{12} \cup \gamma_{12}'} \otimes \inf R_{jk}|_{\gamma_{23}'} \otimes \inf R_{ij}|_{\gamma_{12}'} \ar[r]^-{\lambda' \otimes \chi_{ijk}}  \ar[d]_{\id \otimes braid \otimes \id} & \inf L|_{(\gamma_{23} \pcomp \gamma_{12}) \cup (\gamma_{23}' \pcomp \gamma_{12}')} \otimes \inf R_{ik}|_{\gamma_{23}' \pcomp \gamma_{12}'}  \ar[dd]^{\phi_{ik}} \\ \inf L|_{\gamma_{23} \cup \gamma_{23}'} \otimes \inf R_{jk}|_{\gamma_{23}'} \otimes \inf L|_{\gamma_{12} \cup \gamma_{12}'} \otimes \inf R_{ij}|_{\gamma_{12}'} \ar[d]_{\phi_{jk} \otimes \phi_{ij}}  & \\ \inf R_{jk}|_{\gamma_{23}} \otimes \inf R_{ij}|_{\gamma_{12}} \ar[r]_{\chi_{ijk}} & \inf R_{ik}|_{\gamma_{23} \pcomp \gamma_{12}} }
\end{equation*}
has to be commutative. The isomorphism $\lambda'$ on the top of this diagram is given by
\begin{equation*}
\small
\alxydim{@R=3em}{\inf L|_{\gamma_{23} \cup \gamma_{23}'} \otimes \inf L|_{\gamma_{12} \cup \gamma_{12}'} \ar[r]^-{d \otimes d} & \inf L|_{(\overline{\gamma_{23}'} \pcomp \gamma_{23}) \cup \cp_y} \otimes \inf L|_{\cp_y \cup (\gamma'_{12} \pcomp \overline{\gamma_{12}})} \ar[d]^-{\lambda} \\ & \inf L|_{(\overline{\gamma_{23}'} \pcomp \gamma_{23}) \cup  (\gamma'_{12} \pcomp \overline{\gamma_{12}})} \ar[r]_-{d} & \inf L|_{(\gamma_{23} \pcomp \gamma_{12}) \cup (\gamma_{23}' \pcomp \gamma_{12}')}\text{.}}
\end{equation*}

\item
\label{eq:lsg:unit}
The sections $\epsilon_i$ provide units (up to reparameterization) for the lifted path concatenation, i.e. 
\begin{equation*}
\chi_{iij}|_{\cp_x,\gamma}(v,\epsilon_i(x)) = d_{\gamma,\gamma\pcomp \cp_x}(v)
\quand
\chi_{ijj}|_{\gamma,\cp_y}(\epsilon_j(y),v) = d_{\gamma,\cp_y \pcomp \gamma}(v)
\end{equation*}
for all paths $\gamma\in P_{ij}$ with $x:= \gamma(0)$ and $y:=\gamma(1)$,  and all $v\in  \inf R_{ij}|_{\gamma}$. We say that $\epsilon_i$  is a \emph{lifted constant path}. 

\item
\label{eq:lsg:invariance}
The isomorphisms $\alpha_{ij}$ satisfy the following compatibility condition with the hermitian metric $h_{ij}$ on $\inf R_{ij}$ and the lifted path concatenation: for $\gamma \in P_{ij}$ and elements $v,w\in \inf R_{ij}|_{\gamma}$ we have
\begin{multline*}
h_{ii}(\chi_{iji}|_{\gamma,\overline{\gamma}}(\alpha_{ij}(w)\otimes v) , d_{\cp_x,\overline{\gamma} \pcomp\gamma}(\epsilon_i(x)))
\\=h_{ij}(v,w) = h_{jj}(d_{\cp_x,\gamma\pcomp \overline{\gamma}}(\epsilon_j(x)),\chi_{jij}|_{\overline{\gamma},\gamma}(w \otimes \alpha_{ij}(v))) \text{.}
\end{multline*}
We will say that $\alpha_{ij}$ is a \emph{lifted path reversal}.

\item
\label{eq:lsg:unitinversion}
Lifted constant paths are invariant under the lifted path reversal:
\begin{equation*}
\alpha_{ii}|_{\cp_x}(\epsilon_i(x)) = \epsilon_i(x)\text{.}
\end{equation*} 

\item
\label{eq:lsg:involutive}
Lifted path reversal is involutive:
$\alpha_{ji} \circ \alpha_{ij} = \id_{\inf R_{ij}}$.

\item
\label{eq:lsg:symmetrizing}
Lifted path reversal is an anti-homomorphism with respect to lifted path concatenation: the diagram
\begin{equation*}
\small
\alxydim{@C=1.8cm@R=3em}{\inf R_{jk}|_{\gamma_2} \otimes \inf R_{ij}|_{\gamma_1} \ar[d]_{\alpha_{jk} \otimes \alpha_{ij}} \ar[r]^-{\chi_{ijk}|_{\gamma_1,\gamma_2}} & \inf R_{ik}|_{\gamma_2\pcomp \gamma_1} \ar[dd]^{\alpha_{ik}}  \\  \overline{\inf R_{kj}}|_{\overline{\gamma_2}} \otimes \overline{\inf R_{ji}}|_{\overline{\gamma_1}} \ar[d]_{braid} &\\ \overline{\inf R_{ji}}|_{\overline{\gamma_1}} \otimes \overline{\inf R_{kj}}|_{\overline{\gamma_2}} \ar[r]_-{\chi_{kji}|_{\overline{\gamma_2},\overline{\gamma_1}}} & \overline{\inf R_{ki}}|_{\overline{\gamma_1}\pcomp \overline{\gamma_2}}}
\end{equation*}
is commutative for all $\gamma_1\in P_{ij}$ and $\gamma_2\in P_{jk}$.

\item
\label{eq:lsg:symmetrizingfusion}
Lifted path reversal intertwines the fusion representation (up to reparameterization), in the sense that the diagram
\begin{equation*}
\small
\alxydim{@R=3em}{\inf L|_{\gamma_1\cup\gamma_2} \otimes \inf R_{ij}|_{\gamma_2} \ar[d]_{\tilde\lambda \otimes \alpha_{ij}}\ar[r]^-{\phi_{ij}} & \inf R_{ij}|_{\gamma_1} \ar[dd]^{\alpha_{ij}}\\ \overline{\inf L}|_{\gamma_2\cup\gamma_1} \otimes \overline{\inf R}_{ji}|_{\overline{\gamma_2}}  \ar[d]_{d \otimes \id}  & \\ \overline{\inf L}|_{\overline{\gamma_1} \cup \overline{\gamma_2}} \otimes \overline{\inf R}_{ji}|_{\overline{\gamma_2}} \ar[r]_-{\phi_{ji}} & \overline{\inf R_{ji}}|_{\overline{\gamma_1}} }
\end{equation*}
is commutative, where $\tilde\lambda$ was explained in \cref{LBGstr:2}.

\item
\label{eq:lsg:cardy}
The following normalization condition holds
 between the lifted path concatenation and the hermitian metric $h_{ij}$ on $\inf R_{ij}$: for all $\gamma \in P_{ij}$ with $x:=\gamma(0)$ and $y:=\gamma(1)$, every orthonormal basis  $(v_1,...,v_n)$  of $\inf R_{ij}|_{\gamma}$ and every $v\in \inf R_{ii}|_{\cp_x}$:
\begin{equation*}
\sum_{k=1}^{n} \chi_{jij}|_{\overline{\gamma},\gamma \pcomp \cp_x}(\chi_{iij}|_{\cp_x,\gamma}(v_k \otimes v) \otimes \alpha_{ij}(v_k))=h_{ii}(\epsilon_i(x),v)\cdot  d_{\cp_y,\gamma \pcomp \cp_x\pcomp  \overline{\gamma}}(\epsilon_j(y))\text{.}
\end{equation*}
\end{enumerate}

Loop space brane geometry as defined above forms a groupoid $\lsg(M,Q)$ in a natural way: a morphism between $(\inf L,\lambda, \inf R,\phi,\chi,\epsilon,\alpha)$ and $(\inf L',\lambda',\inf R',\phi',\chi',\epsilon',\alpha')$ is a pair $(\varphi,\xi)$ consisting of a connection-preserving unitary isomorphism $\varphi:\inf L\to \inf L'$ that preserves  the fusion products $\lambda$ and $\lambda'$, and of a family $\xi=\{\xi_{ij}\}_{i,j\in I}$ of connection-preserving, unitary bundle isomorphisms $\xi_{ij}:\inf R_{ij} \to \inf R_{ij}'$ that are compatible with the remaining structure in the obvious way. 
In the remainder of this subsection we deduce some direct  consequences from the axioms of LBG in the following remarks.

\begin{remark}
\label{re:fusrepcan}
From \cref{eq:lsg:fusionass} one can easily deduce that the section  $\nu$ in $\inf L$ described in \cref{LBGstr:2} is neutral for the fusion representation, i.e. 
$\phi_{ij}|_{\gamma,\gamma}(\nu_{\gamma},v)=v$ for all $v\in \inf R_{ij}|_\gamma$. 
\begin{comment}
Indeed, the diagram of \cref{eq:lsg:fusionass} says that
\begin{equation*}
\phi_{ij}|_{\gamma,\gamma}(\lambda(\nu_{\gamma} \otimes \nu_{\gamma}) \otimes v) = \phi_{ij}|_{\gamma,\gamma}(\nu_{\gamma} \otimes \phi_{ij}|_{\gamma,\gamma}(\nu_{\gamma} \otimes v))
\end{equation*}
On the left, we have $\lambda(\nu_{\gamma} \otimes \nu_{\gamma})=\nu_{\gamma}$. Since $\phi_{ij}|_{\gamma,\gamma}$ is an isomorphism, we apply its inverse and obtain
\begin{equation*}
\nu_{\gamma} \otimes v=\nu_{\gamma} \otimes \phi_{ij}|_{\gamma,\gamma}(\nu_{\gamma} \otimes v)\text{;}
\end{equation*}
this shows the claim.
\end{comment}
\end{remark}

\begin{remark}
In the context of field theories, \cref{eq:lsg:cardy} will  be responsible for the Cardy condition, see \cref{sec:mooresegal}. Here we remark the following consequences of \cref{eq:lsg:cardy} for the rank of the vector bundles $\in \inf R_{ij}$:
\begin{enumerate}[(a)]

\begin{comment}
\item 
If \cref{eq:lsg:cardy} is satisfied for one orthogonal basis $(v_1,...,v_n)$, then it is also satisfied for any other orthogonal basis. 
\end{comment}

\begin{comment}
\item
\label{re:lsg:cardy:mirror}
Since the lifted path reversal $\alpha_{ij}$ is an isometric involution, one can easily deduce the following, equivalent  version of \cref{eq:lsg:cardy}: if $(v_1,...,v_n)$ is an orthonormal basis of $\inf R_{ij}|_{\gamma}$ and  $v\in \inf R_{jj}|_{y}$, then 
\begin{equation*}
\sum_{k=1}^{n} \chi_{iji}|_{\gamma,\overline{\gamma} \pcomp \cp_y}(\chi_{jji}|_{\cp_y,\overline{\gamma}}(\alpha_{ij}(v_k )\otimes v) \otimes v_k)=h_{jj}(\epsilon_j(y),v)\cdot  d_{\cp_x,\overline{\gamma} \pcomp \cp_y\pcomp  \gamma}(\epsilon_i(x))\text{.}
\end{equation*}
Indeed, since $\alpha_{ij}$ is an isometry, $(\alpha_{ij}(v_1),...,\alpha_{ij}(v_n))$ is an orthonormal basis of $\inf R_{ji}|_{\overline{\gamma}}$.  \end{comment}

\item
\label{re:lsg:rank}
We have
\begin{equation*}
\| \epsilon_i(x) \| = \sqrt[4]{\mathrm{rk}(\inf R_{ii})}\text{,}
\end{equation*}
where the norm is formed fibrewise using the hermitian metric on $\inf R_{ii}$.
\begin{comment}
In particular, $\|\epsilon_i(x)\|$ is constant over $x\in Q_i$. 
\end{comment}
Indeed, we set $i=j$, $\gamma=\cp_x$  and $v=\epsilon_i(x)$ in \cref{eq:lsg:cardy}, and obtain
using \cref{eq:lsg:unit}: 
\begin{equation*}
\sum_{k=1}^{n} \chi_{iii}|_{\cp_x,\cp_x}(v_k  \otimes \alpha_{ii}(v_k))= h_{ii}(\epsilon_i(x),\epsilon_i(x))\cdot \epsilon_i(x)\text{.}
\end{equation*}
Inserting this into $h_{ii}(\epsilon_i(x),-)$ 
\begin{comment}
...results in
\begin{equation*}
\sum_{k=1}^{n} h_{ii}(\epsilon_i(x),\chi_{iii}|_{\cp_x,\cp_x}(v_k  \otimes \alpha_{ii}(v_k)))= h_{ii}(\epsilon_i(x),\epsilon_i(x))^2\text{.}
\end{equation*}

\end{comment}
and using  \cref{eq:lsg:invariance}
\begin{comment} 
...namely
\begin{equation*}
h_{ii}(\epsilon_i(x),\chi_{iii}|_{\cp_x,\cp_x}(w\otimes \alpha_{ii}(v)) )
\\=h_{ii}(v,w) \text{.}
\end{equation*}
\end{comment}
gives
\begin{equation*}
\sum_{k=1}^{n} h_{ii}(v_k,v_k)= h_{ii}(\epsilon_i(x),\epsilon_i(x))^2\text{.}
\end{equation*}
Since $(v_1,...,v_n)$ is an orthonormal basis, we have the claim.

\item
\label{re:lsg:rank2}
Putting $v=\epsilon_i(x)$  in \cref{eq:lsg:cardy} we obtain using \cref{re:lsg:rank*} 
\begin{equation}
\label{re:lsg:unit}
\sum_{k=1}^{n} \chi_{jij}|_{\overline{\gamma},\gamma}(v_k  \otimes \alpha_{ij}(v_k))=\sqrt{\mathrm{rk}(\inf R_{ii})} \cdot  d_{\cp_y,\gamma \pcomp  \overline{\gamma}}(\epsilon_j(y))\text{.}
\end{equation}
\begin{comment}
Its mirror version of \cref{re:lsg:cardy:mirror*}  gives:
\begin{equation*}
\sum_{k=1}^{n} \chi_{iji}|_{\gamma,\overline{\gamma} }(\alpha_{ij}(v_k ) \otimes v_k)= \sqrt{\mathrm{rk}(\inf R_{jj})}\cdot  d_{\cp_x,\overline{\gamma} \pcomp  \gamma}(\epsilon_i(x))\text{.}
\end{equation*}
\end{comment}
We insert 
\cref{re:lsg:unit}
 into $h_{jj}(-, d_{\cp_y,\gamma \pcomp  \overline{\gamma}}(\epsilon_j(y)))$ and obtain
\begin{equation*}
\sum_{k=1}^{n} h_{jj}(\chi_{jij}|_{\overline{\gamma},\gamma}(v_k  \otimes \alpha_{ij}(v_k)), d_{\cp_y,\gamma \pcomp  \overline{\gamma}}(\epsilon_j(y)))=\sqrt{\mathrm{rk}(\inf R_{ii})} \cdot h_{jj}(\epsilon_j(y), \epsilon_j(y))\text{.}
\end{equation*}
With \cref{eq:lsg:invariance} the left hand side sums up to $n=\mathrm{rk}(\inf R_{ij})$, so that we obtain:
\begin{equation*}
\mathrm{rk}(\inf R_{ij})^2 = \mathrm{rk}(\inf R_{ii})\cdot \mathrm{rk}(\inf R_{jj})\text{.}
\end{equation*}
\begin{comment}
Indeed, \cref{eq:lsg:invariance} applied to our situation is
\begin{equation*}
h_{jj}(\chi_{jij}|_{\bar\gamma,\gamma}(\alpha_{ji}(\alpha_{ij}(v_k))\otimes \alpha_{ij}(v_k)) , d_{\cp_y,\gamma \pcomp\overline{\gamma}}(\epsilon_j(y)))
\\=h_{ji}(\alpha_{ij}(v_k),\alpha_{ij}(v_k))=h_{ij}(v_k,v_k)=1
\end{equation*}
\end{comment}
\end{enumerate}
\end{remark}

\begin{remark}
\label{re:closingandopeningstrings}
Let $L_i := \{\gamma \in P_{ii}\sep \gamma(0)=\gamma(1)\}$, coming with an injective smooth map $L_i \to LM$ which we usually drop from notation (the image consists of loops based in $Q_i$ with sitting instants at $1\in S^1$).  Consider the map  $L_i \to P_{ii} \times_{Q_i \times Q_i} P_{ii}: \gamma\mapsto (\cp_{x},\gamma)$, where $x:=\gamma(0)$. The pullback of the fusion representation along this map is a connection-preserving  bundle isomorphism
\begin{equation*}
\phi_{ii}|_{\cp_x,\gamma}:\inf L_{\cp_x \cup\gamma} \otimes  \inf R_{ii}|_{\cp_x} \to \inf R_{ii}|_{\gamma}\text{.}
\end{equation*} 
Using  the superficial connection on $\inf L$ to identify $\inf L_{\cp_x \cup \gamma}\cong \inf L_{\gamma}$, and the lifted constant path $\epsilon_i(x)$  we obtain a connection-preserving bundle morphism
\begin{equation*}
\iota_i: \inf L \to \inf R_{ii}|_{L_i}
\end{equation*}
over $L_i$. We call it the \emph{string opening morphism}; in the context of topological field theories it will describe the phenomenon that a closed string opens up in the presence of a D-brane \cite{Bunk2018}. In order to describe the converse phenomenon, we define a smooth, fiber-wise linear map
$\theta_i: \cp^{*}\inf R_{ii} \to \C$ by $\theta_i(v):= h_{ii}(\epsilon_i(x),v)$, where $v\in \inf R_{ii}|_{\cp_x}$. We will reveal it in \cref{sec:algebrabundle:lsg} as the trace of a certain Frobenius algebra. Now, the inverse of $\phi_{ii}$ in combination with  $\theta_i$ gives a connection-preserving bundle morphism
\begin{equation*}
\iota^{*}_i: \inf R_{ii}|_{L_i} \to \inf L
\end{equation*}
over $L_i$, which is called the \emph{string closing morphism}.
\end{remark}

\setsecnumdepth{2}

\section{Algebraic structures in loop space brane geometry}

\label{sec:alglsg}

In this section we study LBG intrinsically, without relation to TBG.
We show that LBG  induces   bundles of simple Frobenius algebras over the branes, together with  bundles of bimodules over spaces of paths connecting the branes. By an algebra we  always mean a complex, associative, unital, finite-dimensional algebra, and algebra homomorphisms are assumed to be unital.

\subsection{Frobenius algebra bundles over the branes}

\label{sec:algebrabundle:lsg}

Let $(\inf L,\lambda,\inf R,\phi,\chi,\epsilon,\alpha) \in \lsg(M,Q)$.
A  vector bundle  over $Q_i$ is obtained by putting $\inf A_i := \cp^{*} \inf R_{ii}$, where $\cp :Q_i \to P_{ii}$ associates to each point $x\in Q_i$ the constant path $\cp_x$ at that point. 
The pullback of  $\chi_{iii}$ along $Q_i \to P_{ii} \times_{Q_i} P_{ii}:x \mapsto (\cp_x,\cp_x)$ gives a bundle morphism
$\mu_i:\inf A_i \otimes \inf A_i \to \inf A_i$. 
We consider condition \cref{eq:lsg:pentagon} restricted to a triple of constant paths, $(\cp_x,\cp_x,\cp_x)$. The reparameterization at the bottom of the diagram is trivial in this case, and we obtain that $\mu_i$ is associative. Further, the lifted constant paths $\epsilon_i: Q_i \to \inf R_{ii}$ induce a smooth  section of $\inf A_i$ that by \cref{eq:lsg:unit} provides a unit for each fiber.

\begin{lemma}
\label{lem:aibundle}
$\inf A_i$ is a bundle of simple algebras over $Q_i$. 
\end{lemma}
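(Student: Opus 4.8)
The statement packages two claims: that each fibre of $\inf A_i$ is a simple algebra, and that these fibrewise structures assemble into a locally trivial algebra bundle. The second is the soft part. The multiplication $\mu_i$ is the pullback of $\chi_{iii}$, which by \cref{LBGstr:5} is connection-preserving, and $\inf A_i = \cp^{*}\inf R_{ii}$ carries the pullback of the superficial connection on $\inf R_{ii}$; hence $\mu_i$ is a connection-preserving fibrewise multiplication. Once all fibres are known to be simple, local triviality then follows verbatim from the general criterion \cref{lem:algloctriv} together with \cref{re:loctriv:c}, exactly as invoked in \cref{re:algebrabundle}. So the real content is to show that the fibre $A := \inf A_i|_x = \inf R_{ii}|_{\cp_x}$ over an arbitrary $x\in Q_i$ is a simple $\C$-algebra; it is already unital and associative by the discussion preceding the lemma.

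The plan is to analyse $A$ by restricting the whole LBG package to the constant path $\cp_x$. I write $\mathbf 1 := \epsilon_i(x)$ for the unit, denote by juxtaposition the product $ab := \chi_{iii}|_{\cp_x,\cp_x}(a\otimes b)$, put $h := h_{ii}|_{\cp_x}$ and $\theta(v) := h(\mathbf 1,v)$ (the trace of \cref{re:closingandopeningstrings}), and let $\alpha := \alpha_{ii}|_{\cp_x}$, which through $\overline{\inf R_{ii}}|_{\cp_x}=\overline A$ is a bijection of $A$. Restricting \cref{eq:lsg:invariance} to $i=j$ and $\gamma=\cp_x$ gives the two identities $\theta(\alpha(w)\,v)=h(w,v)$ and $\theta(v\,\alpha(w))=h(w,v)$ (the first after taking complex conjugates); since $\alpha$ is bijective these at once force $\theta(uv)=\theta(vu)$, so $\theta$ is a trace, and make the bilinear form $\langle a,b\rangle:=\theta(ab)$ nondegenerate, i.e. $A$ is a symmetric Frobenius algebra. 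Restricting the Cardy condition \cref{eq:lsg:cardy} to $i=j$ and $\gamma=\cp_x$, where all reparameterisation isomorphisms $d$ are trivial just as in the associativity argument above (using \cref{eq:lsg:unit}), collapses it to the single relation
\[ P(v) := \sum_{k=1}^{n} v_k\, v\, \alpha(v_k) = \theta(v)\cdot \mathbf 1 \qquad (v\in A), \]
valid for any $h$-orthonormal basis $(v_1,\dots,v_n)$ of $A$. This relation drives everything.

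From it I would extract simplicity in two moves. First, applying $\theta$ and using $\theta(v\,\alpha(w))=h(w,v)$, one computes $\theta(\mathbf 1)\,\theta(v)=\sum_k\theta(v_k v\,\alpha(v_k))=\sum_k h(v_k,v_k v)=\mathrm{Tr}_A(R_v)$, where $R_v$ is right multiplication by $v$. Hence the trace form $(a,b)\mapsto \mathrm{Tr}_A(R_{ab})=\theta(\mathbf 1)\,\langle a,b\rangle$ of the regular representation equals, up to the nonzero scalar $\theta(\mathbf 1)=\|\mathbf 1\|^2=\sqrt{\mathrm{rk}(\inf R_{ii})}$ from \cref{re:lsg:rank}, the nondegenerate Frobenius pairing; since $\C$ has characteristic zero, nondegeneracy of this trace form forces $A$ to be semisimple. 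Second, for a central $c\in Z(A)$ I would pull $c$ through $P$: as $c$ commutes with each $v_k$, we get $\theta(cv)\,\mathbf 1=P(cv)=c\,P(v)=\theta(v)\,c$ for all $v$, and $v=\mathbf 1$ gives $c=\theta(c)\,\theta(\mathbf 1)^{-1}\mathbf 1\in\C\mathbf 1$. Thus $Z(A)=\C\mathbf 1$. A semisimple algebra with one-dimensional centre has a single Wedderburn block, so $A\cong M_n(\C)$ is simple.

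The main obstacle is precisely this fibrewise simplicity: neither semisimplicity nor the triviality of the centre is formal, and both have to be squeezed out of the lone relation $P(v)=\theta(v)\mathbf 1$. The delicate point is the trace-form computation yielding semisimplicity, where the Frobenius identities from \cref{eq:lsg:invariance} must be married to the Cardy condition \cref{eq:lsg:cardy} and to the normalisation $\theta(\mathbf 1)\neq 0$ from \cref{re:lsg:rank}; after that the centre argument is short. I would also be careful to verify that the reparameterisation isomorphisms $d$ suppressed above are genuinely identities once everything is restricted to the constant path $\cp_x$, which is exactly where \cref{eq:lsg:unit} and the triviality of these reparameterisations (already observed for associativity) are needed.
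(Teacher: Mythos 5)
Your proof is correct, and its soft half (connection-preserving multiplication, then \cref{lem:algloctriv} and \cref{re:loctriv:c}) is exactly the paper's argument; but for the fibrewise simplicity, which the paper itself acknowledges is the real content, you take a genuinely different route. The paper's proof of this lemma forward-references \cref{co:aisimple}, which is only established much later in \cref{sec:mooresegal}: there the LBG object is restricted to the point $x$, pushed through the functor $\mathcal{F}$ into the category $\kfrob$ of $I$-colored knowledgable Frobenius algebras, and simplicity is obtained from Moore--Segal's theorem (\cref{th:simple}, cited from \cite{Moore2006}), whose proof runs through invertibility of the Euler element (giving semisimplicity) and a rank count: the Cardy map $v \mapsto \sum_k v_k\, v\, v^k$ factors through the one-dimensional bulk algebra yet has rank at least the number of Wedderburn blocks. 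You instead stay entirely at the fibre and extract everything from the restricted Cardy relation $P(v)=\theta(v)\cdot\mathbf{1}$ --- which is in fact the same map, since the $\sigma$-dual basis satisfies $v^k=\alpha(v_k)$ --- in two independent steps: semisimplicity from nondegeneracy of the regular trace form via $\mathrm{Tr}(R_v)=\theta(\mathbf{1})\,\theta(v)$ (radical elements have nilpotent, hence traceless, right multiplications; note this direction of the trace-form criterion needs no characteristic-zero hypothesis, so that remark of yours is superfluous but harmless), and one-blockness from $Z(A)=\C\,\mathbf{1}$, obtained by commuting central elements through $P$. Your derivations of the restricted identities from \cref{eq:lsg:invariance}, \cref{eq:lsg:cardy} and \cref{eq:lsg:unit}, including the triviality of the reparameterization isomorphisms $d$ at constant paths and the nonvanishing $\theta(\mathbf{1})=\|\mathbf{1}\|^2>0$, all check out and involve no circularity. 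What each approach buys: yours is self-contained and more elementary, eliminating the forward dependency on \cref{sec:mooresegal} and the detour through $\lsg^{(I)}$ and $\kfrob$; the paper's route reuses a known theorem from the TQFT literature and machinery (the functor $\mathcal{F}$ and knowledgable Frobenius algebras) that it must develop anyway for \cref{th:main:4}, so the deferral costs it nothing globally.
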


\begin{proof}
So far we have constructed an algebra structure on the vector bundle  $\inf A_i$, see \cref{sec:algebrabundles}. The connection on $\inf R_{ii}$ induces a connection on $\inf A_i$, for which  the multiplication $\mu_i$ is connection-preserving. By \cref{lem:algloctriv}  the algebra structure is local.  We will show in \cref{co:aisimple} that all algebras $\inf A_i|_x$ are simple, so that  $\inf  A_i$ is  a genuine algebra bundle (\cref{re:loctriv:c}). \end{proof}

We continue writing $a\cdot b := \mu_i(a \otimes b)$ and  $1 := \epsilon_i(x)$ for short. Further, we consider the following additional structures: first, the lifted path reversal induces an isomorphism $\cp^{*}\alpha_{ii}: \inf A_i \to \overline{\inf A_i}$ that is unital \cref{eq:lsg:unitinversion},  involutive \cref{eq:lsg:involutive} and anti-multiplicative \cref{eq:lsg:symmetrizing}; we write $a^{*} := \cp^{*}\alpha_{ii}(a)$. With these operations, $\inf A_i$ becomes a bundle of involutive algebras. Second, the hermitian metric $h_{ii}$ on $\inf R_{ii}$ induces a hermitian metric $\left \langle  -,- \right \rangle: \inf A_i \times \inf A_i \to \C$ on $\inf A_i$.  The induced norm satisfies $\|1\|=\sqrt[4]{\rk(\inf A_i)}$ by \cref{re:lsg:rank}. In particular, $\inf A_i$ is not a bundle of C$^{*}$-algebras unless $\mathrm{rk}(\inf A_i)=\mathrm{rk}(\inf R_{ii})=1$.

\begin{remark}
\label{re:epsilonunique}
Since units in algebras are uniquely determined, we obtain that the lifted constant path is fiber-wise uniquely determined. More precisely, if $(\inf L,\lambda, \inf R,\phi,\chi,\epsilon,\alpha)$ and $(\inf L,\lambda, \inf R,\phi,\chi,\epsilon',\alpha')$ are LBG objects, then $\epsilon=\epsilon'$. \end{remark}

Using the involution we can turn the hermitian metric into a  bilinear product,
\begin{equation*}
\sigma_i: \inf A_i \times \inf A_i \to \C: (v,w) \mapsto \left \langle v^{*},w \right \rangle\text{.}
\end{equation*}

\begin{lemma}
\label{eq:trace:invariance}
\label{eq:trace:skewsymmetry}
The bilinear product $\sigma_i$ is fiber-wise non-degenerate, symmetric, and invariant, i.e. $\sigma_i(v\cdot w,x)=\sigma_i(v,w\cdot x)$ for all $v,w,x\in \inf A_i$.
\end{lemma}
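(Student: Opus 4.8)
The plan is to reduce the whole statement to a single trace functional together with two algebraic identities extracted from the LBG axioms at constant paths. Write $\langle -,-\rangle := h_{ii}$ for the induced hermitian metric (conjugate-linear in the first slot, linear in the second, so that $\sigma_i(v,w)=\langle v^{*},w\rangle$ is genuinely bilinear because $a\mapsto a^{*}:=\cp^{*}\alpha_{ii}(a)$ is conjugate-linear), and recall from \cref{re:closingandopeningstrings} the functional $\theta_i(a):=\langle 1,a\rangle$ with $1:=\epsilon_i(x)$. The first identity I would extract is \cref{eq:lsg:invariance} specialized to $i=j$ and $\gamma=\cp_x$: after identifying $\cp_x\pcomp\cp_x$ with $\cp_x$ through the canonical (connection-preserving, hence isometric) isomorphism $d$ and using that $\epsilon_i$ is parallel, it collapses to
\[
\langle a^{*}\cdot b,\,1\rangle=\langle b,a\rangle\qquad\text{for all }a,b\in\inf A_i .
\]
Taking hermitian conjugates turns this into the equivalent form $\langle 1,\,a^{*}\cdot b\rangle=\langle a,b\rangle$. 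The main obstacle throughout is precisely this bookkeeping: one must track the reparameterization isomorphisms $d$ and the identifications of conjugate bundles $\overline{\inf R_{ji}}\cong\inf R_{ji}$ when restricting the path-indexed axioms to constant paths. Once these reductions are justified, everything that follows is purely algebraic.

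Next I would identify $\sigma_i$ with a trace of a product. Substituting $a=v^{*}$ into $\langle 1,\,a^{*}\cdot b\rangle=\langle a,b\rangle$ and using involutivity $(v^{*})^{*}=v$ (\cref{eq:lsg:involutive}) gives
\[
\sigma_i(v,w)=\langle v^{*},w\rangle=\langle 1,\,v\cdot w\rangle=\theta_i(v\cdot w).
\]
Non-degeneracy is then immediate: since $a\mapsto a^{*}$ is a bijection and $\langle -,-\rangle$ is a positive-definite hermitian metric, $\sigma_i(v,-)=\langle v^{*},-\rangle$ vanishes identically only if $v^{*}=0$, i.e. $v=0$. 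Invariance follows from associativity of $\mu_i$, which was established from \cref{eq:lsg:pentagon} at constant paths before \cref{lem:aibundle}:
\[
\sigma_i(v\cdot w,\,x)=\theta_i\bigl((v\cdot w)\cdot x\bigr)=\theta_i\bigl(v\cdot(w\cdot x)\bigr)=\sigma_i(v,\,w\cdot x).
\]

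Finally, for symmetry I would invoke that the involution is unitary. Since $\alpha_{ii}$ is a unitary isomorphism onto the complex-conjugate bundle (\cref{LBGstr:7}), its defining property reads $\langle v^{*},w^{*}\rangle=\langle w,v\rangle$. Writing $w=(w^{*})^{*}$ and applying this with the pair $(v,w^{*})$ in place of $(v,w)$ yields
\[
\sigma_i(v,w)=\langle v^{*},w\rangle=\langle v^{*},(w^{*})^{*}\rangle=\langle w^{*},v\rangle=\sigma_i(w,v),
\]
the desired symmetry. (Equivalently this is the trace property $\theta_i(v\cdot w)=\theta_i(w\cdot v)$, but extracting it directly from unitarity of $\alpha_{ii}$ is the cleanest route.) Assembling the three parts completes the proof; as noted, the only genuinely delicate point is the careful reduction of the path-space axioms to the constant-path identities.
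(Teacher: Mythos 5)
Your proof is correct, and two of its three parts coincide with the paper's argument: non-degeneracy from the non-degeneracy of $h_{ii}$ together with bijectivity of the involution, and symmetry from unitarity of $\alpha_{ii}$ combined with involutivity. The only divergence is in the invariance step. The paper stays with the hermitian form throughout: it applies the constant-path specialization of \cref{eq:lsg:invariance} twice, with the anti-multiplicativity \cref{eq:lsg:symmetrizing} of the involution in between, passing
\begin{equation*}
\left\langle (v\cdot w)^{*},x\right\rangle
=\left\langle x^{*}\cdot(v\cdot w)^{*},1\right\rangle
=\left\langle (w\cdot x)^{*}\cdot v^{*},1\right\rangle
=\left\langle v^{*},w\cdot x\right\rangle\text{.}
\end{equation*}
You instead spend a single application of \cref{eq:lsg:invariance} (plus involutivity) on establishing the trace identity $\sigma_i(v,w)=\theta_i(v\cdot w)$, after which invariance is nothing but associativity of $\mu_i$, which is already available from \cref{eq:lsg:pentagon} at constant paths. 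The two routes are essentially equivalent in content, but yours has the small advantage of making the Frobenius trace explicit at this stage -- a fact the paper only records afterwards in \cref{eq:trace:star} -- at the price of the preliminary bookkeeping you flag (conjugating the identity and substituting $a=v^{*}$), which the paper's more direct manipulation avoids.
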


\begin{proof}
It is fiber-wise non-degenerate because $h_{ii}$ is so and  because $\alpha_{ii}$ is an isomorphism. The unitarity of $\alpha_{ii}$ implies that $\left \langle a^{*},b^{*}  \right \rangle = \left \langle b,a  \right \rangle$; this shows that $\sigma_i$ is symmetric.
\begin{comment}
Indeed,
\begin{equation*}
\sigma_i(a,b)=\left\langle a^{*},b\right\rangle=\left\langle b^{*},a\right\rangle=\sigma_i(b,a)\text{.}
\end{equation*}
\end{comment}
For the invariance, we use \cref{eq:lsg:invariance} to check:
\begin{equation*}
\left\langle(v \cdot w)^{*},x\right\rangle= \left\langle x^{*}\cdot(v \cdot w)^{*} , 1\right\rangle
=\left\langle(w\cdot x)^{*}\cdot v^{*},1\right\rangle=\left\langle v^{*},w \cdot x\right\rangle\text{,}
\end{equation*}
which proves the claim.
\begin{comment}
More explicitly,
\begin{multline*}
\sigma_i(v\cdot w,x)=\left\langle(v \cdot w)^{*},x\right\rangle \eqcref{eq:lsg:invariance} \left\langle x^{*}\cdot(v \cdot w)^{*} , 1\right\rangle\\
=\left\langle(w\cdot x)^{*}\cdot v^{*},1\right\rangle\eqcref{eq:lsg:invariance}\left\langle v^{*},w \cdot x\right\rangle=\sigma_i(v,w\cdot x).
\end{multline*}
\end{comment}
\end{proof}

This shows that $\inf A_i|_x$ is a symmetric Frobenius algebra for each $x\in Q_i$, and together with \cref{lem:aibundle} we have the following result:

\begin{proposition}
\label{prop:frob}
$\inf A_i$ is a  bundle of simple, symmetric Frobenius algebras over $Q_i$. 
\end{proposition}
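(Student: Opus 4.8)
The plan is to observe that the assertion is exactly the conjunction of the two immediately preceding lemmas, once one recalls the definition of a symmetric Frobenius algebra. Recall that a finite-dimensional (associative, unital) algebra is a \emph{Frobenius algebra} precisely when it carries a non-degenerate, invariant bilinear form, and that it is called \emph{symmetric} when this form is symmetric. I would therefore take the bilinear product $\sigma_i$ as the candidate Frobenius pairing and simply collect the properties already established for it, reading off the fiber-wise statement and then the bundle statement.

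First I would invoke \cref{lem:aibundle} to record that $\inf A_i$ is a bundle of simple algebras; in particular each fiber $\inf A_i|_x$ is a simple, complex, associative, unital, finite-dimensional algebra. Next I would invoke \cref{eq:trace:invariance}, which says that $\sigma_i$ is fiber-wise non-degenerate, symmetric and invariant in the sense $\sigma_i(v\cdot w,x)=\sigma_i(v,w\cdot x)$. Combining the two, each fiber $\inf A_i|_x$, equipped with the form $\sigma_i$, is a simple, symmetric Frobenius algebra by the very definition recalled above.

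It then remains only to note that these structures assemble smoothly into a bundle: the multiplication $\mu_i$ and the unit section $\epsilon_i$ are smooth by construction, while $\sigma_i$ is smooth because it is built from the smooth hermitian metric $h_{ii}$ and the smooth involution $\cp^{*}\alpha_{ii}$; together with the local triviality of the underlying algebra bundle from \cref{lem:aibundle} this yields that $\inf A_i$ is a bundle of simple, symmetric Frobenius algebras, as claimed.

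I expect essentially no obstacle at this stage, since all the genuine content has been discharged earlier: the local triviality and simplicity in \cref{lem:aibundle} (the latter via \cref{co:aisimple}), and the non-degeneracy, symmetry and invariance of $\sigma_i$ in \cref{eq:trace:invariance}, which themselves rest on axioms \cref{eq:lsg:invariance,eq:lsg:involutive,eq:lsg:symmetrizing}. The only point worth flagging is that the argument uses the characterization of a symmetric Frobenius algebra purely as an algebra equipped with a non-degenerate, symmetric, invariant pairing, so that nothing beyond these two lemmas needs to be checked.
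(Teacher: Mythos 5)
Your proposal is correct and matches the paper's own argument: the paper proves \cref{prop:frob} precisely by combining \cref{lem:aibundle} (bundle of simple algebras, via \cref{co:aisimple}) with \cref{eq:trace:invariance} (non-degeneracy, symmetry and invariance of $\sigma_i$), stating that these together exhibit each fiber as a simple, symmetric Frobenius algebra. Your additional remark on the smoothness of $\sigma_i$ is a slight elaboration the paper leaves implicit, but it is the same proof.
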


\begin{remark}
\label{eq:trace:star}
The trace $\inf A_i \to \C$ corresponding to the bilinear form $\sigma_i$ is $a \mapsto \sigma_i(a,1)$. In terms of the hermitian metric, it is $a \mapsto \left \langle a^{*},1  \right \rangle =\left \langle 1,a  \right \rangle$, so that it is precisely the map $\theta_i$ defined in \cref{re:closingandopeningstrings}. From this it is easy to see  that it respects the involution: $\theta_i(a^{*})=\overline{\theta_i(a)}$.
\begin{comment}
It respects involutions:
\begin{equation*}
\theta_i(a^{*})=\left \langle a,1  \right \rangle =\overline{\left \langle  1,a \right \rangle}= \overline{\left \langle  a^{*},1^{*} \right \rangle}= \overline{\left \langle  a^{*},1 \right \rangle}=\overline{\theta_i(a)}\text{.}
\end{equation*} 
\end{comment}
It is non-degenerate in the sense that its kernel contains no non-trivial left ideals; this follows from the non-degeneracy of $\sigma_i$, see \cite[Lemma 2.2.4]{Kock2003}. 
\end{remark}

\subsection{Bimodule bundles and Morita equivalences}

\label{sec:bimodulebundles}

In this section we show that the vector bundles $\inf R_{ij}$ are  bundles of bimodules over the  algebra bundles defined in \cref{sec:algebrabundle:lsg}.
For this purpose, we define  bundle morphisms
\begin{equation*}
\lambda_{ij}: \ev_1^{*}\inf A_{j} \otimes \inf R_{ij} \to \inf R_{ij}
\quand
\rho_{ij}:\inf R_{ij} \otimes \ev_0^{*}\inf A_i \to \inf R_{ij}
\end{equation*}
over $P_{ij}$ in the following way. For a path $\gamma \in P_{ij}$ with $x:=\gamma(0)$ and $y:=\gamma(1)$, we let $\lambda_{ij}|_{\gamma}$ be defined by
\begin{equation*}
\alxydim{@C=4em}{\inf A_j|_{y} \otimes \inf R_{ij}|_{\gamma} \ar[r]^-{\chi_{ijj}|_{\gamma,\cp_y}} & \inf R_{ij}|_{\cp_y \pcomp \gamma} \ar[r]^-{d_{\cp_y\pcomp\gamma,\gamma}} & \inf R_{ij}|_{\gamma}\text{,}}
\end{equation*}
and we let $\rho_{ij}|_{\gamma}$  be defined by
\begin{equation*}
\alxydim{@C=4em}{\inf R_{ij}|_{\gamma} \otimes \inf A_i|_x \ar[r]^-{\chi_{iij}|_{\cp_x,\gamma}} & \inf R_{ij}|_{\gamma \pcomp \cp_x} \ar[r]^-{d_{\gamma \pcomp \cp_x,\gamma}} & \inf R_{ij}|_{\gamma}\text{.}}
\end{equation*}
\begin{comment}
We remark that both bundle morphisms depend on the connection on $\inf R_{ij}$.
\end{comment}
\begin{comment}
We provide the following two lemmata about $\lambda_{ij}$ and $\rho_{ij}$, leading to the subsequent \cref{prop:bimodule}.
\end{comment}

\begin{lemma}
\label{lem:Rijbimodule}
The bundle morphisms $\lambda_{ij}$ and $\rho_{ij}$ define commuting left and right actions of $\inf A_j$ and $\inf A_i$ on $\inf R_{ij}$.
\end{lemma}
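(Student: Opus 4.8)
The statement asserts that $\lambda_{ij}$ and $\rho_{ij}$ equip $\inf R_{ij}$ with the structure of an $(\inf A_j, \inf A_i)$-bimodule: I must verify that the left action of $\inf A_j$ and the right action of $\inf A_i$ are each associative, unital, and that the two actions commute. The plan is to reduce each of these to the already-established axioms \cref{eq:lsg:pentagon} (associativity of $\chi$ up to reparameterization) and \cref{eq:lsg:unit} (the unit property of $\epsilon_i$), using the fact that the canonical isomorphisms $d_{\gamma,\gamma'}$ arising from the superficial connection (\cref{LBGstr:3}) intertwine all of the structure maps, being parallel-transport identifications along thin homotopies.

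**The three verifications.**

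First I would check left associativity. Unravelling the definition of $\lambda_{ij}|_\gamma$, the composite $\lambda_{ij}(a \otimes \lambda_{ij}(b \otimes v))$ involves two applications of $\chi_{ijj}$ along $\cp_y$, while $\lambda_{ij}(\mu_j(a\otimes b)\otimes v)$ involves first the multiplication $\mu_j$ on $\inf A_j$ (itself a restriction of $\chi_{jjj}$ to constant paths) and then a single $\chi_{ijj}$. Both sides are governed by the pentagon diagram \cref{eq:lsg:pentagon} applied to the quadruple of paths $(\gamma, \cp_y, \cp_y)$ of appropriate composability type; the reparameterization isomorphism $d$ at the bottom of that diagram becomes, after restriction to constant paths and repeated $\cp_y$-concatenations, exactly the $d$'s appearing in the definitions of $\lambda_{ij}$. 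The right associativity is entirely symmetric, using \cref{eq:lsg:pentagon} for the triple $(\cp_x,\cp_x,\gamma)$. For unitality I would invoke \cref{eq:lsg:unit} directly: inserting $\epsilon_j(y)$ (resp.\ $\epsilon_i(x)$) into $\chi_{ijj}$ (resp.\ $\chi_{iij}$) returns the argument up to the canonical $d$, so $\lambda_{ij}(1\otimes v)=v$ and $\rho_{ij}(v\otimes 1)=v$ after composing with the matching $d$.

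**Commutativity of the two actions — the main point.**

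The genuinely substantive step is showing the left and right actions commute, i.e.\ $\lambda_{ij}(a \otimes \rho_{ij}(v \otimes b)) = \rho_{ij}(\lambda_{ij}(a\otimes v)\otimes b)$ for $a \in \inf A_j|_y$, $b \in \inf A_i|_x$. Both sides expand into two nested applications of $\chi$, one concatenating $\cp_y$ on the left of $\gamma$ and one concatenating $\cp_x$ on the right. These two concatenations land on the loops $\cp_y \pcomp \gamma \pcomp \cp_x$, and the two orders of performing them differ only by the order of associating the triple concatenation. I expect this to follow again from the pentagon \cref{eq:lsg:pentagon}, now applied to the composable triple $(\cp_x, \gamma, \cp_y)$: the two ways of bracketing $(\cp_y \pcomp \gamma)\pcomp \cp_x$ versus $\cp_y \pcomp(\gamma \pcomp \cp_x)$ are identified precisely by the canonical $d$, and since all the $d_{\gamma,\gamma'}$ are parallel and functorial under composition of thin homotopies, the two composites agree on the nose after transporting back to the fiber over $\gamma$.

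**The expected obstacle.**

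The main bookkeeping difficulty — and the place I would be most careful — is tracking the canonical isomorphisms $d$ correctly. Each action map is defined only up to a reparameterization $d$, and the pentagon axiom itself contains a $d$, so the core of every verification is checking that the relevant thin homotopies compose consistently, i.e.\ that $d_{\gamma_2,\gamma_3}\circ d_{\gamma_1,\gamma_2}=d_{\gamma_1,\gamma_3}$ and that these $d$'s commute with $\chi$ by naturality of parallel transport. Since superficiality makes parallel transport along thin homotopies depend only on the endpoints, these cocycle and naturality properties hold, and the diagrammatic identities close up. I would therefore organize the proof around a single lemma recording the functoriality of the $d$'s under concatenation and their compatibility with $\chi$, and then deduce all four bimodule identities as formal consequences of \cref{eq:lsg:pentagon} and \cref{eq:lsg:unit}.
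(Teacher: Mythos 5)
Your proposal is correct and takes essentially the same route as the paper: the paper also proves commutativity of the two actions as the main diagram chase, using the pentagon axiom \cref{eq:lsg:pentagon} applied to the triple $(\cp_x,\gamma,\cp_y)$ together with the facts that $\chi$ is connection-preserving (so it intertwines the canonical isomorphisms $d$) and that $d$ is independent of the chosen thin homotopy, and then obtains associativity from \cref{eq:lsg:pentagon} in the same manner and unitality from \cref{eq:lsg:unit}. The bookkeeping points you flag --- compatibility of the $d$'s with $\chi$ and their consistency under composition of thin homotopies --- are exactly the two ingredients the paper invokes to close its diagram.
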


\begin{proof}
We consider the following diagram; the commutativity of its  outer square corresponds to the statement that left and right actions commute:
\begin{equation*}
\small
\alxydim{@C=0.8cm@R=3em}{\inf A_j|_{y} \otimes \inf R_{ij}|_{\gamma} \otimes \inf A_i|_x \ar[dd]_{\chi_{ijj}|_{\gamma,\cp_y} \otimes \id} \ar[rrr]^{\id \otimes \chi_{iij}|_{\cp_x,\gamma}} &&& \inf A_j|_y \otimes \inf R_{ij}|_{\gamma \pcomp \cp_x} \ar[rr]^{\id \otimes d_{}}  \ar[d]_{\chi_{ijj}|_{\gamma\pcomp \cp_x,\cp_y}} && \inf A_j|_y \otimes \inf R_{ij}|_{\gamma}\ar[d]^{\chi_{ijj}|_{\gamma,\cp_y}} \\ &&& \mquad\inf R_{ij}|_{\cp_y \pcomp (\gamma \pcomp \cp_x)} \ar[rr]_-{d_{}} && \inf R_{ij}|_{\cp_y \pcomp \gamma}  \ar[dd]^{d}  \\ \inf R_{ij}|_{\cp_y \pcomp \gamma} \otimes \inf A_i|_x \ar[d]_{d_{} \otimes \id} \ar[rr]^-{\chi_{iij}|_{\cp_x,\cp_y \pcomp \gamma}} &&  \inf R_{ij}|_{(\cp_y \pcomp \gamma)\circ \cp_x}\mquad \ar[d]^{d_{}} \ar[ur]^-{d}  \\ \inf R_{ij}|_{\gamma} \otimes \inf A_i|_x \ar[rr]_{\chi_{iij}|_{\cp_x,\gamma}} && \inf R_{ij}|_{\gamma \circ \cp_x} \ar[rrr]_{d_{}} &&& \inf R_{ij}|_{\gamma}}
\end{equation*}
The diagram in the upper left corner is the pentagon diagram of \cref{eq:lsg:pentagon}. The two rectangular diagrams commute because $\chi_{iij}$ is connection-preserving, and the diagram in the lower right corner is commutative because the isomorphism $d$ is independent of the ways the thin homotopy is performed. 

Associativity of the left action follows from  \cref{eq:lsg:pentagon} in a very similar way, 
\begin{comment}
It is the commutativity of the  diagram
\begin{equation*}
\small
\alxydim{@C=1.7cm@R=3em}{\inf A_j|_y \otimes \inf A_j|_y \otimes \inf R_{ij}|_{\gamma} \ar[r]^-{\id \otimes \chi_{ijj}|_{\gamma,\cp_y}} \ar[dd]_{\chi_{jjj} \otimes \id} & \inf A_j|_y \otimes \inf R_{ij}|_{\cp_y \circ \gamma} \ar[d]_{\chi_{ijj}|_{\cp_y \circ \gamma,\cp_y}} \ar[r]^-{\id \otimes d_{}}  &  \inf A_j|_y \otimes \inf R_{ij}|_{\gamma} \ar[d]^{\chi_{ijj}|_{\gamma,\cp_y}}  \\  & \inf R_{ij}|_{\cp_y\circ (\cp_y\circ \gamma)} \ar[r]_{d_{}} & \inf R_{ij}|_{\cp_y \circ \gamma} \ar[d]^{d} \\ \inf A_j|_y \otimes \inf R_{ij}|_{\gamma} \ar[r]_{\chi_{ijj}|_{\gamma,\cp_y}} & \inf R_{ij}|_{\cp_y \circ \gamma} \ar[u]^{d_{}} \ar[r]_{d_{}} & \inf R_{ij}|_{\gamma} }
\end{equation*}
\end{comment}
and the associativity of the right action is seen analogously. The fact that the actions are unital can easily be deduced from \cref{eq:lsg:unit}.
\end{proof}

\begin{lemma}
\label{lem:actionsconnectionpreserving}
The bundle morphisms $\lambda_{ij}$ and $\rho_{ij}$ are connection-preserving. \end{lemma}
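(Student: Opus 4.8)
The plan is to observe that both $\lambda_{ij}$ and $\rho_{ij}$ are, by their very construction, composites of two bundle morphisms: a slice of the lifted path concatenation $\chi$, and a canonical isomorphism $d$ coming from the superficial connection on $\inf R_{ij}$. Since a composite of connection-preserving morphisms is connection-preserving, and since the pullback of a connection-preserving morphism along a smooth map of path spaces is again connection-preserving, it suffices to treat the two factors separately. I will spell this out for $\lambda_{ij}$; the argument for $\rho_{ij}$ is identical after replacing $\cp_{\gamma(1)} \pcomp \gamma$ by $\gamma \pcomp \cp_{\gamma(0)}$ and $\ev_1^{*}\inf A_j$ by $\ev_0^{*}\inf A_i$.

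First I would handle the $\chi$-factor. The morphism $\gamma \mapsto \chi_{ijj}|_{\gamma,\cp_{\gamma(1)}}$ is the pullback of $\chi_{ijj}$ along the smooth map $P_{ij} \rightarrow P_{jj} \times_{Q_j} P_{ij}$ sending $\gamma$ to $(\cp_{\gamma(1)},\gamma)$. By \cref{LBGstr:5}, $\chi_{ijj}$ is connection-preserving over $P_{jj} \times_{Q_j} P_{ij}$, so its pullback is connection-preserving over $P_{ij}$. Under this pullback the bundle $\inf R_{jj}$ on the first factor is restricted along $\gamma \mapsto \cp_{\gamma(1)} = \cp(\ev_1(\gamma))$, and since $\inf A_j := \cp^{*}\inf R_{jj}$ carries precisely the connection induced from $\inf R_{jj}$ (\cref{lem:aibundle}), this restriction is exactly $\ev_1^{*}\inf A_j$ with its given connection. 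Hence the first factor is connection-preserving as a morphism $\ev_1^{*}\inf A_j \otimes \inf R_{ij} \rightarrow r^{*}\inf R_{ij}$, where $r\colon P_{ij} \rightarrow P_{ij}$ is the reparameterization $r(\gamma) := \cp_{\gamma(1)} \pcomp \gamma$.

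The remaining, and genuinely delicate, point is the $d$-factor: I must show that the family $\gamma \mapsto d_{\cp_{\gamma(1)} \pcomp \gamma,\gamma}$ is a connection-preserving isomorphism $r^{*}\inf R_{ij} \rightarrow \inf R_{ij}$, and this is where superficiality is indispensable. The reparameterization $r$ differs from the identity by concatenation with a constant path, which is a fixed-ends thin homotopy, so fiber-wise $d_{\cp_{\gamma(1)} \pcomp \gamma,\gamma}$ is exactly the canonical isomorphism of \cref{LBGstr:3}. To see that it intertwines the connections, I would let $s \mapsto \gamma_s$ be a smooth path in $P_{ij}$ and compare parallel transport in $\inf R_{ij}$ along $s \mapsto \gamma_s$ with that along $s \mapsto r(\gamma_s)$: the adjoint of the relevant two-parameter family is rank-deficient in the reparameterization direction, so the superficiality axiom (\cref{def:superficial}, \cref{sec:superficial}) forces these two parallel transports to agree after applying $d$. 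I expect this to be the main obstacle, and I would isolate it as a general lemma about superficial connections in \cref{sec:superficial} --- namely that the canonical isomorphisms $d_{\gamma,\gamma'}$ between fibers over fixed-ends thin-homotopic paths assemble into connection-preserving bundle morphisms --- and then simply invoke it here. Granting that lemma, $\lambda_{ij}$ and $\rho_{ij}$ are composites of connection-preserving morphisms and hence connection-preserving, which completes the proof.
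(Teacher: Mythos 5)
Your proposal is correct and follows essentially the same route as the paper's proof: both split $\lambda_{ij}$ (and $\rho_{ij}$) into the $\chi_{ijj}$-factor, handled by the connection-preservation axiom for the lifted path concatenation, and the $d$-factor, handled by superficiality. Your ``general lemma'' about the canonical isomorphisms $d$ is exactly what the paper verifies in place, using the reparameterization homotopy between $\Gamma$ and $\Gamma_\eta$, whose adjoint factors through $\Gamma^{\vee}$ (hence has rank at most two) and which fixes the paths of end points, so that \cref{def:superficial:ii} applies.
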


\begin{proof}
For $\lambda_{ij}$, we have to show that for each path $\Gamma:\gamma \to \gamma'$ in $P_{ij}$ there is a commutative diagram
\begin{equation*}
\small
\alxydim{@C=2cm@R=3em}{\inf A_j|_{y} \otimes \inf R_{ij}|_{\gamma} \ar[d]_{pt_{\eta} \otimes pt_{ij}|_{\Gamma}} \ar[r]^-{\chi_{ijj}|_{\gamma,\cp_y}} & \inf R_{ij}|_{\cp_y \pcomp \gamma} \ar[d]^{pt_{ij}|_{\Gamma_{\eta}}} \ar[r]^-{d_{\cp_y\circ\gamma,\gamma}} & \inf R_{ij}|_{\gamma} \ar[d]^{pt_{ij}|_{\Gamma}} \\ \inf A_j|_{y'} \otimes \inf R_{ij}|_{\gamma'} \ar[r]_-{\chi_{ijj}|_{\gamma',\cp_y}} & \inf R_{ij}|_{\cp_y' \pcomp \gamma'} \ar[r]_-{d_{\cp_{y'}\circ\gamma',\gamma'}} & \inf R_{ij}|_{\gamma'}}
\end{equation*}
Here, $\eta\in PQ_j$ is the path formed by the end points of $\Gamma$, i.e. $\eta(t):=\Gamma(t)(1)$, and $\Gamma_{\eta} \in PPM$  is defined by $\Gamma_{\eta}(t) := \cp_{\eta(t)} \pcomp \Gamma(t)$. Further, we have set $y:= \gamma(1)$ and $y':=\gamma'(1)$. The diagram on the left is commutative because $\chi_{ijj}$ is connection-preserving. The diagram on the right is commutative because the connection on $\inf R_{ij}$ is superficial.
Indeed, we note that the obvious homotopy $h\in PPPM$ between $\Gamma_{\eta}$ and $\Gamma$ fixes the  paths of end points (namely, $t \mapsto \Gamma(t)(0)$ and $\eta$). Further, the adjoint map $h^{\vee}:[0,1]^3 \to M$  factors through $\Gamma^{\vee}:[0,1]^2 \to M$, which implies that its rank is less or equal than  two. Then, the diagram on the right is precisely an instance of the property \cref{def:superficial:ii} of a superficial connection.  The discussion for $\rho_{ij}$ is analogous. 
\end{proof}

\begin{proposition}
\label{prop:bimodule}
The vector bundle $\inf R_{ij}$ is bundle of   $\ev_1^{*}\inf A_j$-$\ev_0^{*}\inf A_i$-bimodules.   
\end{proposition}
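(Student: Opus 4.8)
The plan is to synthesise the two preceding lemmas with the local-triviality machinery for bimodule bundles from the appendix, exactly parallel to the treatment of algebra bundles in the proof of \cref{lem:aibundle}. By \cref{lem:Rijbimodule} the morphisms $\lambda_{ij}$ and $\rho_{ij}$ equip each fibre $\inf R_{ij}|_{\gamma}$ with commuting, associative and unital left and right actions of $\inf A_j|_{\gamma(1)}$ and $\inf A_i|_{\gamma(0)}$; read as bundle morphisms $\ev_1^{*}\inf A_j \otimes \inf R_{ij} \to \inf R_{ij}$ and $\inf R_{ij} \otimes \ev_0^{*}\inf A_i \to \inf R_{ij}$ over $P_{ij}$, this is precisely the fibre-wise data of an $\ev_1^{*}\inf A_j$-$\ev_0^{*}\inf A_i$-bimodule bundle. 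So the algebraic content of the proposition is already supplied by that lemma.

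What remains is to upgrade this fibre-wise structure to a locally trivial one, i.e.\ to justify the word \emph{bundle}. The coefficient bundles $\inf A_i$ and $\inf A_j$ are already known to be bundles of simple, symmetric Frobenius algebras by \cref{prop:frob}, and by \cref{lem:actionsconnectionpreserving} both actions $\lambda_{ij}$ and $\rho_{ij}$ are connection-preserving. I would then invoke the bimodule counterpart of \cref{lem:algloctriv}: just as a connection-preserving multiplication over a bundle of simple algebras forces the algebra bundle to be locally trivial, connection-preserving actions of such algebra bundles force $\inf R_{ij}$ to be locally trivial as a bimodule, with local frames in which $\lambda_{ij}$ and $\rho_{ij}$ take the standard form. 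This is the bimodule analogue of \cref{re:loctriv:c}, and it yields the claim.

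The conceptual weight of the proposition thus lies in the two lemmas rather than in their synthesis, and the only genuinely new ingredient is local triviality. The main obstacle — as in the algebra case — is that fibre-wise bimodule structure together with a connection need not on its own produce a locally constant bimodule isomorphism type; what rules out jumps in this type is the simplicity of the coefficient algebras, already secured in \cref{prop:frob}, combined with the ability of the connection to transport the actions along thin homotopies. Once these ingredients are in place the local models can be matched, and $\inf R_{ij}$ is a bundle of $\ev_1^{*}\inf A_j$-$\ev_0^{*}\inf A_i$-bimodules.
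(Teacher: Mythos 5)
Your proposal is correct and its first three steps coincide exactly with the paper's proof: \cref{lem:Rijbimodule} supplies the fibre-wise commuting actions, \cref{lem:actionsconnectionpreserving} makes them connection-preserving, and \cref{lem:bimodloctriv} (the bimodule counterpart of \cref{lem:algloctriv} that you invoke) yields locality of the bimodule structure. Where you genuinely diverge is the last step, the passage from a \emph{local} bimodule structure to a \emph{bundle} of bimodules with one fixed typical fibre; this step is needed because $P_{ij}$ need not be path-connected, and \cref{lem:bimodloctriv} only gives a bundle of bimodules on each path component. The paper closes this gap by proving that $\inf R_{ij}$ is faithfully balanced (\cref{rem:faithfullybalanced}, which rests on the Morita equivalences of \cref{lem:compositionofbimodules,lem:morita} -- a forward reference within the paper) and then applying \cref{re:bimod:a}, whose proof uses uniqueness of Morita equivalence bimodules between simple algebras. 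You instead invoke a ``bimodule analogue of \cref{re:loctriv:c}'': simplicity of the typical fibres alone pins down the bimodule isomorphism type. No such lemma appears in the paper -- its appendix statement \cref{re:bimod:a} carries the extra faithfully-balanced hypothesis -- so you would have to supply it, but it is true and elementary in the present setting: every simple finite-dimensional complex algebra is central, so the tensor product $A_0 \otimes B_0^{op}$ of the typical fibres is again simple, its finite-dimensional modules are classified by dimension, and the constant rank of the vector bundle $\inf R_{ij}$ then forces a single typical fibre bimodule over all components. (Note this uses that the algebras are complex; over $\R$, for instance with $A_0=B_0=\C$, the analogue fails, which is presumably why the paper prefers the faithfully-balanced route.) Your argument buys a more self-contained proof, free of Morita theory and of the forward dependence on \cref{lem:compositionofbimodules}; the paper's route buys the faithfully-balanced property itself, which it wants later anyway.

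One small correction of attribution: within a path component it is parallel transport along \emph{arbitrary} paths, as in the proofs of \cref{lem:algloctriv,lem:bimodloctriv}, that matches the local models -- thin homotopies play no role here -- and across components the connection cannot help at all; there the matching is pure algebra, exactly as described above.
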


\begin{proof}
By now we have equipped $\inf R_{ij}$ with a bimodule structure $(\lambda_{ij},\rho_{ij})$. As a consequence of  \cref{lem:Rijbimodule,lem:actionsconnectionpreserving,lem:bimodloctriv} this bimodule structure is local. We show below (\cref{rem:faithfullybalanced}) that the bimodule structure is faithfully balanced, so that \cref{re:bimod:a} implies that $\inf R_{ij}$ is a   bundle of bimodules. 
\end{proof}

\begin{remark}
\begin{enumerate}[(a)]

\item
\label{re:bimodule:a}
$\inf R_{ii}|_{\cp_x}$ is the identity $\inf A_i|_x$-$\inf A_i|_x$-bimodule. 

\item 
\label{re:bimodule:b}
\cref{lem:actionsconnectionpreserving} implies that parallel transport along a fixed-ends path $\Gamma$ in $P_{ij}$ is an intertwiner of $\ev_1^{*}\inf A_j$-$\ev_0^{*}\inf A_i$-bimodules. In particular, the canonical isometries $d_{\gamma,\gamma'}: \inf R_{ij}|_{\gamma} \to \inf R_{ij}|_{\gamma'}$ for a  pair $(\gamma,\gamma')$ of thin fixed-ends homotopic paths are intertwiners. 
If a path $\Gamma$ does not fix the endpoints, then parallel transport is an intertwiner with respect to  the parallel transport  in $\inf A_i$  and $\inf A_j$ along the paths of end points, respectively. 

\item
\label{re:bimodule:c}
The lifted path reversal $\alpha_{ij}:\inf R_{ij} \to \overline{\inf R_{ji}}$ exchanges left and right actions under the involutions of the algebra bundles.
More precisely, for a path $\gamma\in P_{ij}$ with $x:=\gamma(0)$ and $y:=\gamma(1)$  the diagram
\begin{equation*}
\small
\alxydim{@C=2cm@R=3em}{\inf A_j|_y \otimes \inf R_{ij}|_{\gamma} \ar[d]_{\lambda_{ij}} \ar[r]^-{\cp^{*}\alpha_{ii} \otimes \alpha_{ij}} & \overline{\inf A_j|_y} \otimes \overline{\inf R_{ij}|_{\gamma}} \ar[r]^{braid} & \overline{\inf R_{ji}}|_{\gamma} \otimes \overline{\inf A_j|_y} \ar[d]^{\rho_{ij}} \\ \inf R_{ij}|_{\gamma} \ar[rr]_-{\alpha_{ij}} && \overline{\inf R_{ji}}|_{\gamma}}
\end{equation*}
is commutative. This follows  from \cref{eq:lsg:symmetrizing} and the fact that $\alpha_{ij}$ is connection-preserving. \begin{comment}
Indeed, we have two commutative subdiagrams
\begin{equation*}
\alxydim{@C=2cm}{\inf A_j|_y \otimes \inf R_{ij}|_{\gamma} \ar[d]_{\lambda_{ij}} \ar[r]^-{\alpha_i \otimes \alpha_{ij}} & \overline{\inf A_j|_y} \otimes \overline{\inf R_{ij}|_{\gamma}} \ar[r]^{braid} & \overline{\inf R_{ji}}|_{\gamma} \otimes \overline{\inf A_j|_y} \ar[d]^{\rho_{ij}} \\ \inf R_{ij}|_{\cp_y \pcomp \gamma} \ar[d]_{d_{\cp_y \pcomp \gamma,\gamma}} \ar[rr]_-{\alpha_{ij}|_{\cp_y\pcomp\gamma}} && \overline{\inf R_{ji}}|_{\cp_y \pcomp \gamma} \ar[d]^{d_{\cp_y \pcomp \gamma,\gamma}}
\\ 
\inf R_{ij}|_{\gamma} \ar[rr]_-{\alpha_{ij}|_{\gamma}} && \overline{\inf R_{ji}}|_{\gamma}
}
\end{equation*}
\end{comment}

\end{enumerate}
\end{remark}

Next, we consider the space $P_{jk} \times_{Q_j} P_{ij}$ of composable paths between three D-branes, equipped with the  projections $p_{ij}$ to $P_{ij}$ and $p_{jk}$ to $P_{jk}$, the composition $c$ to $P_{ik}$, and the projections $p_i,p_j,p_k$ to the end points of the paths.  

\begin{proposition}
\label{lem:compositionofbimodules}
Lifted path concatenation $\chi_{ijk}$ induces an isomorphism
\begin{equation*}
p_{jk}^{*}\inf R_{jk} \otimes_{p_j^{*}\inf  A_j} p_{ij}^{*}\inf R_{ij} \cong c^{*}\inf R_{ik}
\end{equation*}
of bundles of  $p_k^{*}\inf A_k$-$p_i^{*}\inf A_i$-bimodules over  $P_{jk} \times_{Q_j} P_{ij}$. 
\end{proposition}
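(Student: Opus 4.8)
The map in question is the bundle morphism $\bar\chi$ induced by $\chi_{ijk}$; concretely, over a pair of composable paths $(\gamma_{12},\gamma_{23})\in P_{jk}\times_{Q_j}P_{ij}$ it is $\chi_{ijk}|_{\gamma_{12},\gamma_{23}}$ read as a map out of the balanced tensor product. The plan is to establish three things and then conclude fibrewise: first that $\chi_{ijk}$ is middle-linear over $\inf A_j$, so that it descends to $p_{jk}^{*}\inf R_{jk}\otimes_{p_j^{*}\inf A_j}p_{ij}^{*}\inf R_{ij}$; second that the descended map $\bar\chi$ intertwines the residual $p_k^{*}\inf A_k$- and $p_i^{*}\inf A_i$-actions; and third that it is a fibrewise isomorphism. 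Throughout, $\chi_{ijk}$ is connection-preserving and all structure maps are local bundle morphisms (by \cref{lem:Rijbimodule,lem:actionsconnectionpreserving,prop:bimodule}), so it suffices to argue fibrewise and then invoke local triviality; connection-preservation of $\bar\chi$ is then automatic from that of $\chi_{ijk}$ together with the compatibility of the tensor-product connection.

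Middle-linearity and bimodule-linearity both follow from the associativity axiom \cref{eq:lsg:pentagon} specialised to triples of composable paths in which one entry is a constant path. For balancing over $\inf A_j$ I would insert $\cp_y$ in the middle slot, i.e. apply \cref{eq:lsg:pentagon} to the triple $(\gamma_{12},\cp_y,\gamma_{23})$ with $y=\gamma_{12}(1)=\gamma_{23}(0)$; after absorbing the reparameterisation isomorphisms $d$ (using that $\chi$ is connection-preserving, hence commutes with parallel transport, and that $d$ is independent of the chosen thin homotopy, exactly as in the proofs of \cref{lem:Rijbimodule,lem:actionsconnectionpreserving}) this identifies $\chi_{ijk}(\rho_{jk}(s\otimes a)\otimes t)$ with $\chi_{ijk}(s\otimes\lambda_{ij}(a\otimes t))$ for $a\in\inf A_j|_y$. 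Inserting $\cp$ at the $k$-endpoint (the triple $(\gamma_{12},\gamma_{23},\cp_z)$, $z=\gamma_{23}(1)$) gives left $\inf A_k$-linearity, and inserting $\cp$ at the $i$-endpoint (the triple $(\cp_x,\gamma_{12},\gamma_{23})$, $x=\gamma_{12}(0)$) gives right $\inf A_i$-linearity. These are the same kinds of diagram chase already carried out for the one- and two-brane actions, so I expect them to be routine.

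The substantive point is the isomorphism, which I would establish fibrewise and algebraically. By \cref{lem:aibundle} and \cref{prop:frob} every fibre $\inf A_m|_p$ is a simple complex algebra, hence isomorphic to $\mathrm{Mat}_{d_m}(\C)$ with $\rk\inf A_m=d_m^2$, and the rank relation $\rk(\inf R_{ij})^2=\rk(\inf R_{ii})\rk(\inf R_{jj})$ of \cref{re:lsg:rank2} gives $\rk\inf R_{ij}=d_id_j$. Thus $\inf A_k\otimes\inf A_i^{\mathrm{op}}\cong\mathrm{Mat}_{d_kd_i}(\C)$, and both $\inf R_{ik}|_{\gamma_{23}\pcomp\gamma_{12}}$ and the fibre of the balanced tensor product are modules over it of dimension $d_id_k$ (for the latter, $\dim=\tfrac{1}{d_j^2}(d_jd_k)(d_id_j)=d_id_k$). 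A nonzero module over a matrix algebra whose simple module already has dimension $d_id_k$ is itself simple, so both are simple; since $\bar\chi$ is a module morphism between them, Schur's lemma reduces everything to showing $\bar\chi\neq0$ in each fibre. I expect this non-vanishing to be the crux. I would obtain it from associativity together with the normalisation axioms: applying \cref{eq:lsg:pentagon} to the triple $(\gamma_{12},\gamma_{23},\overline{\gamma_{23}})$ and summing over an orthonormal basis of $\inf R_{jk}|_{\gamma_{23}}$ with the lifted path reversal $\alpha$ inserted, the unit relation \cref{re:lsg:unit} collapses one factor to a nonzero multiple of the unit $\epsilon_j$, after which \cref{eq:lsg:unit} turns the remaining composite into a reparameterisation of the identity on $\inf R_{ij}|_{\gamma_{12}}$. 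If $\chi_{ijk}|_{\gamma_{12},\gamma_{23}}$ vanished, the matching left-hand side of \cref{eq:lsg:pentagon} would vanish too, a contradiction. Hence $\bar\chi\neq0$, and by Schur together with the equality of dimensions it is a fibrewise isomorphism; local triviality of all bundles involved then promotes it to the claimed isomorphism of bimodule bundles, whose connection-preservation has already been noted.
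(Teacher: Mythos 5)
Your proof is correct, but it reaches the isomorphism by a genuinely different route than the paper. The first two steps coincide: like the paper, you obtain balancing over $\inf A_j$ and the intertwining of the outer actions from \cref{eq:lsg:pentagon} specialised to triples containing a constant path, together with connection-preservation of $\chi_{ijk}$ and the canonical transports $d$. For invertibility, however, the paper constructs an explicit fibrewise inverse: for an orthonormal basis $(v_1,\dots,v_n)$ of $\inf R_{ij}|_{\gamma}$ it sets $\psi(x) := \tfrac{1}{\sqrt{\rk(\inf R_{jj})}}\sum_{l} d\big(\chi_{jik}|_{\overline{\gamma},\gamma'\pcomp\gamma}(x\otimes \alpha_{ij}(v_l))\big)\otimes v_l$ and verifies $\chi_{ijk}\circ\psi = \id$ and $\psi\circ\chi_{ijk} = \id$ by two direct computations using \cref{eq:lsg:pentagon}, \cref{re:lsg:unit} and \cref{eq:lsg:unit}. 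You replace these two verifications by structure theory: simplicity of the fibre algebras and the rank relation \cref{re:lsg:rank2} make both sides simple modules of equal dimension over $\inf A_k|_z \otimes (\inf A_i|_x)^{op}$, so Schur's lemma reduces everything to non-vanishing of $\chi_{ijk}$ in each fibre, which you establish by a single computation of the same kind (your non-vanishing argument is essentially one direction of the paper's verification, used only qualitatively, and it is sound: \cref{re:lsg:unit} applied to the reversed path $\overline{\gamma_{23}}$, using that $\alpha$ is unitary and involutive, gives exactly the collapse to a nonzero multiple of $\epsilon_j$ that you need). Your route is shorter and explains why invertibility is forced; what the paper's route buys is independence from the simplicity statement: its proof never uses that the algebras $\inf A_i|_x$ are simple, whereas \cref{co:aisimple}, which you invoke through \cref{prop:frob}, is only proved later via the reduction to the point and the Moore--Segal theorem. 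This is a forward reference but not a circularity, since the proof of \cref{co:aisimple} does not use the bimodule results of this section (the paper makes the same forward reference in \cref{lem:aibundle}). One citation you should adjust: for the fibrewise and local module structures, cite \cref{lem:Rijbimodule}, \cref{lem:actionsconnectionpreserving} and \cref{lem:bimodloctriv} rather than \cref{prop:bimodule}, because the paper completes the proof of the latter only through \cref{rem:faithfullybalanced}, which rests on the Morita equivalence \cref{lem:morita} --- itself a corollary of the very proposition you are proving.
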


\begin{proof}
In order to see that $\chi_{ijk}$ is well-defined on $p_{jk}^{*}\inf R_{jk} \otimes_{p_j^{*}\inf  A_j} p_{ij}^{*}\inf R_{ij} $   it suffices to show that it vanishes on elements of the form $\rho_{jk}|_{\gamma'}(w \otimes a) \otimes v - w \otimes \lambda_{ij}|_{\gamma}(a\otimes v)$, where $a\in \inf A_j|_y$, $v\in \inf R_{ij}|_{\gamma}$ and $w\in \inf R_{jk}|_{\gamma'}$, and $\gamma(1)=y=\gamma'(0)$. This follows  from the definitions of $\lambda_{ij}$ and $\rho_{jk}$ and \cref{eq:lsg:pentagon} via a direct calculation.
\begin{comment}
Indeed:
\begin{align*}
\chi_{ijk}|_{\gamma,\gamma'}(\rho_{jk}|_{\gamma'}(w \otimes a) \otimes v) &= \chi_{ijk}|_{\gamma,\gamma'}(d_{\gamma' \pcomp \cp_y,\gamma'}(\chi_{jjk}|_{\cp_y,\gamma'}(w \otimes a)) \otimes d_{\gamma,\gamma}(v))
\\&= d_{(\gamma' \pcomp \cp_y) \pcomp \gamma,\gamma'\pcomp \gamma}(\chi_{ijk}|_{\gamma,\gamma'\pcomp \cp_y}(\chi_{jjk}|_{\cp_y,\gamma'}(w \otimes a) \otimes v))
\\&= d_{\gamma' \pcomp (\cp_y \pcomp \gamma),\gamma'\pcomp \gamma}(\chi_{ijk}|_{\cp_y \pcomp\gamma,\gamma'}(w \otimes \chi_{ijj}|_{\gamma,\cp_y}( a \otimes v))
\\&= \chi_{ijk}|_{\gamma,\gamma'}(d_{\gamma',\gamma'}(w) \otimes d_{\cp_y \pcomp \gamma,\gamma}( \chi_{ijj}|_{\gamma,\cp_y}( a \otimes v)))
\\&=\chi_{ijk}|_{\gamma,\gamma'}(w \otimes \lambda_{ij}|_{\gamma}(a\otimes v))\text{.} 
\end{align*}
\end{comment}

Similarly one checks that $\chi_{ijk}$ is an intertwiner for both actions. \begin{comment}
For the left action we check
\begin{align*}
&\mquad\chi_{ijk}|_{\gamma,\gamma'}(\lambda_{jk}|_{\gamma,\gamma'}(a \otimes (w \otimes_{A_j} v))) 
\\&= \chi_{ijk}|_{\gamma,\gamma'}(\lambda_{jk}|_{\gamma'}(a \otimes w) \otimes_{A_j} v)
\\&= \chi_{ijk}|_{\gamma,\gamma'}(d_{\cp_z \pcomp \gamma',\gamma'}(\chi_{jkk}|_{\gamma',\cp_z}(a \otimes w)) \otimes_{A_j} v)
\\&= d_{(\cp_z \pcomp \gamma') \pcomp \gamma,\gamma'\pcomp \gamma}(\chi_{ijk}|_{\gamma,\cp_z \pcomp \gamma'}(\chi_{jkk}|_{\gamma',\cp_z}(a \otimes w) \otimes_{A_j} v))
\\&= d_{\cp_z \pcomp (\gamma' \pcomp \gamma),\gamma' \pcomp \gamma}(\chi_{ikk}|_{\gamma' \pcomp \gamma,\cp_z}(a \otimes \chi_{ijk}|_{\gamma,\gamma'}(w \otimes_{A_j}v))) 
\\&= \lambda_{ik}|_{\gamma'\pcomp \gamma}(a \otimes \chi_{ijk}|_{\gamma,\gamma'}(w \otimes_{A_j}v)) 
\end{align*}
The right action goes analogously. 
\end{comment}
In order to show that it is an isomorphism we construct an inverse map in the fiber over a point $(\gamma',\gamma) \in P_{jk} \times_{Q_j} P_{ij}$.
Let $(v_1,...,v_n)$ be an orthonormal basis of $\inf R_{ij}|_{\gamma}$. For $x\in \inf R_{ik}|_{\gamma'\pcomp \gamma}$ we consider the element
\begin{equation*}
\psi(x) := \frac{1}{\sqrt{\mathrm{rk}(\inf R_{jj})}} \sum_{l=1}^{n} d_{(\gamma'\pcomp\gamma) \pcomp \overline{\gamma},\gamma'}\big (\chi_{jik}|_{\overline{\gamma},\gamma'\pcomp \gamma}(x \otimes \alpha_{ij}(v_l)) \big) \otimes v_l \in \inf R_{jk}|_{\gamma'} \otimes_{} \inf R_{ij}|_{\gamma}\text{.}
\end{equation*}
Using the fact that $\chi_{ijk}$ is connection-preserving and \cref{eq:lsg:pentagon}, we compute
\begin{equation*}
\chi_{ijk}|_{\gamma,\gamma'}(\psi(x)) 
=\frac{1}{\sqrt{\mathrm{rk}(\inf R_{jj})}} \sum_{l=1}^{n} d_{(\gamma' \pcomp \gamma )\pcomp (\overline{\gamma}\pcomp \gamma),\gamma'\pcomp \gamma}\big (\chi_{iik}|_{\overline{\gamma} \pcomp \gamma,\gamma'\pcomp \gamma}(x \otimes \chi_{iji}|_{\gamma,\overline{\gamma}}(\alpha_{ij}(v_l) \otimes v_l))\big)\text{.}
\end{equation*}
With \cref{re:lsg:unit} the latter becomes
\begin{equation*}
\chi_{ijk}|_{\gamma,\gamma'}(\psi(x)) 
=d_{(\gamma' \pcomp \gamma )\pcomp (\overline{\gamma}\pcomp \gamma),\gamma'\pcomp \gamma}(\chi_{iik}|_{\overline{\gamma} \pcomp \gamma,\gamma'\pcomp \gamma}(x \otimes d_{\cp_x,\overline{\gamma} \pcomp  \gamma}(\epsilon_i(x))))\text{.}
\end{equation*}
Via \cref{eq:lsg:unit} this is equal to $x$. \begin{comment}
The complete calculation is: 
\begin{align*}
\chi_{ijk}|_{\gamma,\gamma'}(\psi(x)) 
&= \frac{1}{\sqrt{\mathrm{rk}(\inf R_{jj})}} \sum_{l=1}^{n} \chi_{ijk}|_{\gamma,\gamma'}(d_{(\gamma'\pcomp\gamma) \pcomp \overline{\gamma},\gamma'}(\chi_{jik}|_{\overline{\gamma},\gamma'\pcomp \gamma}(x \otimes \alpha_{ij}(v_l))) \otimes d_{\gamma,\gamma}(v_l))
\\&= \frac{1}{\sqrt{\mathrm{rk}(\inf R_{jj})}}\sum_{l=1}^{n} d_{((\gamma' \pcomp \gamma )\pcomp\overline{\gamma})\pcomp \gamma,\gamma'\pcomp \gamma}(\chi_{ijk}|_{\gamma,(\gamma' \pcomp \gamma) \pcomp \overline{\gamma}}(\chi_{jik}|_{\overline{\gamma},\gamma'\pcomp \gamma}(x \otimes \alpha_{ij}(v_l))) \otimes v_l))
\\&\eqcref{eq:lsg:pentagon}
\frac{1}{\sqrt{\mathrm{rk}(\inf R_{jj})}} \sum_{l=1}^{n} d_{(\gamma' \pcomp \gamma )\pcomp (\overline{\gamma}\pcomp \gamma),\gamma'\pcomp \gamma}(\chi_{iik}|_{\overline{\gamma} \pcomp \gamma,\gamma'\pcomp \gamma}(x \otimes \chi_{iji}|_{\gamma,\overline{\gamma}}(\alpha_{ij}(v_l) \otimes v_l)))
\\&\eqcref{re:lsg:unit}
  d_{(\gamma' \pcomp \gamma )\pcomp (\overline{\gamma}\pcomp \gamma),\gamma'\pcomp \gamma}(\chi_{iik}|_{\overline{\gamma} \pcomp \gamma,\gamma'\pcomp \gamma}(x \otimes d_{\cp_x,\overline{\gamma} \pcomp  \gamma}(\epsilon_i(x))))
 \\&=  d_{(\gamma' \pcomp \gamma )\pcomp \cp_x,\gamma'\pcomp \gamma}(\chi_{iik}|_{\cp_x,\gamma'\pcomp \gamma}(x \otimes \epsilon_i(x)))
\\&\eqcref{eq:lsg:unit} 
x  
\end{align*}
\end{comment}
Conversely, we compute $\psi(\chi_{ijk}|_{\gamma,\gamma'}(w \otimes v))$ for $v\in \inf R_{ij}|_{\gamma}$ and $w\in \inf R_{jk}|_{\gamma'}$. Using the definitions, the fact that $\chi_{ijk}$ is connection-preserving, and \cref{eq:lsg:pentagon},  we obtain 
\begin{equation*}
\psi(\chi_{ijk}|_{\gamma,\gamma'}(w \otimes v))=\frac{1}{\sqrt{\mathrm{rk}(\inf R_{jj})}} \sum_{l=1}^{n} \rho_{jk}\big (w \otimes d_{\gamma\pcomp \overline{\gamma},\cp_y}(\chi_{jij}|_{\overline{\gamma},\gamma}(v \otimes \alpha_{ij}(v_l)))\big ) \otimes v_l\text{.}
\end{equation*}
Under the quotient by the $\inf A_j$-action, the right hand side is identified with 
\begin{equation*}
\psi(\chi_{ijk}|_{\gamma,\gamma'}(w \otimes v))=\frac{1}{\sqrt{\mathrm{rk}(\inf R_{jj})}} \sum_{l=1}^{n} w \otimes \lambda_{ij} (d_{\gamma\pcomp \overline{\gamma},\cp_y}(\chi_{jij}|_{\overline{\gamma},\gamma}(v \otimes \alpha_{ij}(v_l))) \otimes v_l)\text{.}
\end{equation*} 
Again, from the definitions, the fact that $\chi_{ijk}$ is connection-preserving, and \cref{eq:lsg:pentagon} we obtain
\begin{equation*}
\psi(\chi_{ijk}|_{\gamma,\gamma'}(w \otimes v))=\frac{1}{\sqrt{\mathrm{rk}(\inf R_{jj})}} \sum_{l=1}^{n} w \otimes d_{ \gamma \pcomp (\overline{\gamma}\pcomp \gamma),\gamma}(\chi_{iij} |_{\overline{\gamma}\pcomp\gamma,\gamma}(v \otimes \chi_{iji}|_{\gamma,\overline{\gamma}}(\alpha_{ij}(v_l) \otimes v_l))\text{.}
\end{equation*}
Now we use \cref{re:lsg:unit} to obtain
\begin{equation*}
\psi(\chi_{ijk}|_{\gamma,\gamma'}(w \otimes v))=w \otimes  d_{ \gamma \pcomp (\overline{\gamma}\pcomp \gamma),\gamma}(\chi_{iij} |_{\overline{\gamma}\pcomp\gamma,\gamma}(v \otimes d_{\cp_x,\overline{\gamma} \pcomp  \gamma}(\epsilon_i(x))))\text{.}
\end{equation*}
Finally, by \cref{eq:lsg:unit}, the latter is equal to $w \otimes v$.
\end{proof}

\begin{corollary}
\label{lem:morita}
The  bimodule bundles $\inf R_{ij}$ and $\inf R_{ji}$ are invers to each other, in the sense that
there exist bimodules isomorphisms   
\begin{equation*}
\inf R_{ji}|_{\overline{\gamma}} \otimes_{\inf A_j|_y} \inf R_{ij}|_{\gamma} \cong \inf A_i|_x
\quand
\inf R_{ij}|_{\gamma} \otimes_{\inf A_i|_x} \inf R_{ji}|_{\overline{\gamma}} \cong \inf A_j|_y\text{,}
\end{equation*}
for every $\gamma\in P_{ij}$ with $x:=\gamma(0)$ and $y:=\gamma(1)$, forming bundle isomorphisms over $P_{ij}$. In particular, the bimodule $\inf R_{ij}|_{\gamma}$ establishes a Morita equivalence.  
\end{corollary}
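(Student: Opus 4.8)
The plan is to read both isomorphisms off \cref{lem:compositionofbimodules} by restricting its composition isomorphism along the ``antidiagonal'' given by path reversal. First I would note that $r\maps P_{ij} \to P_{ji} \times_{Q_j} P_{ij}$, $\gamma \mapsto (\overline{\gamma},\gamma)$, is a smooth map of diffeological spaces: reversal is smooth and $\overline{\gamma}(0)=\gamma(1)\in Q_j$, so the pair indeed lies in the fibre product. Pulling back the bimodule isomorphism of \cref{lem:compositionofbimodules} for the index triple $(i,j,i)$ along $r$ then yields a bundle isomorphism
\begin{equation*}
\inf R_{ji}|_{\overline{\gamma}} \otimes_{\inf A_j|_y} \inf R_{ij}|_{\gamma} \cong \inf R_{ii}|_{\overline{\gamma}\pcomp \gamma}
\end{equation*}
of $\inf A_i|_x$-$\inf A_i|_x$-bimodules over $P_{ij}$, the left action coming from $\inf R_{ji}|_{\overline{\gamma}}$ and the right one from $\inf R_{ij}|_{\gamma}$.

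Next I would identify the right-hand side with $\inf A_i|_x$. The loop $\overline{\gamma}\pcomp\gamma \in P_{ii}$ has both endpoints equal to $x$ and is thin homotopic to $\cp_x$ with fixed endpoints (retract the out-and-back path). Hence the canonical isomorphism $d_{\overline{\gamma}\pcomp\gamma,\cp_x}$ of \cref{LBGstr:3} is available, and by \cref{re:bimodule:b} it is an intertwiner of $\inf A_i|_x$-$\inf A_i|_x$-bimodules onto $\inf R_{ii}|_{\cp_x}=\inf A_i|_x$, which by \cref{re:bimodule:a} is the identity bimodule. Composing the two maps gives the first claimed isomorphism; since each factor is a bundle isomorphism over $P_{ij}$, so is the composite. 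The second isomorphism is entirely symmetric: applying \cref{lem:compositionofbimodules} for the triple $(j,i,j)$ along $\gamma \mapsto (\gamma,\overline{\gamma})$ produces $\inf R_{ij}|_{\gamma} \otimes_{\inf A_i|_x} \inf R_{ji}|_{\overline{\gamma}} \cong \inf R_{jj}|_{\gamma\pcomp\overline{\gamma}}$ as $\inf A_j|_y$-$\inf A_j|_y$-bimodules, and $\gamma\pcomp\overline{\gamma}$ is thin homotopic to $\cp_y$ rel endpoints, so $d$ identifies the target with $\inf A_j|_y$.

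Finally, having these two bimodule isomorphisms, with the module structures matched as recorded above, is by definition the statement that $\inf R_{ij}|_{\gamma}$ is an invertible bimodule with inverse $\inf R_{ji}|_{\overline{\gamma}}$, i.e.\ that it implements a Morita equivalence between $\inf A_i|_x$ and $\inf A_j|_y$; the family version over $P_{ij}$ is immediate since everything was constructed as bundle isomorphisms. (By \cref{prop:frob} the fibres are finite-dimensional simple $\C$-algebras, so a Morita equivalence exists abstractly; the point of the corollary is that lifted path concatenation provides one explicitly and coherently in families.) I expect no substantive obstacle here, as all the analysis is already contained in \cref{lem:compositionofbimodules}; the only genuine care needed is the index and orientation bookkeeping, together with checking that $\overline{\gamma}\pcomp\gamma$ and $\gamma\pcomp\overline{\gamma}$ are thin homotopic rel endpoints to the respective constant loops, so that the canonical isomorphism $d$ applies and is a bimodule intertwiner.
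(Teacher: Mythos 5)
Your proposal is correct and follows essentially the same route as the paper: apply \cref{lem:compositionofbimodules} to the composable pairs $(\gamma,\overline{\gamma})$ and $(\overline{\gamma},\gamma)$, then use the canonical thin-homotopy isomorphism $d$ (a bimodule intertwiner by \cref{re:bimodule:b}) to identify $\inf R_{ii}|_{\overline{\gamma}\pcomp\gamma}\cong\inf R_{ii}|_{\cp_x}=\inf A_i|_x$ and $\inf R_{jj}|_{\gamma\pcomp\overline{\gamma}}\cong\inf A_j|_y$. Your version is simply more explicit about the bookkeeping (smoothness of the reversal map, the thin retraction of the out-and-back loop, and which actions survive the relative tensor product), and your placement of the tensor-product subscripts $\inf A_j|_y$ and $\inf A_i|_x$ correctly matches the statement, whereas the paper's own proof text has them accidentally swapped.
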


\begin{proof}
\cref{lem:compositionofbimodules} provides bimodule isomorphisms
\begin{equation*}
\inf R_{ji}|_{\overline{\gamma}} \otimes_{\inf A_i|_x} \inf R_{ij}|_{\gamma} \cong \inf R_{ii}|_{\overline{\gamma}\pcomp\gamma}
\quand
\inf R_{ij}|_{\gamma} \otimes_{\inf A_j|_y} \inf R_{ji}|_{\overline{\gamma}} \cong \inf R_{jj}|_{\gamma\pcomp\overline{\gamma}}\text{,}
\end{equation*}
and by \cref{re:bimodule:b} we have bimodule isomorphisms $\inf R_{ii}|_{\overline{\gamma}\pcomp\gamma} \cong \inf R_{ii}|_{\cp_x}=\inf A_i|_x$  and $\inf R_{jj}|_{\gamma \pcomp \overline{\gamma}} \cong \inf R_{jj}|_{\cp_y}=\inf A_j|_y$. \end{proof}

\begin{remark}
\begin{enumerate}[(a)]

\item
Since all algebras $\inf A_i|_x$ are simple, it is clear that they are all Morita equivalent (to $\C$); the point is that  LBG provides a consistent choice of these Morita equivalences, parameterized by paths.

\item
\label{rem:alphaunique}
Inverses of invertible bimodules are unique up to unique intertwiners, and a canonical choice is to take the  complex conjugate vector space with swapped left and right actions. \Cref{re:bimodule:c} shows that $\alpha_{ij}:\inf R_{ij} \to \overline{\inf R_{ji}}$ is that unique intertwiner. In particular, if $(\inf L,\lambda, \inf R,\phi,\chi,\epsilon,\alpha)$ and $(\inf L,\lambda, \inf R,\phi,\chi,\epsilon,\alpha')$ are LBG objects, then $\alpha=\alpha'$.

\item 
\label{rem:faithfullybalanced}
By \cref{lem:Rijbimodule} we have morphisms between algebra bundles
\begin{equation*}
\ev_1^{*}\inf A_j \to \mathrm{End}_{\ev_0^{*}\inf A_i}(\inf R_{ij})
\quand
\ev_0^{*}\inf A_i \to (\mathrm{End}_{\ev_1^{*}\inf A_j}(\inf R_{ij}))^{op}
\end{equation*}
over $P_{ij}$, where $\mathrm{End}_{\ev_0^{*}\inf A_i}(\inf R_{ij})$ and $(\mathrm{End}_{\ev_1^{*}\inf A_j}(\inf R_{ij}))^{op}$ are bundles of endomorphisms of right $\ev_0^{*}\inf A_i$-module bundles and of left $\ev_1^{*}\inf A_j$-module bundles, respectively. Since the bimodule $\inf R_{ij}$ establishes a Morita equivalence, it is  faithfully balanced; i.e., the above algebra bundle homomorphisms are isomorphisms. 

\end{enumerate}

\end{remark}

The following statement combines the bimodules structure with the fusion representation, and will be useful in \cref{sec:regression}.

\begin{lemma}
\label{lem:fusionandbimodule}
The fusion representation commutes with the bimodule actions of \cref{lem:Rijbimodule}; more precisely, the diagrams
\begin{equation*}
\small
\alxydim{@C=2cm@R=3em}{\inf A_j|_y \otimes (\inf L|_{\gamma_1 \cup \gamma_2} \otimes \inf R_{ij}|_{\gamma_2}) \ar[r]^-{\id \otimes \phi_{ij}|_{\gamma_1,\gamma_2}} \ar[d]_{braid \otimes \id} &  \inf A_j|_y \otimes \inf R_{ij}|_{\gamma_1} \ar[dd]^{\lambda_{ij}|_{\gamma_1}} \\ \inf L|_{\gamma_1 \cup \gamma_2} \otimes( \inf A_j|_y  \otimes \inf R_{ij}|_{\gamma_2}) \ar[d]_{\id \otimes \lambda_{ij}|_{\gamma_2}} & \\  \inf L|_{\gamma_1\cup\gamma_2} \otimes \inf R_{ij}|_{\gamma_2} \ar[r]_-{\phi_{ij}|_{\gamma_1,\gamma_2}} &  \inf R_{ij}|_{\gamma_1} }
\end{equation*}
and
\begin{equation*}
\small
\alxydim{@C=2cm@R=3em}{\inf  L|_{\gamma_1 \cup \gamma_2} \otimes \inf R_{ij}|_{\gamma_2} \otimes \inf A_i|_x \ar[r]^-{\phi_{ij}|_{\gamma_1,\gamma_2} \otimes \id} \ar[d]_{\id \otimes \rho_{ij}|_{\gamma_2}} &  \inf R_{ij}|_{\gamma_1} \otimes \inf A_i|_x  \ar[d]^{\rho_{ij}|_{\gamma_1}} \\   \inf L|_{\gamma_1\cup\gamma_2} \otimes \inf R_{ij}|_{\gamma_2} \ar[r]_-{\phi_{ij}|_{\gamma_1,\gamma_2}} & \inf R_{ij}|_{\gamma_1} }
\end{equation*}
are commutative for all $(\gamma_1,\gamma_2)\in P_{ij}^{[2]}$ with $x := \gamma_1(0)=\gamma_2(0)$ and $y:=\gamma_1(1)=\gamma_2(1)$.
\end{lemma}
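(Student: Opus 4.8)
The plan is to obtain both squares as special cases of the compatibility axiom \cref{eq:lsg:comp}, by specialising one of the two composable paths occurring in that axiom to a constant path. The point is that the bimodule actions are assembled from $\chi$ against constant loops: $\lambda_{ij}|_\gamma$ equals $d_{\cp_y\pcomp\gamma,\gamma}\circ\chi_{ijj}|_{\gamma,\cp_y}$ and $\rho_{ij}|_\gamma$ equals $d_{\gamma\pcomp\cp_x,\gamma}\circ\chi_{iij}|_{\cp_x,\gamma}$, while the algebra fibres $\inf A_j|_y=\inf R_{jj}|_{\cp_y}$ and $\inf A_i|_x=\inf R_{ii}|_{\cp_x}$ sit over constant paths. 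Hence each square is an instance of \cref{eq:lsg:comp} in which one pair of paths is constant and the corresponding fusion representation is trivialised by the neutral section $\nu$ of \cref{LBGstr:2}.

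For the first square I apply \cref{eq:lsg:comp} with $(i,j,k)\df(i,j,j)$, the paths $\gamma_{12}\df\gamma_1,\ \gamma_{12}'\df\gamma_2$ in $P_{ij}$, and $\gamma_{23}\df\gamma_{23}'\df\cp_y$ in $P_{jj}$, and I feed the neutral section $\nu_{\cp_y}$ into the slot $\inf L|_{\cp_y\cup\cp_y}$. Then the map $\chi_{ijk}=\chi_{ijj}|_{\gamma_1,\cp_y}$ at the bottom of that diagram is exactly the one underlying $\lambda_{ij}|_{\gamma_1}$, and the left factor $\phi_{jk}=\phi_{jj}|_{\cp_y,\cp_y}(\nu_{\cp_y}\otimes-)$ becomes the identity on $\inf A_j|_y$ by \cref{re:fusrepcan}. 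At the same time the neutrality of $\nu$ for the fusion product $\lambda$ collapses the map $\lambda'$ of \cref{eq:lsg:comp} to the canonical reparameterisation $\inf L|_{\gamma_1\cup\gamma_2}\to\inf L|_{(\cp_y\pcomp\gamma_1)\cup(\cp_y\pcomp\gamma_2)}$. What is left is the desired square phrased over the paths $\cp_y\pcomp\gamma_1$ and $\cp_y\pcomp\gamma_2$; post-composing with $d_{\cp_y\pcomp\gamma_1,\gamma_1}$ and pulling it through $\phi_{ij}$ produces the two instances of $d$ in $\lambda_{ij}|_{\gamma_1}$ and $\lambda_{ij}|_{\gamma_2}$, together with a cancelling reparameterisation on the $\inf L$-factor, yielding the stated diagram.

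The second square is handled symmetrically, now applying \cref{eq:lsg:comp} with $(i,j,k)\df(i,i,j)$, constant paths $\gamma_{12}\df\gamma_{12}'\df\cp_x$ in $P_{ii}$, and $\gamma_{23}\df\gamma_1,\ \gamma_{23}'\df\gamma_2$ in $P_{ij}$. Here $\chi_{iij}|_{\cp_x,\gamma_1}$ underlies $\rho_{ij}|_{\gamma_1}$, the factor $\phi_{ij}$ of the axiom is the genuine fusion representation on $\inf R_{ij}$, and feeding $\nu_{\cp_x}$ into $\inf L|_{\cp_x\cup\cp_x}$ trivialises $\phi_{ii}$ by \cref{re:fusrepcan} and collapses $\lambda'$ exactly as before.

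The only genuine work is the bookkeeping of the reparameterisation isomorphisms $d$, which I expect to be the main obstacle. Two identifications are needed. First, that $\lambda'$ evaluated on $\nu$ reduces to a single $d$: this follows by unwinding the definition of $\lambda'$ in \cref{eq:lsg:comp}, using that $\overline{\cp_y}\pcomp\cp_y$ is thin homotopic to $\cp_y$, so the first $d\otimes d$ carries $\nu_{\cp_y}$ to a neutral section, and that $\nu$ is neutral for $\lambda$. Second, that the $d$'s appearing in $\lambda_{ij}$ and $\rho_{ij}$ may be moved through the fusion representation, i.e.\ $d_{\cp_y\pcomp\gamma_1,\gamma_1}\circ\phi_{ij}|_{\cp_y\pcomp\gamma_1,\cp_y\pcomp\gamma_2}=\phi_{ij}|_{\gamma_1,\gamma_2}\circ(d\otimes d_{\cp_y\pcomp\gamma_2,\gamma_2})$; this holds because $\phi_{ij}$ is connection-preserving \cref{LBGstr:4} and each $d$ is parallel transport along a thin homotopy, and it is the same superficiality argument already used in \cref{lem:actionsconnectionpreserving}. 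Once these are in place, the two squares reduce verbatim to \cref{eq:lsg:comp}.
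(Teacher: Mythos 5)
Your proposal is correct and follows essentially the same route as the paper's proof: both specialize axiom (LBG3) to the case where one pair of composable paths is the constant path ($k=j$, $\gamma_{23}=\gamma_{23}'=\cp_y$ for the left action, and the analogous choice with constant paths in $P_{ii}$ for the right action), feed in the neutral section $\nu$ so that \cref{re:fusrepcan} trivializes one fusion-representation factor and collapses $\lambda'$ to a reparameterization, and then absorb the remaining canonical isomorphisms $d$ using that $\phi_{ij}$ is connection-preserving and the connections are superficial. Your explicit treatment of the $d$-bookkeeping and of the second square simply spells out what the paper compresses into \quot{the obvious reparameterizations} and \quot{follows analogously}.
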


\begin{proof}
For the first diagram, we insert  $k=j$, $\gamma_{12}:=\gamma_1$, $\gamma_{12}':= \gamma_2$, $\gamma_{23}:=\gamma_{23}':=\cp_y$ into \cref{eq:lsg:comp} and use the section $\nu_{\cp_y}$ into $\inf L_{\gamma_{23} \cup \gamma_{23}'}$. As $\nu_{\cp_y}$ is neutral with respect to the fusion product and  the fusion representation (\cref{re:fusrepcan})  we obtain from  \cref{eq:lsg:comp} the commutativity of the diagram 
\begin{equation*}
\small
\alxydim{@C=2cm@R=3em}{ \inf L|_{\gamma_{1} \cup \gamma_{2}} \otimes \inf R_{jj}|_{\cp_y} \otimes \inf R_{ij}|_{\gamma_{2}} \ar[r]^-{d \otimes \chi_{ijj}|_{\gamma_{12}',\cp_y}}  \ar[d]_{ braid \otimes \id} & \inf L|_{(\cp_y \pcomp \gamma_{1}) \cup (\cp_y \pcomp \gamma_{2})} \otimes \inf R_{ij}|_{\cp_y \pcomp \gamma_{2}}  \ar[dd]^{\phi_{ij}|_{\cp_y \pcomp \gamma_{1} ,\cp_y \pcomp \gamma_{2}}} \\ \inf R_{jj}|_{\cp_y} \otimes \inf L|_{\gamma_{1} \cup \gamma_{2}} \otimes \inf R_{ij}|_{\gamma_{2}} \ar[d]_{\id\otimes \phi_{ij}|_{\gamma_{1},\gamma_{2}}}  & \\ \inf R_{jj}|_{\cp_y} \otimes \inf R_{ij}|_{\gamma_{1}} \ar[r]_{\chi_{ijj}|_{\gamma_{1},\cp_y}} & \inf R_{ij}|_{\cp_y \pcomp \gamma_{1}\text{.}} }
\end{equation*}
Using the obvious reparameterizations and the superficial connection, as well as the fact that $\phi_{ij}$ is connection-preserving, this shows the commutativity of the first diagram. The second diagram follows analogously.
\end{proof}

\subsection{Reduction to the point}

\label{sec:mooresegal}

One important insight of the Stolz-Teichner programme \cite{stolz1} is that  \emph{classical} field theories (with a target space) and \emph{quantum} field theories   fit into the same framework of functorial field theories, in such a way that the target space of a quantum theory is a point. Under a conjectural  identification between certain types of field theories  and generalized cohomology theories, quantization is the pushforward to the point in that cohomology theory. 

As we will show in \cite{Bunk2018}, our LBG is precisely the data for a  2-dimensional open-closed topological field theory with target space $M$. On the other hand, Lazaroiu, Lauda-Pfeiffer and Moore-Segal determined the data for a 2-dimensional open-closed topological
 \emph{quantum} field theory \cite{Lazaroiu2001,Moore2006,Lauda2008}, resulting in a structure called an $I$-colored knowledgable Frobenius algebra in \cite{Lauda2008}. In this subsection we prove that both data are consistent in the sense that in the case $M=\{\ast\}$   our LBG reduces to an $I$-colored knowledgable Frobenius algebra (with  additional structure and properties). Loosely speaking, LBG is a family of  $I$-colored knowledgable Frobenius algebras. We start with the following simple reduction of LBG to a point:

\begin{lemma}
\label{lem:LBGstrpt}
Consider the target space $(M,Q)$ with $M=\{\ast\}$ and $Q=\{\ast\}_{i\in I}$. Then, the category $\lsg(M,Q)$ is canonically equivalent to a category $\lsg^{(I)}$ defined as follows. An object is a septuple $(\inf L,\lambda, \inf R,\phi,\chi,\epsilon,\alpha)$ consisting of  the following structure:
\begin{enumerate}[\normalfont (1$^\ast$),leftmargin=*]

\item
\label{LBGstrpt:1}
\label{LBGstrpt:2}
A complex inner product space $\inf L$ together with 
a unitary isomorphism
$\lambda: \inf L \otimes \inf L \to \inf L$
such that $(\inf L,\lambda)$ is a commutative algebra.

\item
\label{LBGstrpt:3}
A family $\inf R=\{\inf R_{ij}\}_{i,j\in I}$ of complex inner product spaces.

\item 
\label{LBGstrpt:4}
A family $\phi=\{\phi_{ij}\}_{i,j\in I}$ of  unitary isomorphisms
$\phi_{ij}: \inf L\otimes \inf R_{ij} \to \inf R_{ij}$,
forming a representation of $(\inf L,\lambda)$ on $\inf R_{ij}$. 

\item
\label{LBGstrpt:5}
\label{LBGstrpt:6}
A family $\chi=\{\chi_{ijk}\}_{i,j,k\in I}$ of associative, linear maps 
$\chi_{ijk}: \inf R_{jk} \otimes \inf R_{ij} \to \inf R_{ik}$, 
and a family $\epsilon=\{\epsilon_i\}_{i\in I}$ of   elements $\epsilon_i\in\inf R_{ii}$ that are neutral with respect to $\chi_{iij}$ and $\chi_{ijj}$.

\item
\label{LBGstrpt:7}
A family $\alpha=\{\alpha_{ij}\}_{i,j\in I}$ of  unitary isomorphisms
$\alpha_{ij}: \inf R_{ij} \to \overline{\inf R_{ji}}$
that are unital ($\alpha_{ii}(\epsilon_i)=\epsilon_i$), involutive $(\alpha_{ji} \circ \alpha_{ij}=\id_{\inf R_{ij}})$ and anti-multiplicative, i.e.  $\chi_{kji}(\alpha_{ij}(v) \otimes \alpha_{jk}(w))=\alpha_{ik}(\chi_{ijk}(w \otimes v))$ for all $v\in \inf R_{ij}$, $w\in \inf R_{jk}$.

\end{enumerate}
This structure has to satisfy the following axioms:
\begin{enumerate}[\normalfont (LBG1$^{\ast}$),leftmargin=*]

\item
\label{eq:lsgpt:comp}
The following diagram is commutative:
\begin{equation*}
\small
\alxydim{@C=2cm@R=3em}{\inf L \otimes \inf L \otimes \inf R_{jk} \otimes \inf R_{ij} \ar[r]^-{\lambda \otimes \chi_{ijk}}  \ar[d]_{\id \otimes braid \otimes \id} & \inf L \otimes \inf R_{ik}  \ar[dd]^{\phi_{ik}} \\ \inf L\otimes \inf R_{jk} \otimes \inf L \otimes \inf R_{ij} \ar[d]_{\phi_{jk} \otimes \phi_{ij}}  & \\ \inf R_{jk} \otimes \inf R_{ij} \ar[r]_{\chi_{ijk}} & \inf R_{ik} }
\end{equation*}

\item
\label{eq:lsgpt:invariance}
For any elements $v,w\in \inf R_{ij}$ we have
\begin{equation*}
h_{ii}(\chi_{iji}(\alpha_{ij}(w)\otimes v) , \epsilon_i)=h_{ij}(v,w) = h_{jj}(\epsilon_j,\chi_{jij}(w \otimes \alpha_{ij}(v))) \text{.}
\end{equation*}

\item
\label{eq:lsgpt:symmetrizingfusion}
The following diagram, where  $\tilde\lambda$ is defined as in \cref{LBGstr:2*}, is commutative:
\begin{equation*}
\small
\alxydim{@R=3em}{\inf L \otimes \inf R_{ij} \ar[d]_{\tilde\lambda \otimes \alpha_{ij}}\ar[r]^-{\phi_{ij}} & \inf R_{ij} \ar[d]^{\alpha_{ij}} \\ \overline{\inf L} \otimes \overline{\inf R}_{ji} \ar[r]_-{\phi_{ji}} & \overline{\inf R_{ji}} }
\end{equation*}

\item
\label{eq:lsgpt:cardy}
For  every orthonormal basis  $(v_1,...,v_n)$  of $\inf R_{ij}$ and every $v\in \inf R_{ii}$ we have
\begin{equation*}
\sum_{k=1}^{n} \chi_{jij}(\chi_{iij}(v_k \otimes v) \otimes \alpha_{ij}(v_k))=h_{ii}(\epsilon_i,v)\cdot  \epsilon_j\text{.}
\end{equation*}
\end{enumerate}
Finally, a morphism in $\lsg^{(I)}$ is a pair $(\varphi,\xi)$, consisting of a  unitary algebra isomorphism $\varphi:\inf L\to \inf L'$  and of a family $\xi=\{\xi_{ij}\}_{i,j\in I}$ of unitary  isomorphisms $\xi_{ij}:\inf R_{ij} \to \inf R_{ij}'$ that are compatible with the remaining structure in the obvious way. 
\qed
\end{lemma}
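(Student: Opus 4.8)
The plan is to exploit that loop space brane geometry degenerates to pure linear algebra once the target is a single point, and then to check that the ten axioms of $\lsg$ repackage exactly into the data-constraints and four axioms of $\lsg^{(I)}$. First I would record that all the relevant mapping spaces collapse. Since $C^\infty(S^1,\{\ast\})$ and $C^\infty([0,1],\{\ast\})$ each consist of a single constant map, the loop space $LM$, the path space $PM$, every $P_{ij}$, and all the fibre products $PM^{[2]}$, $PM^{[3]}$, $P_{ij}^{[2]}$ and $P_{jk}\times_{Q_j}P_{ij}$ reduce to one-point diffeological spaces. In particular every path operation occurring in the axioms — fusion $\gamma_1\cup\gamma_2$, concatenation $\gamma_{23}\pcomp\gamma_{12}$, reversal $\overline\gamma$, the constant paths $\cp_x$, and the flat-loop inclusion $\gamma\mapsto\gamma\cup\gamma$ — is the unique endomorphism of the one-point space.

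Next I would set up the reduction functor $\lsg(\{\ast\},\{\ast\}_{i\in I})\to\lsg^{(I)}$ by passing to the fibre over the unique point. A hermitian line (resp.\ vector) bundle over a one-point space is precisely a one-dimensional (resp.\ finite-dimensional) complex inner product space, so $\inf L$ and the $\inf R_{ij}$ become inner product spaces and the structure maps $\lambda,\phi,\chi,\epsilon,\alpha$ become linear maps. The crucial simplification is that connections over a point carry no information and superficiality is vacuous: parallel transport exists only along the constant path and is the identity, so every canonical isomorphism $d_{\gamma,\gamma'}$ of \cref{LBGstr:1,LBGstr:3} and every reparameterization isomorphism collapses to the identity. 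I would also note that $\lambda\colon\inf L\otimes\inf L\to\inf L$ being a unitary \emph{isomorphism} forces $\dim\inf L=1$, which is exactly consistent with $\inf L$ being a line bundle, and that the cocycle condition, the neutral section $\nu$, and the symmetrizing property of the fusion product become, respectively, associativity, a unit, and commutativity of $\lambda$ — that is, precisely the commutative-algebra structure of \cref{LBGstrpt:1}.

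Then I would match the axioms term by term, with all $d$'s and reparameterizations now trivial. Axiom \cref{eq:lsg:fusionass} becomes the statement that $\phi_{ij}$ is an $(\inf L,\lambda)$-representation (\cref{LBGstrpt:4}); \cref{eq:lsg:pentagon} becomes associativity of $\chi$ and \cref{eq:lsg:unit} becomes neutrality of $\epsilon_i$ (\cref{LBGstrpt:5}); and \cref{eq:lsg:unitinversion,eq:lsg:involutive,eq:lsg:symmetrizing} become the unital, involutive and anti-multiplicative clauses of \cref{LBGstrpt:7}. The four genuinely relational axioms survive: \cref{eq:lsg:comp,eq:lsg:invariance,eq:lsg:symmetrizingfusion,eq:lsg:cardy} become \cref{eq:lsgpt:comp,eq:lsgpt:invariance,eq:lsgpt:symmetrizingfusion,eq:lsgpt:cardy} respectively, the Cardy axiom for instance simply losing its $d_{\cp_y,\gamma\pcomp\cp_x\pcomp\overline\gamma}$ prefactor. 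Morphisms reduce identically: a connection-preserving unitary bundle isomorphism over a point is just a unitary linear isomorphism, and fusion-preservation becomes the condition that $\varphi$ be an algebra isomorphism. The functor is fully faithful by this bundle-over-a-point dictionary, and essentially surjective because any $\lsg^{(I)}$ datum is realized by regarding the given inner product spaces as bundles over the one-point spaces with the (automatically superficial) trivial connections; hence it is the asserted equivalence.

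The work is bookkeeping rather than conceptual, so the only real obstacle is the disciplined verification that every $d$-isomorphism and every reparameterization genuinely collapses to the identity over a point, together with the check that the symmetrizing condition on the fusion product degenerates to commutativity of $\lambda$. Once those two points are pinned down, the dictionary between the data \cref{LBGstr:1,LBGstr:2,LBGstr:3,LBGstr:4,LBGstr:5,LBGstr:6,LBGstr:7} and \cref{LBGstrpt:1,LBGstrpt:3,LBGstrpt:4,LBGstrpt:5,LBGstrpt:7}, and between the two lists of axioms, is forced.
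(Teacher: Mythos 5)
Your proposal is correct and takes essentially the same approach as the paper's own proof: collapse all loop and path spaces to the one-point space, observe that connections, superficiality, and the canonical isomorphisms $d$ and reparameterizations trivialize there, and match data and axioms term by term (the symmetrizing property of the fusion product becoming commutativity of $\lambda$; \cref{eq:lsg:fusionass}, \cref{eq:lsg:pentagon}, \cref{eq:lsg:unit} and \cref{eq:lsg:unitinversion,eq:lsg:involutive,eq:lsg:symmetrizing} being absorbed into the structure clauses; and \cref{eq:lsg:comp,eq:lsg:invariance,eq:lsg:symmetrizingfusion,eq:lsg:cardy} becoming the four starred axioms). Your additional observation that a unitary isomorphism $\lambda\colon\inf L\otimes\inf L\to\inf L$ forces $\dim\inf L=1$ correctly pins down the point needed for essential surjectivity.
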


\begin{comment}
\begin{proof}
The category  $\lsg^{(I)}$ was obtained by reducing the definition of $\lsg(M,Q)$ to the 1-point target space. Under this reduction, all connections and reparameterization issues disappear; axioms  and conditions have been rewritten in the following way: 
\begin{itemize}

\item 
in \cref{LBGstrpt:2*}: the statement that the connection on $\inf L$ symmetrizes the fusion product $\lambda$  is equivalent to  the statement that $\lambda$ is commutative. 

\item
in \cref{LBGstrpt:4*}: the statement that $\phi_{ij}$ is a representation is equivalent to \cref{eq:lsg:fusionass}. 

\item
in\cref{LBGstrpt:5*}: the statement that $\chi_{ijk}$ is associative is equivalent to \cref{eq:lsg:pentagon}, and the statement  that $\epsilon_i$ are neutral is equivalent to \cref{eq:lsg:unit*}.

\item
in \cref{LBGstrpt:7*}: the conditions for $\alpha_{ij}$ are equivalent to  \cref{eq:lsg:unitinversion*}, \cref{eq:lsg:involutive*,eq:lsg:symmetrizing*}.

\end{itemize}
Moreover, we have the following equivalences: \cref{eq:lsgpt:comp*}$\!\Leftrightarrow\!$\cref{eq:lsg:comp}, \cref{eq:lsgpt:invariance*}$\!\Leftrightarrow\!$\cref{eq:lsg:invariance*}, \cref{eq:lsgpt:symmetrizingfusion*}$\!\Leftrightarrow\!$\cref{eq:lsg:symmetrizingfusion*}, and \cref{eq:lsgpt:cardy*}$\!\Leftrightarrow\!$\cref{eq:lsg:cardy}. 
\end{proof}
\end{comment}

\begin{remark}
\label{re:lambdatildept}
The unit-length section $\nu$ of \cref{LBGstr:2*} yields a unit-length element $1\in \inf L$ that is neutral with respect to $\lambda$. 
\begin{comment}
It has the property $\lambda(1\otimes \ell)=\ell=\lambda(\ell \otimes 1)$.
\end{comment}
By definition, the map $\tilde \lambda: \inf L \to \inf L^{*}$ mentioned in \cref{LBGstr:2*} is given by
\begin{equation*}
\tilde\lambda(\ell)(\ell')\cdot 1 =\lambda(\ell\otimes \ell')\text{.}
\end{equation*} 
Using that $\{1\}$ is an orthonormal basis of $\inf L$ and $\lambda$ is unitary, one can show that $\tilde\lambda$ (under the isomorphism $\inf L^{*} \cong \overline{\inf L}$) is given by
\begin{equation*}
\tilde\lambda:\inf L \to \overline{\inf L}: \ell \mapsto h(\ell,1)\cdot 1\text{.}
\end{equation*}
\begin{comment}
Indeed,
\begin{equation*}
\tilde\lambda(\ell)(\ell')=h(1,\lambda(\ell\otimes \ell'))=h(\lambda(1 \otimes 1),\lambda(\ell \otimes \ell'))=h(1,\ell)\cdot h(1,\ell')=h(h(\ell,1)\cdot 1,\ell')\text{.}
\end{equation*} 
Under the inverse of the isomorphism $\overline{\inf L} \to \inf L^{*}:\ell\mapsto h(\ell,-)$ we thus have 
\begin{equation*}
\tilde\lambda(\ell) =h(\ell,1)\cdot 1\text{.} 
\end{equation*}
\end{comment}
We find that $\tilde \lambda$ is a unital, involutive, algebra homomorphism.   
The map $\C \to \inf L: z \mapsto z\cdot 1$ is a unitary algebra isomorphism, and under this algebra homomorphism, $\tilde\lambda$ becomes  complex conjugation. 
\begin{comment}
Indeed,
\begin{equation*}
\tilde \lambda(z\cdot 1) =\overline{z}\cdot \tilde \lambda(1)=\overline{z}\cdot 1\text{.}
\end{equation*}
\end{comment} 
\end{remark}

Our goal is to compare the category $\lsg^{(I)}$  with the following category, which in turn is equivalent to the category of topological open-closed quantum field theories with  boundary labels in $I$, valued in the symmetric monoidal category of finite-dimensional complex vector spaces \cite[Theorem 5.6]{Lauda2008} and \cite{Moore2006}. The relationship to quantum field theories will be further elaborated in \cite{Bunk2018}. 

\begin{definition}
\label{def:kFrob}
Let $I$ be a set.
An \emph{$I$-colored knowledgable Frobenius algebra} is a septuple $(\inf L,\inf R,\chi,\epsilon,\theta,\iota,\iota^{*})$ consisting of the following structure:
\begin{enumerate}[({CFa}1),leftmargin=*]

\item 
\label{CFa:1}
$\inf L$ is a  commutative  Frobenius algebra whose trace will be denoted by   $\vartheta$.

\item
\label{CFa:2}
$\inf R=\{\inf R_{ij}\}_{i,j\in I}$ is a family of finite-dimensional complex vector spaces.

\item
\label{CFa:3}
$\chi=\{\chi_{ijk}\}_{i,j,k\in I}$ is a family of linear maps $\chi_{ijk}: \inf R_{jk} \otimes \inf R_{ij} \to \inf R_{ik}$ satisfying an associativity condition for four indices, and $\epsilon=\{\epsilon_i\}_{i\in I}$ is a family of elements $\epsilon_i\in \inf R_{ii}$ that are neutral with respect to $\chi$. In particular, $\inf R_{ii}$ is an algebra.

\item
\label{CFa:4}
$\theta=\{\theta_i\}_{i\in I}$ is a family of linear maps $\theta_i: \inf R_{ii} \to \C$. 

\item
\label{CFa:5}
$\iota=\{\iota_i\}_{i\in I}$ is a family of  algebra homomorphisms   $\iota_i: \inf L \to \inf R_{ii}$ which are central in the sense that $\chi_{iij}(v \otimes \iota_i(\ell))=\chi_{ijj}(\iota_j(\ell) \otimes v)$ for all $\ell \in \inf L$ and $v\in \inf R_{ij}$.

\item
\label{CFa:6}
$\iota^{*}=\{\iota_i^{*}\}_{i\in I}$ is a family of linear maps $\iota_i^{*}: \inf R_{ii} \to \inf L$ which are adjoint to $\iota_i$ in the sense that $\vartheta(\ell\cdot \iota_{i}^{*}(v))=\theta_i(\chi_{iii}(\iota_i(\ell) \otimes v))$ for all $v\in \inf R_{ii}$ and $\ell \in \inf L$.

\end{enumerate}
This structure defines a pairing $\sigma_{ij}$ by
\begin{equation}
\label{eq:pairing}
\alxydim{}{\inf R_{ji} \otimes \inf R_{ij} \ar[r]^-{\chi_{iji}} & \inf R_{ii} \ar[r]^{\theta_i} & \C}
\end{equation}
which is supposed to satisfy the following three axioms:
\begin{enumerate}[({CFa}1),leftmargin=*]

\setcounter{enumi}{6}

\item 
\label{CFa:7}
$\sigma_{ij}$ is non-degenerate, i.e. the induced map $\Phi_{ij}:\inf R_{ij} \to \inf R_{ji}^{*}$ is bijective.

\item
\label{CFa:8}
$\sigma_{ij}$ and $\sigma_{ji}$ are symmetric: $\sigma_{ij}(a \otimes b)=\sigma_{ji}(b \otimes a)$ for all $b\in \inf R_{ij}$ and $a\in \inf R_{ji}$.

\item
\label{CFa:9}
If $(v_1,...,v_n)$ is a basis of $\inf R_{ij}$, $(v^1,...,v^n)$ is the dual basis of $\inf R_{ji}$ with respect to $\sigma_{ij}$, and $v\in \inf R_{ii}$, then
\begin{equation*}
\displaystyle (\iota_j \circ \iota_{i}^{*})(v)=\sum_{k=1}^{n}\chi_{jij}(\chi_{iij}(v_{k} \otimes v) \otimes v^{k})\text{.}
\end{equation*}  
\begin{comment}
Dual basis means that $\sigma_{ij}(v^{k}\otimes v_{l})=\delta_l^{k}=\sigma_{ji}(v_l\otimes v^k)$.
\end{comment}

\end{enumerate}
\end{definition}

Note that \cref{CFa:7*} and \cref{CFa:8*} imply that $\inf R_{ii}$ is a symmetric Frobenius algebra. 
A \emph{homomorphism} between $I$-colored knowledgable Frobenius algebras is a pair $(\varphi,\xi)$, consisting of a Frobenius algebra homomorphism $\varphi:\inf L \to \inf L'$ and of a family $\xi=\{\xi_{ij}\}_{i,j\in I}$ of linear maps $\xi_{ij}:\inf R_{ij} \to \inf R'_{ij}$ that respect products, units, and traces in the obvious sense, and satisfy $\iota'_{i}\circ \varphi = \xi_{ii}\circ \iota_i$ and $\iota'^{*}_{i}\circ \xi_{ii} = \varphi \circ \iota_i^{*}$.
The category of $I$-colored knowledgable Frobenius algebras is denoted by $\kfrob$. 

\begin{comment}
Before we proceed we briefly compare above definition with \cite{Lauda2008} and \cite{Moore2006}.
\end{comment}

\begin{remark}
\cref{def:kFrob} was described first in \cite[Section 2.2]{Moore2006}, where \cref{CFa:9*} was related to the Cardy condition.   
The terminology \quot{$I$-colored knowledgable Frobenius algebra} was coined in \cite[Def. 5.1]{Lauda2008} for a slightly different but equivalent structure -- the equivalence is established by an \quot{$I$-colored version} of  the well-known equivalent ways to define  a Frobenius algebra (one by a non-degenerate inner product and the other one by a co-product, see \cite[Prop. 2.3.22 \& 2.3.24]{Kock2003}).\end{remark}

Moore and Segal prove the following theorem \cite[Theorem 2]{Moore2006}; their proof applies without changes.

\begin{theorem}
\label{th:simple}
Let $(\inf L,\inf R,\chi,\epsilon,\theta,\iota,\iota^{*})$ be an $I$-colored knowledgable Frobenius algebra with $\inf L$ one-dimensional. Then,  the Frobenius algebra $\inf R_{ii}$ is simple, for every $i\in I$. 
\qed
\end{theorem}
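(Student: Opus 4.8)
The plan is to follow Moore and Segal: extract one concrete consequence of the Cardy condition \cref{CFa:9*}, use it to show that $\inf R_{ii}$ has vanishing Jacobson radical, and finally that its center is a line.

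Write $A := \inf R_{ii}$ and abbreviate $a\cdot b := \chi_{iii}(a\otimes b)$. Since $\inf L$ is one-dimensional and $\iota_i$ is a unital algebra homomorphism \cref{CFa:5*}, it sends the unit $1_{\inf L}$ to $\epsilon_i$, so $\mathrm{im}(\iota_i)=\C\cdot\epsilon_i$. First I would evaluate the adjointness relation \cref{CFa:6*} at $\ell=1_{\inf L}$; it gives $\vartheta(\iota_i^{*}(v)) = \theta_i(v)$, whence $\iota_i^{*}(v) = d^{-1}\theta_i(v)\cdot 1_{\inf L}$, where $d := \vartheta(1_{\inf L}) \neq 0$ because the Frobenius pairing of $\inf L$ is non-degenerate. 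Substituting this into the Cardy condition \cref{CFa:9*} with $j=i$ and using associativity of $\chi$ yields the key identity
\begin{equation*}
\sum_{k=1}^{n} v_k\cdot v\cdot v^{k} \;=\; \frac{\theta_i(v)}{d}\,\epsilon_i \qquad\text{for all } v\in A,
\tag{$\star$}
\end{equation*}
where $(v_k)$ is any basis of $A$ and $(v^k)$ the dual basis with respect to $\sigma_{ii}$. Thus the operator $U(v):=\sum_k v_k\cdot v\cdot v^k$ takes values in the line $\C\cdot\epsilon_i$.

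The crucial step is semisimplicity. Let $J=\mathrm{rad}(A)$ be the Jacobson radical, a nilpotent two-sided ideal. For $v\in J$ every summand $v_k\cdot v\cdot v^k$ lies in $J$, so the left-hand side of $(\star)$ lies in $J$; since $\epsilon_i$ is a unit we have $J\cap\C\epsilon_i=0$, and hence $\theta_i(v)=0$. Therefore $J\subseteq\ker\theta_i$. But $J$ is a left ideal, and the non-degeneracy of $\sigma_{ii}$ from \cref{CFa:7*} is equivalent to $\ker\theta_i$ containing no non-zero left ideal (see \cite[Lemma 2.2.4]{Kock2003}), forcing $J=0$. So $A$ is semisimple, and since $\C$ is algebraically closed, $A\cong\prod_{s=1}^{m}M_{n_s}(\C)$ (assuming $\inf R_{ii}\neq0$, as the zero algebra is not simple).

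It remains to see $m=1$. On each factor the trace is $\theta_i|_{M_{n_s}}=\lambda_s\,\mathrm{Tr}_s$ with $\lambda_s\neq0$ (otherwise that factor would be a non-zero left ideal in $\ker\theta_i$). Distinct factors are $\sigma_{ii}$-orthogonal, so a block-adapted basis has a block-adapted dual basis, and a short computation of $U$ on matrix units gives $U(z_s)=\tfrac{n_s}{\lambda_s}z_s$ for the block unit $z_s$. By $(\star)$ each $z_s\in\mathrm{im}(U)\subseteq\C\epsilon_i$; since the $z_s$ are linearly independent and $\epsilon_i=\sum_s z_s$, this is possible only for $m=1$. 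Hence $A=\inf R_{ii}$ is simple.

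I expect the main obstacle to be the semisimplicity step. The structure theorem makes the final counting routine, but excluding a radical is not formal: it requires precisely the Cardy identity $(\star)$ together with the non-degeneracy of the boundary trace, and it is this interplay — not any single axiom — that does the work.
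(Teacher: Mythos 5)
Your proof is correct, and its overall strategy is the same as the one the paper relies on (the paper prints no proof but defers to Moore--Segal's Theorem~2, a version of which is the following): use the Cardy condition \cref{CFa:9*} to show that the handle operator $U(v)=\sum_k v_k\cdot v\cdot v^k$ has image in the line $\C\,\epsilon_i$, deduce semisimplicity, and then exclude more than one matrix block because $U$ acts on each block unit $z_s$ by the nonzero scalar $n_s/\lambda_s$ while $\mathrm{im}(U)$ is one-dimensional. The genuine difference lies in the semisimplicity step. The paper's (Moore--Segal) argument evaluates Cardy at $v=\epsilon_i$ to identify the Euler element $\sum_k v_k\cdot v_k$ of a $\sigma_{ii}$-orthonormal basis as a nonzero multiple of $\epsilon_i$ --- nonzero because its trace equals $\dim\inf R_{ii}$ --- and then invokes the classical structural fact that a symmetric Frobenius algebra with invertible Euler element is semisimple. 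You instead argue directly with the Jacobson radical: for $v\in\mathrm{rad}(\inf R_{ii})$ the left side of your identity $(\star)$ lies in the radical while the right side lies in $\C\,\epsilon_i$, forcing $\theta_i(v)=0$; since the radical is a left ideal contained in $\ker\theta_i$, non-degeneracy of $\sigma_{ii}$ (\cref{CFa:7*}, via Kock's Lemma~2.2.4) kills it. This buys self-containedness --- no appeal to the Euler-element criterion --- at the cost of a slightly longer derivation; the paper's route is shorter but leans on that cited fact. Your final counting is essentially identical to the paper's, which phrases it as: the handle operator has rank at least the number of blocks, yet factors through the one-dimensional $\inf L$. Two minor remarks: your identification $\theta_i|_{M_{n_s}}=\lambda_s\,\mathrm{Tr}_s$ implicitly uses the trace property $\theta_i(ab)=\theta_i(ba)$, which is \cref{CFa:8*} for $i=j$ (worth saying explicitly); and your careful asides --- $d=\vartheta(1_{\inf L})\neq 0$ from non-degeneracy of the Frobenius form on $\inf L$, $\lambda_s\neq 0$ since a block with vanishing trace would be a nonzero ideal in $\ker\theta_i$, and the exclusion of the zero algebra --- are all sound, and indeed handle an edge case the paper passes over silently.
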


In the following we construct a functor
\begin{equation}
\label{eq:functorkfrob}
\tqft:\lsg^{(I)} \to  \kfrob\text{.}
\end{equation}
Suppose  $(\inf L,\lambda, \inf R,\phi,\chi,\epsilon,\alpha)$ is an object in $\lsg^{(I)}$. Thus, $\inf L$ is a commutative algebra.  The hermitian metric $h$ on $\inf L$ induces a non-degenerate trace $\vartheta(\ell) := h(1,\ell)$, making $\inf L$  into a commutative  Frobenius algebra; this is \cref{CFa:1*}. The vector spaces $\inf R_{ij}$, products $\chi_{ijk}$, and units $\epsilon_i$  give \cref{CFa:2*,CFa:3*}. The trace $\theta_i$ is the one given in \cref{eq:trace:star}, i.e. $\theta_i(v):=h_{ii}(\epsilon_i,v)$; this gives \cref{CFa:4*}. The linear map    $\iota_i: \inf L \to \inf R_{ii}$ is the string opening morphism of \cref{re:closingandopeningstrings}, i.e. $\iota_i(\ell) := \phi_{ii}  (\ell \otimes \epsilon_i)$. In order to check \cref{CFa:5*} we have to prove that $\iota_i$ has the following properties:
\begin{enumerate}[(a)]

\item 
It is an algebra homomorphism. 
\begin{comment}
By \cref{eq:lsgpt:comp*} the diagram
\begin{equation*}
\alxydim{@C=2cm}{\inf L \otimes \inf L \otimes \inf R_{jk} \otimes \inf R_{ij} \ar[r]^-{\lambda \otimes \chi_{ijk}}  \ar[d]_{\id \otimes braid \otimes \id} & \inf L \otimes \inf R_{ik}  \ar[dd]^{\phi_{ik}} \\ \inf L\otimes \inf R_{jk} \otimes \inf L \otimes \inf R_{ij} \ar[d]_{\phi_{jk} \otimes \phi_{ij}}  & \\ \inf R_{jk} \otimes \inf R_{ij} \ar[r]_{\chi_{ijk}} & \inf R_{ik} }
\end{equation*}
is commutative. 
\end{comment}
This follows by reducing \cref{eq:lsgpt:comp*} to $i=j=k$, inserting $\epsilon_i \in \inf{R}_{ii}$, and using that $\chi_{iii}(\epsilon_i \otimes \epsilon_i)=\epsilon_i$. \begin{comment}
Then, it becomes the commutative diagram
\begin{equation*}
\alxydim{@C=2cm}{\inf L\otimes \inf L  \ar[r]^-{\lambda} \ar[d]_{\iota_i \otimes \iota_i} & \inf L  \ar[d]^{\iota_i}   & \\ \inf R_{ii} \otimes \inf R_{ii} \ar[r]_{\chi_{iii}} & \inf R_{ii}\text{,}}
\end{equation*}
\end{comment}

\item
It is unital. This follows from \cref{re:fusrepcan}. 

\item
It is central. This follows by reducing  \cref{eq:lsgpt:comp*} to the cases $i=j$ and $j=k$, and by using the neutrality of $\epsilon_i$.
\begin{comment}
The full calculation is
\begin{equation*}
\chi_{iij}(v \otimes \iota_i(\ell))=\chi_{iij}(v \otimes \phi_{ii}(\ell \otimes \epsilon_i))=\phi_{ij}(\ell \otimes v)=\chi_{ijj}(\phi_{jj}(\ell \otimes \epsilon_j) \otimes v)=\chi_{ijj}(\iota_j(\ell) \otimes v)
\end{equation*}
Reducing \cref{eq:lsgpt:comp*} to $iij$ and setting the first element in $\inf L$ to $1\in \inf L$ results in
\begin{equation*}
\small
\alxydim{@C=2cm@R=3em}{\inf L \otimes \inf R_{ij} \otimes \inf R_{ii} \ar[r]^-{\id \otimes \chi_{iij}}  \ar[d]_{ braid \otimes \id} & \inf L \otimes \inf R_{ij}  \ar[dd]^{\phi_{ij}} \\ \inf R_{ij} \otimes \inf L \otimes \inf R_{ii} \ar[d]_{\id \otimes \phi_{ii}}  & \\ \inf R_{ij} \otimes \inf R_{ii} \ar[r]_{\chi_{iij}} & \inf R_{ij} }
\end{equation*}
This shows that $\chi_{iij}(v \otimes\phi_{ii}(\ell,\otimes \epsilon_i))=\phi_{ij}(\ell \otimes \chi_{iij}(v \otimes \epsilon_i)) =\phi_{ij}(\ell \otimes v)$
\end{comment}

\end{enumerate} 
The linear map $\iota_i^{*}:\inf R_{ii} \to \inf L$ is the string closing morphism, i.e.  $\iota_i^{*}(v) :=1\cdot\theta_i(v)$. In order to check  \cref{CFa:6*} we show that it is adjoint to $\iota_i$: 
\begin{align*}
\vartheta(\iota_{i}^{*}(v)\cdot \ell) 
= h_{ii}(\epsilon_i,h(1,\ell)\cdot v)
&= h_{ii}(\epsilon_i,\alpha_{ii}(\phi_{ii}(h(\ell,1)\cdot 1 \otimes \alpha_{ii}(v))))
\\&\eqcref{re:lambdatildept} h_{ii}(\epsilon_i,\alpha_{ii}(\phi_{ii}(\tilde\lambda(\ell) \otimes \alpha_{ii}(v))))
\\&\eqcref{eq:lsgpt:symmetrizingfusion*}h_{ii}(\epsilon_i,\phi_{ii}(\ell \otimes v))
\\&\eqcref{eq:lsgpt:comp*}h_{ii}(\epsilon_i,\chi_{iii}(v \otimes \phi_{ii}(\ell \otimes \epsilon_i)))
=\theta_i(\chi_{iii}(v\otimes \iota_i(\ell)))  \text{.}
\end{align*}
\begin{comment}
The full calculation is
\begin{align*}
\vartheta(\iota_{i}^{*}(v)\cdot \ell) 
&=h(1,\iota_{i}^{*}(v)\cdot \ell)
\\&=h(1,\theta_i(v)\cdot \ell)
\\&=h(1,\ell)\cdot h_{ii}(\epsilon_i,v)
\\&= h_{ii}(\epsilon_i,h(1,\ell)\cdot v)
\\&= h_{ii}(\epsilon_i,\overline{h(\ell,1)}\cdot\alpha_{ii}(\phi_{ii}( 1 \otimes \alpha_{ii}(v))))
\\&= h_{ii}(\epsilon_i,\alpha_{ii}(\phi_{ii}(h(\ell,1)\cdot 1 \otimes \alpha_{ii}(v))))
\\&\eqcref{re:lambdatildept} h_{ii}(\epsilon_i,\alpha_{ii}(\phi_{ii}(\tilde\lambda(\ell) \otimes \alpha_{ii}(v))))
\\&\eqcref{eq:lsgpt:symmetrizingfusion*}h_{ii}(\epsilon_i,\phi_{ii}(\ell \otimes v))
\\&\eqcref{eq:lsgpt:comp*}h_{ii}(\epsilon_i,\chi_{iii}(v \otimes \phi_{ii}(\ell \otimes \epsilon_i)))
\\&=\theta_i(\chi_{iii}(v\otimes \iota_i(\ell)))  \text{.}
\end{align*}
In the last but one step, we have reduced \cref{eq:lsgpt:comp*} to $i=j=k$ and inserted $1\in \inf L$ in the first tensor factor, getting
\begin{equation*}
\alxydim{@C=2cm}{\inf  L \otimes \inf R_{ii} \otimes \inf R_{ii} \ar[r]^-{\id \otimes \chi_{iii}}  \ar[d]_{braid \otimes \id} & \inf L \otimes \inf R_{ii}  \ar[dd]^{\phi_{ii}} \\ \inf R_{ii} \otimes \inf L \otimes \inf R_{ii} \ar[d]_{\id \otimes \phi_{ii}}  & \\ \inf R_{ii} \otimes \inf R_{ii} \ar[r]_{\chi_{iii}} & \inf R_{ii} }
\end{equation*}
\end{comment}
Now it remains to verify the axioms that involve the pairings $\sigma_{ij}$. For \cref{CFa:7*}, the non-degeneracy of $\sigma_{ij}$, we show that the corresponding map $\Phi_{ij}$ coincides with the isomorphism $\alpha_{ij}$ under the isomorphism $()^{\flat}:\overline{\inf R_{ji}} \to \inf R_{ji}^{*}$, i.e.
we show that  $\Phi_{ij}(v)=\alpha_{ij}(v)^{\flat}$ for all $v\in \inf R_{ij}$. To that end, let $w\in \inf R_{ji}$. We compute: 
\begin{align*}
\Phi_{ij}(v)(w)&= h_{ii}(\epsilon_i,\chi_{iji}(w \otimes v))
\\[-2em]&= h_{ii}(\chi_{iji}(\alpha_{ij}(v) \otimes \alpha_{ji}(w))),\epsilon_i)
\eqcref{eq:lsgpt:invariance*} h_{ij}(\alpha_{ji}(w),v)
= \overline{h_{ji}(w,\alpha_{ij}(v))}
=\alpha_{ij}(v)^{\flat}(w)\text{.}
\end{align*}
\begin{comment}
The full calculation is
\begin{align*}
\Phi_{ij}(v)(w)&=\theta_i(\chi_{iji}(w \otimes v))
\\&= h_{ii}(\epsilon_i,\chi_{iji}(w \otimes v))
\\&= \overline{h_{ii}(\chi_{iji}(w \otimes v),\epsilon_i)}
\\&= h_{ii}(\alpha_{ii}(\chi_{iji}(w \otimes v)),\alpha_{ii}(\epsilon_i))
\\&= h_{ii}(\chi_{iji}(\alpha_{ij}(v) \otimes \alpha_{ji}(w))),\epsilon_i)
\\&\eqcref{eq:lsgpt:invariance*} h_{ij}(\alpha_{ji}(w),v)
\\&\eqtext{$\alpha_{ij}$ is an isometry} \overline{h_{ji}(w,\alpha_{ij}(v))}
\\&=\alpha_{ij}(v)^{\flat}(w)\text{.}
\end{align*}
\end{comment}
The next calculation verifies \cref{CFa:8*}, the symmetry between $\sigma_{ij}$ and $\sigma_{ji}$:
\begin{align*}
\sigma_{ij}(w \otimes v)
=h_{ii}(\chi_{iji}(\alpha_{ij}(v) \otimes \alpha_{ji}(w)) , \epsilon_i)
\eqcref{eq:lsgpt:invariance*} h_{jj}(\epsilon_j,\chi_{jij}(v \otimes w))=\sigma_{ji}(v \otimes w)\text{.}
\end{align*}
\begin{comment}
The full calculation is
\begin{align*}
\sigma_{ij}(w \otimes v)
&=\theta_i(\chi_{iji}(w \otimes v)) 
\\&=h_{ii}(\epsilon_i,\chi_{iji}(w \otimes v) )
\\&=h_{ii}(\chi_{iji}(\alpha_{ij}(v) \otimes \alpha_{ji}(w)) , \epsilon_i)
\\&\eqcref{eq:lsgpt:invariance*} h_{jj}(\epsilon_j,\chi_{jij}(v \otimes w))
\\&= \theta_j(\chi_{jij}(v \otimes w))
\\&=\sigma_{ji}(v \otimes w)
\end{align*}
\end{comment}
Finally, we verify the Cardy condition \cref{CFa:9*}.
Suppose $(v_1,...,v_n)$ is an orthonormal basis  of $\inf R_{ij}$ with respect to $h_{ij}$, and suppose $(v^1,...,v^n)$ is the dual basis of $\inf R_{ji}$ with respect to $\sigma_{ij}$. We have already seen that $\Phi_{ij}=()^{\flat} \circ \alpha_{ij}$; this means that $v^{k}=\alpha_{ij}(v_k)$ for all $k=1,...,n$.
\begin{comment}
Indeed,
\begin{equation*}
\sigma_{ij}(\alpha_{ij}(v_k) \otimes v_l)= \Phi_{ij}(v_l)(\alpha_{ij}(v_k))=\alpha_{ij}(v_l)^{\flat}(\alpha_{ij}(v_k))=h_{ji}(\alpha_{ij}(v_l),\alpha_{ij}(v_k))=h_{ij}(v_k,v_l)\text{.}
\end{equation*}
\end{comment}
\begin{comment}
In particular, since $\alpha_{ij}$ is unitary, $(v^1,...,v^n)$ is orthonormal with respect to $h_{ji}$.
\end{comment}
We get for $v\in \inf R_{ii}$:
\begin{equation*}
\iota_j(\iota_{i}^{*}(v))=\iota_j(1\cdot\theta_i(v)) = \theta_i(v)\cdot \epsilon_j =h_{ii}(\epsilon_i,v)\cdot  \epsilon_j\text{.}
\end{equation*}
Now, \cref{eq:lsgpt:cardy*} implies \cref{CFa:9*}. This finishes the proof that $(\inf L,\inf R,\chi,\epsilon,\theta,\iota,\iota^{*})$ is an $I$-colored knowledgable Frobenius algebra.

In order to define the functor $\tqft$ on the level of morphisms, we consider  a morphism $(\varphi,\xi)$ in $\lsg^{(I)}$. We claim that the same maps $\varphi:\inf L \to \inf L'$ and $\xi_{ij}:\inf R_{ij} \to \inf R_{ij}'$  form a morphism of the corresponding objects in $\kfrob$. First of all, the unitarity of $\varphi$ implies that  $\vartheta' \circ \varphi=\vartheta$, i.e. $\varphi$ is a Frobenius algebra homomorphism. Second, by assumption, $\xi_{ij}$ respects products and units, and since $\xi_{ij}$ is unitary, it also respects the traces $\theta_{i}$. Proving the identities
$\iota'_{i}\circ \varphi = \xi_{ii}\circ \iota_i$ and $\iota'^{*}_{i}\circ \xi_{ii} = \varphi \circ \iota_i^{*}$ is straightforward.
\begin{comment}
Indeed, these equations are proved by
\begin{equation*}
\iota'_{i}( \varphi(\ell))=\phi_{ii}'(\varphi(\ell) \otimes \epsilon_i')=\phi_{ii}'(\varphi(\ell) \otimes \xi_{ii}(\epsilon_i)) =\xi_{ii} (\phi_{ii}(\ell \otimes \epsilon_i))= \xi_{ii}(\iota_i(\ell))
\end{equation*}
and
\begin{equation*}
\iota'^{*}_{i}(\xi_{ii}(v)) =\theta_i'(\xi_{ii}(v))\cdot 1=\theta_i(v)\cdot 1=\varphi(\theta_i(v) \cdot 1)= \varphi (\iota_i^{*}(v))\text{.}
\end{equation*}
\end{comment}
This completes the definition of the functor $\tqft$.

Now we are in position to deliver the remaining part of the proof of \cref{prop:frob}, namely  that the algebras $\inf A_i|_x$, for $x\in Q_i$ and $i\in I$, are simple.
Restricting a given LBG object to $x$ yields an object in $\lsg^{(I)}$ with $\inf L:=\inf L|_{\cp_x}$ and $\inf R_{ii} := \inf R_{ii}|_{c_x}=\inf A_i|_x$. Under the        functor $\mathcal{F}$, it becomes an $I$-colored knowledgable Frobenius algebra with $\inf L$ one-dimensional. Applying \cref{th:simple}, we obtain:

\begin{corollary}
\label{co:aisimple}
The algebras $\inf A_i|_x$ obtained from LBG in \cref{sec:algebrabundle:lsg} are all simple.  
\end{corollary}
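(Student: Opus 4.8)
The plan is to deduce simplicity from the purely algebraic theorem of Moore and Segal recorded in \cref{th:simple}, by transporting the fibre of $\inf A_i$ at a point into the category $\kfrob$ through the functor $\tqft$ constructed above. Fix $i\in I$ and $x\in Q_i$. The constant path $\cp_x$ lies in $P_{ii}$ and the constant loop $\cp_x\cup\cp_x$ lies in $LM$, so the fibres $\inf A_i|_x=\inf R_{ii}|_{\cp_x}$ and $\inf L|_{\cp_x\cup\cp_x}$ are defined, and the latter is one-dimensional because $\inf L$ is a line bundle.

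First I would restrict the given LBG object $(\inf L,\lambda,\inf R,\phi,\chi,\epsilon,\alpha)$ to the point $x$, passing to the fibres over the constant loop and the relevant constant paths. Each piece of LBG data is a fibre-wise linear map and each LBG axiom is a commutative diagram of bundle morphisms or an identity of sections, so restricting to fibres over constant paths makes all connection- and reparameterization-dependent features collapse and produces exactly the data and axioms of an object of $\lsg^{(I)}$ as organized in \cref{lem:LBGstrpt}. For example, the symmetrizing property of $\lambda$ degenerates into commutativity of the one-dimensional algebra $\inf L|_{\cp_x\cup\cp_x}$, associativity of $\chi$ and neutrality of $\epsilon$ restrict verbatim, and \cref{eq:lsg:comp}, \cref{eq:lsg:invariance}, \cref{eq:lsg:symmetrizingfusion}, \cref{eq:lsg:cardy} become \cref{eq:lsgpt:comp*}, \cref{eq:lsgpt:invariance*}, \cref{eq:lsgpt:symmetrizingfusion*}, \cref{eq:lsgpt:cardy*}. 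This identification of the fibre-wise restriction with the one-point reduction of \cref{lem:LBGstrpt} is the only genuine obstacle; I expect it to be routine but not entirely automatic, since one must check that no axiom degenerates into something strictly weaker than the corresponding $\lsg^{(I)}$ axiom.

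With this in hand, I would apply the functor $\tqft:\lsg^{(I)}\to\kfrob$ to the restricted object. This yields an $I$-colored knowledgable Frobenius algebra whose underlying commutative Frobenius algebra is the one-dimensional space $\inf L|_{\cp_x\cup\cp_x}$ and in which the algebra $\inf R_{ii}$ is precisely $\inf A_i|_x$. Finally \cref{th:simple}, which applies to any $I$-colored knowledgable Frobenius algebra with one-dimensional $\inf L$, shows that $\inf R_{ii}=\inf A_i|_x$ is simple. Since $i\in I$ and $x\in Q_i$ were arbitrary, all fibres $\inf A_i|_x$ are simple.
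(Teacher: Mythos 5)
Your proposal is correct and follows essentially the same route as the paper: restrict the LBG object to the point $x$ to obtain an object of $\lsg^{(I)}$ with $\inf L|_{\cp_x\cup\cp_x}$ one-dimensional and $\inf R_{ii}=\inf A_i|_x$, apply the functor $\tqft$ to land in $\kfrob$, and invoke \cref{th:simple}. The step you flag as the ``only genuine obstacle'' --- that fibre-wise restriction over constant paths yields exactly the data and axioms of \cref{lem:LBGstrpt} --- is likewise treated as routine in the paper, which relies on the naturality of all LBG constructions under maps of target spaces.
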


We continue studying the functor $\tqft$. 
It is obviously faithful, but neither full (i.e., non-isomorphic LBG may give isomorphic $I$-colored knowledgable Frobenius algebras), nor essentially surjective. In order to properly understand these phenomena, we lift the functor $\tqft$ to a new category $\rpkfrob$, where the $I$-colored knowledgable Frobenius algebras are equipped with so-called \emph{positive reflection structures}, which we introduce next.
We will discuss in \cite{Bunk2018} the relation to positive reflection structures in functorial field theories, in the sense of Freed-Hopkins \cite{Freed2016}.

\begin{definition}
A \emph{reflection structure} on an $I$-colored knowledgable Frobenius algebra $(\inf L,\inf R,\chi,\epsilon,\theta,\iota,\iota^{*})$ is a pair $(\tilde\lambda,\alpha)$ consisting of
an involutive  algebra isomorphism $\tilde\lambda: \inf L \to \overline{\inf L}$ and of a family $\alpha=\{\alpha_{ij}\}_{i,j\in I}$ of  involutive (i.e., $\alpha_{ji} \circ \alpha_{ij}=\id$), anti-multiplicative isomorphisms $\alpha_{ij}: \inf R_{ij} \to \overline{\inf R_{ji}}$ 
\begin{comment}
This means
\begin{equation*}
\alpha_{ij}(\epsilon_i)=\epsilon_i
\quomma
\alpha_{ji} \circ \alpha_{ij}=\id_{\inf R_{ij}}
\quomma
\chi_{kji}(\alpha_{ij}(v) \otimes \alpha_{jk}(w))=\alpha_{ik}(\chi_{ijk}(w \otimes v))\text{,}
\end{equation*}
\end{comment}
such that the conditions 
\begin{equation}
\label{eq:refscomp}
\vartheta(\tilde\lambda(\ell))=\overline{\vartheta(\ell)}
\quomma
\alpha_{ii}(\epsilon_i)=\epsilon_i
\quomma
\theta_i(\alpha_{ii}(v))=\overline{\theta_i(v)}
\quand
\alpha_{ii} \circ \iota_i = \iota_i \circ \tilde\lambda
\end{equation}
are satisfied for all $i\in I$.
A reflection structure is called \emph{positive} if the sesquilinear pairings
\begin{equation*}
(v,w) \mapsto \sigma_{ij}(\alpha_{ji}^{-1}(v) \otimes w)
\quand
(\ell,\ell') \mapsto \vartheta(\tilde\lambda^{-1}(\ell)\cdot \ell')
\end{equation*}
on $\inf R_{ij}$ and $\inf L$, respectively, are positive-definite for all $i,j\in I$. 
\end{definition}

We remark that the last equation in \labelcref{eq:refscomp} implies the analogous condition  $\tilde\lambda\circ \iota_i^{*} = \iota_i^{*} \circ \alpha_{ii}$ for $\iota^{*}$, due to the adjointness in  \cref{CFa:6*} and the non-degeneracy of $\theta$. \begin{comment}
Indeed,
\begin{align*}
\vartheta((\tilde\lambda(\iota_i^{*}(v))) \cdot \ell) &= \overline{\vartheta(\tilde\lambda(\ell)\cdot \iota_i^{*}(v))}
\\&= \theta_i(\alpha_{ii}(\chi_{iii}(\iota_i(\tilde\lambda(\ell))\cdot v)))
\\&= \vartheta(\iota_i^{*}(\alpha_{ii}(v))\cdot \ell)
\end{align*}
Since this holds for all $\ell \in \inf L$ and $\vartheta$ is non-degenerate, we have the claim. 
\end{comment}
A homomorphism $(\varphi,\xi)$ between $I$-colored knowledgable Frobenius algebras is called \emph{reflection-preserving}, if $\tilde\lambda' \circ \varphi = \varphi \circ \tilde\lambda$ and $\alpha_{ij}'\circ \xi_{ij} = \xi_{ji} \circ \alpha_{ij}$.  
The category of $I$-colored knowledgable Frobenius algebras with positive reflection structures is denoted by $\rpkfrob$. There is an obvious forgetful functor $\rpkfrob \to \kfrob$.
Next, we construct a lift of the functor $\tqft$ to $\rpkfrob$,
\begin{equation*}
\small
\alxydim{@R=3em}{&\rpkfrob \ar[d] \\\lsg^{(I)} \ar@/^1pc/[ur]^-{\tqftlift} \ar[r]_-{\tqft} & \kfrob\text{.}}
\end{equation*}

In order to define the lift $\tqftlift$, we have to define positive reflection structures on the $I$-colored knowledgable Frobenius algebras in the image of $\tqft$. The isomorphisms $\tilde\lambda$ and $\alpha_{ij}$ are the given ones. \Cref{re:lambdatildept} shows that $\tilde\lambda$ has the required properties; the properties of $\alpha_{ij}$ follow directly from \cref{LBGstrpt:7*}. The compatibility condition \cref{eq:refscomp} follows from the definition of $\iota$ and \cref{eq:lsgpt:symmetrizingfusion*}. The pairing reproduces exactly the given hermitian metric on $\inf R_{ij}$, \begin{equation*}
(v,w) \mapsto \theta_i(\chi_{iji}(\alpha_{ij}(v) \otimes w))=h_{ii}(\epsilon_i,\chi_{iji}(\alpha_{ij}(v) \otimes w))\eqcref{eq:lsgpt:invariance*} h_{ji}(\alpha_{ij}(w),\alpha_{ij}(v))=h_{ij}(v,w)\text{,}
\end{equation*} 
and is hence positive definite.
Similarly, the pairing on $\inf L$ reproduces  the metric $h$:
\begin{equation*}
(\ell,\ell') \mapsto \vartheta(\tilde\lambda(\ell)\cdot \ell') = h(1,\tilde\lambda(\ell) \cdot \ell')=\overline{h(\tilde\lambda(\ell),1)} \cdot h(1,\ell')=h(\ell,1)\cdot h(1,\ell')=h(\ell,\ell')\text{.}
\end{equation*}
\begin{comment}
Indeed,
\begin{align*}
\vartheta(\tilde\lambda(\ell)\cdot \ell') & = h(1,\tilde\lambda(\ell) \cdot \ell')\\&=h(1,\tilde\lambda(\ell))\cdot h(1,\ell') 
\\&=\overline{h(\tilde\lambda(\ell),1)} \cdot h(1,\ell')
\\&=h(\ell,1)\cdot h(1,\ell')
\\&=h(\ell,\ell')
\end{align*}
Here we have used that the product on $\inf L$ is metric-preserving and that $\tilde\lambda$ is unital and metric-preserving. 
\end{comment}
Finally, we observe that the morphisms of $\lsg^{(I)}$ respect  the reflection structure by definition. This completes the definition of the lift $\tqftlift$.

\begin{proposition}
\label{prop:rpkfrob}
The lifted functor  $\tqftlift$  is full and faithful, and  surjective onto those  objects with $\inf L \cong \C$ as  Frobenius algebras (with $\id_{\C}$ as the trace on $\C$). In particular, it induces an  equivalence
\begin{equation*}
\lsg^{(I)} \cong \rpkfrob_{\C}\text{,}
\end{equation*}
where $\rpkfrob_{\C}$ is the full subcategory on all objects with $\inf L=\C$ as  Frobenius algebras.
\end{proposition}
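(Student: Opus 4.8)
The plan is to establish the three assertions—faithfulness, fullness, and surjectivity onto the objects with $\inf L\cong\C$—separately and then read off the equivalence. Faithfulness is immediate: $\tqftlift$ sends a morphism $(\varphi,\xi)$ to the \emph{same} underlying maps as $\tqft$, which we already observed is faithful. It is convenient to record first that the essential image lands in $\rpkfrob_{\C}$: by \cref{re:lambdatildept} the inner product space $\inf L$ underlying any object of $\lsg^{(I)}$ is one-dimensional with orthonormal basis $\{1\}$, so $z\mapsto z\cdot 1$ is a unitary Frobenius-algebra isomorphism $\C\to\inf L$ carrying $\id_{\C}$ to the trace $\vartheta$. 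It therefore suffices to exhibit an honest preimage for each object of $\rpkfrob_{\C}$ and to show that every reflection-preserving homomorphism between objects of $\rpkfrob_{\C}$ is a morphism of $\lsg^{(I)}$.

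For surjectivity I would reverse the construction of $\tqftlift$. Given $(\inf L,\inf R,\chi,\epsilon,\theta,\iota,\iota^{*})$ in $\rpkfrob_{\C}$ with positive reflection structure $(\tilde\lambda,\alpha)$, I equip $\inf L=\C$ with its standard inner product, take $\lambda$ to be multiplication, set $\phi_{ij}(\ell\otimes v):=\chi_{iij}(v\otimes\iota_i(\ell))$, and define hermitian metrics $h_{ij}(v,w):=\sigma_{ij}(\alpha_{ij}(v)\otimes w)$, which are positive-definite exactly because the reflection structure is positive. Since $\inf L=\C$ forces $\iota_i(\ell)=\ell\cdot\epsilon_i$, the map $\phi_{ij}$ is just $\ell\otimes v\mapsto \ell v$, hence automatically unitary and a representation, so condition (1$^\ast$) and (4$^\ast$) hold trivially. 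The data $\chi,\epsilon,\alpha$ are retained. One then checks that the axioms of $\lsg^{(I)}$ become the axioms of $\rpkfrob_{\C}$: (LBG2$^\ast$) is the combination of the reflection identity $\theta_i(\alpha_{ii}(v))=\overline{\theta_i(v)}$ with the definition and symmetry of $\sigma_{ij}$ (CFa8); (LBG4$^\ast$) is the Cardy axiom (CFa9) once one notes that $\iota_i^{*}=\theta_i$ under $\inf L=\C$ and that the $\sigma_{ij}$-dual of an $h_{ij}$-orthonormal basis $(v_k)$ is $(\alpha_{ij}(v_k))$; and (LBG1$^\ast$), (LBG3$^\ast$) follow from the centrality and algebra-homomorphism property of $\iota$ together with $\alpha_{ii}\circ\iota_i=\iota_i\circ\tilde\lambda$. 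A short computation using $h_{ii}(\epsilon_i,-)=\theta_i$ shows that $\tqftlift$ returns the original object on the nose.

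For fullness let $(\varphi,\xi)$ be a reflection-preserving homomorphism between two objects in the image. Because $\xi$ respects $\chi$ and $\theta$ it preserves the pairings $\sigma_{ij}$, and because it respects $\alpha$ it then preserves the metrics $h_{ij}$; likewise $\varphi$, preserving $\vartheta$ and $\tilde\lambda$, preserves the metric on $\inf L$. Hence every component of $(\varphi,\xi)$ is an isometry, in particular injective, and all structural compatibilities demanded of an $\lsg^{(I)}$-morphism hold automatically, since the fusion product is the multiplication of $\inf L$ and the fusion representation is recovered from $\chi$ and $\iota$. The genuine obstacle is surjectivity: a priori an isometric unital homomorphism $\xi_{ii}$ of the simple algebras $\inf R_{ii}\cong M_{n_i}(\C)$ (simple by \cref{th:simple}) could have multiplicity $d>1$. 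This is excluded by a rigidity of the normalisation: reducing (CFa9) to $i=j$ with $\inf L=\C$ forces $\theta_i(\epsilon_i)=\sqrt{\mathrm{rk}(\inf R_{ii})}$, so $\theta_i$ is the \emph{un-normalised} matrix trace; as $\xi_{ii}$ preserves both $\theta_i$ and $\theta_i'$, a unital embedding of the form $A\mapsto A\otimes I_d$ would scale the trace by $d$, forcing $d=1$ and hence $\xi_{ii}$ an isomorphism. The rank identity $\mathrm{rk}(\inf R_{ij})^2=\mathrm{rk}(\inf R_{ii})\,\mathrm{rk}(\inf R_{jj})$—proved in $\rpkfrob_{\C}$ by exactly the Cardy computation used in loop space brane geometry, or simply because $\inf R_{ij}$ is an invertible $M_{n_i}$-$M_{n_j}$-bimodule—then equates $\dim\inf R_{ij}$ with $\dim\inf R_{ij}'$, so each injective isometry $\xi_{ij}$ is bijective and $(\varphi,\xi)$ is a unitary isomorphism.

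Combining the three parts, $\tqftlift$ is fully faithful with essential image precisely $\rpkfrob_{\C}$, which is the asserted equivalence $\lsg^{(I)}\cong\rpkfrob_{\C}$. I expect the trace-normalisation rigidity in the fullness step to be the crux of the argument; everything else is a translation dictionary between the $\lsg^{(I)}$-axioms and the defining axioms of a reflection-positive colored knowledgable Frobenius algebra.
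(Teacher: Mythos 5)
Your proof is correct and follows the same three-step skeleton as the paper's proof of \cref{prop:rpkfrob}: faithfulness for free, fullness by showing that the metrics and the fusion representation are determined by data that any reflection-preserving homomorphism preserves, and surjectivity by reversing the construction of $\tqftlift$, with positivity of the reflection structure supplying positive-definiteness of the metrics. Two differences are worth recording. (a) In the surjectivity step you set $\inf L=\C$ on the nose, so that $\iota_i(\ell)=\ell\cdot\epsilon_i$ and $\phi_{ij}$ is literal scalar multiplication; this trivializes the unitarity of $\phi_{ij}$, which the paper verifies by a longer computation. The price is that you only prove surjectivity onto $\rpkfrob_{\C}$, not onto all objects with $\inf L\cong\C$ as the proposition asserts; the paper handles the general case by observing that trace preservation forces the \emph{unique} Frobenius isomorphism $\inf L\to\C$ to be $\vartheta$ itself and then keeping $\inf L$ as the underlying space (your construction extends verbatim with $\vartheta(\ell)$ in place of $\ell$), and either way the final equivalence rests on the same routine transport of structure along $z\mapsto z\cdot 1$. (b) Your fullness argument is more careful than the paper's on a genuine point: morphisms in $\rpkfrob$ need not be invertible, whereas $\lsg^{(I)}$-morphisms are unitary \emph{isomorphisms}, so bijectivity of the components really needs an argument; the paper compresses this into \quot{this implies unitarity}. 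Your trace-normalization rigidity is correct and can even be shortened: since $\theta_i(\epsilon_i)=\sqrt{\rk(\inf R_{ii})}$ holds in the image of $\tqftlift$ (and in any object of $\rpkfrob_{\C}$, by \cref{CFa:9} with $i=j$), applying trace preservation to the unit $\epsilon_i$ alone already forces $\dim\inf R_{ii}=\dim\inf R'_{ii}$, after which the rank identity $\rk(\inf R_{ij})^2=\rk(\inf R_{ii})\,\rk(\inf R_{jj})$ together with injectivity from isometry makes every component bijective. This explicit rigidity step is a worthwhile refinement of the paper's argument.
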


\begin{proof}
It is obvious that $\tqftlift$  is faithful. In order to show that it is full, we consider two objects $(\inf L,\lambda, \inf R,\phi,\chi,\epsilon,\alpha)$ and $(\inf L',\lambda', \inf R',\phi',\chi',\epsilon',\alpha')$ in $\lsg^{(I)}$, and consider a morphism $(\varphi,\xi)$ between the corresponding $I$-colored knowledgable Frobenius algebras with reflection structures. The compatibility of $\varphi$ and $\xi_{ij}$ with the involutions $\tilde\lambda$ and $\alpha_{ij}$ follows because $(\varphi,\xi)$ is reflection-preserving.  
It remains to show that $\varphi$ and $\xi_{ij}$ are unitary, and that the fusion representations are preserved. We have seen above that the metrics $h$ and $h_{ij}$ are determined by the traces, the multiplications, and the involutions, which are all preserved by $\varphi$ and $\xi_{ij}$; this implies  unitarity. 
Concerning the fusion representation,  \cref{eq:lsgpt:comp*} implies that $\phi_{ij} = \chi_{ijj} \circ (\iota_j \otimes \id_{\inf R_{ij}})$.
\begin{comment}
Indeed
\cref{eq:lsgpt:comp*} for $j=k$ and inserting $\epsilon_j$ and $1\in \inf L$ gives
\begin{equation*}
\alxydim{@C=2cm}{\inf L \otimes \inf R_{ij} \ar[r]^-{\id \otimes \id}  \ar[d]_{\id \otimes \id} & \inf L \otimes \inf R_{ij}  \ar[dd]^{\phi_{ij}} \\ \inf L  \otimes \inf R_{ij} \ar[d]_{\iota_{j} \otimes \id}  & \\ \inf R_{jj} \otimes \inf R_{ij} \ar[r]_{\chi_{ijj}} & \inf R_{ij} }
\end{equation*}
\end{comment}
Thus, the fusion representation $\phi_{ij}$ is determined by the product $\chi_{ijj}$ and the algebra homomorphism $\iota_j$. Since these are preserved by $\xi_{ij}$ and $\varphi$, the fusion representation is preserved. 
\begin{comment}
Indeed,
\begin{equation*}
\phi_{ij}'(\varphi(\ell) \otimes \xi_{ij}(v)) =\chi_{ijj}'(\iota_j'(\varphi(\ell))\otimes \xi_{ij}(v))= \chi_{ijj}'(\xi_{ii}(\iota_j(\ell))\otimes \xi_{ij}(v))=\chi_{ijj}(\iota_j(\ell)\otimes v)= \phi_{ij}(\ell \otimes v)\text{.}
\end{equation*}
\end{comment}
This completes the proof that $\smash{\tqftlift}$ is full.

Now we assume that $(\inf L,\inf R,\chi,\epsilon,\theta,\iota,\iota^{*})$ is an $I$-colored knowledgable Frobenius algebra with a positive reflection structure $(\tilde\lambda,\alpha_{ij})$, such that $\inf L \cong \C$ as Frobenius algebras. First of all, we remark that if $\inf L\cong \C$, preservation of traces implies that there is only one such isomorphism, namely the trace $\vartheta$ itself. In particular, $\vartheta:\inf L\to \C$ is a unital algebra isomorphism. We equip $\inf L$ with the sesquilinear form $h(\ell,\ell') := \vartheta(\tilde\lambda(\ell)\cdot \ell')$.  Using that $\vartheta$ is  involutive and an algebra homomorphism, we get $h(\ell,\ell')=\overline{\vartheta(\ell)} \cdot \vartheta(\ell')$; this shows  that $h$ is a hermitian metric and that the product $\lambda$ of $\inf L$ is unitary. 
We equip $\inf R_{ij}$ with the sesquilinear form $h_{ij}(v,w) := \theta_i(\chi_{iji}(\alpha_{ji}^{-1}(v) \otimes w))$. Using that $\alpha_{ij}$ is involutive, anti-multiplicative and compatible with $\theta_i$, we see that $h_{ij}$  is hermitian. It is non-degenerate, since the pairing $\sigma_{ij}$ is non-degenerate and $\alpha_{ij}$ is an isomorphism. Finally, it is positive-definite since the reflection structure is positive.
The involutions $\alpha_{ij}$ are unitary with respect to the metrics $h_{ij}$ and $\overline{h_{ji}}$: we have
\begin{align*}
h_{ji}(\alpha_{ij}(v),\alpha_{ij}(w)) = \sigma_{ji}(v \otimes \alpha_{ij}(w))\eqcref{CFa:8*}\sigma_{ij}(\alpha_{ij}(w) \otimes v)=h_{ij}(w,v)=\overline{h_{ij}(v,w)}\text{.}
\end{align*}
We define the fusion representation by $\phi_{ij} := \chi_{ijj} \circ (\iota_j \otimes \id_{\inf R_{ij}})$. This is a representation 
because $\chi_{ijk}$ are associative and $\iota_i$ is an algebra homomorphism. Further, $\phi_{ij}$
is unitary:
\begin{align*}
h_{ij}(\phi_{ij}(\ell\otimes v),\phi_{ij}(\ell'\otimes v'))
&\eqcref{eq:refscomp}\theta_i(\chi_{iji}(\chi_{jii}(\iota_i(\tilde\lambda(\ell))\otimes \alpha_{ji}^{-1}(v))\otimes \chi_{ijj}(\iota_j(\ell')\otimes v')))
\\&\eqcref{CFa:5*} \theta_i(\chi_{iii}(\chi_{iii}(\iota_i(\tilde\lambda(\ell))\otimes\iota_i(\ell') )\otimes \chi_{iji}(\alpha_{ji}^{-1}(v)\otimes v')))
\\&= \theta_i(\chi_{iii}(\iota_i(\tilde\lambda(\ell)\cdot \ell')\otimes \chi_{iji}(\alpha_{ji}^{-1}(v)\otimes v')))
\\&\eqcref{CFa:6*} \vartheta(\tilde\lambda(\ell)\cdot \ell'\cdot \iota_i^{*} (\chi_{iji}(\alpha_{ji}^{-1}(v)\otimes v')))
\\&\eqtext[5cm]{$\vartheta$ is an algebra homomorphism}\vartheta(\tilde\lambda(\ell)\cdot \ell')\cdot \vartheta(1\cdot \iota_i^{*} (\chi_{iji}(\alpha_{ji}^{-1}(v)\otimes v')))
\\&\eqcref{CFa:6*}h(\ell, \ell')\cdot \theta_i(\chi_{iii}(\epsilon_i \otimes \chi_{iji}(\alpha_{ji}^{-1}(v)\otimes v')))
\\&= h(\ell,\ell')\cdot h_{ij}(v,v')\text{.}
\end{align*}     
Now we have provided the structure \cref{LBGstrpt:1*,LBGstrpt:3*,LBGstrpt:4*,LBGstrpt:5*,LBGstrpt:7*} of an object of $\lsg^{(I)}$. It remains to check the axioms. Axioms \cref{eq:lsgpt:comp*,eq:lsgpt:symmetrizingfusion*} follow from the definition of $\phi_{ij}$, the centrality of $\iota_i$ in \cref{CFa:5*} and its compatibility with $\alpha$ in \cref{eq:refscomp}. The first part of \cref{eq:lsgpt:invariance*} follows from the associativity of $\chi_{ijk}$ and its compatibility with $\alpha_{ij}$, and the second part additionally from the symmetry of the pairing $\sigma_{ij}$ in \cref{CFa:8*}.
Finally, for Axiom \cref{eq:lsgpt:cardy*} we have seen above that $v^{k}=\alpha_{ij}(v_k)$ for a basis $(v_1,...,v_n)$  of $\inf R_{ij}$ and its dual basis $(v^1,...,v^n)$ with respect to $\sigma_{ij}$. With \cref{CFa:9*}, it thus remains to prove that
\begin{equation*}
h_{ii}(\epsilon_i,v)\cdot  \epsilon_j=(\iota_j \circ \iota_{i}^{*})(v)\text{.}
\end{equation*}
To see this, we first note that $\vartheta(\iota_i^{*}(v))=\theta_i(v)$ due to the adjointness in \cref{CFa:6*}.
Since $\vartheta$ is a unital algebra isomorphism, this implies $\iota_i^{*}(v)=\theta_i(v)\cdot 1$. Then, we obtain $(\iota_j \circ \iota_{i}^{*})(v)=\theta_i(v) \cdot \epsilon_j$, which coincides with $h_{ii}(\epsilon_i,v)\cdot  \epsilon_j$. Summarizing, we have constructed an object of $\lsg^{(I)}$, which is (by construction) sent by the functor $\tqftlift$ to the $I$-colored knowledgable Frobenius algebra with reflection structure we started with.    
\end{proof}

\Cref{prop:rpkfrob} shows that by restriction of LBG to a point one obtains all  reflection-positive open-closed topological quantum field theories whose bulk algebra is $\C$. This will be further investigated in our upcoming paper \cite{Bunk2018}.

\setsecnumdepth{2}

\section{Transgression}

\label{sec:transgression}

In this section we construct our transgression functor $\mathscr{T}: \tsg(M,Q) \to \lsg(M,Q)$. In  \cref{sec:trans:gerbes,sec:transbranes,sec:trans:superficialconnection,sec:fusionrep,sec:liftpathcommp,sec:liftedconstandrev,sec:trans:liftedpathreversal} we describe its action on the level of objects, i.e. the transgression of a TBG object $(\mathcal{G},\mathcal{E})$. In \cref{sec:nattrans} we treat the morphisms, and in \cref{sec:trivialgerbe} we consider the situation where the bundle gerbe $\mathcal{G}$ is trivial.

As announced in the introduction, we will treat analytical regularity over loop spaces and path spaces in the framework of diffeology. The reason is that we use extensively the concatenation of arbitrary paths, whenever they have a common end point. This requires sitting instants (i.e., the map $\gamma:[0,1] \to M$ is constant around $\{0\}$ and $\{1\}$). Spaces of paths with sitting instants are not manifolds in any way; this forces us to use diffeological spaces. An introduction to diffeology can be found in \cite{iglesias1,baez6}. A systematic application to spaces of paths and loops in the context of parallel transport has been pursued in \cite{baez3,schreiber2}, and in the context of transgression in \cite{waldorf9,waldorf10}. For the sake of self-containedness, we have included an \cref{sec:diffvectorbundle} about vector bundles over diffeological spaces.

\subsection{The line bundle over the loop space}

\label{sec:trans:gerbes}

The transgression of  gerbes with connection to the loop space was described first by Gaw\c edzki \cite{gawedzki3} (in terms of Deligne cohomology) and Brylinski \cite{brylinski1} (in terms of Dixmier-Douady sheaves of groupoids). A transgression for bundle gerbes was described by Gaw\c edzki and Reis in \cite{gawedzki1}. An adaption to bundle gerbes of Brylinski's transgression was described in \cite{waldorf5}, and then extended in \cite{waldorf10} to include fusion products. In the following we recall this approach briefly.

Let $\mathcal{G}$ be a bundle gerbe with connection over $M$. We define a principal $\ueins$-bundle $L\mathcal{G}$ over $LM$ in the following way. The fiber $L\mathcal{G}|_{\tau}$ over a loop $\tau$ is the set of 2-isomorphism classes of trivializations $\mathcal{T}: \tau^{*}\mathcal{G} \to \mathcal{I}_0$, i.e. 1-isomorphisms in $\ugrbcon{S^1}$. This set is a torsor over the group of isomorphism classes of principal $\ueins$-bundles with connection over $S^1$, which can be identified canonically with $\ueins$ by taking the holonomy around the base $S^1$; this establishes the $\ueins$-action on $L\mathcal{G}|_{\tau}$.  The total space of $L\mathcal{G}$ is the disjoint union of the fibers $L\mathcal{G}|_{\tau}$. A diffeology on $L\mathcal{G}$ is defined as follows. A map $\tilde c:U \to L\mathcal{G}$ is a plot if the projection $c:=\pi \circ \tilde c: U \to LM$ is a plot of $LM$, and every point $u \in U$ has an open neighborhood $u \in W \subset U$ and a trivialization $\mathcal{T}$ of $(c^{\vee})^{*}\mathcal{G}$ over $W \times S^1$ such that $\tilde c(w) = [i_w^{*}\mathcal{T}] \in L\mathcal{G}|_{\tau}$ for all $w\in W$. Here $c^{\vee}: U \times S^1  \to M$ is the map $c^{\vee}(u,z):= c(u)(z)$, and   $i_w: S^1 \to W \times S^1 $ is defined by $i_w(t) := (w,t)$. It is proved in  \cite[Sec. 4.1]{waldorf10} that this makes $L\mathcal{G}$ into a diffeological principal $\ueins$-bundle over $LM$. By \cref{lem:assbun}, the associated vector bundle $\inf L := L\mathcal{G} \times_{\ueins} \C$ with respect to the standard representation of $\ueins$ on $\C$ is a hermitian line bundle over $LM$.

The fusion product $\lambda$ on $\inf  L$ is defined first on the principal $\ueins$-bundle $L\mathcal{G}$ as described in  \cite[Sec. 4.2]{waldorf10}. We consider $(\gamma_1,\gamma_2,\gamma_3) \in PM^{[3]}$ and the corresponding loops $\tau_{ab} := \gamma_a \cup \gamma_b$, for $a,b=1,2.3$. Let $\mathcal{T}_{ab}: \tau_{ab}^{*}\mathcal{G} \to \mathcal{I}_0$ be trivializations for $ab\in \{12,23,13\}$. We work with $S^1=\R/\Z$, and consider the two maps   $\iota_1,\iota_2\maps [0,1] \to S^1$  defined by $\iota_1(t)\df  \frac{1}{2}t$ and $\iota_2(t) \df  1-\frac{1}{2}t$.
Suppose there exist 2-isomorphisms
$\phi_1\maps \iota_1^{*}\mathcal{T}_{12}  \Rightarrow \iota_1^{*}\mathcal{T}_{13}$, $\phi_2\maps \iota_2^{*}\mathcal{T}_{12} \Rightarrow \iota_1^{*}\mathcal{T}_{23}$ and $\phi_3\maps \iota_2^{*}\mathcal{T}_{23} \Rightarrow \iota_2^{*}\mathcal{T}_{13}$
over the interval $[0,1]$ such that
$\phi_1|_0 = \phi_{3}|_0 \bullet \phi_{2}|_0$ and $\phi_1|_1 = \phi_{3}|_1 \bullet \phi_{2}|_1$, 
where $\bullet$ denotes the vertical composition of 2-morphisms in $\ugrbcon-$. 
 Then, we set 
\begin{equation*}
\lambda|_{\gamma_1,\gamma_2,\gamma_3}(\mathcal{T}_{12} \otimes  \mathcal{T}_{23}) := \mathcal{T}_{13}\text{.}
\end{equation*}
It is proved in  \cite[Sec. 4.2]{waldorf10}
 that this sufficiently characterizes $\lambda$ as a bundle morphism. On the associated bundle $\inf L$, the fusion product is then defined by
\begin{equation*}
\lambda|_{\gamma_1,\gamma_2,\gamma_3}([\mathcal{T}_{12},z_{12}] \otimes [\inf \mathcal{T}_{23},z_{23}]) := [\lambda(\mathcal{T}_{12} \otimes \inf \mathcal{T}_{23}),z_{12}\cdot z_{23}]\text{.} 
\end{equation*}

\begin{remark}
\label{eq:transtildelambda}
Pullback along  $\rev : S^1 \to S^1:t \mapsto 1-t$  defines a bundle morphism $L\mathcal{G} \to L\mathcal{G}^{*}$ that covers the loop reflection $\widetilde {\rev}: LM \to LM: \tau \mapsto \tau \circ \rev$. We note that $\widetilde \rev(\gamma_1\cup\gamma_2)=\gamma_2\cup\gamma_1$. On the associated line bundle, $\widetilde \rev$ induces the bundle morphism 
\begin{equation*}
\tilde\lambda_{\gamma_1,\gamma_2} : \inf L_{\gamma_1\cup\gamma_2} \to \overline{\inf L}|_{\gamma_2\cup\gamma_1}: [\mathcal{T},z] \mapsto [\rev^{*}\mathcal{T},\overline{z}]\text{,}
\end{equation*} 
which is part of the LBG structure \cref{LBGstr:2}. \end{remark}

A connection $\omega$ on $L\mathcal{G}$ is defined in  \cite[Sec. 4.3]{waldorf10}. Here, we describe its parallel transport $\tau^{\omega}$, combining the definition of the 1-form $\omega$  \cite[Sec. 4.3]{waldorf10} with the derivation of the parallel transport described in \cite[Def. 3.2.9]{waldorf9}. If $\Gamma\in PLM$ is a path,  $\mathcal{T}: (\Gamma^{\vee})^{*}\mathcal{G} \to \mathcal{I}_{\rho}$ is a trivialization, and $\mathcal{T}_0$ and $\mathcal{T}_1$ are its restrictions to $\{0\} \times S^1$ and $\{1\} \times S^1$, respectively, then
\begin{equation*}
\tau_{\Gamma}^{\omega}(\mathcal{T}_0) :=\mathcal{T}_1 \cdot \exp \left (\int_{[0,1] \times S^1} \rho \right ) \text{.}
\end{equation*}
\begin{comment}
Indeed.
As a 1-form, $\omega$ is defined via its path-ordered exponential $F_{\omega}$, i.e.
\begin{equation*}
F_{\omega}(\tilde\gamma) = \exp\left (-\int_{\tilde\Gamma} \omega  \right )\text{.}
\end{equation*}
If $\tilde\Gamma\in PL\mathcal{G}$ is a path, then  \cite[Sec. 4.3]{waldorf10} defines 
\begin{equation*}
F_{\omega}(\tilde\Gamma) := \mathscr{A}_{\mathcal{G}}(\tilde\Gamma,\tilde\Gamma(0),\tilde\Gamma(1))\text{,}
\end{equation*}
where $\mathscr{A}_{\mathcal{G}}$ denotes the surface holonomy of $\mathcal{G}$ along $[0,1] \times S^1 \to L\mathcal{G} \to M$, with boundary values given by the trivializations $\tilde\Gamma(0)$ and $\tilde\Gamma(1)$. In other words, if $\Gamma := \pi \circ \tilde\Gamma$, and $\mathcal{T}: (\Gamma^{\vee})^{*}\mathcal{G} \to \mathcal{I}_{\rho}$ is a trivialization, then 
\begin{equation*}
F_{\omega}(\tilde\Gamma) := \exp \left (\int_{[0,1] \times S^1} \rho \right )\text{.}
\end{equation*}
Now, if $\omega\in\Omega^1(L\mathcal{G})$ is the connection, then the parallel transport along a path $\Gamma\in PLM$ is given by
\begin{equation*}
\tau_{\Gamma}^{\omega}(\tilde\Gamma(0)) := \tilde\Gamma(1)\cdot F_{\omega}(\tilde\Gamma) 
\end{equation*}
where $\tilde\Gamma$ lifts $\Gamma$ to $L\mathcal{G}$. \end{comment}
The connection $\omega$ on $L\mathcal{G}$ is superficial \cite[Cor. 4.3.3]{waldorf10} and symmetrizes the fusion product \cite[Prop. 4.3.5]{waldorf10}.
We obtain an associated connection $pt$ on the associated line bundle $\inf L$, see \cref{lem:assbuncon}. This completes the construction of LBG structures \cref{LBGstr:1,LBGstr:2} from a bundle gerbe $\mathcal{G}$ with connection over $M$.

\subsection{The vector bundle over the path space}

\label{sec:transbranes}

In this section we construct the vector bundles $\inf R_{ij}$ over the path spaces $P_{ij}$, from a given TBG object $(\mathcal{G},\mathcal{E})$.
It is different from a construction of Gaw\c edzki and Reis \cite{gawedzki1}, but results into an isomorphic bundle. We use the following notation: if $p\in M$ is a point, we denote by $\mathcal{G}|_p$ the pullback of $\mathcal{G}$ along the map $\{\ast\} \to M: \ast \mapsto p$. 

The fiber of $\inf R_{ij}$ over a path $\gamma\in P_{ij}$ with $x:=\gamma(0)$ and $y:= \gamma(1)$ is defined as follows. Let $\mathcal{T}:\gamma^{*}\mathcal{G} \to \mathcal{I}_0$ be a trivialization. Over the point we have the two 1-morphisms $\mathcal{T}|_0: \mathcal{G}|_x \to \mathcal{I}_0$ and $\mathcal{E}_i|_x: \mathcal{G}|_x \to \mathcal{I}_0$, so that we obtain a hermitian vector bundle $\Des(\mathcal{E}_i|_x,\mathcal{T}|_0)$ with connection over the point, i.e. a complex inner product space; see \cref{sec:targetspacegeometry}. Similarly we obtain a complex inner product space $\Des(\mathcal{E}_j|_y,\mathcal{T}|_1)$. We define
\begin{equation*}
\inf R_{ij}|_{\gamma}(\mathcal{T}) :=  \mathrm{Hom} \big(\Des(\mathcal{E}_i|_x,\mathcal{T}|_0),\Des(\mathcal{E}_j|_y,\mathcal{T}|_1)\big)\text{,}
\end{equation*}
which is a vector space of  dimension $\mathrm{rk}(\mathcal{E}_i) \cdot \mathrm{rk}(\mathcal{E}_j)$, and comes equipped with the complex inner product 
$h_{ij}(\varphi,\psi) := \mathrm{tr}(\varphi^{*}\circ \psi)$,
where $\varphi^{*}$ denotes the adjoint map.

\begin{remark}
Under the identification $\mathrm{Hom}(V,W)=V^{*}\otimes W$, this inner product is the  one induced from the inner products on $V$ and $W$. The induced norm is the Frobenius norm.
\end{remark}

We show that the vector spaces $\inf R_{ij}|_{\gamma}(\mathcal{T})$ and $\inf R_{ij}|_{\gamma}(\mathcal{T}')$ associated to two trivializations are canonically isomorphic. 
If $\psi: \mathcal{T} \Rightarrow \mathcal{T}'$ is a  2-isomorphism in $\ugrbcon{[0,1]}$, then we have unitary isomorphisms
\begin{align*}
\psi_0:=\Des(\id,\psi|_0) &:\Des(\mathcal{E}_i|_x,\mathcal{T}'|_0) \to \Des(\mathcal{E}_i|_x,\mathcal{T}|_0)
\\
\psi_1:=\Des(\id,\psi|_1) &:\Des(\mathcal{E}_j|_y,\mathcal{T}'|_1) \to \Des(\mathcal{E}_j|_y,\mathcal{T}|_1)\text{,}
\end{align*}
combining into a unitary isomorphism
\begin{equation*}
r_{\psi} : \inf R_{ij}|_{\gamma}(\mathcal{T}') \to\ \inf R_{ij}|_{\gamma}(\mathcal{T}) : \varphi \mapsto \psi_1\circ \varphi \circ \psi_0^{-1}\text{.}
\end{equation*} 
\begin{comment}
It \emph{does} matter that we require $\psi$ to be connection-preserving. If $\psi$ is not connection-preserving, then $\psi$ is an isomorphism between $\Des(\mathcal{T},\mathcal{T}')$ and the trivial bundle, equipped with some non-trivial connection 1-form $\eta$. Since $\eta$ is closed and $[0,1]$ is contractible, we have $\eta=\mathrm{dlog}(f)$ for a smooth map $f:[0,1] \to \ueins$. Then, $\psi' := \psi \cdot f$ is connection-preserving, but $r_\psi$   and $r_{\psi'}$ differ by multiplication with $f(0)^{-1}f(1)\in \ueins$.  
\end{comment}
This isomorphism is in fact independent of $\psi$: since $[0,1]$ is connected, any other 2-isomorphism $\psi': \mathcal{T} \Rightarrow \mathcal{T}'$  satisfies $\psi'=\psi \cdot z$ for a constant $z\in \ueins$; and hence we have $r_\psi=r_{\psi'}$. Moreover, since $[0,1]$ is contractible and 1-dimensional, any two trivializations over $[0,1]$ are 2-isomorphic.
\begin{comment}
Their difference $\Des(\mathcal{T},\mathcal{T}')$ is a hermitian line bundle with unitary connection over $[0,1]$. It is flat for dimensional reasons, and a flat line bundle over a contractible space is always isomorphic to the trivial bundle equipped with the trivial connection. Such an isomorphism is a connection-preserving 2-isomorphism $\mathcal{T} \Rightarrow \mathcal{T}'$. \end{comment}

Alternatively, we give another description of $r_{\psi}$. The two trivializations $\mathcal{T}$ and $\mathcal{T}'$ determine a hermitian line bundle $L := \Des(\mathcal{T},\mathcal{T}')$ with connection over $[0,1]$. By \cref{rem:propDes:1} we have  canonical isomorphisms
\begin{align*}
\Des(\mathcal{E}_i|_x,\mathcal{T}|_0) \otimes L|_0 &= \Des(\mathcal{E}_i|_x,\mathcal{T}|_0)\otimes \Des(\mathcal{T}|_0,\mathcal{T}'|_0) \cong \Des(\mathcal{E}_i|_x,\mathcal{T}'|_0)
\\
\Des(\mathcal{E}_j|_y,\mathcal{T}|_1) \otimes L|_1 &= \Des(\mathcal{E}_j|_y,\mathcal{T}|_1)\otimes \Des(\mathcal{T}|_1,\mathcal{T}'|_1) \cong \Des(\mathcal{E}_j|_y,\mathcal{T}'|_1)
\end{align*}
that together yield an isomorphism
\begin{equation*}
L|_0^{*} \otimes \inf R_{ij}|_{\gamma}(\mathcal{T}) \otimes L|_1 \cong \inf R_{ij}|_{\gamma}(\mathcal{T}')\text{.}
\end{equation*}
Any parallel unit-length section $\sigma$ in $L$ then determines an isomorphism
\begin{equation*}
\tilde\sigma: \inf R_{ij}|_{\gamma}(\mathcal{T}) \to \inf R_{ij}|_{\gamma}(\mathcal{T}')\text{.}
\end{equation*}
This isomorphism is again independent of $\sigma$: if $\sigma'$ is another parallel unit-length section, we have $\sigma'=\sigma\cdot z$ for a constant $z\in \ueins$. This shows that $\tilde\sigma=\tilde\sigma'$.

A parallel unit-length section $\sigma$ in $\Des(\mathcal{T},\mathcal{T}')$ is the same as a 2-isomorphism $\psi:\mathcal{T} \Rightarrow \mathcal{T}'$, and under this correspondence we have $r_{\psi}=\tilde\sigma$. Thus, we have two descriptions of the  same, canonical isomorphism, which we will denote by
$r_{\mathcal{T},\mathcal{T}'}$. On the class of pairs $(\mathcal{T},\varphi)$, where $\varphi \in \inf R_{ij}|_{\gamma}(\mathcal{T})$, we define the relation $(\mathcal{T}',\varphi') \sim (\mathcal{T},r_{\mathcal{T},\mathcal{T}'}(\varphi'))$. This is an equivalence relation, and $\inf R_{ij}|_{\gamma}$ is, by definition, the set of equivalence classes. 
It is a complex inner product space of  dimension $\mathrm{rk}(\mathcal{E}_i) \cdot \mathrm{rk}(\mathcal{E}_j)$. 
We note that the choice of any trivialization $\mathcal{T}$ determines a unitary isomorphism 
\begin{equation*}
\inf R_{ij}|_{\gamma}(\mathcal{T}) \to \inf R_{ij}|_{\gamma}: \varphi \mapsto [(\mathcal{T},\varphi)]\text{.}
\end{equation*}

We will shortly need to consider the vector spaces $\inf R_{ij}$ in smooth families, and we thus prepare the following notation and \cref{lem:changeoftrivsmooth} below. Suppose $U$ is a  smooth manifold, possibly with boundary, and $f:U \to P_{ij}$ is a smooth map. By definition of the diffeology on $P_{ij}$, this means that $f^{\vee}:U \times [0,1] \to M$ is smooth. For $u\in U$ and $t\in [0,1]$ we will use the maps two maps $i_u: [0,1] \to U \times [0,1]$ and $j_t: U \to U \times [0,1]$ defined by $i_u(t):=j_t(u) := 
(u,t)$.
If  $\mathcal{T}:(f^{\vee})^{*}\mathcal{G} \to \mathcal{I}_{\rho}$ is a trivialization, then we define the hermitian vector bundles \begin{equation*}
\inf V_{\mathcal{T}} := \Des((\ev_0 \circ f)^{*}\mathcal{E}_i,j_0^{*}\mathcal{T})
\quand
\inf W_{\mathcal{T}} := \Des((\ev_1\circ f)^{*}\mathcal{E}_j,j_1^{*}\mathcal{T})
\end{equation*}
with connection over $U$. In terms of this notation, we have $\inf R_{ij}|_{f(u)}(i_u^{*}\mathcal{T})=\mathrm{Hom}(\inf V_{\mathcal{T}},\inf W_{\mathcal{T}})|_{u}$ for all $u\in U$. We need the following result about a change of trivialization in smooth families. 

\begin{lemma}
\label{lem:changeoftrivsmooth}
Let $\mathcal{T}:(f^{\vee})^{*}\mathcal{G} \to \mathcal{I}_{\rho}$ and $\mathcal{T}': (f^{\vee})^{*}\mathcal{G} \to \mathcal{I}_{\rho'}$ be two trivializations. If $U$ is contractible, then there exist the following:
\begin{itemize}
\item 
a 1-form $\eta\in \Omega^1(U \times [0,1])$ such that $\rho-\rho'=\mathrm{d}\eta$ and $i_u^{*}\eta=0$ for all $u\in U$

\item
a connection-preserving bundle isomorphism \begin{equation*}
\tilde\sigma: \mathrm{Hom}(\inf V_{\mathcal{T}},\inf W_{\mathcal{T}}) \otimes \C_{j_1^{*}\eta-j_0^{*}\eta} \to \mathrm{Hom}(\inf V_{\mathcal{T}'},\inf W_{\mathcal{T}'})
\end{equation*}
over $U$, such that $\tilde\sigma|_u= r_{i_u^{*}\mathcal{T},i_u^{*}\mathcal{T}'}$ for all $u\in U$. Here, $\C_{\omega}$ denotes the trivial line bundle equipped with the connection induced by a 1-form $\omega$.
\end{itemize}
\end{lemma}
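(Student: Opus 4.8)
The plan is to realize the whole family of canonical isomorphisms $r$ by a single global section of the line bundle $L := \Des(\mathcal{T},\mathcal{T}')$ over $U \times [0,1]$, chosen to be parallel along the $[0,1]$-slices. Since $\mathcal{T}$ and $\mathcal{T}'$ are trivializations, i.e. rank-one (hence invertible) 1-morphisms, $L$ is a hermitian line bundle with connection over $U\times[0,1]$, and by the grading of the target of $\Des$ its curvature equals $\rho-\rho'$. Restricted to any slice $\{u\}\times[0,1]$, the bundle $L$ is flat for dimensional reasons, so it carries a parallel unit-length section there; this is exactly the datum used to define $r_{i_u^{*}\mathcal{T},i_u^{*}\mathcal{T}'}$ in the construction preceding the lemma. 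The task is therefore to make this slice-wise choice smooth and to read off the resulting global $1$-form $\eta$.

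First I would construct $\sigma$. Using that $U$ is contractible, fix a unit-length section $\mu_0$ of the line bundle $j_0^{*}L$ over $U$, and define $\sigma$ by parallel-transporting $\mu_0(u)$ along $s\mapsto(u,s)$: explicitly $\sigma(u,t)$ is the parallel transport of $\mu_0(u)$ from $(u,0)$ to $(u,t)$. Then $\sigma$ is a smooth unit-length section of $L$ (smoothness from the smooth dependence of parallel transport on the smooth family $f^{\vee}$, unit length from unitarity of parallel transport), and it is parallel along every slice by construction. Let $\eta\in\Omega^1(U\times[0,1])$ be the connection $1$-form of $\sigma$, characterized by the property that $z\mapsto z\,\sigma$ is a connection-preserving isomorphism $\C_{\eta}\to L$. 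Then $d\eta=\mathrm{curv}(L)=\rho-\rho'$, and $i_u^{*}\eta=0$ because $i_u^{*}\sigma$ is parallel and so has vanishing connection $1$-form.

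Next I would build $\tilde\sigma$. The composition isomorphisms of \cref{rem:propDes:1} (which are isomorphisms because $j_0^{*}\mathcal{T}$ and $j_1^{*}\mathcal{T}$ are invertible) together with the naturality of $\Des$ under pullback give connection-preserving isomorphisms $\inf V_{\mathcal{T}}\otimes j_0^{*}L\cong\inf V_{\mathcal{T}'}$ and $\inf W_{\mathcal{T}}\otimes j_1^{*}L\cong\inf W_{\mathcal{T}'}$. Passing to $\mathrm{Hom}$-bundles yields $\mathrm{Hom}(\inf V_{\mathcal{T}'},\inf W_{\mathcal{T}'})\cong\mathrm{Hom}(\inf V_{\mathcal{T}},\inf W_{\mathcal{T}})\otimes(j_0^{*}L)^{*}\otimes j_1^{*}L$. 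Finally the section $\sigma$ trivializes $j_0^{*}L\cong\C_{j_0^{*}\eta}$ and $j_1^{*}L\cong\C_{j_1^{*}\eta}$ connection-preservingly, so that $(j_0^{*}L)^{*}\otimes j_1^{*}L\cong\C_{j_1^{*}\eta-j_0^{*}\eta}$. Composing these (and inverting to match the stated direction) produces the connection-preserving isomorphism $\tilde\sigma$.

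It then remains to verify $\tilde\sigma|_u=r_{i_u^{*}\mathcal{T},i_u^{*}\mathcal{T}'}$, which is immediate from the second (``alternatively'') description of $r$: on the slice $i_u^{*}\sigma$ is a parallel unit-length section of $i_u^{*}L$, and $r_{i_u^{*}\mathcal{T},i_u^{*}\mathcal{T}'}$ was defined by precisely such a section through the same $\Des$-composition isomorphisms used to build $\tilde\sigma$; restricting the fiber factor $\C_{j_1^{*}\eta-j_0^{*}\eta}|_u=\C$ trivially, the two maps agree. The step requiring the most care is the construction of $\sigma$: establishing its smoothness as a section in the diffeological sense (from smooth dependence of parallel transport on the smooth family $f^{\vee}$) and fixing the sign and normalization conventions so that $d\eta=\rho-\rho'$ and $i_u^{*}\eta=0$ hold simultaneously.
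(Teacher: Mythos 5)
Your proposal is correct and follows essentially the same route as the paper's proof: both form the line bundle $L=\Des(\mathcal{T},\mathcal{T}')$ of curvature $\rho-\rho'$, construct the section $\sigma$ by choosing a unit-length section over $U\times\{0\}$ (using contractibility) and extending it by parallel transport along the $[0,1]$-slices, take $\eta$ to be its covariant derivative, and assemble $\tilde\sigma$ from the composition isomorphisms of \cref{rem:propDes:1} together with the slice-wise identification with $r_{i_u^{*}\mathcal{T},i_u^{*}\mathcal{T}'}$. The only differences are expository (you spell out the $\mathrm{Hom}$-bundle manipulations slightly more explicitly), so nothing further is needed.
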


\begin{proof}
We use a $U$-family version of the canonical isomorphism $r_{\mathcal{T},\mathcal{T}'}$.
We consider the hermitian line bundle $L := \Des(\mathcal{T},\mathcal{T}')$  over $U \times [0,1]$ with connection of curvature $\rho-\rho'$. By \cref{rem:propDes:1} we have an  isomorphism
\begin{equation*}
\inf V_{\mathcal{T}} \otimes j_0^{*}L = \Des((\ev_0 \circ c)^{*}\mathcal{E}_i,j_0^{*}\mathcal{T})\otimes \Des(j_0^{*}\mathcal{T},j_0^{*}\mathcal{T}') \cong \Des((\ev_0 \circ c)^{*}\mathcal{E}_i,j_0^{*}\mathcal{T}')=\inf V_{\mathcal{T}'}
\end{equation*}
A similar construction works for $\inf W_{\mathcal{T}}$ and $\inf W_{\mathcal{T}'}$, and together we obtain a   bundle isomorphism
\begin{equation*}
j_0^{*}L^{*} \otimes\mathrm{Hom}(\inf V_{\mathcal{T}},\inf W_{\mathcal{T}}) \otimes  j_1^{*}L \cong \mathrm{Hom}(\inf V_{\mathcal{T}'},\inf W_{\mathcal{T}'})\text{.}
\end{equation*}
Now, $L$ may not have a parallel unit-length section. However, it always admits a unit-length section $\sigma$ that is parallel along $i_u$. Indeed, since $U$ is contractible, there exists a smooth unit-length section $\sigma_0:U \to i_0^{*}L$. We define $\sigma(u,t) := pt_{\gamma_{u,t}}(\sigma_0(u))$, where $\gamma_{u,t}$ is the path $\tau \mapsto i_u(\tau)$ restricted to $[0,t]$. Since the parallel transport depends smoothly on the path, this gives a smooth section of $L$, as desired. 
We denote by $\eta\in \Omega^1(U \times [0,1])$ its covariant derivative, so that $\mathrm{d}\eta=\rho-\rho'$. Therefore, $\sigma$ determines the claimed connection-preserving bundle isomorphism $\tilde\sigma$, as claimed.
Since $\sigma$ is parallel along $i_u$, we have $i_u^{*}\eta=0$ and $\tilde\sigma|_u = r_{i_u^{*}\mathcal{T},i_u^{*}\mathcal{T}'}$. \end{proof}

We are now in position to assemble the fibers $\inf R_{ij}|_{\gamma}$ into a diffeological vector bundle. Its total space $\inf R_{ij}$ is the disjoint union of the fibers $\inf R_{ij}|_{\gamma}$ for $\gamma\in P_{ij}$, equipped with the obvious projection $\pi:\inf R_{ij} \to P_{ij}$.  Let $U \subset \R^{k}$ be open. We define a map $\tilde c:U \to \inf R_{ij}$ to be a plot of $\inf R_{ij}$ if the following conditions hold: 
\begin{enumerate}[(a)]

\item 
The composition $c:=\pi \circ \tilde c: U \to P_{ij}$  is a plot of $P_{ij}$.

\item
Every point $u\in U$ has an open neighborhood $W \subset U$, a trivialization $\mathcal{T}:(c|_{W}^{\vee})^{*}\mathcal{G} \to \mathcal{I}_{\rho}$, and a  smooth section $\tau$ into the bundle $\mathrm{Hom}(\inf V_{\mathcal{T}},\inf W_{\mathcal{T}})$ over $W$, such that  $\tilde c(w) =[i_w^{*}\mathcal{T},\tau(w)]$ for all $w\in W$.

\end{enumerate}
It is straightforward to show that this indeed defines  a diffeology. 
 
\begin{proposition}
\label{prop:Rijvectorbundle}
$\inf R_{ij}$ is a hermitian vector bundle over $P_{ij}$.
\end{proposition}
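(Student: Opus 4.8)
The plan is to verify the definition of a diffeological hermitian vector bundle from \cref{sec:diffvectorbundle}, whose essential content is local triviality over plots of the base; smoothness of the fibrewise operations and of the metric will then follow with little extra work. Concretely, I would fix a plot $c\maps U \to P_{ij}$ and a point $u_0\in U$ and produce a local trivialization of the pullback bundle $c^{*}\inf R_{ij}$ near $u_0$, identifying it with a genuine smooth hermitian vector bundle over an open subset of $U$.

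First I would pass to a contractible open neighborhood $W\subset U$ of $u_0$. Since $W\times[0,1]$ is then contractible, the Dixmier--Douady class of $(c|_W^{\vee})^{*}\mathcal{G}$ vanishes, so there exists a trivialization $\mathcal{T}\maps (c|_W^{\vee})^{*}\mathcal{G} \to \mathcal{I}_{\rho}$ with connection. This produces the honest smooth hermitian vector bundles $\inf V_{\mathcal{T}}=\Des((\ev_0\circ c)^{*}\mathcal{E}_i,j_0^{*}\mathcal{T})$ and $\inf W_{\mathcal{T}}=\Des((\ev_1\circ c)^{*}\mathcal{E}_j,j_1^{*}\mathcal{T})$ over the manifold $W$, and hence the smooth bundle $\mathrm{Hom}(\inf V_{\mathcal{T}},\inf W_{\mathcal{T}})$ of constant rank $\mathrm{rk}(\mathcal{E}_i)\cdot\mathrm{rk}(\mathcal{E}_j)$. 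The candidate trivialization is the fibrewise-linear bijection sending $(w,\varphi)$, with $\varphi\in \mathrm{Hom}(\inf V_{\mathcal{T}},\inf W_{\mathcal{T}})|_w$, to $(w,[i_w^{*}\mathcal{T},\varphi])\in c|_W^{*}\inf R_{ij}$, using the identification $\inf R_{ij}|_{c(w)}(i_w^{*}\mathcal{T})=\mathrm{Hom}(\inf V_{\mathcal{T}},\inf W_{\mathcal{T}})|_w$ noted above.

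The main task is to check that this map and its inverse respect the two diffeologies. Smoothness in the forward direction is immediate from the local-model condition in the definition of the diffeology on $\inf R_{ij}$: a plot of $\mathrm{Hom}(\inf V_{\mathcal{T}},\inf W_{\mathcal{T}})$ is exactly a smooth section along a smooth map into $W$, which is precisely the local shape $[i_{\bullet}^{*}\mathcal{T},\tau]$ demanded of a plot of $\inf R_{ij}$. The content — and the step I expect to be the only genuine obstacle — is smoothness of the inverse, where one must compare the fixed reference trivialization $\mathcal{T}$ with an arbitrary trivialization $\mathcal{T}'$ occurring in the local description of a competing plot. Here I would invoke \cref{lem:changeoftrivsmooth}: it furnishes a smooth bundle isomorphism $\tilde\sigma$ whose fibres are exactly the canonical gluing maps $r_{i_u^{*}\mathcal{T},i_u^{*}\mathcal{T}'}$, the flat twisting factor $\C_{j_1^{*}\eta-j_0^{*}\eta}$ being set-theoretically trivial and leaving the underlying bundle map unchanged. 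Consequently, if one and the same plot is written both as $[i_{\bullet}^{*}\mathcal{T}',\tau']$ and as $[i_{\bullet}^{*}\mathcal{T},\tau]$, the defining equivalence relation forces $\tau=\tilde\sigma(\tau')$, which is smooth because $\tilde\sigma$ and $\tau'$ are; this shows any two local descriptions transform into one another smoothly, and hence that the inverse trivialization is smooth.

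Finally I would record the routine consequences: these local trivializations make $\pi\maps\inf R_{ij}\to P_{ij}$ a subduction and exhibit the fibrewise $\C$-vector-space structure as smooth, so that $\inf R_{ij}$ is a diffeological vector bundle in the sense of \cref{sec:diffvectorbundle}; moreover the fibrewise inner product $h_{ij}(\varphi,\psi)=\mathrm{tr}(\varphi^{*}\circ\psi)$ is a smooth hermitian metric, since in each local trivialization it is the standard Frobenius metric on $\mathrm{Hom}(\inf V_{\mathcal{T}},\inf W_{\mathcal{T}})$ and the gluing isomorphisms $r_{\mathcal{T},\mathcal{T}'}$ were already seen to be unitary. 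This completes the verification that $\inf R_{ij}$ is a hermitian vector bundle over $P_{ij}$.
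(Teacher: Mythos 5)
Your proposal is correct and matches the paper's proof in all essentials: a contractible neighborhood $W$, a trivialization $\mathcal{T}$ of $(c|_W^{\vee})^{*}\mathcal{G}$ identifying the bundle locally with $\mathrm{Hom}(\inf V_{\mathcal{T}},\inf W_{\mathcal{T}})$, forward smoothness read off directly from the definition of the plots, and smoothness of the inverse (the only substantive point) reduced to \cref{lem:changeoftrivsmooth}. The only cosmetic difference is that the paper additionally composes with a trivialization $\tau\maps W\times\C^{k}\to\mathrm{Hom}(\inf V_{\mathcal{T}},\inf W_{\mathcal{T}})$ (available since $W$ is contractible) so that the local model is literally $W\times\C^{k}$ as \cref{def:vb} requires, while your closing checks on the subduction and the hermitian metric are left implicit there.
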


\begin{proof}
Let $c:U \to P_{ij}$ be a plot and let $u\in U$. We choose a contractible open neighborhood $ W \subset U$ of $u$. Since $W \times [0,1]$ is contractible, there exists a trivialization $\mathcal{T}: (c|_W^{\vee})^{*}\mathcal{G} \to \mathcal{I}_{\rho}$, and since $W$ is contractible, there exists a trivialization 
\begin{equation*}
\tau:W \times \C^{k} \to \mathrm{Hom}(\inf V_{\mathcal{T}},\inf W_{\mathcal{T}})\text{,}
\end{equation*}
where $k=\mathrm{rk}(\mathcal{E}_i) \cdot \mathrm{rk}(\mathcal{E}_j)$. 
We define a local trivialization $\phi$ of $\inf R_{ij}$ by the formula 
\begin{equation*}
\phi: W \times \C^{k} \to W \times_{P_{ij}} \inf R_{ij}:(w,v) \mapsto (w,[i_w^{*}\mathcal{T},\tau(w,v)])\text{.}
\end{equation*}
This map is smooth (by definition of the plots) and fiber-wise linear. 
\begin{comment}
For the smoothness, it suffices to show that  $\tilde c:=\pr_2\circ \phi: W \times \C^{k} \to \inf R_{ij}$ is a plot, where $W \times \C^{k} \subset U \times \R^{2k}$. The composition with the bundle projection is $\pi \circ \tilde c=c' := c \circ \pr_1$. We pull back $\mathcal{T}$ along the projection $W \times [0,1] \times \C^{k} \to W \times [0,1]$. The isomorphism $\tau$ becomes a section into the pullback of the $\mathrm{Hom}$-bundle along the projection $W \times \C^{k} \to W$. Then, $\tilde c$ matches exactly the definition of a plot. 
\end{comment}
We have to show that it is a diffeomorphism.

The inverse map of $\phi$ can be described in the following way. Given $(w,[\mathcal{T}_w,\varphi])\in W \times_{P_{ij}} \inf R_{ij}$ with a trivialization $\mathcal{T}_w:c(w)^{*}\mathcal{G} \to \mathcal{I}_0$ and $\varphi\in \inf R_{ij}|_{c(w)}(\mathcal{T}_w)$, we have
\begin{equation*}
\phi^{-1}(w,[\mathcal{T}_w,\varphi]) = \tau^{-1}(r_{\iota_w^{*}\mathcal{T},\mathcal{T}_w}(\varphi))\text{.}
\end{equation*}
In order to show that $\phi^{-1}$ is smooth, we consider a plot of $W \times_{P_{ij}} \inf R_{ij}$, which is a pair $(f,\tilde c)$ of a smooth map $f:U' \to W$ and a plot $\tilde c: U'\to \inf R_{ij}$ such that $c \circ f = \pi \circ \tilde c =:c'$. Since $\tilde c$ is a plot, there exists (possibly after replacing $U'$ by a smaller subset) a trivialization $\mathcal{T}' : (c'^{\vee})^{*}\mathcal{G} \to \mathcal{I}_{\rho'}$ and a smooth section $\tau'$ of $\mathrm{Hom}(\inf V_{\mathcal{T}'},\inf W_{\mathcal{T}'})$ such that $\tilde c(u)=[i_u^{*}\mathcal{T},\tau'(u)]$ for all $u\in U'$. Now, we have to show that the map $U' \to W \times \C^{k}$ given by
\begin{equation*}
u \mapsto \phi^{-1}(f(u),[i_u^{*}\mathcal{T}',\tau'(u)])=\tau^{-1}(r_{i_{u}^{*}\mathcal{T},i_u^{*}\mathcal{T}'}(\tau'(u)))
\end{equation*} 
is smooth. Indeed, by \cref{lem:changeoftrivsmooth} it is the composition of smooth bundle morphisms. 
\end{proof}

\subsection{The superficial connection}

\label{sec:trans:superficialconnection}

We define a connection on the vector bundle $\inf R_{ij}$, in the sense of \cref{def:connection}. A  similar definition of that connection has been sketched in \cite{gawedzki1}.  
Let $\Gamma\in PP_{ij}$ be a path in $P_{ij}$, and let $\gamma_s := \Gamma(s)\in P_{ij}$. 
Thus, $\Gamma$ is a path from $\gamma_0$ to $\gamma_1$. 
We consider the adjoint map $\Gamma^{\vee}: [0,1]^2 \to M$ (i.e. $\Gamma^{\vee}(s,t):= \gamma_s(t)$) and choose a trivialization $\mathcal{T}: (\Gamma^{\vee})^{*}\mathcal{G} \to \mathcal{I}_{\rho}$. 
Let $\mathcal{T}_s := \mathcal{T}|_{\{s\} \times [0,1]}$; these are trivializations of $(\gamma_s^{\vee})^{*}\mathcal{G}$. 
Further, let $\mathcal{T}^0:=\mathcal{T}|_{[0,1]\times \{0\}}$ and $\mathcal{T}^1:=\mathcal{T}|_{[0,1]\times \{1\}}$, which are trivializations along the paths of end points.  
In the notation of the previous subsection, we consider the hermitian vector bundles $\inf V_{\mathcal{T}^0}$ and $\inf W_{\mathcal{T}^1}$ with connection over $[0,1]$, 
and the corresponding Hom-bundle $\mathrm{Hom}(\inf V_{\mathcal{T}^{0}},\inf W_{\mathcal{T}^{1}})$ over $[0,1]$ with its induced connection.
\begin{comment}
Explicitly,
\begin{equation*}
\inf V_{\mathcal{T}^{0}} := \Des( (\ev_0 \circ \gamma)^{*}\mathcal{E}_i,\mathcal{T}|_{[0,1] \times \{0\}})
\quand
\inf W_{\mathcal{T}^{1}} := \Des( (\ev_1 \circ \gamma)^{*}\mathcal{E}_j,\mathcal{T}|_{[0,1] \times \{1\}})\text{.}
\end{equation*}
\end{comment}
Now we have 
\begin{align*}
\inf R_{ij}|_{\gamma_0}(\mathcal{T}_0) =\mathrm{Hom}(\inf V_{\mathcal{T}^{0}},\inf W_{\mathcal{T}^{1}})|_0
\quand
\inf R_{ij}|_{\gamma_1}(\mathcal{T}_1) =\mathrm{Hom}(\inf V_{\mathcal{T}^{0}},\inf W_{\mathcal{T}^{1}})|_1\text{.}
\end{align*}
Then, we define 
\begin{equation}
\label{def:connectionRij}
pt_{ij}|_{\Gamma}: \inf R_{ij}|_{\gamma_0}(\mathcal{T}_0) \to \inf R_{ij}|_{\gamma_1}(\mathcal{T}_1): \varphi \mapsto  \exp \left (\int_{[0,1]^2}\rho \right ) \cdot pt(\varphi)\text{,}
\end{equation}
where $pt$ is the parallel transport in $\mathrm{Hom}(\inf V_{\mathcal{T}^{0}},\inf W_{\mathcal{T}^{1}})$ along the linear path in $[0,1]$ from $0$ to $1$.

\begin{proposition}
\label{lem:connection}
\cref{def:connectionRij} defines a superficial, unitary connection $pt_{ij}$ on $\inf R_{ij}$.
\end{proposition}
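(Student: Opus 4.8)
The plan is to verify, in turn, that \cref{def:connectionRij} is independent of the chosen trivialization $\mathcal{T}$, that it satisfies the axioms of a connection in the sense of \cref{def:connection} (smoothness in the diffeological sense, functoriality under concatenation, and reparameterization invariance), that it is unitary, and finally that it is superficial. I expect the last point to be the main obstacle.

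For well-definedness, suppose $\mathcal{T}\maps (\Gamma^{\vee})^{*}\mathcal{G}\to\mathcal{I}_{\rho}$ and $\mathcal{T}'\maps (\Gamma^{\vee})^{*}\mathcal{G}\to\mathcal{I}_{\rho'}$ are two trivializations over the contractible domain $[0,1]^{2}$, viewed as a family over $U:=[0,1]$ in the path-parameter $s$ with loop-parameter $t$. I would apply \cref{lem:changeoftrivsmooth} to obtain a $1$-form $\eta$ with $\rho-\rho'=\mathrm{d}\eta$ and $i_{u}^{*}\eta=0$, together with a connection-preserving isomorphism $\tilde\sigma\maps\mathrm{Hom}(\inf V_{\mathcal{T}},\inf W_{\mathcal{T}})\otimes\C_{j_{1}^{*}\eta-j_{0}^{*}\eta}\to\mathrm{Hom}(\inf V_{\mathcal{T}'},\inf W_{\mathcal{T}'})$ over $[0,1]$ with $\tilde\sigma|_{s}=r_{i_{s}^{*}\mathcal{T},i_{s}^{*}\mathcal{T}'}$. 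Since $i_{u}^{*}\eta=0$, Stokes' theorem gives $\int_{[0,1]}(j_{1}^{*}\eta-j_{0}^{*}\eta)=\int_{[0,1]^{2}}(\rho'-\rho)$, so the holonomy of the twist $\C_{j_{1}^{*}\eta-j_{0}^{*}\eta}$ along $s$ equals $\exp(\int\rho-\int\rho')$. As $\tilde\sigma$ intertwines the respective parallel transports, this twist holonomy exactly cancels the discrepancy between the scalar factors $\exp(\int\rho)$ and $\exp(\int\rho')$; since $\tilde\sigma|_{0}=r_{\mathcal{T}_{0},\mathcal{T}_{0}'}$ and $\tilde\sigma|_{1}=r_{\mathcal{T}_{1},\mathcal{T}_{1}'}$ are precisely the canonical identifications of the fibers, this shows that the two definitions of $pt_{ij}|_{\Gamma}$ agree. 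The connection axioms then follow quickly: concatenation holds because $\int\rho$ is additive over the two subsquares (after choosing one trivialization over the concatenated square) while parallel transport in the Hom-bundle composes; reparameterization invariance holds because both the surface integral and the Hom-bundle transport are invariant; and smoothness follows from the smooth dependence of $\int\rho$ and of the Hom-bundle transport on families, using the local trivializations of \cref{prop:Rijvectorbundle} and again \cref{lem:changeoftrivsmooth}.

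Unitarity is straightforward: the connection on $\mathrm{Hom}(\inf V_{\mathcal{T}^{0}},\inf W_{\mathcal{T}^{1}})$ is induced from the unitary connections on $\inf V_{\mathcal{T}^{0}}$ and $\inf W_{\mathcal{T}^{1}}$, so $pt$ is unitary, while the scalar factor $\exp(\int_{[0,1]^{2}}\rho)$ lies in $\ueins$ exactly as in the line-bundle construction of \cref{sec:trans:gerbes}, hence is a phase. For superficiality I would check the conditions of \cref{def:superficial} separately for the two factors of \cref{def:connectionRij}. The factor $\exp(\int\rho)$ is the surface holonomy of $\mathcal{G}$ relative to the boundary trivializations, and its superficiality is inherited from the transgressed line bundle as in \cite[Cor. 4.3.3]{waldorf10}: under a fixed-endpoint homotopy $h$ whose adjoint $h^{\vee}$ has rank at most two, the two surface holonomies differ by $\int (h^{\vee})^{*}H$ for the curvature $3$-form $H$ of $\mathcal{G}$, which vanishes since $h^{\vee}$ has rank $\le 2$; and for a rank $\le 1$ path the surface integral degenerates into the canonical identification. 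For the Hom-bundle factor $pt$, I would argue that the transport depends on $\Gamma$ only through these rank conditions, because the curvature of $\mathrm{Hom}(\inf V_{\mathcal{T}^{0}},\inf W_{\mathcal{T}^{1}})$ is controlled by the curvatures of $\mathcal{E}_i$, $\mathcal{E}_j$ and $\mathcal{T}$, all of which pull back to zero under maps of insufficient rank. Combining the two factors then yields the commutativities required in \cref{def:superficial:ii}.

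The main obstacle is the superficiality, and specifically the bookkeeping needed to match the precise rank conditions of \cref{def:superficial} against the vanishing of the relevant curvature integrals for both factors at once. The scalar part is essentially inherited from \cite{waldorf10}, but the contribution of the Hom-bundle parallel transport must be tracked carefully: a homotopy of $\Gamma$ in general moves the paths of interior end points even when it fixes $\gamma_0$ and $\gamma_1$, so one must confirm that these variations are compatible with the canonical isomorphisms $d$ of \cref{LBGstr:3}.
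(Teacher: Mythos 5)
Your Part I (independence of the chosen trivialization, via \cref{lem:changeoftrivsmooth} and Stokes' theorem) matches the paper's argument, and your sketches of concatenation, local smoothness, and unitarity point in the right direction. However, you have misidentified the axioms of \cref{def:connection}: condition \ref{def:connection:a} demands invariance under \emph{thin homotopies} of paths in $P_{ij}$ (\cref{def:thinhomotopic} applied to $X=P_{ij}$), which is far stronger than the ``reparameterization invariance'' you propose to check. A thin homotopy $h$ fixes the two endpoints $\Gamma(0),\Gamma(1)\in P_{ij}$, but the intermediate paths $h(r)(s)$ may move freely -- in particular their endpoints $h(r)(s)(0)\in Q_i$ and $h(r)(s)(1)\in Q_j$ may vary with $r$ -- subject only to the adjoint $H\colon[0,1]^3\to M$ having rank two. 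This is also not subsumed by superficiality \ref{def:superficial:ii}, since a thin homotopy need not fix the paths of end points as \cref{def:ranktwohomotopic} requires. Verifying this axiom is the longest part of the paper's proof: one chooses a trivialization $\mathcal{T}\colon H^{*}\mathcal{G}\to\mathcal{I}_{\rho}$ (so that $\mathrm{d}\rho=0$), treats the $(t=0)$ and $(t=1)$ faces of the cube with \cref{th:rankonepullback}, treats the $(s=0)$ and $(s=1)$ faces with \cref{lem:changeoftrivsmooth}, and assembles all face contributions with Stokes' theorem. Your proposal contains no argument for this condition.

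Second, your mechanism for superficiality is not correct as stated. You claim the curvature of $\mathrm{Hom}(\inf V_{\mathcal{T}},\inf W_{\mathcal{T}})$ is controlled by curvatures that ``pull back to zero under maps of insufficient rank.'' That fails for the $2$-form $\rho$ of the trivialization: $\rho$ lives on the whole homotopy cube, only $\mathrm{d}\rho=0$ holds there, and its restriction to a boundary face is generally nonzero. Consequently $\inf V_{\mathcal{T}}$ and $\inf W_{\mathcal{T}}$ are \emph{not} flat over the rank-one faces; what \cref{th:rankonepullback} actually yields is $\inf V_{\mathcal{T}}\cong\Des(\mathcal{T}'_{t=0},\mathcal{T}_{t=0})\otimes E_0$ with $E_0$ flat, so the holonomy of $\inf V_{\mathcal{T}}$ around the face boundary equals $\exp\left(-\int\rho|_{t=0}\right)$, which need not be $1$. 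The individual contributions do not vanish; they cancel only in aggregate, by Stokes' theorem for the closed form $\rho$ over all faces of the cube. Similarly, for condition \ref{def:superficial:i} one must first extend the thin loop constantly over discs glued to $S^1\times\{0\}$ and $S^1\times\{1\}$ so that \cref{th:rankonepullback:a} produces a trivialization with vanishing $2$-form, after which \cref{th:rankonepullback:b} gives flatness of the boundary bundles. In short, the key lemma missing from your proposal -- for both the omitted thin-homotopy invariance and the superficiality -- is \cref{th:rankonepullback}; the ``careful bookkeeping'' you flag as the main obstacle is exactly the cancellation argument it enables, and it cannot be replaced by pointwise vanishing of curvatures.
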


\begin{proof}
The proof is split into five parts. 

Part I: the definition of $pt_{ij}|_{\Gamma}$ is independent of the choice of the trivialization $\mathcal{T}$. In order to prove this,
let $\mathcal{T}': (\Gamma^{\vee})^{*}\mathcal{G} \to \mathcal{I}_{\rho'}$ be another trivialization. We consider a 1-form $\eta\in \Omega^1([0,1]^2)$ and a connection-preserving bundle isomorphism  \begin{equation*}
\tilde\sigma: \mathrm{Hom}(\inf V_{\mathcal{T}},\inf W_{\mathcal{T}}) \otimes \C_{j_1^{*}\eta-j_0^{*}\eta} \to \mathrm{Hom}(\inf V_{\mathcal{T}'},\inf W_{\mathcal{T}'})
\end{equation*}
over $U$, as in \cref{lem:changeoftrivsmooth}. Thus, the parallel transport $pt$ in $\mathrm{Hom}(\inf V_{\mathcal{T}},\inf W_{\mathcal{T}})$ and $pt'$ in $\mathrm{Hom}(\inf V_{\mathcal{T}'},\inf W_{\mathcal{T}'})$ differ by the parallel transport  of $\C_{j_1^{*}\eta-j_0^{*}\eta}$, i.e.,
\begin{equation*}
pt'(\tilde \sigma|_0(\varphi)) =\tilde\sigma|_1( pt(\varphi)) \cdot \exp\left (  \int_{[0,1]} j_1^{*}\eta-j_0^{*}\eta \right )
\end{equation*}
for all $\varphi \in \mathrm{Hom}(\inf V_{\mathcal{T}},\inf W_{\mathcal{T}})|_0$. Due to Stokes' Theorem, and the properties $\rho'-\rho=\mathrm{d}\eta$ and $i_0^{*}\eta=i_1^{*}\eta=0$ of $\eta$, we can write this as
\begin{equation*}
pt'(\tilde \sigma|_0(\varphi)) \cdot \exp\left (  \int_{[0,1]^2} \rho' \right )=\tilde\sigma|_1( pt(\varphi)) \cdot \exp\left (  \int_{[0,1]^2} \rho \right )\text{.}
\end{equation*}
Since $\tilde\sigma|_s$ establishes the isomorphism $r_{\mathcal{T}_s,\mathcal{T}_s'}$ between $\inf R_{ij}|_{\gamma_s}(\mathcal{T}_s)$ and $\inf R_{ij}|_{\gamma_s}(\mathcal{T}'_s)$; this shows the claimed independence. \begin{comment}
Stokes' theorem
\begin{equation*}
\int_{[0,1]^2}\rho'-\int_{[0,1]^2}\rho= \int_{[0,1]^2} \mathrm{d}\eta =\int_{[0,1]}i_1^{*}\eta-i_0^{*}\eta +j_0^{*}\eta - j_1^{*}\eta=\int_{[0,1]}j_0^{*}\eta - j_1^{*}\eta\text{,}
\end{equation*}
\end{comment}

Part II is the verification of \ref{def:connection:a}:  $pt_{ij}|_{\Gamma}$ depends only on the thin homotopy class of $\Gamma$.  In order to prove this, we suppose $\Gamma_1,\Gamma_2\in PP_{ij}$ are thin homotopic  (\cref{def:thinhomotopic} for $X=P_{ij}$). Thus, there exists $h\in PPP_{ij}$ such that
\begin{enumerate}[(a)]

\item 
$h(0)=\Gamma_1$ and $ h(1)=\Gamma_2$

\item
$h(r)(0)=\Gamma_1(0)=\Gamma_2(0)$ and $h(r)(1)=\Gamma_1(1)=\Gamma_2(1)$ for all $r\in [0,1]$

\item
$h^{\vee}:[0,1]^2\to P_{ij}:(r,s) \mapsto h(r)(s)$ has rank   one. 

\end{enumerate}
Here, a smooth map $f$ between smooth manifolds is said to have rank $k$ if $\mathrm{rk}(\mathrm{d}f_x)\leq k$ for all points $x$ in its domain.
We let $H:[0,1]^3 \to M$ denote the map $H(r,s,t)=h(r)(s)(t)=h^{\vee}(r,s)(t)$; which by (c) has rank two. We choose a trivialization $\mathcal{T}: H^{*}\mathcal{G} \to \mathcal{I}_{\rho}$; since $H$   has rank two, $H^{*}\mathcal{G}$ is flat and hence $\mathrm{d}\rho = 0$.

The restriction of $H$ to the $(t=0)$ face of the cube $[0,1]^3$ has rank  one due to (c). 
\begin{comment}
Indeed, $H|_{[0,1] \times [0,1] \times \{0\}} = \ev_0 \circ h^{\vee}$. 
\end{comment}
By \cref{th:rankonepullback} there exists a parallel trivialization $\mathcal{T}'_{t=0}$, a hermitian vector bundle $E_0$ with flat connection, and a 2-isomorphism $\mathcal{T}'_{t=0} \otimes E_0 \cong H|_{t=0}^{*}\mathcal{E}_i$. Thus,
\begin{equation}
\label{eq:thinbrane}
\inf V_{\mathcal{T}}=\Des(H|_{t=0}^{*}\mathcal{E}_i,\mathcal{T}_{t=0})\cong \Des(\mathcal{T}'_{t=0} \otimes E_0,\mathcal{T}_{t=0})\cong \Des(\mathcal{T}'_{t=0} ,\mathcal{T}_{t=0})\otimes E_0\text{.}
\end{equation}
We note that the line bundle $\Des(\mathcal{T}'_{t=0} ,\mathcal{T}_{t=0})$ carries a connection of curvature $-\rho_{t=0}$. Analogously, $\inf W_{\mathcal{T}}\cong \Des(\mathcal{T}'_{t=1},\mathcal{T}_{t=1}) \otimes E_1$, for another flat hermitian vector bundle $E_1$ and a parallel trivialization $\mathcal{T}'_{t=1}$. Computing the holonomy of the vector bundle $\mathrm{Hom}(\inf V_{\mathcal{T}},\inf W_{\mathcal{T}})$ over $[0,1]^2$ around the boundary, we obtain
\begin{equation}
\label{eq:proofthinhomotopyinvariance:2}
pt_{r=1}  \circ pt_{s=0} =  pt_{s=1} \circ pt_{r=0} \cdot \exp\left ( \int_{[0,1]^2} -\rho_{t=0} - \rho_{t=1} \right )\text{.}
\end{equation}
Here and in the following, the orientation on the faces of the cube $[0,1]^3$ are always the ones induced on the boundary by the standard orientation on $[0,1]^3$.

Next, we consider the $(s=0)$ face of the cube $[0,1]^3$, where $H$ is constant in $r$ due to (b).  
Let $\mathcal{T}_{s=0} := \mathcal{T}|_{[0,1] \times \{0\} \times [0,1]}$ be the restriction of $\mathcal{T}$ to that face, and let $\rho_{s=0}$ be its 2-form. For $p:[0,1]^2 \to [0,1]^2: (r,t) \mapsto (0,t)$ we have another trivialization $\mathcal{T}_{s=0}' := p^{*}\mathcal{T}|_{\{0\} \times \{0\} \times[0,1]}$ with vanishing 2-form, since $p$ factors through $[0,1]$.  Applying \cref{lem:changeoftrivsmooth} to the pair  $(\mathcal{T}_{s=0},\mathcal{T}_{s=0}')$,  we obtain a 1-form $\eta_0\in \Omega^1([0,1]^2)$ such that $\rho_{s=0}=\mathrm{d}\eta_0$ and $i_r^{*}\eta_0=0$ for all $r\in [0,1]$, and a connection-preserving bundle isomorphism  
\begin{equation*}
\tilde\sigma_0: \mathrm{Hom}(\inf V_{\mathcal{T}_{s=0}},\inf W_{\mathcal{T}_{s=0}}) \otimes \C_{j_1^{*}\eta_0-j_0^{*}\eta_0} \to \mathrm{Hom}(\inf V_{\mathcal{T}'_{s=0}},\inf W_{\mathcal{T}'_{s=0}})
\end{equation*}
over $[0,1]$. The parallel transport from $0$ to $1$ gives the identity
\begin{equation*}
\tilde\sigma_0|_1 \circ pt_{s=0} =pt'_{s=0} \circ \tilde\sigma_0|_0 \;\cdot\;  \exp\left ( \int_{0}^1 j_0^{*}\eta_0-j_1^{*}\eta_0  \right ) \text{.}
\end{equation*}
In this identity, we used the fact that  over each point $r\in [0,1]$  $\tilde\sigma_0$ restricts to the canonical isomorphism 
\begin{equation*}
r_{\mathcal{T}_{s,r=0},\mathcal{T}'_{s,r=0}}: \inf R_{ij}|_{\gamma_0}(\mathcal{T}_{s,r=0}) \to \inf R_{ij}|_{\gamma_0}(\mathcal{T}'_{s,r=0})\text{,}
\end{equation*}
see \cref{lem:changeoftrivsmooth}.
Likewise, the parallel transport $pt'_{s=0}$ in $\mathrm{Hom}(\inf V_{\mathcal{T}'_{s=0}},\inf W_{\mathcal{T}'_{s=0}})$ coincides with the canonical isomorphism $r_{\mathcal{T}'_{s,r=0},\mathcal{T}'_{s=0,r=1}}$, since it describes a  trivialization of $\Des(\mathcal{T}_{s,r=0}',\mathcal{T}'_{s=0,r=1})$.
Finally, we use the properties of $\eta_0$ and Stokes' Theorem, and  obtain
\begin{equation}
\label{eq:proofthinhomotopyinvariance:1}
pt_{s=0} = r_{\mathcal{T}_{s,r=0},\mathcal{T}_{s=0,r=1}} \cdot  \exp\left (  \int_{[0,1]^2} \rho_{s=0} \right )\text{.}
\end{equation}
The $(s=1)$ face is treated analogously, just that we get $-\rho_{s=1}$ under the integral.

Now we are in position to show that the two parallel transports $pt_{ij}|_{\Gamma_1}$ and $pt_{ij}|_{\Gamma_2}$
coincide, under the canonical isomorphisms $r$. Indeed,  for $\varphi\in \inf R_{ij}|_{\gamma_0}(\mathcal{T}_{s,r=0})$ we have:
\begin{align*}
&\mquad pt_{ij}|_{\Gamma_2}(r_{\mathcal{T}_{s,r=0},\mathcal{T}_{s=0,r=1}}(\varphi)) \\&=  pt_{r=1}(r_{\mathcal{T}_{s,r=0},\mathcal{T}_{s=0,r=1}}(\varphi))\cdot \exp \left (\int_{[0,1]^2}-\rho_{r=1} \right )
 \\&\eqcref{eq:proofthinhomotopyinvariance:1}pt_{r=1}(pt_{s=0}(\varphi))\cdot  \exp\left (  \int_{[0,1]^2} -\rho_{s=0}-\rho_{r=1} \right ) 
 \\&\eqcref{eq:proofthinhomotopyinvariance:2} pt_{s=1} ( pt_{r=0}(\varphi)) \cdot \exp\left ( \int_{[0,1]^2} -\rho_{t=0} - \rho_{t=1}-\rho_{s=0}-\rho_{r=1} \right )
 \\&\eqcref{eq:proofthinhomotopyinvariance:1} r_{\mathcal{T}_{s=1,r=0},\mathcal{T}_{s,r=1}} ( pt_{r=0}(\varphi)) \cdot \exp\left ( \int_{[0,1]^2} -\rho_{t=0} - \rho_{t=1}-\rho_{s=0}-\rho_{s=1}-\rho_{r=1} \right )
 \\&= r_{\mathcal{T}_{s=1,r=0},\mathcal{T}_{s,r=1}} ( pt_{r=0}(\varphi)) \cdot \exp\left ( \int_{[0,1]^2} \rho_{r=0} \right )
 \\&= r_{\mathcal{T}_{s=1,r=0},\mathcal{T}_{s,r=1}} (pt_{ij}|_{\Gamma_1}(\varphi) ) 
\end{align*}
In the last-but-one step we have used Stokes' Theorem for the closed 2-form $\rho$.

Part III is the verification of \ref{def:connection:b}: parallel transport is compatible with path concatenation; this  follows directly from the definition.

Part IV is the verification of \ref{def:connection:c}:  $pt_{ij}$ is compatible with local trivializations. Let $c:U \to P_{ij}$ be a plot and let  $\phi: W \times \C^{k} \to W \times_{P_{ij}} \inf R_{ij}$ be a local trivialization with $W \subset U$. We can assume that $\phi(w,v) = (w,[\iota_w^{*}\mathcal{T},\tau(w,v)])$, where  $\mathcal{T}: (c^{\vee})^{*}\mathcal{G} \to \mathcal{I}_{\rho}$ is a trivialization over $[0,1] \times W$, and $\tau$ is a bundle isomorphism $\tau:W \times \C^{k} \to \mathrm{Hom}(\inf V_{\mathcal{T}},\inf W_{\mathcal{T}})$. We let $\omega_{\tau}\in\Omega^1(W,\mathfrak{gl}(\C^{k}))$ be the corresponding connection 1-form, i.e. it induces the unique connection on $W \times \C^{k}$ such that $\tau$ is connection-preserving. This means that for a path $\gamma\in PW$ we have
\begin{equation}
\label{eq:localoneformtau}
\tau(\gamma(1),\exp(\omega_{\tau})(\gamma) \cdot v) = pt_{\gamma} (\tau(\gamma(0),v))\text{,}
\end{equation}  
where $\exp(\omega_{\tau})(\gamma)\in \mathrm{GL}(\C^{k})$ is the path-ordered exponential of $\omega_{\tau}$ along $\gamma$.
Then we define
\begin{equation}
\label{eq:localoneform}
\omega_{\phi} := \omega_{\tau} + \int_{[0,1]}\rho \in \Omega^1(W,\mathfrak{gl}(\C^{k}))\text{,}
\end{equation}
with the addition performed under the diagonal embedding $\R \subset \mathfrak{gl}(\C^{k})$. We note that 
\begin{equation}
\label{eq:localoneformexp}
\exp(\omega_{\phi})(\gamma)=\exp(\omega_{\tau})(\gamma) \cdot \exp\left ( \int_{[0,1]^2} (\id_{[0,1]} \times \gamma)^{*} \rho \right )\text{.}
\end{equation}
Now we consider the relevant diagram, whose commutativity is to check:
\begin{equation*}
\small
\alxydim{@C=2cm@R=3em}{\C^{k} \ar[r]^{\exp(\omega_{\phi})(\gamma)}  \ar[d]_{\phi|_{\gamma(0)}} & \C^{k} \ar[d]^{\phi|_{\gamma(1)}} \\\inf R_{ij}|_{c(\gamma(0))} \ar[r]_{pt_{ij}|_{c\circ\gamma}} & \inf R_{ij}|_{c(\gamma(1))}\text{.}}
\end{equation*}
Clockwise,  using \cref{eq:localoneformtau,eq:localoneformexp}  we obtain the map
\begin{equation*}
v \mapsto \left [\iota_{\gamma(1)}^{*}\mathcal{T},\exp \left (\int_{[0,1]^2}(\id \times \gamma)^{*}\rho \right )\cdot pt_{\gamma} (\tau(\gamma(0),v)))\right]\text{.}
\end{equation*}
\begin{comment}
Indeed,
\begin{align*}
v &\mapsto \exp(\omega_{\tau})(\gamma)\exp \left (\int_{[0,1]^2}(\id \times \gamma)^{*}\rho \right )v \\&\mapsto \left [\iota_{\gamma(1)}^{*}\mathcal{S},\tau(\gamma(1),\exp(\omega_{\tau})(\gamma)\exp \left (\int_{[0,1]^2}(\id \times \gamma)^{*}\rho \right )v)\right ]
\\&\qquad =  \left [\iota_{\gamma(1)}^{*}\mathcal{S},\exp \left (\int_{[0,1]^2}(\id \times \gamma)^{*}\rho \right )\cdot\tau(\gamma(1),\exp(\omega_{\tau})(\gamma)v)\right]
\\&\qquad =  \left[\iota_{\gamma(1)}^{*}\mathcal{S},\exp \left (\int_{[0,1]^2}(\id \times \gamma)^{*}\rho \right )\cdot pt_{\gamma} (\tau(\gamma(0),v)))\right]
\end{align*}
\end{comment}
Counter-clockwise, we first have
$v \mapsto [\iota_{\gamma(0)}^{*}\mathcal{T},\tau(\gamma(0),v)]$ and then obtain, by definition of $pt_{ij}$, precisely the same result. 

Part V: the connection $pt_{ij}$ is superficial (\cref{def:superficial}).
For condition \ref{def:superficial:i}, we may equivalently show that a thin, fixed-ends loop $\Gamma: S^1 \to P_{ij}$ has trivial holonomy, $pt_{ij}|_{\Gamma}=\id$. Here, by a thin loop we mean that its adjoint $\Gamma^{\vee}: S^1 \times [0,1] \to M$ has rank one.
Since $\Gamma^{\vee}|_{S^1 \times \{0\}}$ and $\Gamma^{\vee}|_{S^1 \times \{1\}}$ are constant and all paths have sitting instants, we can extend $\Gamma^{\vee}$ constantly to discs glued along their boundary to $S^1 \times \{0\}$ and $S^1 \times \{1\}$.  By \cref{th:rankonepullback:a} there exists a trivialization $\mathcal{T}:(\Gamma^{\vee})^{*}\mathcal{G} \to \mathcal{I}_0$. Thus, $pt_{ij}|_{\Gamma}$ is determined completely by the holonomy of $\mathrm{Hom}(\inf V_{\mathcal{T}},\inf W_{\mathcal{T}})$ around $S^1=\partial D^2$.
The vector bundles $\inf V_{\mathcal{T}}$ and $\inf W_{\mathcal{T}}$ are defined over $D^2$ and are flat by \cref{th:rankonepullback:b}; hence, their holonomy vanishes. 

For condition \ref{def:superficial:ii}, we consider a rank-two-homotopy $h\in PPP_{ij}$ between paths $\Gamma_1=h(0)$ and $\Gamma_2=h(1)$. Let $\mathcal{T}:(h^{\vee})^{*}\mathcal{G} \to \mathcal{I}_{\rho}$ be a trivialization. Due to condition \ref{def:ranktwohomotopic:c}, $h^\vee$ has rank two, so that $(h^{\vee})^{*}\mathcal{G}$ is flat and $\mathrm{d}\rho=0$. Over the $(t=0)$ face, $h^{\vee}$ is constant in $r$ due to  \ref{def:ranktwohomotopic:b}, and hence is of rank one. Using \cref{th:rankonepullback}, there exists a trivialization $\mathcal{T}_0:(h^{\vee})|_{t=0}^{*}\mathcal{G} \to \mathcal{I}_0$ and a flat hermitian vector bundle $E_0$ over $[0,1]^2$ such that $\inf V_{\mathcal{T}}\cong \Des(\mathcal{T}_{0} ,\mathcal{T}_{t=0})\otimes E_0$, see \cref{eq:thinbrane}. Since $\Des(\mathcal{T}_{0} ,\mathcal{T}_{t=0})$ is a hermitian line bundle with connection of curvature $-\rho|_{t=0}$, we have
\begin{equation}
\label{eq:proofofsuperficialityii:a}
\mathrm{Hol}_{\inf V_{\mathcal{T}}}(\partial [0,1]^2)=\exp\left ( \int_{[0,1]^2} -\rho|_{t=0}  \right )\text{.}
\end{equation}
We treat the $(t=1)$ face analogously, producing the same formula for $\inf W_{\mathcal{T}}$ and $\rho|_{t=1}$. Now we are in position to prove condition \ref{def:superficial:ii}; we need to check that
\begin{equation*}
pt_{ij}|_{h_1} \circ pt_{ij}|_{\gamma_1}=pt_{ij}|_{\gamma_2} \circ pt_{ij}|_{h_0}\text{.}
\end{equation*}
Substituting our above findings, we obtain the integral of $\rho$ over four faces of the cube, as well as the holonomy of $\mathrm{Hom}(\inf V_{\mathcal{T}},\inf W_{\mathcal{T}})$ around $\partial [0,1]^2$. Using \cref{eq:proofofsuperficialityii:a} the latter provides integrals of $\rho$ over the remaining faces. All together, these integrals vanish due to $\mathrm{d}\rho=0$ by Stokes' Theorem.
\end{proof}

We conclude the discussion of the connection $pt_{ij}$ with the following result. It coincides with a corresponding claim in \cite{gawedzki1} (for rank one D-branes).

\begin{lemma}
The curvature of the connection $pt_{ij}$ on $\inf R_{ij}$ satisfies
\begin{equation*}
\frac{1}{\mathrm{rk}(\inf R_{ij})}\,\mathrm{tr}(\mathrm{curv}(pt_{ij}))=\int_{[0,1]} \ev^{*}\mathrm{curv}(\mathcal{G}) + \ev_1^{*}\omega_j - \ev_0^{*}\omega_i \text{.}
\end{equation*}
\end{lemma}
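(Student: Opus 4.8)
The plan is to verify the identity locally over the plots of the diffeological space $P_{ij}$ and to read off the trace of the curvature directly from the local connection one-form computed in the proof of \cref{lem:connection}. So let $c:U\to P_{ij}$ be a plot with $U$ contractible, choose a trivialization $\mathcal{T}:(c^\vee)^{*}\mathcal{G}\to\mathcal{I}_\rho$ over $U\times[0,1]$, and use the associated local trivialization $\phi$ of $\inf R_{ij}$. By \eqref{eq:localoneform} the connection one-form in this trivialization is $\omega_\phi=\omega_\tau+\bigl(\int_{[0,1]}\rho\bigr)\cdot\id$, where $\omega_\tau$ is the connection one-form of the Hom-bundle $\mathrm{Hom}(\inf V_{\mathcal{T}},\inf W_{\mathcal{T}})$ and $\int_{[0,1]}$ denotes integration over the fibre $[0,1]$. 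Since the added term is a scalar (hence central) one-form, the curvature splits as $\mathrm{curv}(\omega_\phi)=\mathrm{curv}(\omega_\tau)+\mathrm{d}\bigl(\int_{[0,1]}\rho\bigr)\cdot\id$, so that
\[
\frac{1}{\mathrm{rk}(\inf R_{ij})}\,\mathrm{tr}(\mathrm{curv}(pt_{ij}))=\frac{1}{\mathrm{rk}(\inf R_{ij})}\,\mathrm{tr}(\mathrm{curv}(\omega_\tau))+\mathrm{d}\!\int_{[0,1]}\rho .
\]

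I would then treat the two summands separately. For the first, recall that $\mathrm{Hom}(\inf V_{\mathcal{T}},\inf W_{\mathcal{T}})\cong\inf V_{\mathcal{T}}^{*}\otimes\inf W_{\mathcal{T}}$, so that $\tfrac{1}{\mathrm{rk}}\mathrm{tr}(\mathrm{curv})$ is additive and equals $\tfrac{1}{\mathrm{rk}(\inf W_{\mathcal{T}})}\mathrm{tr}(\mathrm{curv}(\inf W_{\mathcal{T}}))-\tfrac{1}{\mathrm{rk}(\inf V_{\mathcal{T}})}\mathrm{tr}(\mathrm{curv}(\inf V_{\mathcal{T}}))$. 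Since $\inf V_{\mathcal{T}}=\Des((\ev_0\circ c)^{*}\mathcal{E}_i,j_0^{*}\mathcal{T})$ and $\inf W_{\mathcal{T}}=\Des((\ev_1\circ c)^{*}\mathcal{E}_j,j_1^{*}\mathcal{T})$, the fact that $\Des$ takes values in $\HVbuncon{U}^{\omega_2-\omega_1}$ (see \eqref{eq:des}) together with the defining curvature condition \eqref{eq:curvcond} yields at once
\[
\frac{\mathrm{tr}(\mathrm{curv}(\inf V_{\mathcal{T}}))}{\mathrm{rk}(\inf V_{\mathcal{T}})}=(\ev_0\circ c)^{*}\omega_i-j_0^{*}\rho,\qquad \frac{\mathrm{tr}(\mathrm{curv}(\inf W_{\mathcal{T}}))}{\mathrm{rk}(\inf W_{\mathcal{T}})}=(\ev_1\circ c)^{*}\omega_j-j_1^{*}\rho .
\]
Hence the first summand contributes $(\ev_1\circ c)^{*}\omega_j-(\ev_0\circ c)^{*}\omega_i$ together with the trivialization-dependent boundary terms $j_0^{*}\rho-j_1^{*}\rho$.

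For the second summand I would use that a trivialization satisfies $\mathrm{d}\rho=(c^\vee)^{*}\mathrm{curv}(\mathcal{G})$ (the curvatures of $1$-isomorphic gerbes coincide, and $\mathcal{I}_\rho$ has curvature $\mathrm{d}\rho$), and then apply the Stokes formula for fibre integration over $[0,1]$. This produces exactly the boundary terms $j_1^{*}\rho-j_0^{*}\rho$, which cancel those coming from the first summand, together with the bulk term $\int_{[0,1]}(c^\vee)^{*}\mathrm{curv}(\mathcal{G})$. Writing $c^\vee=\ev\circ(c\times\id_{[0,1]})$, this bulk term is the pullback along $c$ of $\int_{[0,1]}\ev^{*}\mathrm{curv}(\mathcal{G})$. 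Assembling the two summands, every $\rho$-dependent contribution cancels and we obtain $c^{*}$ of the right-hand side of the claimed identity; since plots cover $P_{ij}$, the formula follows.

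\textbf{Main obstacle.} The delicate point is the bookkeeping of the boundary contributions and of the orientation and sign conventions in the fibre-integration Stokes formula: one must check that the trivialization-dependent terms $j_0^{*}\rho$ and $j_1^{*}\rho$ arising from the two descent bundles cancel precisely against those produced by $\mathrm{d}\int_{[0,1]}\rho$. This cancellation is in fact forced, since it is exactly what makes the answer independent of the choice of $\mathcal{T}$, consistent with \cref{lem:changeoftrivsmooth}; once the conventions are fixed the remaining verification is routine.
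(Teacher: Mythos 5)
Your proposal is correct and follows essentially the same route as the paper's proof: both read off the local connection 1-form $\omega_{\phi}=\omega_{\tau}+\int_{[0,1]}\rho$ from the proof of the preceding proposition, obtain the normalized trace of the curvature of $\mathrm{Hom}(\inf V_{\mathcal{T}},\inf W_{\mathcal{T}})$ from the curvature condition built into the functor $\Des$, and then apply the fiberwise Stokes formula together with $\mathrm{d}\rho=(c^{\vee})^{*}\mathrm{curv}(\mathcal{G})$, so that the $j_0^{*}\rho$ and $j_1^{*}\rho$ boundary terms cancel. The only differences are expository: you spell out the splitting of the normalized trace over $\inf V_{\mathcal{T}}^{*}\otimes\inf W_{\mathcal{T}}$ and the centrality of the scalar summand, which the paper leaves implicit.
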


\begin{proof}
According to \cref{re:curvature}, we have to compute $\mathrm{tr}(\mathrm{d}\omega_{\phi})$, where  $\omega_{\phi}$ is the 1-form encountered in the preceding proof in \cref{eq:localoneform}, which in turn was obtained from a trivialization $\mathcal{T}:(c^{\vee})^{*}\mathcal{G} \to \mathcal{I}_{\rho}$. For the first summand in \cref{eq:localoneform} we have to compute $\mathrm{tr}(\mathrm{d}\omega_{\tau})$, i.e. the trace of the curvature of the vector bundle $\mathrm{Hom}(\inf V_{\mathcal{T}},\inf W_{\mathcal{T}})$. From the definition of the functor $\Des$ and the vector bundles $\inf V_{\mathcal{T}}$ and $\inf W_{\mathcal{T}}$ we see that 
\begin{equation*}
\frac{1}{\mathrm{rk}(\mathcal{E}_i)}\mathrm{tr}(\mathrm{curv}(\inf V_{\mathcal{T}})) = (\ev_0\circ c)^{*}\omega_i-j_0^{*}\rho
\quand
\frac{1}{\mathrm{rk}(\mathcal{E}_j)}\mathrm{tr}(\mathrm{curv}(\inf W_{\mathcal{T}}))=(\ev_0\circ c)^{*}\omega_j - j_1^{*}\rho\text{.}
\end{equation*}
Thus, we obtain
\begin{equation*}
\frac{1}{\mathrm{rk}(\mathcal{E}_i)\mathrm{rk}(\mathcal{E}_j)}\mathrm{tr}(\mathrm{curv}(\mathrm{Hom}(\inf V_{\mathcal{T}},\inf W_{\mathcal{T}}))) = c^{*}(\ev_1^{*}\omega_j-\ev_0^{*}\omega_i)-j_1^{*}\rho+j_0^{*}\rho \end{equation*}
in $\in \Omega^2(W)$.
For the second summand in \cref{eq:localoneform} we recall that integration over a fiber with boundary satisfies a version of Stokes' Theorem,
\begin{equation*}
\mathrm{d} \int_{[0,1]}\rho = \int_{[0,1]}\mathrm{d}\rho+j_1^{*}\rho-j_0^{*}\rho\text{.}
\end{equation*}
We have $\mathrm{d}\rho=(c^{\vee})^{*}\mathrm{curv}(\mathcal{G})$ and obtain --
\begin{comment}
\begin{equation*}
\int_{[0,1]} \mathrm{d}\rho=\int_{[0,1]}(c^{\vee})^{*}\mathrm{curv}(\mathcal{G})=c^{*}\left ( \int_{[0,1]}\ev^{*}\mathrm{curv}(\mathcal{G}) \right )\in \Omega^2(W)\text{.}
\end{equation*}  
\end{comment}
under the sum of  \cref{eq:localoneform} -- the claimed formula. \end{proof}

\subsection{Fusion representation}

\label{sec:fusionrep}

We equip the vector bundle $\inf R_{ij}$ over $P_{ij}$ with a fusion representation of the line bundle $\inf L$ over $LM$. We start by constructing  isomorphisms
\begin{equation*}
\phi_{ij}|_{\gamma_1,\gamma_2}: \inf L|_{\tau} \otimes \inf R_{ij}|_{\gamma_2} \to \inf R_{ij}|_{\gamma_1}
\end{equation*}
for $(\gamma_1,\gamma_2) \in P_{ij}^{[2]}:= P_{ij} \times_{Q_i \times Q_j} P_{ij}$, with $\tau := \gamma_1 \cup \gamma_2$.
Let $\mathcal{T}$ be a trivialization of $\tau^{*}\mathcal{G}$. We recall that $S^1=\R/\Z$ and consider the two maps   $\iota_1,\iota_2\maps [0,1] \to S^1$  defined by $\iota_1(t)\df  \frac{1}{2}t$ and $\iota_2(t) \df  1-\frac{1}{2}t$. Then, $\mathcal{T}_1:=\iota_1^{*}\mathcal{T}$ is a trivialization of $\gamma_1^{*}\mathcal{G}$, and $\mathcal{T}_2:=\iota_2^{*}\mathcal{T}$ is a trivialization of $\gamma_2^{*}\mathcal{G}$. We note that
\begin{align*}
\inf R_{ij}|_{\gamma_2}(\mathcal{T}_2) &= \mathrm{Hom}(\Des(\mathcal{E}_i|_x,\mathcal{T}|_{0}),\Des(\mathcal{E}_j|_y,\mathcal{T}|_{\frac{1}{2}}))=\inf R_{ij}|_{\gamma_1}(\mathcal{T}_1)\text{,}
\end{align*}
and define:
\begin{equation*}
\phi_{ij}|_{\gamma_1,\gamma_2} : \inf L|_{\tau} \otimes \inf R_{ij}|_{\gamma_2}(\mathcal{T}_2) \to \inf R_{ij}|_{\gamma_1}(\mathcal{T}_1): [\mathcal{T},z] \otimes \varphi \mapsto z\varphi\text{.} \end{equation*}

\begin{lemma}
This defines a connection-preserving, unitary bundle isomorphism $\phi_{ij}$.
\end{lemma}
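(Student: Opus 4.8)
The plan is to verify in turn that $\phi_{ij}$ is well defined, smooth, connection-preserving and unitary, exploiting the fact that a single trivialization $\mathcal{T}$ of $\tau^{*}\mathcal{G}$ simultaneously trivializes all three fibers involved. The starting observation, already recorded above, is that $\mathcal{T}_1=\iota_1^{*}\mathcal{T}$ and $\mathcal{T}_2=\iota_2^{*}\mathcal{T}$ satisfy $\mathcal{T}_1|_0=\mathcal{T}|_0=\mathcal{T}_2|_0$ and $\mathcal{T}_1|_1=\mathcal{T}|_{\frac12}=\mathcal{T}_2|_1$, so that $\inf R_{ij}|_{\gamma_2}(\mathcal{T}_2)$ and $\inf R_{ij}|_{\gamma_1}(\mathcal{T}_1)$ are literally the same vector space $\mathrm{Hom}(\Des(\mathcal{E}_i|_x,\mathcal{T}|_0),\Des(\mathcal{E}_j|_y,\mathcal{T}|_{\frac12}))$, and multiplication by $z$ is an honest linear map between them. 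Since $\inf L|_{\tau}$ is a $\ueins$-torsor and $\inf R_{ij}|_{\gamma_2}$ is a set of $r$-equivalence classes, every input may be written as $[\mathcal{T},z]\otimes[(\mathcal{T}_2,\varphi)]$ for a common $\mathcal{T}$, and well-definedness reduces to independence of the choice of $\mathcal{T}$.

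First I would prove this independence. Let $\mathcal{T}'$ be a second trivialization of $\tau^{*}\mathcal{G}$ and set $\ell:=\Des(\mathcal{T},\mathcal{T}')$, a hermitian line bundle with connection over $S^1$. By the definition of the $\ueins$-action on $L\mathcal{G}$ in \cref{sec:trans:gerbes}, the two representatives are related by $[\mathcal{T}',z']=[\mathcal{T},uz']$ with $u:=\mathrm{Hol}_{\ell}(S^1)$, so holding the input fixed forces $z'=u^{-1}z$; likewise $[(\mathcal{T}_2,\varphi)]=[(\mathcal{T}_2',\varphi')]$ forces $\varphi=r_{\mathcal{T}_2,\mathcal{T}_2'}(\varphi')$. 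Comparing the two outputs $[(\mathcal{T}_1,z\varphi)]$ and $[(\mathcal{T}_1',z'\varphi')]$ then amounts to the identity $r_{\mathcal{T}_1,\mathcal{T}_1'}=u\cdot r_{\mathcal{T}_2,\mathcal{T}_2'}$ between the (equal) Hom-spaces attached to $\mathcal{T}'$ and $\mathcal{T}$. This is the heart of the matter and I expect it to be the main obstacle: both $r$-isomorphisms are built from a unit-length section of $\ell$ parallel along one of the two arcs $\iota_1([0,1])$ and $\iota_2([0,1])$ of $S^1$, and since these arcs share the endpoints $x$ and $y$, the two resulting trivializations of $\ell|_x^{*}\otimes\ell|_y$ differ exactly by parallel transport once around $S^1$, i.e. by $u$. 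Thus the holonomy measuring the change of the $\inf L$-representative is precisely cancelled by the mismatch of the two path-wise change-of-trivialization maps, which gives $\mathcal{T}$-independence.

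For smoothness I would argue via plots: over a contractible $W$ with a plot $c$ of $P_{ij}^{[2]}$, choose a single trivialization $\mathcal{T}$ of the pullback of $\mathcal{G}$ along the adjoint of $c$; this yields simultaneous local frames of $\inf L$, $\pr_2^{*}\inf R_{ij}$ and $\pr_1^{*}\inf R_{ij}$ built from $\mathcal{T}$, $\iota_2^{*}\mathcal{T}$ and $\iota_1^{*}\mathcal{T}$, in which $\phi_{ij}$ is $(z,\varphi)\mapsto z\varphi$ and hence smooth, by the same plot analysis as in \cref{prop:Rijvectorbundle}. For the connection-preserving property along a path $\Gamma=(\Gamma_1,\Gamma_2)$ in $P_{ij}^{[2]}$ with induced loop path $\widetilde\Gamma=\Gamma_1\cup\Gamma_2$, I would pick a trivialization $\mathcal{T}$ of $(\widetilde\Gamma^{\vee})^{*}\mathcal{G}$ with $2$-form $\rho$ and set $\rho_a:=(\id\times\iota_a)^{*}\rho$. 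Because $\iota_1$ preserves and $\iota_2$ reverses orientation, splitting $S^1$ at $\tfrac12$ gives
\begin{equation*}
\int_{[0,1]\times S^1}\rho=\int_{[0,1]^2}\rho_1-\int_{[0,1]^2}\rho_2\text{.}
\end{equation*}
Moreover $\iota_1^{*}\mathcal{T}$ and $\iota_2^{*}\mathcal{T}$ restrict to the same trivializations along the common paths of endpoints, so the two Hom-bundles over the $s$-interval coincide and carry the same parallel transport $pt$. Combining the transport formula \cref{def:connectionRij} for $\inf R_{ij}$ along $\Gamma_1$ and $\Gamma_2$ with the transport formula for $\inf L$ from \cref{sec:trans:gerbes}, both composites $pt_{ij}|_{\Gamma_1}\circ\phi_{ij}$ and $\phi_{ij}\circ(\,pt^{\inf L}\otimes pt_{ij}|_{\Gamma_2})$ send the input to $[(\iota_1^{*}\mathcal{T}|_{s=1},\exp(\int_{[0,1]^2}\rho_1)\,z\,pt(\varphi))]$, the displayed splitting being exactly what matches the scalar factors; hence $\phi_{ij}$ is connection-preserving.

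Finally, unitarity is immediate: the metric on the associated bundle satisfies $\|[\mathcal{T},z]\|=|z|$, the canonical isomorphisms $r_{\mathcal{T},\mathcal{T}'}$ are unitary, and multiplication by $z$ scales the Frobenius norm by $|z|$, so $\|\phi_{ij}([\mathcal{T},z]\otimes\varphi)\|=|z|\,\|\varphi\|=\|[\mathcal{T},z]\otimes\varphi\|$. Assembling these four steps shows that $\phi_{ij}$ is a connection-preserving, unitary bundle isomorphism, as claimed.
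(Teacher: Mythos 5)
Your proof follows the paper's own argument in all essentials: well-definedness is established by showing that the holonomy of the difference line bundle of two trivializations of $\tau^{*}\mathcal{G}$ cancels against the mismatch of the two arc-wise change-of-trivialization maps $r$, smoothness is checked plot-wise using simultaneous local frames built from a single trivialization, and connection-preservation follows from the orientation-signed splitting $\int_{[0,1]\times S^1}\rho=\int_{[0,1]^2}\rho_1-\int_{[0,1]^2}\rho_2$ together with the coincidence of the endpoint Hom-bundles for $\iota_1^{*}\mathcal{T}$ and $\iota_2^{*}\mathcal{T}$. Your explicit unitarity check is a small addition that the paper leaves implicit, but otherwise the route is the same.
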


\begin{proof}
(1) The map $\phi_{ij}|_{\gamma_1,\gamma_2}$ is  independent of the choice of the trivialization $\mathcal{T}$. Indeed, if $\mathcal{T}'$ is another trivialization of $\tau^{*}\mathcal{G}$, let $P$ be a hermitian line bundle with connection over $S^1$ such that $\mathcal{T}'\cong\mathcal{T} \otimes P$. Let $\sigma_1$ and $\sigma_2$ be parallel unit-length sections into $\iota_1^{*}P$ and $\iota_2^{*}P$, inducing 2-isomorphisms $\psi_1: \mathcal{T}_1 \Rightarrow \mathcal{T}_1'$ and $\psi_2: \mathcal{T}_2 \Rightarrow \mathcal{T}_2'$. These differ over the endpoints by numbers $a_0,a_1\in \ueins$, say $\psi_1|_0=\psi_2|_0\cdot a_0$ and $\psi_1|_1=\psi_2|_1\cdot a_1$. We obtain $\mathrm{Hol}_P(\id_{S^1})=a_0a_1^{-1}$, and $[\mathcal{T},z]=[\mathcal{T}',a_1a_0^{-1}\cdot z]$ in $\inf L|_{\tau}$. Further, we have (in the notation of \cref{sec:transbranes})
\begin{equation*}
r_{\psi_2}(\varphi)=(\psi_2)_1\circ \varphi \circ (\psi_2)_0^{-1}=a_1^{-1} a_0\cdot   (\psi_1)_1\circ \varphi \circ (\psi_1)_0^{-1}=a_1^{-1}a_0\cdot r_{\psi_1}(\varphi)\text{.}
\end{equation*}
We obtain
\begin{equation*}
(a_1a_0^{-1}\cdot z)\cdot r_{\psi_2}(\varphi) =r_{\psi_1}(z\varphi)\text{;}
\end{equation*}
this shows the independence.

(2) The map $\phi_{ij}$ is smooth. Consider a plot $c:U \to P_{ij}^{[2]}$. We consider an open subset $W\subset U$ with a trivialization $\mathcal{T}:(\tilde c^{\vee})^{*}\mathcal{G}\to \mathcal{I}_{\rho}$, where $\tilde c: W \to LM$ is the induced plot of $LM$, i.e.,  $\tilde c := \cup \circ c$. We obtain induced trivializations $\mathcal{T}_1 := (\iota_1 \times \id_W)^{*}\mathcal{T}$ and $\mathcal{T}_2:=(\iota_2\times \id_W)^{*}\mathcal{T}$. After choosing a trivialization  $\tau$ of the vector bundle $\mathrm{Hom}(\inf V_{\mathcal{T}},\inf W_{\mathcal{T}})$ over $W$, we obtain local trivializations $\phi_1$ of $\pr_1^{*}\inf R_{ij}$ and $\phi_2$ of $\pr_2^{*}\inf R_{ij}$ over $\smash{P_{ij}^{[2]}}$, defined by $\phi_1(w,v) := (w,[\iota_w^{*}\mathcal{T}_1,\tau(w,v)])$ and $\phi_2(w,v) := (w,[\iota_w^{*}\mathcal{T}_2,\tau(w,v)])$, according to the proof of \cref{prop:Rijvectorbundle}. Further, we have a local trivialization $\phi$ of $\cup^{*}\inf L$ defined (see \cref{lem:assbun}) by $\phi(w,v) := (w,[\iota_w^{*}\mathcal{T},v])$. By definition of the isomorphism $\phi_{ij}$, we have a commutative diagram
\begin{equation*}
\small
\alxydim{@C=4em@R=3em}{(
(W \times \C) \otimes (W \times \C^{k})\ar[d]_{\phi \otimes \phi_2} \ar[r] & W \times \C^{k} \ar[d]^{\phi_1}
\\(W \times_{P_{ij}^{[2]}} \cup^{*}\inf L) \otimes (W \times_{P_{ij}^{[2]}} \pr_2^{*}\inf R_{ij})  \ar[r]_-{\phi_{ij}} & W \times_{P_{ij}^{[2]}} \pr_1^{*}\inf R_{ij} }
\end{equation*}
whose top arrow is pointwise scalar multiplication. This shows that $\phi_{ij}$ is smooth.

(3) The bundle morphism $\phi_{ij}$  is connection-preserving. 
Let $\Gamma=(\Gamma_1,\Gamma_2)$ be a path in $\smash{P_{ij}^{[2]}}$, and let $\tau := \Gamma_1 \cup \Gamma_2$ be the induced path in $LM$. Let $\mathcal{T}: (\tau^{\vee})^{*}\mathcal{G} \to \mathcal{I}_{\rho}$ be a trivialization over $ [0,1] \times S^1$. Let $\mathcal{T}_1: (\Gamma_1^{\vee})^{*}\mathcal{G} \to \mathcal{I}_{\rho_1}$ and $\mathcal{T}_2:(\Gamma_2^{\vee})^{*}\mathcal{G} \to \mathcal{I}_{\rho_2}$ be the trivializations obtained by pullback of $\mathcal{T}$   along the maps $\id \times \iota_1,\id \times \iota_2:[0,1]^2 \to [0,1] \times S^1$, respectively. Let $\mathcal{T}_{k}(s)$ be the restriction of $\mathcal{T}_k$ to $\{s\} \times [0,1]$, for $s\in \{0,1\}$. 
By construction, we have $\inf V_{\mathcal{T}_1}=\inf V_{\mathcal{T}_2}$ and $\inf W_{\mathcal{T}_1}=\inf W_{\mathcal{T}_2}$, as well as
\begin{equation*}
\int_{[0,1] \times S^1} \rho  + \int_{[0,1]^2}\rho_2 = \int_{[0,1]^2} \rho_1\text{.}
\end{equation*}
From the definitions of the parallel transport in $\inf L$ and $\inf R_{ij}$, we conclude that the diagram
\begin{equation*}
\small
\alxydim{@C=3cm@R=3em}{\inf L|_{\tau(0)} \otimes \inf R_{ij}|_{\Gamma_2(0)}(\mathcal{T}_{2}(0)) \ar[d]_{pt_{\tau} \otimes pt_{ij}|_{\Gamma_2}} \ar[r]^-{\phi_{ij}|_{\Gamma_1(0),\Gamma_2(0)}} & \inf R_{ij}|_{\Gamma_1(0)}(\mathcal{T}_{1}(0))\ar[d]^{pt_{ij}|_{\Gamma_1}} \\
\inf L|_{\tau(1)} \otimes \inf R_{ij}|_{\Gamma_2(1)}(\mathcal{T}_{2}(1)) \ar[r]_-{\phi_{ij}|_{\Gamma_1(1),\Gamma_2(1)}} & \inf R_{ij}|_{\Gamma_1(1)}(\mathcal{T}_{1}(1))}
\end{equation*}
is commutative; this shows that $\phi_{ij}$ is connection-preserving.
\end{proof}

Next we show that the first  LBG axiom is satisfied.
\begin{lemma}
Axiom \cref{eq:lsg:fusionass}  is satisfied: $\phi_{ij}$ is a fusion representation.
\end{lemma}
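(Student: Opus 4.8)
The plan is to verify the commutativity of the pentagon-type diagram in \cref{eq:lsg:fusionass} by working locally over a single trivialization of $\tau^{*}\mathcal{G}$ for the relevant loops, where both the fusion product $\lambda$ and the fusion representation $\phi_{ij}$ reduce to elementary operations on the associated-bundle coordinates. Concretely, I would fix $(\gamma_1,\gamma_2,\gamma_3)\in P_{ij}^{[3]}$ with the three loops $\tau_{ab}:=\gamma_a\cup\gamma_b$, and choose trivializations $\mathcal{T}_{12},\mathcal{T}_{23},\mathcal{T}_{13}$ of $\tau_{12}^{*}\mathcal{G},\tau_{23}^{*}\mathcal{G},\tau_{13}^{*}\mathcal{G}$ adapted to the fusion situation, i.e. satisfying exactly the gluing conditions (the existence of the 2-isomorphisms $\phi_1,\phi_2,\phi_3$ over $[0,1]$ with matching endpoints) that characterize $\lambda$ as recalled in \cref{sec:trans:gerbes}.

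The key observation driving the proof is that all three trivializations restrict, under the maps $\iota_1,\iota_2\maps [0,1]\to S^1$, to trivializations of the pullbacks $\gamma_a^{*}\mathcal{G}$ over $[0,1]$, and that by the definition of $\inf R_{ij}$ in \cref{sec:transbranes} the fibers $\inf R_{ij}|_{\gamma_a}$ computed from these compatible restrictions are literally identified via the $r_{\mathcal{T},\mathcal{T}'}$ isomorphisms. First I would unwind $\phi_{ij}|_{\gamma_a,\gamma_b}$ in these coordinates: from its definition it sends $[\mathcal{T}_{ab},z]\otimes\varphi\mapsto z\varphi$ once the fibers over $\gamma_a$ and $\gamma_b$ are identified through the common trivialization $\mathcal{T}_{ab}$. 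Then I would unwind $\lambda_{\gamma_1,\gamma_2,\gamma_3}$ on the associated line bundle, where by the recalled formula it sends $[\mathcal{T}_{12},z_{12}]\otimes[\mathcal{T}_{23},z_{23}]\mapsto[\mathcal{T}_{13},z_{12}z_{23}]$. With both structures in coordinate form, chasing an element $[\mathcal{T}_{12},z_{12}]\otimes[\mathcal{T}_{23},z_{23}]\otimes\varphi$ around the square reduces on both sides to the scalar $z_{12}z_{23}$ times $\varphi$, so the diagram commutes essentially by associativity of scalar multiplication in $\C$.

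The main obstacle — and the part requiring genuine care rather than bookkeeping — will be checking that the various $r_{\mathcal{T},\mathcal{T}'}$ identifications of the fibers $\inf R_{ij}|_{\gamma_a}$ used implicitly along the two paths of the diagram are mutually consistent, so that the two sides are being compared in the same trivialization. This is where the fusion gluing conditions on $(\mathcal{T}_{12},\mathcal{T}_{23},\mathcal{T}_{13})$ are actually used: the 2-isomorphisms $\phi_1,\phi_2,\phi_3$ guarantee that the restrictions of the three trivializations to the half-intervals agree up to canonical 2-isomorphisms whose endpoint values satisfy the cocycle relation $\phi_1=\phi_3\bullet\phi_2$ at both $0$ and $1$, and it is precisely this relation that makes the two coordinate identifications coincide. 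I would therefore phrase the argument so that the independence of $\phi_{ij}$ from the choice of trivialization (already established in the preceding lemma) is invoked to pass freely between the restrictions of $\mathcal{T}_{12},\mathcal{T}_{23},\mathcal{T}_{13}$, reducing the whole axiom to the scalar identity above. Since everything is fiber-wise and the fusion product is already known to be a well-defined bundle morphism, no further smoothness or connection-preservation check is needed for this axiom.
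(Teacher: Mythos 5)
Your proposal is correct and follows essentially the same route as the paper's proof: choose the three trivializations $\mathcal{T}_{12},\mathcal{T}_{23},\mathcal{T}_{13}$ related by the 2-isomorphisms $\phi_1,\phi_2,\phi_3$ witnessing the fusion product, note that in these coordinates $\lambda$ and the $\phi_{ij}$ are just scalar multiplication, and reduce the axiom to the compatibility of the canonical identifications (in the paper, the identity $r_{\phi_1}^{-1}\circ r_{\phi_2}=r_{\phi_3}^{-1}$), which follows from the endpoint relations $\phi_1|_0=\phi_3|_0\bullet\phi_2|_0$ and $\phi_1|_1=\phi_3|_1\bullet\phi_2|_1$. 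Your identification of the consistency of the $r_{\mathcal{T},\mathcal{T}'}$ isomorphisms as the only nontrivial point is exactly where the paper's proof puts the weight as well.
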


\begin{proof}
We consider a triple $(\gamma_1,\gamma_2,\gamma_3) \in PM^{[3]}$ and the corresponding loops $\tau_{ab} := \gamma_a \cup \gamma_b$. Using trivializations $\mathcal{T}_{ab}: \tau_{ab}^{*}\mathcal{G} \to \mathcal{I}_0$, and 2-isomorphisms 
$\phi_1\maps \iota_1^{*}\mathcal{T}_{12} \Rightarrow \iota_1^{*}\mathcal{T}_{13}$, $\phi_2\maps \iota_2^{*}\mathcal{T}_{12} \Rightarrow \iota_1^{*}\mathcal{T}_{23}$ and $\phi_3\maps \iota_2^{*}\mathcal{T}_{23} \Rightarrow \iota_2^{*}\mathcal{T}_{13}$,
\cref{eq:lsg:fusionass} implies the commutativity of the following diagram:
\begin{equation*}
\small
\alxydim{@C=2cm@R=3em}{\inf L|_{\gamma_1 \cup \gamma_2}  \otimes \inf L|_{\gamma_2\cup\gamma_3}  \otimes\inf  R_{ij}|_{\gamma_3}(\iota_2^{*}\mathcal{T}_{23}) \ar[dd]_{\lambda_{\gamma_1,\gamma_2,\gamma_3} \otimes \id} \ar[r]^-{\id \otimes \phi_{ij}|_{\gamma_2,\gamma_3}} &  \inf L_{\gamma_1 \cup \gamma_2} \otimes \inf R_{ij}|_{\gamma_2}(\iota_1^{*}\mathcal{T}_{23}) \ar[d]^{\id \otimes r_{\phi_2}} \\ & \inf L_{\gamma_1 \cup \gamma_2} \otimes \inf R_{ij}|_{\gamma_2}(\iota_2^{*}\mathcal{T}_{12}) \ar[d]^{\phi_{ij}|_{\gamma_1,\gamma_2}}\\ \inf L|_{\gamma_1 \cup \gamma_3}  \otimes\inf  R_{ij}|_{\gamma_3}(\iota_2^{*}\mathcal{T}_{23}) \ar[d]_{\id \otimes r_{\phi_3}^{-1}} & \inf R_{ij}|_{\gamma_1}(\iota_1^{*}\mathcal{T}_{12}) \ar[d]^{(r_{\phi_1})^{-1}} \\ \inf L_{\gamma_1\cup \gamma_3} \otimes \inf R_{ij}|_{\gamma_3}(\iota_2^{*}\mathcal{T}_{13}) \ar[r]_-{\phi_{ij}|_{\gamma_1,\gamma_3}} & \inf R_{ij}|_{\gamma_1}(\iota_1^{*}\mathcal{T}_{13})\text{.}}
\end{equation*}
In this diagram, the bundle morphisms $\phi_{ij}$, $\phi_{jk}$ and $\phi_{ik}$ are just scalar multiplication. \begin{comment}
The fusion representations are
\begin{align*}
 \phi_{ij}|_{\gamma_2,\gamma_3} &:\inf L|_{\gamma_2\cup\gamma_3} \otimes\inf  R_{ij}|_{\gamma_3}(\iota_2^{*}\mathcal{T}_{23}) \to  \inf R_{ij}|_{\gamma_2}(\iota_1^{*}\mathcal{T}_{23}): (z_{23},\varphi_3) \mapsto z_{23}\varphi_3
\\
 \phi_{ij}|_{\gamma_1,\gamma_2} &:\inf L|_{\gamma_1\cup\gamma_2} \otimes\inf  R_{ij}|_{\gamma_2}(\iota_2^{*}\mathcal{T}_{12}) \to  \inf R_{ij}|_{\gamma_1}(\iota_1^{*}\mathcal{T}_{12}): (z_{12},\varphi_2) \mapsto z_{12}\varphi_2
\\
 \phi_{ij}|_{\gamma_1,\gamma_3} &:\inf L|_{\gamma_1\cup\gamma_3} \otimes\inf  R_{ij}|_{\gamma_3}(\iota_2^{*}\mathcal{T}_{13}) \to  \inf R_{ij}|_{\gamma_1}(\iota_1^{*}\mathcal{T}_{13}): (z_{13},\varphi_3) \mapsto z_{13}\varphi_3
\end{align*}
\end{comment}
We may additionally assume that $\lambda([\mathcal{T}_{12},z_{12}] \otimes [\mathcal{T}_{23},z_{23}])=[\mathcal{T}_{13},z_{12}z_{23}]$, reducing the commutativity of the diagram to the equation $r_{\phi_1}^{-1} \circ r_{\phi_2} = r_{\phi_3}^{-1}$. By definition of the fusion product and of the morphisms $r_{\phi_a}$, this equation follows from the identities between the 2-isomorphisms $\phi_1$, $\phi_2$ and $\phi_3$.
\begin{comment}
Namely,
\begin{equation}
\label{eq:fusionidentities}
\phi_1|_0 = \phi_{3}|_0 \bullet \phi_{2}|_0
\quand
\phi_1|_1 = \phi_{3}|_1 \bullet \phi_{2}|_1\text{,}
\end{equation}
\end{comment}
\end{proof}

\subsection{Lifted path concatenation}

\label{sec:liftpathcommp}

We equip the vector bundle $\inf R_{ij}$ over $P_{ij}$ with a lifted path concatenation, and start by constructing a linear map 
\begin{equation*}
\chi_{ijk}|_{\gamma_{12},\gamma_{23}}: \inf R_{jk}|_{\gamma_{23}} \otimes \inf R_{ij}|_{\gamma_{12}} \to \inf R_{ik}|_{\gamma_{23} \pcomp \gamma_{12}}
\end{equation*} 
for all $(\gamma_{23},\gamma_{12}) \in P_{jk} \times_{Q_j} P_{ij}$ and $i,j,k\in I$. Let $x:= \gamma_{12}(0)$, $y:= \gamma_{12}(1)=\gamma_{23}(0)$ and $z := \gamma_{23}(1)$. Consider a trivialization $\mathcal{T}$ of $(\gamma_{23}\pcomp \gamma_{12})^{*}\mathcal{G}$, and let trivializations $\mathcal{T}_{12}$ of $\gamma_{12}^{*}\mathcal{G}$ and $\mathcal{T}_{23}$ of $\gamma_{23}^{*}\mathcal{G}$ be defined via restriction, i.e. $\mathcal{T}_{12}:= \iota_1^{*}\mathcal{T}$ and $\mathcal{T}_{23} :=  \iota_2^{*}\mathcal{T}$, where  $\iota_1,\iota_2: [0,1] \to [0,1]$ are defined by $\iota_1(t):= \frac{1}{2}t$ and $\iota_2(t) :=\frac{1}{2}+\frac{1}{2}t$. We find
\begin{align*}
\inf R_{ij}|_{\gamma_{12}}(\mathcal{T}_{12}) &= \mathrm{Hom}(\Des(\mathcal{E}_i|_x,\mathcal{T}|_{0}),\Des(\mathcal{E}_j|_y,\mathcal{T}|_{\frac{1}{2}}))
\\  \inf R_{jk}|_{\gamma_{23}}(\mathcal{T}_{23}) &= \mathrm{Hom}(\Des(\mathcal{E}_j|_y,\mathcal{T}|_{\frac{1}{2}}), \Des(\mathcal{E}_k|_z,\mathcal{T}|_{1}))
\\\inf R_{ik}|_{\gamma_{23} \circ \gamma_{12}} (\mathcal{T}) &= \mathrm{Hom}(\Des(\mathcal{E}_i|_x,\mathcal{T}|_{0}), \Des(\mathcal{E}_k|_z,\mathcal{T}|_{1}))\text{.}
\end{align*}
Thus, we define 
\begin{equation*}
\inf R_{jk}|_{\gamma_{23}}(\mathcal{T}_{23})\otimes \inf R_{ij}|_{\gamma_{12}}(\mathcal{T}_{12}) \to \inf R_{ik}|_{\gamma_{23} \pcomp \gamma_{12}} (\mathcal{T})
\end{equation*}
simply as the composition of linear maps, i.e., $\varphi_{23} \otimes \varphi_{12} \mapsto \varphi_{23} \circ \varphi_{12}$.

\begin{lemma}
This defines a connection-preserving bundle morphism 
\begin{equation*}
\chi_{ijk}:\pr_1^{*}\inf R_{jk} \otimes \pr_2^{*}\inf R_{ij} \to \star^{*}\inf R_{ik}\text{.}
\end{equation*}
\end{lemma}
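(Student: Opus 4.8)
The plan is to verify the three defining properties of $\chi_{ijk}$ following exactly the structure established in the analogous proofs for $pt_{ij}$ (\cref{lem:connection}) and $\phi_{ij}$ in the preceding subsections. First I would prove that the fibre-wise map $\varphi_{23} \otimes \varphi_{12} \mapsto \varphi_{23} \circ \varphi_{12}$ is independent of the chosen trivialization $\mathcal{T}$ of $(\gamma_{23} \pcomp \gamma_{12})^{*}\mathcal{G}$. Given a second trivialization $\mathcal{T}'$, the two induce canonical isomorphisms $r_{\mathcal{T},\mathcal{T}'}$ (and their restrictions $r_{\mathcal{T}_{12},\mathcal{T}_{12}'}$, $r_{\mathcal{T}_{23},\mathcal{T}_{23}'}$) as in \cref{sec:transbranes}. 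The key point is that $r_{\mathcal{T},\mathcal{T}'}$ is conjugation by the isomorphisms $\psi_0,\psi_{1/2},\psi_1$ coming from a parallel section of $\Des(\mathcal{T},\mathcal{T}')$ over the endpoints $0,\tfrac12,1$. Since composition of linear maps is associative, conjugating $\varphi_{23}\circ\varphi_{12}$ by $\psi_0$ and $\psi_1$ agrees with conjugating $\varphi_{23}$ by $\psi_{1/2},\psi_1$ and $\varphi_{12}$ by $\psi_0,\psi_{1/2}$ and then composing, because the intermediate factors $\psi_{1/2}^{-1}\psi_{1/2}=\id$ cancel. This is a purely formal cancellation and gives independence.

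Next I would verify smoothness, modelled on step (2) of the proof for $\phi_{ij}$. For a plot $c:U \to P_{jk} \times_{Q_j} P_{ij}$ I would choose, over a contractible $W\subset U$, a trivialization $\mathcal{T}:(\star\circ c)^{\vee *}\mathcal{G}\to \mathcal{I}_{\rho}$ and pull it back along $\iota_1,\iota_2$ to get $\mathcal{T}_{12},\mathcal{T}_{23}$. Choosing trivializations $\tau_{12},\tau_{23},\tau$ of the respective $\mathrm{Hom}$-bundles yields local trivializations $\phi_{12},\phi_{23},\phi$ of $\pr_2^{*}\inf R_{ij}$, $\pr_1^{*}\inf R_{jk}$ and $\star^{*}\inf R_{ik}$ as in \cref{prop:Rijvectorbundle}. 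In these trivializations $\chi_{ijk}$ becomes the fibre-wise matrix multiplication $\C^{k_{jk}}\otimes \C^{k_{ij}}\to \C^{k_{ik}}$ (under the identifications $\mathrm{Hom}(V,W)\otimes\mathrm{Hom}(U,V)\to\mathrm{Hom}(U,W)$), which is manifestly smooth; one records the resulting commutative diagram exactly as for $\phi_{ij}$.

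Finally I would show $\chi_{ijk}$ is connection-preserving, following Part I and step (3) above. For a path $\Gamma=(\Gamma_{23},\Gamma_{12})$ in $P_{jk}\times_{Q_j} P_{ij}$ with adjoint concatenation giving a path $\Gamma_{23}\pcomp\Gamma_{12}$ in $P_{ik}$, I would pick one trivialization $\mathcal{T}:(\star\circ\Gamma)^{\vee *}\mathcal{G}\to\mathcal{I}_{\rho}$ over $[0,1]^2$ and restrict via $\iota_1,\iota_2$ to obtain $\mathcal{T}_{12},\mathcal{T}_{23}$ with 2-forms $\rho_{12},\rho_{23}$. By construction the Hom-bundles match up ($\inf V$ and $\inf W$ factors are shared across the common midpoint $y$), and the areas split additively: $\int_{[0,1]^2}\rho = \int_{[0,1]^2}\rho_{12} + \int_{[0,1]^2}\rho_{23}$. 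Since $pt_{ik}$, $pt_{ij}$, $pt_{jk}$ each multiply the underlying Hom-bundle parallel transport by the exponential of these integrals, and parallel transport in $\mathrm{Hom}$-bundles is compatible with composition of linear maps, the exponential prefactor for the composite equals the product of the prefactors for the factors; the resulting square commutes. The main obstacle is purely bookkeeping: keeping track of the three endpoint evaluations $0,\tfrac12,1$ and the matching of the shared $\Des(\mathcal{E}_j|_y,\mathcal{T}|_{1/2})$ factor between the two Hom-spaces, so that the composition is genuinely defined on the correct intermediate space and the area and trivialization data split as claimed. No new geometric input beyond \cref{rem:propDes,lem:changeoftrivsmooth} and the additivity of fibre integration is required.
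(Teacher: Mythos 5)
Your proposal is correct and follows essentially the same three-step route as the paper's proof: independence of the trivialization via the cancellation $\psi|_{1/2}^{-1}\psi|_{1/2}=\id$ in the conjugation, smoothness via local trivializations (induced compatibly from trivializations of the three descent bundles at $t=0,\tfrac12,1$) under which $\chi_{ijk}$ becomes pointwise matrix multiplication, and connection-preservation via a single trivialization over $[0,1]^2$, the additivity $\int_{[0,1]^2}(\Gamma^{\vee})^{*}\rho=\int_{[0,1]^2}(\Gamma_{12}^{\vee})^{*}\rho_{12}+\int_{[0,1]^2}(\Gamma_{23}^{\vee})^{*}\rho_{23}$, and the fact that composition of Hom-bundles is connection-preserving. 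The only point to make explicit is that the Hom-bundle trivializations must be induced from separate trivializations of the three underlying bundles $\inf V_a$ (so that the local expression is literally matrix multiplication), which your argument implicitly uses.
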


\begin{proof}
(1) Independence of the choice of the trivialization is  straightforward to see and left out for brevity.
\begin{comment}
Indeed, we suppose $\mathcal{T}'$ is another trivialization. Then, there exists a 2-isomorphism $\psi: \mathcal{T}' \Rightarrow \mathcal{T}$, and we have induced 2-isomorphism, and let $\psi_{12}:= \iota_1^{*}\psi$ and $\psi_{23}:= \iota_2^{*}\psi$. We claim that the diagram
\begin{equation*}
\alxydim{}{\inf R_{jk}|_{\gamma_{23}}(\mathcal{S}_{23})\otimes\ \inf R_{ij}|_{\gamma_{12}}(\mathcal{S}_{12}) \ar[r] \ar[d]_{r_{\psi_{23}} \otimes r_{\psi_{12}}} & \inf R_{ik}|_{\gamma_{23} \pcomp \gamma_{12}} (\mathcal{S}) \ar[d]^{r_{\psi}} \\ \inf R_{jk}|_{\gamma_{23}}(\mathcal{S}_{23}')\otimes\ \inf R_{ij}|_{\gamma_{12}}(\mathcal{S}_{12}') \ar[r] & \inf R_{ik}|_{\gamma_{23} \pcomp \gamma_{12}} (\mathcal{S}')}
\end{equation*} 
is commutative. Indeed, we have
\begin{align*}
r_{\psi}(\varphi_{23} \circ \varphi_{12}) &= \psi|_1 \circ \varphi_{23} \circ \varphi_{12} \circ \psi|_0^{-1}
\\&=  \psi|_1 \circ \varphi_{23}\circ  \psi|_{\frac{1}{2}}^{-1} \circ \psi|_{\frac{1}{2}}\circ \varphi_{12} \circ \psi|_0^{-1}
\\&=  \psi_{23}|_1 \circ \varphi_{23}\circ  \psi_{23}|_{0}^{-1} \circ \psi_{12}|_{1}\circ \varphi_{12} \circ \psi_{12}|_0^{-1}
\\&=  r_{\psi_{23}}(\varphi_{23}) \circ r_{\psi_{12}}(\varphi_{12})\text{;}
\end{align*}
this shows commutativity. 
\end{comment}

(2) In order to see that $\chi_{ijk}$ is a smooth bundle morphism we represent it in local trivializations. Consider a plot $c:U \to P_{jk} \times_{Q_j} P_{ij}$, and let $c_{12}$, $c_{23}$ and $c_{13}$ be the pointwise projections to $P_{ij}$ and $P_{jk}$, and the pointwise concatenation, respectively. We restrict to a contractible open subset $W\subset U$ such that there is a trivialization $\mathcal{T}: (c_{13}^{\vee})^{*}\mathcal{G} \to \mathcal{I}_{\rho}$. We induce trivializations $\mathcal{T}_{12}$ and $\mathcal{T}_{23}$ of $(c_{12}^{\vee})^{*}\mathcal{G}$ and $(c_{23}^{\vee})^{*}\mathcal{G}$ by pullback along $\id_W \times \iota_1$ and $\id_W \times \iota_2$, respectively. We then consider the vector bundles 
\begin{equation*}
\inf V_1 := \Des((\ev_0 \circ c_{13})^{*}\mathcal{E}_i,j_0^{*}\mathcal{T})
\text{, }
 \inf V_2 := \Des((\ev_{\frac{1}{2}} \circ  c_{13})^{*}\mathcal{E}_j,j_{\frac{1}{2}}^{*}\mathcal{T})
\text{ and }
\inf V_3 := \Des((\ev_1 \circ  c_{13})^{*}\mathcal{E}_i,j_1^{*}\mathcal{T})
\end{equation*}
 over $W$. We claim that there exist trivializations $\tau_{ab}$ of the Hom-bundles $\mathrm{Hom}(\inf V_a,\inf V_b) $ such that the diagram
\begin{equation*}
\small
\alxydim{@C=5em@R=3em}{(W \times \C^{n_3\times n_2}) \otimes (W \times \C^{n_2 \times n_1}) \ar[d]_{\tau_{23} \otimes \tau_{12}} \ar[r]^-{\cdot} & W \times \C^{n_3 \times n_1} \ar[d]^{\tau_{13}}  
\\\mathrm{Hom}(\inf V_2,\inf V_2) \otimes \mathrm{Hom}(\inf V_1,\inf V_2) \ar[r]_-{\circ} & \mathrm{Hom}(\inf V_1,\inf V_3)}
\end{equation*} 
is commutative, with pointwise matrix multiplication in the top row.  These can be found by choosing trivializations of the bundles $\inf V_a$ separately, and then inducing trivializations of the Hom-bundles via $\mathrm{Hom}(\inf V_a,\inf V_b)=\inf V_a^{*}\otimes \inf V_b$. 
 According to the proof of \cref{prop:Rijvectorbundle}, the gerbe trivializations $\mathcal{T}_{ab}$ and the bundle isomorphisms $\tau_{ab}$ induce local trivializations $\phi_{jk}$ of $c_{23}^{*}\inf R_{jk}$, $\phi_{ij}$ of $c_{12}^{*}\inf R_{ij}$, and $\phi_{ik}$ of $c_{13}^{*}\inf R_{ik}$. By construction, the diagram
\begin{equation*}
\small
\alxydim{@C=6em@R=3em}{W \times (\C^{n_{k}\times n_j} \otimes \C^{n_{j}\times n_i}) \ar[d]_{\phi_{jk}\otimes \phi_{ij}} \ar[r]^-{\cdot} & W \times \C^{n_{k}\times n_i} \ar[d]^{\phi_{ik}} \\ W \times_{P_{jk} \times_{Q_j} P_{ij}} (\pr_{jk}^{*}\inf R_{jk} \otimes \pr_{ij}^{*}\inf R_{ij}) \ar[r]_-{\id \times \chi_{ijk}} & W \times_{P_{jk} \times_{Q_j} P_{ij}} (\pcomp^{*}\inf R_{ik}) }
\end{equation*}
is commutative. This shows that $\chi_{ijk}$ is smooth under local trivializations, and hence smooth.

(3) The bundle morphism $\chi_{ijk}$ is connection-preserving.
Indeed, consider a path $(\Gamma_{23},\Gamma_{12})$ in $P_{jk} \times_{Q_j} P_{ij}$, and let $\Gamma$ be its pointwise concatenation. Let $\mathcal{T}: (\Gamma^{\vee})^{*}\mathcal{G} \to \mathcal{I}_{\rho}$ be a trivialization, and let $\mathcal{T}_{12}: (\Gamma_{12}^{\vee})^{*}\mathcal{G} \to \mathcal{I}_{\rho_{12}}$ and $\mathcal{T}_{23}: (\Gamma_{23}^{\vee})^{*}\mathcal{G} \to \mathcal{I}_{\rho_{23}}$ be the pullbacks of $\mathcal{T}$ under $(\id \times \iota_1)$ and $(\id \times \iota_2)$, respectively. We consider the  hermitian vector bundles with connections
\begin{align*}
\inf U := \Des((\ev_0 \circ \Gamma)^{*}\mathcal{E}_i,j_0^{*}\mathcal{T}) \text{, }\quad
\inf V := \Des((\ev_{\frac{1}{2}} \circ \Gamma)^{*}\mathcal{E}_j,j_{\frac{1}{2}}^{*}\mathcal{T}) \;\text{ and }\;
\inf W := \Des((\ev_1 \circ \Gamma)^{*}\mathcal{E}_k,j_1^{*}\mathcal{T}) \end{align*}
over $[0,1]$, so that 
\begin{align*}
\inf R_{ij}|_{\Gamma_{12}(s)}(\mathcal{T}_{12}|_{\{s\} \times [0,1]}) = \mathrm{Hom}(\inf U,\inf V)|_s
\quomma
\inf R_{jk}|_{\Gamma_{23}(s)}(\mathcal{T}_{23}|_{\{s\} \times [0,1]}) &= \mathrm{Hom}(\inf V,\inf W)|_s
\\\quand
\inf R_{ik}|_{\Gamma(s)}(\mathcal{T}|_{\{s\} \times [0,1]}) &= \mathrm{Hom}(\inf U,\inf W)|_s\text{,}
\end{align*}
and  lifted path concatenation is  the composition \begin{equation}
\label{eq:comphombundles}
\mathrm{Hom}(\inf V,\inf W) \otimes\mathrm{Hom}(\inf U,\inf V) \to \mathrm{Hom}(\inf U,\inf W)\text{.} \end{equation}
It is elementary to see that \cref{eq:comphombundles} is a connection-preserving bundle morphism over $[0,1]$. 
\begin{comment}
Indeed, the connections on a Hom-bundle $\mathrm{Hom}(\inf U,\inf V)$ is obtained via its identification with $\inf U^{*} \otimes \inf V$, under which composition becomes the pairing $\inf V^{*} \otimes \inf V \to \C$, which is connection-preserving when the trivial bundle is equipped with the trivial connection. 
\end{comment}
Further, we find 
\begin{equation*}
\int_{[0,1]^2} (\Gamma^{\vee})^{*}\rho =\int_{[0,1]^2} (\Gamma_{12}^{\vee})^{*}\rho_{12}+\int_{[0,1]^2} (\Gamma_{23}^{\vee})^{*}\rho_{23}
\end{equation*}
for the integrals of the 2-form of the trivializations.
These results prove the claim.
\end{proof}

Now that we have established the lifted path concatenation in $\inf R_{ij}$, we are in position to show that the next two LBG axioms are satisfied.

\begin{lemma}
\label{lem:LBG2}
Axiom \cref{eq:lsg:pentagon} is satisfied: the maps  $\chi_{ijk}$ are associative up to reparameterization. 
\end{lemma}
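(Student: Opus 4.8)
The plan is to reduce the assertion to the associativity of composition of linear maps, by expressing every fiber in the pentagon through restrictions of a single gerbe trivialization along the triply-concatenated path. Concretely, I would fix a trivialization $\mathcal{T}$ of $[(\gamma_{34} \pcomp \gamma_{23}) \pcomp \gamma_{12}]^{*}\mathcal{G}$. Writing $x := \gamma_{12}(0)$ and $w := \gamma_{34}(1)$, the breakpoints of this parameterization sit at $\tfrac12$ and $\tfrac34$, so restriction of $\mathcal{T}$ to the evident subintervals yields trivializations of $\gamma_{12}$, $\gamma_{23}$, $\gamma_{34}$ and of $\gamma_{34}\pcomp\gamma_{23}$, as well as $\mathcal{T}$ itself over the concatenation. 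Since $\chi$ is independent of the chosen trivialization, I may use precisely these restrictions to evaluate the left leg of the pentagon. Under them both $\chi_{jkl}|_{\gamma_{23},\gamma_{34}}$ and the subsequent $\chi_{ijl}$ are literally the composition of linear maps between the spaces $\Des(\mathcal{E}_i|_x,\mathcal{T}|_0)$, $\Des(\mathcal{E}_j|_{\cdot},\mathcal{T}|_{1/2})$, $\Des(\mathcal{E}_k|_{\cdot},\mathcal{T}|_{3/4})$, $\Des(\mathcal{E}_l|_w,\mathcal{T}|_1)$, so the left leg sends $\varphi_{34}\otimes\varphi_{23}\otimes\varphi_{12}$ to $\varphi_{34}\circ\varphi_{23}\circ\varphi_{12}\in \inf R_{il}|_{(\gamma_{34}\pcomp\gamma_{23})\pcomp\gamma_{12}}(\mathcal{T})$.

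For the right leg I would run the same computation with a trivialization of $[\gamma_{34}\pcomp(\gamma_{23}\pcomp\gamma_{12})]^{*}\mathcal{G}$, whose breakpoints now lie at $\tfrac14$ and $\tfrac12$; both $\chi_{ijk}|_{\gamma_{12},\gamma_{23}}$ and the following $\chi_{ikl}$ are again composition of linear maps, and the right leg likewise yields the triple composition $\varphi_{34}\circ\varphi_{23}\circ\varphi_{12}$, this time as an element of $\inf R_{il}|_{\gamma_{34}\pcomp(\gamma_{23}\pcomp\gamma_{12})}$. Associativity of composition of linear maps then identifies the two triple compositions as the same abstract endomorphism; what remains is to verify that the canonical isomorphism $d$ at the bottom of the pentagon intertwines the two trivialized descriptions.

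This last verification is the crux. The two parameterizations differ by the fixed-ends reparameterization $\sigma\maps[0,1]\to[0,1]$ carrying $\{\tfrac14,\tfrac12\}$ to $\{\tfrac12,\tfrac34\}$, so that $[(\gamma_{34}\pcomp\gamma_{23})\pcomp\gamma_{12}]\circ\sigma = \gamma_{34}\pcomp(\gamma_{23}\pcomp\gamma_{12})$ and $\sigma^{*}\mathcal{T}$ is an admissible trivialization of the right-hand path. The homotopy $h$ interpolating the two parameterizations has adjoint $h^{\vee}$ factoring through the single underlying path $[0,1]\to M$, hence of rank one, so $(h^{\vee})^{*}\mathcal{G}$ is flat; by the argument establishing superficiality of $pt_{il}$ in the proof of \cref{lem:connection}, $d$ is then computed from the parallel transport in the flat Hom-bundle over this rank-one homotopy, and the exponential factor $\exp(\int\rho)$ contributes trivially exactly as in Part~II there. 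Because $\sigma$ fixes the endpoints $0$ and $1$, we have $(\sigma^{*}\mathcal{T})|_0=\mathcal{T}|_0$ and $(\sigma^{*}\mathcal{T})|_1=\mathcal{T}|_1$, so the spaces $\Des(\mathcal{E}_i|_x,\mathcal{T}|_0)$ and $\Des(\mathcal{E}_l|_w,\mathcal{T}|_1)$ at the two ends coincide on the nose and $d$ becomes the identity of $\mathrm{Hom}(\Des(\mathcal{E}_i|_x,\mathcal{T}|_0),\Des(\mathcal{E}_l|_w,\mathcal{T}|_1))$ under these trivializations, using \cref{lem:changeoftrivsmooth} to track the identification in smooth families. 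Combined with the associativity of linear-map composition, this closes the pentagon. I expect the only delicate bookkeeping to be precisely this matching of $d$ with the reparameterization of breakpoints, since everything else is the tautological associativity of composition.
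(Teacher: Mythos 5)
Your proposal is correct and follows essentially the same route as the paper's proof: evaluate both legs of the pentagon as literal compositions of linear maps using restrictions of one trivialization and its pullback along the fixed-ends reparameterization, show via the thin homotopy (with flat pulled-back trivialization, hence vanishing $\exp(\int\rho)$ factor) that the canonical isomorphism $d$ becomes the identity, and conclude by associativity of composition. The only cosmetic difference is orientation — the paper trivializes $\gamma_{34}\pcomp(\gamma_{23}\pcomp\gamma_{12})$ and pulls back to the other bracketing, while you go the other way — and where you identify the breakpoint restrictions implicitly through your choice of $\sigma$, the paper makes this explicit via \cref{rem:reparameterizations}, which is exactly the fact your argument relies on.
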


\begin{proof}
In order to prove the pentagon diagram of \cref{eq:lsg:pentagon},
we have to choose a path $\Gamma$ in $P_{il}$ connecting  $(\gamma_{34} \pcomp \gamma_{23}) \pcomp \gamma_{12}$ with $\gamma_{34} \pcomp (\gamma_{23} \pcomp \gamma_{12})$. In order to do so, we let $\varphi:[0,1] \to [0,1]$ be a smooth reparameterization such that $((\gamma_{34} \pcomp \gamma_{23}) \pcomp \gamma_{12})(t) =(\gamma_{34} \pcomp (\gamma_{23} \pcomp \gamma_{12}))(\varphi(t))$ for $t\in[0,1]$. 
\begin{comment}
It can be chosen smooth because the paths have sitting instants. 
\end{comment}
We let $\Gamma$ be induced from a fixed-ends homotopy $h$ between $\id_{[0,1]}$ and $\varphi$. We let $\mathcal{T}_1$ be a trivialization of  $(\gamma_{34} \pcomp (\gamma_{23} \pcomp \gamma_{12}))^{*}\mathcal{G}$, and then define a trivialization of $\mathcal{G}$ along $\Gamma$  by $\mathcal{T} :=h^{*}\mathcal{T}_1$.  We set $\mathcal{T}_0 := \mathcal{T}|_{\{0\} \times [0,1]} = \varphi^{*}\mathcal{T}_1$. We note that $\Gamma$ is a thin path with fixed endpoints, and that $\mathcal{T}$ is a trivialization with vanishing 2-form. Thus, by definition of the parallel transport $pt_{il}|_{\Gamma}$, 
\begin{equation*}
d_{(\gamma_{34} \pcomp \gamma_{23}) \pcomp \gamma_{12},\gamma_{34} \pcomp (\gamma_{23} \pcomp \gamma_{12})}: \inf R_{il}|_{(\gamma_{34} \pcomp \gamma_{23}) \pcomp \gamma_{12}}(\mathcal{T}_0) \to \inf R_{il}|_{\gamma_{34} \pcomp (\gamma_{23} \pcomp \gamma_{12})}(\mathcal{T}_1)
\end{equation*}
is the identity map. The pentagon diagram we have to prove thus reads as
\begin{equation*}
\small
\alxydim{@C=0.4cm@R=3em}{ 
\inf R_{kl}|_{\gamma_{34}} (\iota_2^{*}\iota_2^{*}\mathcal{T}_0)\otimes \inf R_{jk}|_{\gamma_{23}} (\iota_1^{*}\iota_2^{*}\mathcal{T}_0)\otimes  \inf R_{ij}|_{\gamma_{12}}(\iota_1^{*}\mathcal{T}_0) 
\ar[r] \ar[d]_{\chi_{jkl}|_{\gamma_{23},\gamma_{34}} \otimes \id} 
& \inf R_{kl}|_{\gamma_{34}} (\iota_2^{*}\mathcal{T}_1)\otimes \inf R_{jk}|_{\gamma_{23}}(\iota_2^{*}\iota_1^{*}\mathcal{T}_1) \otimes  \inf R_{ij}|_{\gamma_{12}}(\iota_1^{*}\iota_1^{*}\mathcal{T}_1)  
\ar[d]^{\id \otimes \chi_{ijk}|_{\gamma_{12},\gamma_{23}}}  
\\  \inf R_{jl}|_{\gamma_{34} \pcomp \gamma_{23}}  (\iota_2^{*}\mathcal{T}_0)\otimes  \inf R_{ij}|_{\gamma_{12}}(\iota_1^{*}\mathcal{T}_0) \ar[d]_{\chi_{ijl}|_{\gamma_{12},\gamma_{34} \pcomp \gamma_{23}}} 
&  \inf R_{kl}|_{\gamma_{34}} (\iota_2^{*}\mathcal{T}_1)\otimes  \inf R_{ik}|_{\gamma_{23} \pcomp \gamma_{12}}(\iota_1^{*}\mathcal{T}_1) \ar[d]^{\chi_{ikl}|_{\gamma_{23} \pcomp \gamma_{12},\gamma_{34}}} 
\\   \inf R_{il}|_{(\gamma_{34} \pcomp \gamma_{23}) \pcomp \gamma_{12}}(\mathcal{T}_0) \ar@{=}[r] 
&  \inf R_{il}|_{\gamma_{34} \pcomp (\gamma_{23} \pcomp \gamma_{12})}(\mathcal{T}_1)\text{.} }
\end{equation*}
The arrow on top consists of the canonical identifications $r$ relating different choices of trivializations, and they are all identities according to the subsequent \cref{rem:reparameterizations}. The remaining diagram then commutes  due to the associativity of the composition of maps. 
\begin{comment}
The remark is applied to the following three reparameterizations of trivializations:
\begin{align*}
\iota_2^{*}\iota_2^{*}\mathcal{T}_0 = \iota_2^{*}\iota_2^{*}\varphi^{*}\mathcal{T}_1 &\cong \iota_2^{*}\mathcal{T}_1 &\text{ as trivializations of }(\gamma_{34}^{\vee})^{*}\mathcal{G}
\\
\iota_1^{*}\iota_2^{*}\mathcal{T}_0 = \iota_1^{*}\iota_2^{*}\varphi^{*}\mathcal{T}_1 &\cong\iota_2^{*}\iota_1^{*}\mathcal{T}_1
&\text{ as trivializations of }(\gamma_{23}^{\vee})^{*}\mathcal{G}
\\
\iota_1^{*}\mathcal{T}_0 =\iota_1^{*}\varphi^{*}\mathcal{T}_1 &\cong\iota_1^{*}\iota_1^{*}\mathcal{T}_1
&\text{ as trivializations of }(\gamma_{12}^{\vee})^{*}\mathcal{G}\text{.}
\end{align*}
\end{comment}
\end{proof}

\begin{remark}
\label{rem:reparameterizations}
Suppose $\gamma \in P_{ij}$ and $\varphi:[0,1] \to [0,1]$ is a smooth map with $\varphi(0)=0$ and $\varphi(1)=1$, such that $\gamma \circ \varphi=\gamma$. Let $\mathcal{T}:\gamma^{*}\mathcal{G} \to \mathcal{I}_0$ be a trivialization, and let $\mathcal{T}':= \varphi^{*}\mathcal{T}$. We have $\inf R_{ij}|_{\gamma}(\mathcal{T}) = \inf R_{ij}|_{\gamma}(\mathcal{T}')$, and we claim that $r_{\mathcal{T},\mathcal{T}'}=\id$. In order to see this, we may choose a smooth homotopy $h:[0,1]^2 \to [0,1]$ between the identity and $\varphi$, with fixed ends ($h(s,0)=0$, $h(s,1)=1$ for all $s\in[0,1]$)  that fixes the path, $\gamma(h(s,t))=\gamma(t)$ for all $s,t\in[0,1]$.  Then, we consider the trivializations $h^{*}\mathcal{T}$ and $\pr_2^{*}\mathcal{T}$, where $\pr_2: [0,1]^2 \to [0,1]$ is the projection. The hermitian line bundle $L := \Des(h^{*}\mathcal{T},\pr_2^{*}\mathcal{T})$ has a flat connection and thus admits a parallel unit-length section over $(\{0\} \times [0,1]) \cup ([0,1] \times \{0\}) \cup ([0,1] \times \{1\})$, where the two trivializations agree. With the connection, the section can be extended to all of $[0,1]^2$. In particular, we obtain a parallel section $\sigma$ of $\Des(\mathcal{T}',\mathcal{T})$ that is the identity over the end points. This shows  $r_{\mathcal{T},\mathcal{T}'}=\id$.

\end{remark}

\begin{lemma}
Axiom \cref{eq:lsg:comp} is satisfied: $\phi_{ij}$ and $\chi_{ijk}$ are compatible with each other. 
\end{lemma}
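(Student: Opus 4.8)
The plan is to reduce the diagram of \cref{eq:lsg:comp} to an identity among trivializations, in exactly the spirit of the proof of the fusion representation axiom \cref{eq:lsg:fusionass} above, exploiting that under transgression $\phi_{ij}$ is scalar multiplication, $\chi_{ijk}$ is composition of homomorphisms, and the canonical isomorphisms $d$ act as reparametrizations that become trivial on coherently chosen trivializations. First I would fix a trivialization $\mathcal{T}:\tau^{*}\mathcal{G} \to \mathcal{I}_0$ of the large loop $\tau := (\gamma_{23} \pcomp \gamma_{12}) \cup (\gamma_{23}' \pcomp \gamma_{12}')$, together with trivializations $\mathcal{T}^{12}$ of $\tau_{12}:=\gamma_{12}\cup\gamma_{12}'$ and $\mathcal{T}^{23}:=\gamma_{23}\cup\gamma_{23}'$. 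Parametrizing $S^1=\R/\Z$ so that the four quarters carry $\gamma_{12},\gamma_{23},\overline{\gamma_{23}'},\overline{\gamma_{12}'}$, the restrictions of $\mathcal{T}$ give trivializations of $\gamma_{12}^{*}\mathcal{G},\gamma_{23}^{*}\mathcal{G},\gamma_{12}'^{*}\mathcal{G},\gamma_{23}'^{*}\mathcal{G}$, and its restrictions to the two halves give trivializations of $(\gamma_{23} \pcomp \gamma_{12})^{*}\mathcal{G}$ and $(\gamma_{23}' \pcomp \gamma_{12}')^{*}\mathcal{G}$. Expressing each $\inf L$-fibre through its loop trivialization and each $\inf R$-fibre through the matching restriction, both occurrences of $\chi_{ijk}$ in the diagram become the composition of homomorphisms (the interior spaces $\Des(\mathcal{E}_j|_y,\mathcal{T}|_{1/4})$ and $\Des(\mathcal{E}_j|_y,\mathcal{T}|_{3/4})$ matching those from the $\inf R_{jk}$- and $\inf R_{ij}$-factors as in \cref{sec:transbranes}), while $\phi_{jk},\phi_{ij},\phi_{ik}$ become scalar multiplication as in \cref{sec:fusionrep}. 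Consequently both paths of the diagram send a representative $[\mathcal{T}^{23},z_{23}] \otimes [\mathcal{T}^{12},z_{12}] \otimes \psi_{23} \otimes \psi_{12}$ to a scalar multiple of the single composition $\psi_{23}\circ\psi_{12}$, and the whole content is the agreement of the scalars together with the matching of trivialization data through the intervening canonical $d$- and $r$-isomorphisms.

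The second step is to evaluate $\lambda'$ on these representatives. By its definition $\lambda'$ is the composite of two reparametrization isomorphisms $d$, the fusion product $\lambda$, and a final $d$; all three loops entering the middle $\lambda$ are based at $y$, and the two flanking $d$'s are thin homotopies. Hence, by the superficiality of the connection $\omega$ on $L\mathcal{G}$ (\cref{LBGstr:1}, \cite[Cor. 4.3.3]{waldorf10}) together with the loop-space counterpart of \cref{rem:reparameterizations}, these $d$'s are the canonical isomorphisms and act on the restrictions of $\mathcal{T}$ simply by reparametrizing the trivializations. Using the description of $\lambda$ from \cref{sec:trans:gerbes} in terms of gluing $2$-isomorphisms $\phi_1,\phi_2,\phi_3$ satisfying $\phi_1|_0=\phi_3|_0\bullet\phi_2|_0$ and $\phi_1|_1=\phi_3|_1\bullet\phi_2|_1$, the claim to establish is that $\lambda'([\mathcal{T}^{23},z_{23}] \otimes [\mathcal{T}^{12},z_{12}]) = [\mathcal{T},z_{23}z_{12}]$ once the three loop trivializations are matched at the shared points $y$ and $z$; that is, the splitting-and-regluing of the two small loops at the basepoint $y$ reproduces exactly the coherent trivialization $\mathcal{T}$ of the concatenated loop $\tau$.

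With this identity in hand the top-right path sends the element to $\phi_{ik}([\mathcal{T},z_{23}z_{12}] \otimes (\psi_{23}\circ\psi_{12})) = z_{23}z_{12}\cdot(\psi_{23}\circ\psi_{12})$, which coincides with the value produced by the left-bottom path, so the diagram commutes. I expect the main obstacle to lie in the second step, in matching the gluing at the interior point $y$: the $\chi$-composition glues the two homomorphisms across $\Des(\mathcal{E}_j|_y,\mathcal{T}|_{1/4})$, whereas $\lambda'$ glues the two fusion data at $y$ through the middle application of $\lambda$ and the surrounding reparametrizations. Verifying that these two prescriptions are compatible—so that the correcting $r$-isomorphisms relating the loop trivializations $\mathcal{T}^{12},\mathcal{T}^{23}$ to the restrictions of $\mathcal{T}$ cancel against the fusion $2$-isomorphisms—is the heart of the argument, and it is carried out exactly by the cocycle identities among $\phi_1,\phi_2,\phi_3$, just as in the proof of \cref{eq:lsg:fusionass}.
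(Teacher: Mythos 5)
Your overall strategy coincides with the paper's: choose trivializations of the three loops $\tau$, $\tau_{12}$, $\tau_{23}$, reduce both occurrences of $\chi$ to composition and all $\phi$'s to scalar multiplication, and reduce the whole diagram to the single identity $\lambda'\bigl([\mathcal{T}^{23},z_{23}] \otimes [\mathcal{T}^{12},z_{12}]\bigr) = [\mathcal{T},z_{23}z_{12}]$, after which commutativity is the trivial scalar computation. Up to this point your reduction is fine, including the treatment of the reparameterization isomorphisms $d$ via superficiality and \cref{rem:reparameterizations}.

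However, there is a genuine gap exactly at the step you yourself single out as \quot{the heart of the argument}. You assert that the key identity for $\lambda'$ follows \quot{exactly by the cocycle identities among $\phi_1,\phi_2,\phi_3$, just as in the proof of \cref{eq:lsg:fusionass}}, but this analogy does not carry over. In the proof of \cref{eq:lsg:fusionass}, the existence of $2$-isomorphisms $\phi_1,\phi_2,\phi_3$ satisfying \emph{both} endpoint compatibilities is immediate from the definition of the fusion product (\cref{sec:trans:gerbes}), because there one fuses a genuine triple of paths with common initial and end points. Here, by contrast, the trivializations $\mathcal{T}$, $\mathcal{T}^{12}$, $\mathcal{T}^{23}$ are a priori unrelated, and the gluing at the interior point $y$ is mediated by a $2$-isomorphism $\phi_2=\cp_y^{*}\sigma$ pulled back from the single point $y$; consequently $\phi_2$ restricts \emph{identically} at both ends of the interval and cannot be adjusted independently at the two endpoints. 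One can redefine $\sigma$ so that the compatibility $\phi_1|_0 = \phi_3|_0 \bullet \phi_2|_0$ holds, but then the second compatibility $\phi_1|_1 = \phi_3|_1 \bullet \phi_2|_1$ fails in general by a scalar $z\in\ueins$, and no choice of $\sigma$, $\phi_1$, $\phi_3$ removes it. The paper resolves this by replacing $\mathcal{T}_{12}$ with $\mathcal{T}_{12}\otimes L_z$, where $L_z$ is the hermitian line bundle over $S^1$ with connection of holonomy $z$, and rerunning the construction with a new $\sigma$; only after this correction do both compatibilities hold, so that $\lambda'$ really sends the chosen representatives to $[\mathcal{T},z_{12}z_{23}]$. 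Without this error-absorption step, your argument only yields $\lambda'\bigl([\mathcal{T}^{23},z_{23}] \otimes [\mathcal{T}^{12},z_{12}]\bigr) = [\mathcal{T},w\cdot z_{23}z_{12}]$ for an undetermined $w\in\ueins$, i.e.\ commutativity of the diagram in \cref{eq:lsg:comp} only up to an unknown phase, which is not the statement to be proved.
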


\begin{proof}
We consider paths  $\gamma_{12},\gamma_{12}'\in P_{ij}$ and $\gamma_{23},\gamma_{23}'\in P_{jk}$ as in  \cref{eq:lsg:comp}, and form the loops 
\begin{equation*}
\tau := (\gamma_{23} \pcomp \gamma_{12}) \cup (\gamma_{23}'\pcomp \gamma_{12}')
\quomma
\tau_{12} := \gamma_{12} \cup \gamma_{12}'
\quomma
\tau_{23} := \gamma_{23}\cup\gamma_{23}'\text{.}
\end{equation*} 
We consider corresponding trivializations of $\mathcal{G}$, namely  $\mathcal{T}$ along $\tau$, $\mathcal{T}_{12}$ along $\tau_{12}$, and $\mathcal{T}_{23}$ along $\tau_{23}$. 
The first objective is to compare these trivializations  wherever two of them are defined. Over the common point $y$, we fix a 2-isomorphism $\sigma: \smash{\mathcal{T}_{12}|_{\frac{1}{2}}}\Rightarrow \mathcal{T}_{23}|_0$. 
Since all paths have sitting instants there exist maps $\varphi,\varphi_{12},\varphi_{23}: S^1 \to S^1$ such that 
\begin{align*}
\tau_{23}\circ \varphi_{23}\circ \iota_1 &=\tau_{23}= \tau \circ \varphi\circ \iota_1\text{,}
\\
\tau_{23} \circ \varphi_{23}\circ \iota_1&=\cp_y=\tau_{12}\circ \varphi_{12}\circ \iota_1\text{,}
\\
\tau_{12} \circ \varphi_{12} \circ \iota_2 &=\tau_{12}\circ \mathrm{rot}_{\pi} = \tau \circ \varphi \circ \iota_2\text{,}
\end{align*}
where $\mathrm{rot}_{\pi}:S^1 \to S^1$ is the rotation by an angle of $\pi$. Due to the first and the third of these identities, we can find 2-isomorphisms $\phi_1: \iota_1^{*}\varphi_{23}^{*}\mathcal{T}_{23} \Rightarrow \iota_1^{*}\varphi^{*}\mathcal{T}$ and $\phi_3: \iota_2^{*}\varphi_{12}^{*}\mathcal{T}_{12}\Rightarrow \iota_2^{*}\varphi^{*}\mathcal{T}$. Because of the second identity, we have a 2-isomorphism $\phi_2 := \cp_y^{*}\sigma: \iota_2^{*}\varphi_{23}^{*}\mathcal{T}_{23} \Rightarrow \iota_1^{*}\varphi_{12}^{*}\mathcal{T}_{12}$. We claim that we can choose these 2-isomorphism such that 
\begin{equation}
\label{eq:LBG3:fusion}
\phi_1|_0 = \phi_3|_0 \bullet \phi_2|_0
\quand
\phi_1|_1 = \phi_3|_1 \bullet \phi_2|_1\text{.}
\end{equation}
Indeed, the first equation can be used  to re-define $\sigma$ such that this first equation is satisfied. Since $\phi_2=\cp_y^{*}\sigma$, we cannot repeat this for the second equation, so that we first obtain an error, a number $z\in \ueins$. We consider the hermitian line bundle $L_z$ over $S^1$ with connection of holonomy $z$. Then we replace $\mathcal{T}_{12}$ by $\mathcal{T}_{12} \otimes L_z$ and repeat the whole construction by fixing a new 2-isomorphism $\sigma$; then both equations in \cref{eq:LBG3:fusion} are satisfied. Comparing with the definition of the fusion product given in \cref{sec:trans:gerbes}, we have that
\begin{equation*}
[\mathcal{T}_{23},z_{23}] \otimes [\mathcal{T}_{12},z_{12}] \mapsto [\varphi_{23}^{*}\mathcal{T}_{23},z_{23}] \otimes [\varphi_{12}^{*}\mathcal{T}_{12},z_{12}] \mapsto [\varphi^{*}\mathcal{T},z_{12}z_{23}]\mapsto [\mathcal{T},z_{12}z_{23}]
\end{equation*} 
realizes the isomorphism $\lambda'$ on top of the diagram of \cref{eq:lsg:comp}.
The remaining arrows of the diagram are labelled with the lifted path concatenation and the fusion representation, which, by our choices of trivializations, are just composition and scalar multiplication of linear maps, and the commutativity of the diagram reduces to the trivial fact that $z_{12}z_{23}\cdot (\varphi_{23}\circ \varphi_{12})=(z_{23}\varphi_{23})\circ (z_{12}\varphi_{12})$.  
\end{proof}

\subsection{Lifted constant paths}

\label{sec:liftedconstandrev}

We equip the vector bundle $\inf R_{ii}$ with lifted constant paths.
 To that end, we consider $x\in Q_i$ and choose a trivialization $\mathcal{T}: \mathcal{G}|_x \to \mathcal{I}_0$. We have
$\inf R_{ii}|_{\cp_x}(\cp_x^{*}\mathcal{T}) = \mathrm{End}(\Des(\mathcal{E}_i|_x,\mathcal{T}))$, and readily  define $\epsilon_i(x) := \id_{\Des(\mathcal{E}_i|_x,\mathcal{T})}$. 

\begin{lemma}
The assignment $x \mapsto \epsilon_i(x)$ defines a smooth, parallel section along $\cp: Q_i \to P_{ii}$.
\end{lemma}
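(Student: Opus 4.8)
The plan is to verify the three assertions---well-definedness, smoothness, and parallelism---in turn, the last being the substantive one. First I would check that $\epsilon_i(x)$ is independent of the auxiliary trivialization $\mathcal{T}:\mathcal{G}|_x \to \mathcal{I}_0$. If $\mathcal{T}'$ is a second such trivialization, the canonical isomorphism $r_{\mathcal{T},\mathcal{T}'}$ from \cref{sec:transbranes} is given by $\varphi \mapsto \psi_1 \circ \varphi \circ \psi_0^{-1}$, where $\psi_0 = \Des(\id,\psi|_0)$ and $\psi_1 = \Des(\id,\psi|_1)$ for a $2$-isomorphism $\psi:\mathcal{T}\Rightarrow\mathcal{T}'$. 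For the constant path $\cp_x$ both endpoint restrictions of $\cp_x^{*}\mathcal{T}$ equal $\mathcal{T}$, so $\psi_0 = \psi_1$ and hence $r_{\mathcal{T},\mathcal{T}'}(\id) = \psi_1 \circ \id \circ \psi_1^{-1} = \id$. Thus the identity endomorphism is carried to the identity endomorphism, and $\epsilon_i(x)$ is well-defined.

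For smoothness I would exhibit $\epsilon_i$ in a local trivialization of the kind used in \cref{prop:Rijvectorbundle}. The adjoint $\cp^{\vee}:Q_i \times [0,1] \to M$ is simply $\pr_1$, independent of $t$. Choosing a contractible neighborhood $W$ of $x$ and a trivialization $\mathcal{S}:\mathcal{G}|_W \to \mathcal{I}_{\rho_0}$, I would set $\mathcal{T} := \pr_1^{*}\mathcal{S}$ over $W \times [0,1]$. Then $j_0^{*}\mathcal{T} = \mathcal{S} = j_1^{*}\mathcal{T}$, so $\inf V_{\mathcal{T}} = \Des(\mathcal{E}_i|_W,\mathcal{S}) = \inf W_{\mathcal{T}}$, and the local expression of $\epsilon_i$ is the constant section $w \mapsto \id$ of $\mathrm{End}(\Des(\mathcal{E}_i|_W,\mathcal{S}))$. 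Since $\epsilon_i(w) = [i_w^{*}\mathcal{T},\id]$ matches condition (b) in the diffeology on $\inf R_{ii}$, this shows that $\epsilon_i \circ \cp|_W$ is a plot, i.e. $\epsilon_i$ is smooth.

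The main point is parallelism. Given a path $\delta:[0,1]\to Q_i$ from $x$ to $x'$, I would analyze the parallel transport $pt_{ii}$ along $\Gamma := \cp \circ \delta$, whose adjoint $\Gamma^{\vee}(s,t) = \delta(s)$ factors through the $s$-coordinate and so has rank at most one. Choosing a trivialization $\mathcal{S}$ of $\delta^{*}\mathcal{G}$ over the one-dimensional $[0,1]$---whose curving two-form necessarily vanishes---and setting $\mathcal{T} := \pr_1^{*}\mathcal{S}$ on the square $[0,1]^2$, I get $\rho = 0$, so the exponential factor in \cref{def:connectionRij} is $1$. Moreover $\mathcal{T}^0 = \mathcal{T}^1 = \mathcal{S}$, whence $\inf V_{\mathcal{T}^0} = \Des(\delta^{*}\mathcal{E}_i,\mathcal{S}) = \inf W_{\mathcal{T}^1}$, and the relevant $\mathrm{Hom}$-bundle over $[0,1]$ is the endomorphism bundle $\mathrm{End}(\Des(\delta^{*}\mathcal{E}_i,\mathcal{S}))$. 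The identity section of an endomorphism bundle is parallel for the induced connection, so $pt_{ii}|_{\Gamma}(\epsilon_i(x)) = pt_{ii}|_{\Gamma}(\id) = \id = \epsilon_i(x')$. This establishes that $\epsilon_i$ is parallel along $\cp$. The only delicate step is recognizing that the rank-one factorization of $\Gamma^{\vee}$ both kills the curving contribution and collapses the $\mathrm{Hom}$-bundle to an endomorphism bundle; once this is observed, parallelism of the identity is automatic.
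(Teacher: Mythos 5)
Your proof is correct and follows essentially the same route as the paper's: well-definedness via the canonical isomorphisms $r$, smoothness by exhibiting $\epsilon_i$ locally as the constant identity section of an endomorphism bundle using a trivialization pulled back along the projection that forgets the path parameter, and parallelism by the same pullback trick plus the fact that the identity section of $\mathrm{End}(\inf V)$ is parallel and the 2-form (hence the exponential factor) vanishes. The only blemish is notational: you write ``$\epsilon_i \circ \cp|_W$'' where you mean $\epsilon_i|_W$ (the section $\epsilon_i$ is a map $Q_i \to \inf R_{ii}$ covering $\cp$, not a map precomposable with $\cp$).
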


\begin{proof}
If $\mathcal{T}'$ is another trivialization of $\mathcal{G}|_x$, then there exists a 2-isomorphism $\psi: \mathcal{T} \Rightarrow \mathcal{T}'$ inducing the linear map $r_{\cp_x^{*}\psi}: \inf R_{ii}|_{\cp_x}(\cp_x^{*}\mathcal{T}') \to \inf R_{ii}|_{\cp_x}(\cp_x^{*}\mathcal{T})$. We have $r_{\cp_x^{*}\psi}(\id_{\Des(\mathcal{E}_i|_x,\mathcal{T}')})=\id_{\Des(\mathcal{E}_i|_x,\mathcal{T})}$; this shows well-definedness.
\begin{comment}
We can verify \cref{re:lsg:rank} directly. Indeed,
\begin{equation*}
\|\epsilon_i(x)\|=\sqrt{h_{ii}(\epsilon_i(x),\epsilon_i(x))}=\sqrt{\mathrm{tr}(\id)}=\sqrt[4]{\mathrm{rk}\inf R_{ij}}\text{,}
\end{equation*}
since $\id$ is the identity endomorphism of a $\sqrt{\mathrm{rk}\inf R_{ij}}$-dimensional vector space. 
\end{comment}
Next, we show directly that $\epsilon_i: Q_i \to \inf R_{ii}$ is smooth. Let $f:U \to Q_i$ be a plot of $Q_i$, i.e., a smooth map defined on an open subset $U\subset \R^{n}$. We have to show that $\tilde c:=\epsilon_i \circ f: U \to \inf R_{ii}$ is a plot. Note that $c:=\pi\circ \tilde c=\cp\circ f$, where $\cp:Q_i\to P_{ii}$ is  the assignment of constant paths, and $c^{\vee}:U \times [0,1]\to P_{ii}$ is the map $(u,t)\mapsto \cp_{f(u)}(t)=f(u)$. We choose an open subset  $W\subset U$ such that there exists a trivialization $\mathcal{T}: f^{*}\mathcal{G}|_W \to \mathcal{I}_{\rho}$. Then we set $\mathcal{T}' := \pr_W^{*}\mathcal{T}$ for $\pr_W: [0,1]\times W \to W$, so that $\mathcal{T}'$ is a trivialization of $(c^{\vee})^{*}\mathcal{G}$. The relevant vector bundle over $W$ is 
\begin{align*}
\mathrm{Hom}(\Des((\ev_0 \circ c)^{*}\mathcal{E}_i,j_0^{*}\mathcal{T}'),\Des((\ev_1\circ c)^{*}\mathcal{E}_i,j_1^{*}\mathcal{T}'))
&=\mathrm{End}(\Des(f^{*}\mathcal{E}_i,\mathcal{T}))\text{.}
\end{align*}
We note that $w\mapsto \tau(w) := \id_{\Des(f^{*}\mathcal{E}_i,\mathcal{T})|_w}$ is a smooth section of this bundle over $W$. By definition of the diffeology on $\inf R_{ii}$, we have that $w\mapsto [i_w^{*}\mathcal{T}',\tau(w)]=[\cp_{w}^{*}\mathcal{T},\id_{\Des(\mathcal{E}_i|_{f(w)},\mathcal{T}|_w)}]=\epsilon_i(f(w))=\tilde c(w)$ is a plot, which was to show.

If $\gamma$ is a path in $Q_i$, let $\mathcal{T}$ be a trivialization along $\gamma$, and consider the trivialization $\mathcal{S}:=\pr_2^{*}\mathcal{T}$ over $[0,1]^2$. In particular, $\mathcal{S}$ is a trivialization along the path $t\mapsto \cp_{\gamma(t)}$ connecting $\cp_x$ with $\cp_y$ in $P_{ii}$, and it has vanishing 2-form.   We let $\inf V := \Des(\gamma^{*}\mathcal{E}_i,\mathcal{T})$. Then, the parallel transport in $\inf R_{ii}$ along $\cp_{\gamma(-)}$ is just the parallel transport in $\mathrm{End}(\inf V)$:
\begin{equation*}
pt_{ii}|_{\cp_{\gamma(-)}}: \mathrm{End}(\inf V)|_0 \to \mathrm{End}(\inf V)|_1\text{.}
\end{equation*}
Since the section $\id$ into $\mathrm{End}(\inf V)=\inf V^{*} \otimes \inf V$ is parallel, we have the claim.
\end{proof}

\begin{lemma}
\label{lem:LBG4}
Axiom \cref{eq:lsg:unit} is satisfied: $\epsilon_i$ provides units up to reparameterization  for $\chi_{ijk}$.
\end{lemma}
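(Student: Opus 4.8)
The plan is to verify the two identities of Axiom \cref{eq:lsg:unit} directly from the explicit descriptions obtained by transgression: $\chi_{ijk}$ is composition of linear maps, $\epsilon_i$ is the identity endomorphism in a point-trivialization, and the canonical isomorphism $d$ is parallel transport of the superficial connection along a thin reparameterization homotopy, computed as in the proof of \cref{lem:LBG2}. The two identities are mirror images of one another, so I would prove the first, $\chi_{iij}|_{\cp_x,\gamma}(v \otimes \epsilon_i(x)) = d_{\gamma,\gamma\pcomp\cp_x}(v)$, in detail and obtain the second by the symmetric argument.

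The key is to choose the trivialization well. Recall that $\chi_{iij}|_{\cp_x,\gamma}$ is computed from a trivialization $\mathcal{T}$ of $(\gamma\pcomp\cp_x)^{*}\mathcal{G}$, with $\mathcal{T}_{12} := \iota_1^{*}\mathcal{T}$ and $\mathcal{T}_{23} := \iota_2^{*}\mathcal{T}$. Since the concatenated path is constant at $x$ on the first half $[0,\tfrac12]$, I would choose $\mathcal{T}$ so that its restriction to $[0,\tfrac12]$ is pulled back along the constant map from a fixed trivialization of $\mathcal{G}|_x$; independence of the choice of trivialization was already established. With this choice $\mathcal{T}_{12}$ is the constant trivialization $\cp_x^{*}(\mathcal{T}|_0)$ used in the definition of $\epsilon_i(x)$, so that $\epsilon_i(x)$ is literally $\id_{\Des(\mathcal{E}_i|_x,\mathcal{T}|_0)}$ in this frame, and moreover $\mathcal{T}|_0 = \mathcal{T}|_{1/2}$. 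Consequently the spaces $\inf R_{ij}|_{\gamma}(\mathcal{T}_{23})$ and $\inf R_{ij}|_{\gamma\pcomp\cp_x}(\mathcal{T})$ coincide, both equal to $\mathrm{Hom}(\Des(\mathcal{E}_i|_x,\mathcal{T}|_{1/2}),\Des(\mathcal{E}_j|_y,\mathcal{T}|_1))$, and lifted path concatenation, being composition, gives $\chi_{iij}|_{\cp_x,\gamma}(v_{23} \otimes \id) = v_{23}\circ\id = v_{23}$, where $v_{23}$ denotes the coordinate of $v$ in the frame $\mathcal{T}_{23}$.

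It then remains to show that $d_{\gamma,\gamma\pcomp\cp_x}$ is the identity on this common $\mathrm{Hom}$-space under the frames $\mathcal{T}_{23}$ and $\mathcal{T}$, i.e. that $d_{\gamma,\gamma\pcomp\cp_x}([(\mathcal{T}_{23},v_{23})]) = [(\mathcal{T},v_{23})]$. I would compute $d$ exactly as in \cref{lem:LBG2}: the two paths differ by the reparameterization $\psi$ with $\gamma\pcomp\cp_x = \gamma\circ\psi$; interpolating $\id_{[0,1]}$ and $\psi$ by a thin fixed-ends homotopy $h$ and trivializing $\mathcal{G}$ along $h$ by $h^{*}$ of a trivialization makes the relevant $2$-form vanish, so that $d$ reduces to a canonical identification $r$ between reparameterized trivializations, which by \cref{rem:reparameterizations} is the identity. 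Hence $d_{\gamma,\gamma\pcomp\cp_x}(v) = v_{23} = \chi_{iij}|_{\cp_x,\gamma}(v\otimes\epsilon_i(x))$, as required. For the second identity I would make the mirror-image choice, taking $\mathcal{T}$ constant on the second half $[\tfrac12,1]$ (the $\cp_y$-segment), so that $\epsilon_j(y) = \id$ and the composition $\id\circ v_{23} = v_{23}$ again matches the reparameterization isomorphism $d_{\gamma,\cp_y\pcomp\gamma}$.

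The conceptual content is simply that composition of homomorphisms has $\id$ as a unit; the only real work — and the main obstacle — is the careful matching of conventions: checking that with a trivialization constant on the appropriate half the element $\epsilon_i(x)$ becomes the identity in the concatenation frame, and that the reparameterization isomorphism $d$ becomes the identity under these same frames. Both points are of the type already settled in \cref{lem:LBG2} and \cref{rem:reparameterizations}, so no genuinely new difficulty is expected.
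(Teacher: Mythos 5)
Your proposal is correct and follows essentially the same route as the paper's proof: both reduce the claim to the facts that $\chi_{iij}$ is composition of linear maps, that $\epsilon_i(x)$ is the identity in a suitable frame, and that $d_{\gamma,\gamma\pcomp\cp_x}$ is the identity because the thin reparameterization homotopy can be trivialized by a pulled-back trivialization (so the relevant 2-form vanishes), combined with the reparameterization invariance of \cref{rem:reparameterizations}. The only cosmetic difference is the direction of the bookkeeping: the paper starts from a trivialization $\mathcal{T}_0$ along $\gamma$ and transports it to $\gamma\pcomp\cp_x$ as $\varphi^{*}\mathcal{T}_0$, whereas you choose the trivialization along $\gamma\pcomp\cp_x$ directly (constant on the $\cp_x$-half) and then identify its restriction along $\gamma$ --- the same identifications performed in the opposite order.
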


\begin{proof}
Let $\mathcal{T}_0$ be a trivialization of $\mathcal{G}$ along a path $\gamma$ from $x$ to $y$, and let $\varphi:[0,1] \to [0,1]$ be a smooth map such that $\gamma\circ \varphi=\gamma\pcomp \cp_x$ (this uses the sitting instants of $\gamma$). Let $h$ be a fixed-ends homotopy between $\id_{[0,1]}$ and $\varphi$, and let $\mathcal{T}:=h^{*}\mathcal{T}$ as well as $\mathcal{T}_1 := \mathcal{T}|_{\{1\} \times [0,1]}=\varphi^{*}\mathcal{T}_0$. These are  trivializations with vanishing 2-forms and they agree on $[0,1] \times \{0,1\}$, so that
\begin{equation*}
d_{\gamma,\gamma\pcomp \cp_x}: \inf R_{ij}|_{\gamma}(\mathcal{T}_0) \to \inf R_{ij}|_{\gamma \pcomp \cp_x}(\mathcal{T}_1)
\end{equation*}
is the identity by definition of the connection $pt_{ij}$.  Moreover,
\begin{equation*}
\chi_{iij}|_{\cp_x,\gamma}: \inf R_{ij}|_{\gamma}(\iota_2^{*}\mathcal{T}_1) \otimes \inf R_{ii}|_{\cp_x}(\iota_1^{*}\mathcal{T}_1) \to \inf R_{ij}|_{\gamma\pcomp \cp_x}(\mathcal{T}_1)
\end{equation*}
is the composition. Due to \cref{rem:reparameterizations} we have identities $\inf R_{ii}|_{\cp_x}(\iota_1^{*}\mathcal{T}_1) =\inf R_{ii}|_{\cp_x}(\cp_0^{*}(\mathcal{T}_0|_0))$ and $\inf R_{ij}|_{\gamma}(\iota_2^{*}\mathcal{T}_1) = \inf R_{ij}|_{\gamma}(\mathcal{T}_0)$. The first shows that $\epsilon_i(x)=\id\in\inf R_{ii}|_{\cp_x}(\iota_1^{*}\mathcal{T}_1)$. The second shows that $\inf R_{ij}|_{\gamma}(\iota_2^{*}\varphi^{*}\mathcal{T})= \inf R_{ij}|_{\gamma}(\mathcal{T}_0)$, and together with the first we have
\begin{equation*}
\chi_{iij}|_{\cp_x,\gamma}(v,\epsilon_i(x)) = d_{\gamma,\gamma\pcomp \cp_x}(v)\text{,}
\end{equation*} 
which is the first half of \cref{eq:lsg:unit}. The second half is proved analogously. 
\end{proof}

\subsection{Lifted path reversal}

\label{sec:trans:liftedpathreversal}

We equip the vector bundle $\inf R_{ij}$ with a lifted path reversal.
Let $\gamma\in P_{ij}$ with $x:= \gamma(0)$ and $y:=\gamma(1)$. We choose a trivialization $\mathcal{T}$ of $\gamma^{*}\mathcal{G}$, and let $\overline{\mathcal{T}} := \rev^{*}\mathcal{T}$, where $\rev:[0,1] \to [0,1]$ is defined by $\rev(t):=1-t$. We have
\begin{equation*}
\inf R_{ji}|_{\overline{\gamma}}(\overline{\mathcal{T}}) = \mathrm{Hom}(\Des(\mathcal{E}_j|_y,\overline{\mathcal{T}}|_0),\Des(\mathcal{E}_i|_x,\overline{\mathcal{T}}|_1))= \mathrm{Hom}(\Des(\mathcal{E}_j|_y,\mathcal{T}|_1),\Des(\mathcal{E}_i|_x,\mathcal{T}|_0))\text{.}
\end{equation*} 
We  define the lifted path reversal by forming the adjoint linear map with respect to the given hermitian metrics, 
\begin{equation*}
\alpha_{ij}|_{\gamma}: \inf R_{ij}|_{\gamma}(\mathcal{T}) \to \overline{\inf R_{ji}}|_{\overline{\gamma}}(\overline{\mathcal{T}}):\varphi \mapsto \varphi^{*}\text{.}
\end{equation*}

\begin{lemma}
This defines  a connection-preserving  bundle morphism.
\end{lemma}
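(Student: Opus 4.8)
The statement claims that $\alpha_{ij}$, defined fiber-wise by $\varphi \mapsto \varphi^{*}$ (the adjoint linear map) under a choice of trivialization $\mathcal{T}$ of $\gamma^{*}\mathcal{G}$, assembles into a connection-preserving bundle morphism $\inf R_{ij} \to \overline{\inf R_{ji}}$. The plan is to verify this in three steps, exactly mirroring the pattern established in the previous lemmas of this subsection (for $\phi_{ij}$ and $\chi_{ijk}$): first independence of the trivialization, then smoothness, then compatibility with the connections (i.e.\ with parallel transport).

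First I would check that the assignment $\varphi \mapsto \varphi^{*}$ is independent of the trivialization $\mathcal{T}$. Given a second trivialization $\mathcal{T}'$ of $\gamma^{*}\mathcal{G}$ and a $2$-isomorphism $\psi : \mathcal{T} \Rightarrow \mathcal{T}'$, the canonical isomorphism $r_{\psi}$ is built from the unitary maps $\psi_0, \psi_1$ of \cref{sec:transbranes}. Since $\rev^{*}\psi$ relates $\overline{\mathcal{T}}$ and $\overline{\mathcal{T}'}$ and swaps the roles of the endpoints $0$ and $1$, one computes that taking adjoints intertwines $r_{\psi}$ with $r_{\rev^{*}\psi}$: the key point is that $(\psi_1 \circ \varphi \circ \psi_0^{-1})^{*} = (\psi_0^{-1})^{*} \circ \varphi^{*} \circ \psi_1^{*} = \psi_0 \circ \varphi^{*} \circ \psi_1^{-1}$, using that $\psi_0,\psi_1$ are unitary so that $(\psi_i)^{*} = \psi_i^{-1}$. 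This is precisely the formula for $r_{\rev^{*}\psi}$ acting on $\varphi^{*}$, so $\alpha_{ij}$ is well-defined on the quotient defining $\inf R_{ij}|_{\gamma}$.

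Next I would establish smoothness by representing $\alpha_{ij}$ in local trivializations, exactly as in the proofs of \cref{prop:Rijvectorbundle} and the lemmas on $\phi_{ij}$ and $\chi_{ijk}$. Given a plot $c : U \to P_{ij}$, I would pass to a contractible $W \subset U$ with a trivialization $\mathcal{T}$ of $(c^{\vee})^{*}\mathcal{G}$, and observe that pulling back along $\rev$ (together with the reversal map on path space) produces a trivialization along the reversed paths whose associated Hom-bundle is the fiber-wise adjoint of $\mathrm{Hom}(\inf V_{\mathcal{T}},\inf W_{\mathcal{T}})$. Under the induced local frames, $\alpha_{ij}$ becomes the fiber-wise conjugate-transpose map on matrices, $\varphi \mapsto \overline{\varphi}^{\,\mathrm{tr}}$, which is manifestly smooth; this shows $\alpha_{ij}$ is a smooth bundle morphism.

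Finally I would prove that $\alpha_{ij}$ is connection-preserving, which I expect to be the main obstacle because it is where the integrated curvature $2$-form $\rho$ and the orientation reversal interact. For a path $\Gamma \in PP_{ij}$ with reversed path $\overline{\Gamma}$ in $P_{ji}$, I choose a trivialization $\mathcal{T}$ of $(\Gamma^{\vee})^{*}\mathcal{G}$ and set $\overline{\mathcal{T}} := (\id \times \rev)^{*}\mathcal{T}$, whose $2$-form is the pullback $(\id \times \rev)^{*}\rho$. The defining formula \cref{def:connectionRij} for $pt_{ij}|_{\Gamma}$ involves the scalar $\exp(\int_{[0,1]^2}\rho)$ and the parallel transport in the Hom-bundle; taking adjoints reverses the direction of the Hom-bundle parallel transport (the adjoint of parallel transport is the inverse parallel transport, as both connections are unitary), and the orientation-reversing reparameterization $\id \times \rev$ flips the sign of the area integral of $\rho$. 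The two sign reversals must be shown to cancel so that the diagram relating $pt_{ij}|_{\Gamma}$, $pt_{ji}|_{\overline{\Gamma}}$, and $\alpha_{ij}$ at the two endpoints commutes. The careful bookkeeping of this cancellation — that unitarity of the connection on the Hom-bundle converts the adjoint into an inverse parallel transport while $(\id\times\rev)^{*}$ negates $\int\rho$ — is the crux, and once it is in place the claim follows directly from the definitions.
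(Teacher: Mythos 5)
Your first two steps are correct and coincide with the paper's own argument: the trivialization-independence computation $(\psi_1\circ\varphi\circ\psi_0^{-1})^{*}=\psi_0\circ\varphi^{*}\circ\psi_1^{-1}=r_{\rev^{*}\psi}(\varphi^{*})$ is exactly the one the paper gives, and proving smoothness by exhibiting $\alpha_{ij}$ in local trivializations built from $\overline{\mathcal{T}}:=\rev_W^{*}\mathcal{T}$, under which it becomes a fixed fiberwise map, is also how the paper proceeds.

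The gap is in your third step, which you yourself flag as the crux. Taking adjoints does \emph{not} reverse the direction of the Hom-bundle parallel transport. For a path $\Gamma\in PP_{ij}$, both $pt_{ij}|_{\Gamma}$ and $pt_{ji}|_{\widetilde\rev\circ\Gamma}$ are transports in the parameter $s$ of $\Gamma$, running from $s=0$ to $s=1$; the reversal $\rev$ acts only on the internal parameter $t$ of the individual paths, which exchanges the roles of the end-point bundles, $\inf V_{\overline{\mathcal{T}}}=\inf W_{\mathcal{T}}$ and $\inf W_{\overline{\mathcal{T}}}=\inf V_{\mathcal{T}}$, but does nothing to $s$. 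What unitarity of the connections on $\inf V_{\mathcal{T}}$ and $\inf W_{\mathcal{T}}$ actually yields (via $pt^{*}=pt^{-1}$ applied fiberwise to those two bundles) is that $()^{*}$ intertwines the \emph{forward} transport in $\mathrm{Hom}(\inf V_{\mathcal{T}},\inf W_{\mathcal{T}})$ with the \emph{forward} transport in $\mathrm{Hom}(\inf W_{\mathcal{T}},\inf V_{\mathcal{T}})$, namely $(pt_W\circ\varphi\circ pt_V^{-1})^{*}=pt_V\circ\varphi^{*}\circ pt_W^{-1}$. You appear to have conflated the adjoint of the transport operator itself (which is indeed its inverse) with the effect of $\alpha_{ij}$ — the adjoint of the \emph{elements} $\varphi$ — on transported elements. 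In particular there is no inverted transport available whose "sign" could cancel the negated area integral.

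Consequently your proposed cancellation pairs the wrong two terms, and the bookkeeping you outline would not close up. The orientation-reversal sign $\int_{[0,1]^2}(\id\times\rev)^{*}\rho=-\int_{[0,1]^2}\rho$ is cancelled not by any transport inversion but by the conjugate-linearity of $\varphi\mapsto\varphi^{*}$, i.e. by the fact that $\alpha_{ij}$ lands in the complex conjugate bundle $\overline{\inf R_{ji}}$: the scalar prefactor in \cref{def:connectionRij} passes through the adjoint as its complex conjugate, $\bigl(\exp\bigl(\textstyle\int\rho\bigr)\cdot pt(\varphi)\bigr)^{*}=\exp\bigl(-\textstyle\int\rho\bigr)\cdot pt(\varphi)^{*}$, and this matches exactly the scalar $\exp\bigl(\int\rev_W^{*}\rho\bigr)=\exp\bigl(-\int\rho\bigr)$ occurring in the definition of $pt_{ji}|_{\widetilde\rev\circ\Gamma}$. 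This conjugation step — the true partner of the orientation flip, and the reason the lemma targets $\overline{\inf R_{ji}}$ rather than $\inf R_{ji}$ — is absent from your proposal, so the crux of the argument is missing as stated.
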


\begin{proof}
If $\mathcal{T}'$ is another trivialization we consider a 2-isomorphism $\psi: \mathcal{T} \Rightarrow \mathcal{T}'$
and the corresponding canonical identification $r_{\psi}(\varphi) = \psi_1\circ \varphi \circ \psi_0^{-1}$, for  $\varphi \in \inf R_{ij}|_{\gamma}(\mathcal{T}')$.
Since $\psi_0$ and $\psi_1$ are isometric isomorphisms, we have
\begin{equation*}
(r_{\psi}(\varphi))^{*} = (\psi_1\circ \varphi \circ \psi_0^{-1})^{*} =\psi_0\circ \varphi^{*} \circ \psi_1^{-1} =(\rev^{*}\psi)_1^{*}\circ \varphi^{*} \circ (\rev^{*}\psi)_0^{-1}=r_{\rev^{*}\psi}(\varphi^{*})\text{,} \end{equation*}
where $\rev^{*}\psi: \overline{\mathcal{T}} \to \overline{\mathcal{T}'}$. This shows the independence of the choice of the trivialization.    

In the following we consider a smooth map $f:W \to P_{ij}$, where $W$ is a smooth manifold, that admits  a trivialization $\mathcal{T}:(f^{\vee})^{*}\mathcal{G} \to \mathcal{I}_{\rho}$ defining the vector bundles $\inf V_{\mathcal{T}}$ and $\inf W_{\mathcal{T}}$ over $W$. We consider the pointwise path reversal, $\rev_W: W \times [0,1] \to W \times [0,1]: (w,t) \mapsto (w,1-t)$. Then, $\overline{\mathcal{T}} := \rev_W^{*}\mathcal{T}$ is a trivialization of $\mathcal{G}$ along $f^{\vee} \circ \rev_W$. We have 
\begin{equation*}
\inf V_{\mathcal{T}} = \Des((\ev_0 \circ f^{\vee})^{*}\mathcal{E}_i,j_0^{*}\mathcal{T}) =\Des((\ev_1 \circ f^{\vee} \circ R)^{*}\mathcal{E}_i,j_1^{*}\overline{\mathcal{T}})= \inf W_{\overline{\mathcal{T}}}\text{,}
\end{equation*}
and similarly $\inf W_{\mathcal{T}}=\inf V_{\overline{\mathcal{T}}}$. Now suppose $f:=c|_W$ is the restriction of a plot $c:U \to P_{ij}$ to an open subset $W \subset U$, and we have a trivialization $\tau:W \times \C^{k} \to \mathrm{Hom}(\inf V_{\mathcal{T}},\inf W_{\mathcal{T}})$, inducing a local trivialization of $\inf R_{ij}$. Taking  the pointwise adjoint  defines a local trivialization $\overline{\tau}: W \times \C^{k} \to \overline{\mathrm{Hom}(\inf W_{\mathcal{T}},\inf V_{\mathcal{T}})}=\overline{\mathrm{Hom}(\inf V_{\overline{\mathcal{T}}},\inf W_{\overline{\mathcal{T}}})}$, inducing a local trivialization of $\overline{\inf R_{ji}}$. Under these local trivializations, $\alpha_{ij}$  is the identity, and hence smooth.

In order to see that $\alpha_{ij}$ is connection-preserving, we consider a path $\Gamma \in PP_{ij}$ and compare the parallel transport $pt_{ij}|_{\Gamma}$ with $pt_{ji}|_{\widetilde\rev\circ \Gamma}$, where $\tilde\rev:P_{ij} \to P_{ji}:\gamma\mapsto \bar\gamma$. We put $f := \Gamma$ in the situation described above. It is elementary to see that  taking adjoints in the bundle of homomorphisms between hermitian vector bundles preserves induced connections. In the present case, we have a commutative diagram
\begin{equation*}
\small
\alxydim{@R=3em}{\mathrm{Hom}(\inf V_{\mathcal{T}},\inf W_{\mathcal{T}})|_0 \ar[d]_{pt} \ar[r]^-{()^{*}} & \overline{\mathrm{Hom}(\inf W_{\mathcal{T}},\inf V_{\mathcal{T}})}|_0  \ar[d]^{pt} \ar@{=}[r] & \overline{\mathrm{Hom}(\inf V_{\overline{\mathcal{T}}},\inf W_{\overline{\mathcal{T}}})}|_0 \ar[d]^{pt}
\\
\mathrm{Hom}(\inf V_{\mathcal{T}},\inf W_{\mathcal{T}})|_1 \ar[r]_-{()^{*}} & \overline{\mathrm{Hom}(\inf W_{\mathcal{T}},\inf V_{\mathcal{T}})}|_1 \ar@{=}[r]   & \overline{\mathrm{Hom}(\inf V_{\overline{\mathcal{T}}},\inf W_{\overline{\mathcal{T}}})}|_1}
\end{equation*}
It remains to compare the integral of the 2-form $\rho$ over $\Gamma^{\vee}$ with the integral of $\rev_W^{*}\rho$ over $(\widetilde\rev\circ\Gamma)^{\vee}= \rev_{W}\circ \Gamma^{\vee}$. Since $\rev_W$ is orientation-reversing, we get the opposite sign. In the definition of $pt_{ji}|_{\widetilde\rev\circ\Gamma}$, the exponential of this integral is considered as a scalar, which is multiplied with the parallel transport in the Hom-bundle. Since we are concerned with the complex conjugate vector bundle $\overline{\inf R_{ji}}$, the  product has again the correct sign. This shows that the diagram  
\begin{equation*}
\small
\alxydim{@R=3em}{\inf R_{ij}|_{\Gamma(0)}(\mathcal{T}_0) \ar[d]_{pt_{ij}|_{\Gamma}} \ar[r]^{\alpha_{ij}} & \overline{\inf R_{ji}}|_{\overline{\Gamma(0)}}(\overline{\mathcal{T}}_0) \ar[d]^{pt_{ji}|_{\widetilde\rev\circ\Gamma}} \\ \inf R_{ij}|_{\Gamma(1)}(\mathcal{T}_1) \ar[r]_{\alpha_{ij}} & \overline{\inf R_{ji}}|_{\overline{\Gamma(1)}}(\overline{\mathcal{T}}_1)}
\end{equation*}
is commutative; hence, $\alpha_{ij}$ is connection-preserving. 
\end{proof}

\begin{lemma}
Axiom \cref{eq:lsg:invariance} is satisfied: lifted path reversal is compatible with the metrics.
\end{lemma}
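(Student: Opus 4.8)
The plan is to reduce both equalities to the elementary trace identity $\mathrm{tr}(v^{*}\circ w)=h_{ij}(v,w)$ by expressing every fiber and every structure map through a single trivialization of $\mathcal{G}$ along $\gamma$. First I would fix a trivialization $\mathcal{T}$ of $\gamma^{*}\mathcal{G}$ and put $V:=\Des(\mathcal{E}_i|_x,\mathcal{T}|_0)$ and $W:=\Des(\mathcal{E}_j|_y,\mathcal{T}|_1)$, so that $\inf R_{ij}|_{\gamma}(\mathcal{T})=\mathrm{Hom}(V,W)$. All other trivializations are induced from $\mathcal{T}$: for $\overline{\gamma}$ I use $\overline{\mathcal{T}}:=\rev^{*}\mathcal{T}$, for which $\alpha_{ij}|_{\gamma}$ sends $w\in\mathrm{Hom}(V,W)$ to $w^{*}\in\mathrm{Hom}(W,V)=\inf R_{ji}|_{\overline{\gamma}}(\overline{\mathcal{T}})$; and for $\overline{\gamma}\pcomp\gamma$ (resp.\ $\gamma\pcomp\overline{\gamma}$) I use the concatenation trivialization built from $\mathcal{T}$ and $\overline{\mathcal{T}}$, whose restrictions to the two coinciding endpoints are both $\mathcal{T}|_0$ (resp.\ $\mathcal{T}|_1$). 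In these trivializations $\chi_{iji}|_{\gamma,\overline{\gamma}}$ and $\chi_{jij}|_{\overline{\gamma},\gamma}$ are composition of linear maps, so $\chi_{iji}|_{\gamma,\overline{\gamma}}(\alpha_{ij}(w)\otimes v)=w^{*}\circ v\in\mathrm{End}(V)=\inf R_{ii}|_{\overline{\gamma}\pcomp\gamma}$ and $\chi_{jij}|_{\overline{\gamma},\gamma}(w\otimes\alpha_{ij}(v))=w\circ v^{*}\in\mathrm{End}(W)=\inf R_{jj}|_{\gamma\pcomp\overline{\gamma}}$.

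The only nontrivial point, and the step I expect to be the main obstacle, is to identify the terms $d_{\cp_x,\overline{\gamma}\pcomp\gamma}(\epsilon_i(x))$ and $d_{\cp_y,\gamma\pcomp\overline{\gamma}}(\epsilon_j(y))$ with $\id_V$ and $\id_W$, respectively. The path $\overline{\gamma}\pcomp\gamma$ is thin and fixed-ends homotopic to $\cp_x$, so $d_{\cp_x,\overline{\gamma}\pcomp\gamma}$ is the parallel transport $pt_{ii}$ along such a homotopy $\Gamma$. Since $\Gamma^{\vee}$ has rank one, I would invoke \cref{th:rankonepullback:a} to choose a trivialization along $\Gamma^{\vee}$ with vanishing $2$-form; then the exponential factor in \cref{def:connectionRij} is $1$, the relevant $\Des$-bundles are flat by \cref{th:rankonepullback:b}, and — exactly as in the argument that $\epsilon_i$ is a parallel section and in Part V of the proof of \cref{lem:connection} — the identity endomorphism is a parallel section of the resulting flat $\mathrm{End}$-bundle. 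The bookkeeping between the trivialization produced by this homotopy and the concatenation trivialization used above is handled by the fact (noted in \cref{sec:transbranes}) that the canonical identifications $r$ fix the identity endomorphism, and by \cref{rem:reparameterizations} for the intervening reparameterizations. Hence $d_{\cp_x,\overline{\gamma}\pcomp\gamma}(\id_V)=\id_V$, and symmetrically $d_{\cp_y,\gamma\pcomp\overline{\gamma}}(\id_W)=\id_W$.

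With these identifications the two equalities are purely linear-algebraic. Using $h_{ij}(\varphi,\psi)=\mathrm{tr}(\varphi^{*}\circ\psi)$ and $(w^{*}\circ v)^{*}=v^{*}\circ w$, the left-hand term is $h_{ii}(w^{*}\circ v,\id_V)=\mathrm{tr}\bigl((w^{*}\circ v)^{*}\bigr)=\mathrm{tr}(v^{*}\circ w)=h_{ij}(v,w)$, while the right-hand term is $h_{jj}(\id_W,w\circ v^{*})=\mathrm{tr}(w\circ v^{*})=\mathrm{tr}(v^{*}\circ w)=h_{ij}(v,w)$ by cyclicity of the trace. This yields both halves of \cref{eq:lsg:invariance} and completes the proof.
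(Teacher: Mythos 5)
Your proof has the same two-step architecture as the paper's: reduce \cref{eq:lsg:invariance} to the claim that $d_{\cp_x,\overline{\gamma}\pcomp\gamma}(\epsilon_i(x))$ and $d_{\cp_y,\gamma\pcomp\overline{\gamma}}(\epsilon_j(y))$ are identity endomorphisms in the concatenation trivializations, and then finish with the trace identity $\mathrm{tr}\bigl((w^{*}\circ v)^{*}\bigr)=\mathrm{tr}(v^{*}\circ w)=\mathrm{tr}(w\circ v^{*})$. Your linear-algebra part is exactly the paper's, and your identification of $\chi$ with composition and of $\alpha$ with adjoints in compatible trivializations is correct.

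The gap is in your mechanism for the first step. From \cref{th:rankonepullback:a} you obtain \emph{some} parallel trivialization $\mathcal{T}_h$ of $(\Gamma^{\vee})^{*}\mathcal{G}$, and this does kill the exponential factor in \cref{def:connectionRij}; but it gives no control over the restrictions of $\mathcal{T}_h$ to the two endpoint edges of the square. These are two possibly different trivializations of the pullback of $\mathcal{G}|_x$ along constant maps, and neither needs to be a pullback of a trivialization over the point. Consequently the bundle $\mathrm{Hom}(\inf V_{\mathcal{T}_h},\inf W_{\mathcal{T}_h})$ carries no canonical $\mathrm{End}$-bundle structure, so the assertion that \quot{the identity endomorphism is a parallel section} is not meaningful until one chooses a connection-preserving identification $\inf V_{\mathcal{T}_h}\cong \inf W_{\mathcal{T}_h}$ --- and checking that such an identification matches $\epsilon_i(x)$ at $s=0$ and the concatenation trivialization at $s=1$ is precisely the content of the claim, not a formality. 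Moreover, the fact you lean on, that the canonical identifications $r$ fix identity endomorphisms, is established in the paper only for $2$-isomorphisms pulled back from the point (in the well-definedness of $\epsilon_i$, \cref{sec:liftedconstandrev}); for a general $2$-isomorphism $\psi$ one has $r_{\psi}(\id)=\psi_1\circ\psi_0^{-1}$, which need not be the identity. The repair is the technique that the paper's one-line proof actually cites from \cref{lem:LBG2,lem:LBG4}: write the thin homotopy as $\gamma\circ f$ with $f:[0,1]^2\to [0,1]$ satisfying $f(0,\cdot)=0$, $f(1,\cdot)$ the tent map realizing $\overline{\gamma}\pcomp\gamma$, and $f(\cdot,0)=f(\cdot,1)=0$, and take $\mathcal{T}_h:=f^{*}\mathcal{T}$ for your fixed trivialization $\mathcal{T}$ of $\gamma^{*}\mathcal{G}$. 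Then the $2$-form vanishes, both endpoint bundles are the constant bundle $[0,1]\times V$ with the trivial connection (so the Hom-bundle parallel transport \emph{is} the identity of $\mathrm{End}(V)$), the source trivialization is of pullback type, in which $\epsilon_i(x)=\id_V$ holds by definition, and the target trivialization $f(1,\cdot)^{*}\mathcal{T}$ agrees with your concatenation trivialization up to \cref{rem:reparameterizations}. With this substitution your first step, and hence your whole proof, goes through.
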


\begin{proof}
As in the proofs of \cref{lem:LBG2,lem:LBG4}, the endomorphisms   $d_{\cp_x,\overline{\gamma} \pcomp\gamma}(\epsilon_i(x))$ and $d_{\cp_x,\gamma\pcomp \overline{\gamma}}(\epsilon_j(x))$ are identities, and the remaining equality is the obvious identity
\begin{equation*}
\mathrm{tr}((\psi^{*}\circ \kappa)^{*} \circ  \id)=\mathrm{tr}(\kappa^{*}\circ\psi) = \mathrm{tr}(\id \circ \psi \circ \kappa^{*})
\end{equation*}
for traces and adjoints of endomorphisms $\psi,\kappa$ of a finite-dimensional complex inner product space.
\end{proof}

Axioms \cref{eq:lsg:unitinversion}, \cref{eq:lsg:involutive} and \cref{eq:lsg:symmetrizing}
are  obvious identities for adjoints of linear operators. The last two axioms are again more involved.

\begin{lemma}
Axiom \cref{eq:lsg:symmetrizingfusion} is satisfied: lifted path reversal is compatible with the fusion representation. 
\end{lemma}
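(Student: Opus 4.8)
The plan is to prove Axiom \labelcref{eq:lsg:symmetrizingfusion} by choosing compatible trivializations over the loop $\tau := \gamma_1 \cup \gamma_2$ and tracking how the three constructions involved---the fusion representation $\phi_{ij}$, the reflection isomorphism $\tilde\lambda$, and the lifted path reversal $\alpha_{ij}$---are each expressed as elementary operations (scalar multiplication, complex conjugation, and taking adjoints, respectively) once such trivializations are fixed. Concretely, I would work with $S^1 = \R/\Z$ and the maps $\iota_1,\iota_2\maps [0,1] \to S^1$ used throughout \cref{sec:fusionrep,sec:trans:liftedpathreversal}, so that a single trivialization $\mathcal{T}$ of $\tau^{*}\mathcal{G}$ restricts to trivializations $\mathcal{T}_a := \iota_a^{*}\mathcal{T}$ of $\gamma_a^{*}\mathcal{G}$ for $a = 1,2$. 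As in the proof that $\phi_{ij}$ is a fusion representation, with these choices $\inf R_{ij}|_{\gamma_2}(\mathcal{T}_2) = \inf R_{ij}|_{\gamma_1}(\mathcal{T}_1)$ and $\phi_{ij}|_{\gamma_1,\gamma_2}$ acts by $[\mathcal{T},z]\otimes\varphi \mapsto z\varphi$.

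The first key step is to pin down the interaction between the loop reflection map $\widetilde\rev$ (giving $\tilde\lambda$, see \cref{eq:transtildelambda}) and the path reflection $\rev$ on $[0,1]$ (giving $\alpha_{ij}$, see \cref{sec:trans:liftedpathreversal}). The relevant observation is that $\rev$ on $S^1$ relates the trivialization $\mathcal{T}$ of $\tau^{*}\mathcal{G}$ to a trivialization $\rev^{*}\mathcal{T}$ of $(\gamma_2\cup\gamma_1)^{*}\mathcal{G}$, and upon restriction along $\iota_1,\iota_2$ this matches precisely the path-reversal trivialization $\overline{\mathcal{T}}_a := \rev^{*}\mathcal{T}_a$ used to define $\alpha_{ij}$ on the fibers over $\gamma_a$. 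Thus I would verify that, after inserting the canonical identification $d$ between $\inf L|_{\gamma_2\cup\gamma_1}$ and $\inf L|_{\overline{\gamma_1}\cup\overline{\gamma_2}}$ coming from the superficial connection, all four corners of the square reduce to fibers of $\Hom$-spaces over the point, with the two vertical arrows given respectively by $[\mathcal{T},z]\otimes\varphi \mapsto [\rev^{*}\mathcal{T},\overline{z}]\otimes\varphi^{*}$ and $\varphi\mapsto\varphi^{*}$.

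With the trivializations aligned, the commutativity of the diagram collapses to the single elementary identity $\overline{z}\cdot\varphi^{*} = (z\varphi)^{*}$ for a scalar $z\in\ueins$ and a linear map $\varphi$ between finite-dimensional inner product spaces, exactly as in the proofs of the preceding lemmas where the maps became scalar multiplication, composition, and adjunction. I would close the argument by noting, as in \cref{lem:LBG2,lem:LBG4} and the invariance lemma, that the auxiliary canonical isomorphisms $d$ and $r_{\mathcal{T},\mathcal{T}'}$ introduced by the reparameterizations are all identities for the chosen trivializations (by \cref{rem:reparameterizations} and by the fact that the connection 2-forms vanish on these restrictions), so no residual scalar factors survive.

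The main obstacle I anticipate is bookkeeping the orientation and conjugation signs: $\rev_W$ is orientation-reversing, so the 2-form integrals defining the $\inf R$-connection flip sign, and simultaneously $\tilde\lambda$ involves the complex conjugate $\overline{z}$ together with the passage to $\overline{\inf L}$ and $\overline{\inf R_{ji}}$. Getting these two sources of conjugation to cancel correctly---rather than doubling or leaving a stray phase---is the delicate point, and it is essentially the same compatibility that was handled in the connection-preservation part of the lifted-path-reversal lemma. I would therefore lean on that computation, ensuring the identification $d$ on $\inf L$ absorbs exactly the reflection $\rev^{*}$ so that the conjugations match up on the nose.
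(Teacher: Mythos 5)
Your overall route is the same as the paper's: fix a single trivialization $\mathcal{T}$ of $\mathcal{G}$ along $\tau=\gamma_1\cup\gamma_2$, so that $\phi_{ij}$, $\alpha_{ij}$ and $\tilde\lambda$ become scalar multiplication, adjunction, and conjugation on the fibers $\inf R_{ij}|_{\gamma_a}(\iota_a^{*}\mathcal{T})$, and reduce the square of \cref{eq:lsg:symmetrizingfusion} to the identity $(z\varphi)^{*}=\overline{z}\,\varphi^{*}$.

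However, one intermediate claim is false as stated, and it is exactly the step that carries the content of this lemma. You assert that the restrictions of $\rev^{*}\mathcal{T}$ along $\iota_1,\iota_2$ are the path-reversal trivializations $\overline{\mathcal{T}_a}=\rev^{*}\mathcal{T}_a$. They are not: on $S^1=\R/\Z$ one has $\rev\circ\iota_1=\iota_2$ and $\rev\circ\iota_2=\iota_1$, so $\iota_1^{*}(\rev^{*}\mathcal{T})=\mathcal{T}_2$ and $\iota_2^{*}(\rev^{*}\mathcal{T})=\mathcal{T}_1$; the reflection \emph{swaps} the two halves of the loop but does not \emph{reverse} either of them. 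The needed alignment only appears after composing with the canonical isomorphism $d:\inf L|_{\gamma_2\cup\gamma_1}\to\inf L|_{\overline{\gamma_1}\cup\overline{\gamma_2}}$, and the right picture is not that $d$ ``absorbs the reflection'': $d$ is itself a rotation. By \cite[Lemma 4.3.6]{waldorf10} it is given by $[\mathcal{S},z]\mapsto[\rot_{\pi}^{*}\mathcal{S},z]$, where $\rot_{\pi}$ is rotation by $\pi$. Setting $\mathcal{T}':=\rot_{\pi}^{*}\rev^{*}\mathcal{T}$, the circle identities $\rev\circ\rot_{\pi}\circ\iota_a=\iota_a\circ\rev$ (with $\rev$ on the right acting on $[0,1]$) give $\iota_a^{*}\mathcal{T}'=\overline{\iota_a^{*}\mathcal{T}}$, and only with this does $\phi_{ji}$ act on the representative $\varphi^{*}$ by multiplication with $\overline{z}$, landing in $\overline{\inf R_{ji}}|_{\overline{\gamma_1}}(\overline{\iota_1^{*}\mathcal{T}})$ and matching the clockwise composite $\overline{z}\,\varphi^{*}$. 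So your plan does close, but you must make the rotation $\rot_{\pi}$ coming from $d$, rather than the reflection coming from $\tilde\lambda$, responsible for aligning the trivializations; the paper's proof consists essentially of this piece of bookkeeping.
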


\begin{proof}
We consider $(\gamma_1,\gamma_2)\in P_{ij} \times_{Q_i \times Q_j}P_{ij}$, the corresponding loop $\tau := \gamma_1\cup\gamma_2$, and a trivialization $\mathcal{T}: \tau^{*}\mathcal{G}  \to \mathcal{I}_0$. In the diagram of \cref{eq:lsg:symmetrizingfusion}, the clockwise composition 
\begin{equation*}
\alxydim{}{\inf L|_{\tau} \otimes \inf R_{ij}|_{\gamma_2}(\iota_2^{*}\mathcal{T}) \ar[r]^-{\phi_{ij}} & \inf R_{ij}|_{\gamma_1}(\iota_1^{*}\mathcal{T}) \ar[r]^-{\alpha_{ij}} & \overline{\inf R_{ji}}|_{\overline{\gamma_1}}(\overline{\iota_1^{*}\mathcal{T}})}
\end{equation*} 
sends $[\mathcal{T},z] \otimes \varphi$ to $\overline{z}\cdot \varphi^{*}$. For the counter-clockwise direction, we let $\rot_{\pi}:S^1 \to S^1$ denote the rotation by an angle of $\pi$. Then, the map $d: \inf L|_{\gamma_2\cup\gamma_1} \to \inf L|_{\overline{\gamma_1}\cup\overline{\gamma_2}}$ is given by $[\mathcal{T},z] \mapsto [\rot_{\pi}^{*}\mathcal{T},z]$, see \cite[Lemma 4.3.6]{waldorf10}. We have to compose this with the isomorphism $\tilde\lambda$ of \cref{LBGstr:2}, which was computed in \cref{sec:trans:gerbes}, see \cref{eq:transtildelambda}.
Thus, the map
\begin{equation*}
\alxydim{@C=4em}{\inf L|_{\tau} \otimes \inf R_{ij}|_{\gamma_2}(\iota_2^{*}\mathcal{T}) \ar[r]^-{\tilde\lambda \otimes \alpha_{ij}} & \overline{\inf L}|_{\gamma_2\cup\gamma_1} \otimes \overline{\inf R}_{ji}|_{\overline{\gamma_2}}(\overline{\iota_2^{*}\mathcal{T}}) \ar[r]^{d \otimes \id} & \overline{\inf L}|_{\overline{\gamma_1} \cup \overline{\gamma_2}} \otimes \overline{\inf R}_{ji}|_{\overline{\gamma_2}}(\overline{\iota_2^{*}\mathcal{T}})}
\end{equation*}
is given by $[\mathcal{T},z] \otimes \varphi \mapsto [\rev^{*}\mathcal{T},\overline{z}]\otimes \varphi^{*} \mapsto [\mathcal{T}',\overline{z}]\otimes \varphi^{*}$, where $\mathcal{T}':=\rot_{\pi}^{*}\rev^{*}\mathcal{T}$. In order to obtain the counter-clockwise composition, it remains to compose the with fusion representation. We observe that $\iota_2^{*}\mathcal{T}' = \overline{\iota_2^{*}\mathcal{T}}$ and $\iota_1^{*}\mathcal{T}' = \overline{\iota_1^{*}\mathcal{T}}$; hence, the fusion representation results in  $\phi_{ji}([\mathcal{T}',\overline{z}]\otimes \varphi^{*}) = \overline{z}\cdot \varphi^{*}\in \overline{\inf R_{ji}}|_{\overline{\gamma_1}}(\overline{\iota_1^{*}\mathcal{T}})$. \begin{comment}
Indeed, we have
\begin{align*}
\iota_2^{*}\mathcal{T}' &= \iota_2^{*}\rot_{\pi}^{*}c^{*}\mathcal{T}=(\rev \circ \rot_{\pi}\circ \iota_2)^{*}\mathcal{T}= (\iota_2 \circ \rev)^{*}\mathcal{T}=\rev^{*}\iota_2^{*}\mathcal{T}=\overline{\iota_2^{*}\mathcal{T}}
\\
\iota_1^{*}\mathcal{T}' &= \iota_1^{*}\rot_{\pi}^{*}c^{*}\mathcal{T}=(\rev \circ \rot_{\pi}\circ \iota_1)^{*}\mathcal{T}= (\iota_1\circ \rev)^{*}\mathcal{T}=\rev^{*}\iota_1^{*}\mathcal{T}=\overline{\iota_1^{*}\mathcal{T}}
\end{align*}
\end{comment}
This shows that the diagram of \cref{eq:lsg:symmetrizingfusion} is commutative. \end{proof}

\begin{lemma}
The Cardy condition \cref{eq:lsg:cardy} is satisfied.  
\end{lemma}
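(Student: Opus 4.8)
The plan is to verify the Cardy condition \cref{eq:lsg:cardy} by choosing a trivialization and reducing the statement to an identity among linear maps between the finite-dimensional inner product spaces $\Des(\mathcal{E}_i|_x,\mathcal{T}|_0)$ and $\Des(\mathcal{E}_j|_y,\mathcal{T}|_1)$, exactly as in the proofs of the previous axioms. Fix $\gamma \in P_{ij}$ with $x:=\gamma(0)$, $y:=\gamma(1)$, and choose a trivialization $\mathcal{T}$ of $\gamma^{*}\mathcal{G}$. Set $V:=\Des(\mathcal{E}_i|_x,\mathcal{T}|_0)$ and $W:=\Des(\mathcal{E}_j|_y,\mathcal{T}|_1)$, so that $\inf R_{ij}|_{\gamma}(\mathcal{T})=\mathrm{Hom}(V,W)$, while $\inf R_{ii}|_{\cp_x}=\mathrm{End}(V)$ and $\inf R_{jj}|_{\cp_y}=\mathrm{End}(W)$ under the induced trivializations. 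Under these identifications the structure maps become elementary: by \cref{sec:liftpathcommp} the map $\chi$ is composition of linear maps, by \cref{sec:trans:liftedpathreversal} the map $\alpha_{ij}$ is the adjoint $\varphi\mapsto\varphi^{*}$, by \cref{sec:liftedconstandrev} the lifted constant path $\epsilon_i(x)$ is $\id_V$, and the various canonical isomorphisms $d$ associated to the reparameterizing thin homotopies are identities (via \cref{rem:reparameterizations}, as already exploited in \cref{lem:LBG2,lem:LBG4}). The hermitian metric is $h_{ij}(\varphi,\psi)=\mathrm{tr}(\varphi^{*}\circ\psi)$.

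With these substitutions, the left-hand side of \cref{eq:lsg:cardy} becomes, for an orthonormal basis $(v_1,\dots,v_n)$ of $\mathrm{Hom}(V,W)$ and $v\in\mathrm{End}(V)$,
\begin{equation*}
\sum_{k=1}^{n} v_k^{*}\circ(v_k\circ v)=\Big(\sum_{k=1}^{n} v_k^{*}\circ v_k\Big)\circ v,
\end{equation*}
while the right-hand side is $\mathrm{tr}(v)\cdot\id_W$, since $h_{ii}(\epsilon_i(x),v)=\mathrm{tr}(\id_V^{*}\circ v)=\mathrm{tr}(v)$ and the $d$-map carries $\epsilon_j(y)=\id_W$ to itself. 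Thus the content of the axiom reduces to the single linear-algebra identity
\begin{equation*}
\Big(\sum_{k=1}^{n} v_k^{*}\circ v_k\Big)\circ v=\mathrm{tr}(v)\cdot\id_W
\end{equation*}
for all $v\in\mathrm{End}(V)$, where the orthonormality is with respect to the Frobenius inner product $\mathrm{tr}(\varphi^{*}\circ\psi)$ on $\mathrm{Hom}(V,W)$.

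The key computation is therefore to show that $\sum_{k} v_k^{*}\circ v_k$ acts as the operator sending $v\mapsto\mathrm{tr}(v)\cdot\id_W$ after post-composition; concretely, using the standard matrix-unit basis $E_{ba}$ (sending the $a$-th basis vector of $V$ to the $b$-th basis vector of $W$), which is orthonormal for the Frobenius metric, one computes $\sum_{a,b}E_{ba}^{*}\circ E_{ba}\circ v=\sum_{a,b}E_{ab}E_{ba}v=\sum_{a}(\sum_b 1)\,E_{aa}v$, and tracking the indices yields exactly $\mathrm{tr}(v)\cdot\id_W$. I expect the main (and really only) obstacle to be bookkeeping: one must confirm that every reparameterization isomorphism $d$ appearing in \cref{eq:lsg:cardy} is genuinely the identity under the chosen trivialization, so that no stray holonomy factors survive — this is handled uniformly by \cref{rem:reparameterizations} together with the fact that the relevant trivializations of $\overline{\gamma}\pcomp\gamma$ and $\gamma\pcomp\overline{\gamma}$ have vanishing $2$-form, exactly as in \cref{lem:LBG4}. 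Once the structure maps are identified, the identity is a routine fact about orthonormal bases for the Frobenius inner product, completing the verification that all LBG axioms hold and hence that the transgression of $(\mathcal{G},\mathcal{E})$ yields an object of $\lsg(M,Q)$.
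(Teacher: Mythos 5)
Your overall strategy coincides with the paper's: identify $\chi$, $\alpha_{ij}$ and $\epsilon_i$ under a chosen trivialization (composition, adjoint, identity), dispose of the reparameterization maps $d$ via \cref{rem:reparameterizations} exactly as in \cref{lem:LBG2,lem:LBG4}, and reduce \cref{eq:lsg:cardy} to a linear-algebra identity for an orthonormal basis of $\mathrm{Hom}(V,W)$ with respect to the Frobenius inner product. However, your translation of the left-hand side has the composition in the wrong order, and the identity you then ``verify'' is false.

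Concretely: the transgressed lifted path concatenation sends $\varphi_{23}\otimes\varphi_{12}\mapsto\varphi_{23}\circ\varphi_{12}$, i.e.\ the \emph{first} tensor slot is the outer map (see \cref{sec:liftpathcommp}). In \cref{eq:lsg:cardy} the first slot of $\chi_{jij}$ carries $\chi_{iij}(v_k\otimes v)=v_k\circ v\colon V\to W$ and the second carries $\alpha_{ij}(v_k)=v_k^{*}\colon W\to V$, so each summand is $(v_k\circ v)\circ v_k^{*}$, an endomorphism of $W$ --- consistent with the right-hand side $\mathrm{tr}(v)\cdot\id_W$. You instead wrote $v_k^{*}\circ(v_k\circ v)$, an endomorphism of $V$, which already fails to type-check against $\mathrm{tr}(v)\cdot\id_W$. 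Moreover the identity you assert, $\bigl(\sum_{k}v_k^{*}\circ v_k\bigr)\circ v=\mathrm{tr}(v)\cdot\id_W$, is simply false: with matrix units one has $\sum_{a,b}E_{ab}E_{ba}=\dim(W)\cdot\id_V$, so your left-hand side equals $\dim(W)\cdot v$; indeed your own intermediate expression $\sum_{a}\bigl(\sum_b 1\bigr)E_{aa}v$ equals $\dim(W)\cdot v$, and the final step ``tracking the indices yields $\mathrm{tr}(v)\cdot\id_W$'' is where the computation goes wrong. The identity the paper actually reduces to is $\sum_{k}v_k\circ v\circ v_k^{*}=\mathrm{tr}(v)\cdot\id_W$, which does hold: $E_{ba}\,v\,E_{ab}=v_{aa}\,E_{bb}$, and summing over $a,b$ gives $\mathrm{tr}(v)\cdot\id_W$. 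Once you correct the composition order, your argument becomes exactly the paper's proof.
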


\begin{proof}
As in the proofs of \cref{lem:LBG2,lem:LBG4}, the endomorphism $d_{\cp_y,\gamma \pcomp \cp_x\pcomp  \overline{\gamma}}(\epsilon_j(y))$ is the identity. The remaining equality is the identity
\begin{equation*}
\sum_{k=1}^{n} \varphi_k \circ \varphi \circ \varphi^{*}_k= \mathrm{tr}(\varphi)\cdot  \id\text{,}
\end{equation*}
where $(\varphi_1,...,\varphi_n)$  is an orthonormal basis of $\inf R_{ij}|_{\gamma}(\mathcal{T})$ with respect to the metric defined in \cref{sec:transbranes}. It is straightforward to check this identity, for instance using elementary matrices. 
\begin{comment}
We remark the following fact. Let $(\varphi_1,...,\varphi_n)$ be an orthonormal basis of $\inf R_{ij}|_{\gamma}$, i.e. we have $\mathrm{tr}(\varphi_k^{*}\circ \varphi_l)=\delta_{kl}$. Then, since $\alpha_{ij}$ is a unitary isomorphism, $(\varphi_1^{*},...,\varphi_n^{*})$ is an orthonormal basis of $\inf R_{ji}|_{\overline{\gamma}}$. It is also dual under the pairing
\begin{equation*}
\inf R_{ji} \otimes \inf R_{ij} \to \C: (\psi,\varphi)\mapsto \mathrm{tr}(\psi \circ \varphi) 
\end{equation*}
which is the composition of the pairing with
 $\varphi \mapsto h_{ii}(\varphi^{*},\id)=\mathrm{tr}(\varphi)$.

In order to show the general statement, we use the orthonormal basis $(\varphi_1,...,\varphi_n)$ in order to identify $\inf R_{ij}|_{\gamma}(\mathcal{S})$ with $\C^{k \times k}$, where $n=k^2$, and the metric on $\C^{k\times k}$ is $(A,B)\mapsto \mathrm{tr}(A^{\ast}B)$. Then, the new orthonormal basis consists of the elementary matrices $E_{ij}$. We claim that $E_{ij}AE_{ij}^{*}=A_{jj}E_{ii}$. Indeed,
\begin{equation*}
(E_{ij}AE_{ij}^{*})_{kl}=\sum_{p,q=1}^{k}(E_{ij})_{kp}A_{pq}(E_{ij}^{*})_{ql}=\sum_{p,q=1}^{k}\delta_{ik}\delta_{jp}A_{pq}\delta_{jq}\delta_{il}=\delta_{ik}\delta_{il}A_{jj}\text{.}
\end{equation*}
Now, the left hand side is
\begin{equation*}
\sum_{i,j=1}^{k} E_{ij}AE_{ij}^{*}= \sum_{i=1}^{k} \mathrm{tr}(A)E_{ii}=\mathrm{tr}(A)E_k\text{.}
\end{equation*}
The right hand side is
$\mathrm{tr}(A)E_k$, too. 

\end{comment}
\end{proof}

\subsection{Functoriality of transgression}

\label{sec:nattrans}

In  \cref{sec:trans:gerbes,sec:transbranes,sec:trans:superficialconnection,sec:fusionrep,sec:liftpathcommp,sec:liftedconstandrev,sec:trans:liftedpathreversal}  we have defined our transgression functor on the level of objects. Now we provide its definition on the level of morphisms: we associate to a TBG 1-morphism  $(\mathcal{A},\psi):(\mathcal{G},\mathcal{E})\to (\mathcal{G}',\mathcal{E}')$  a morphism $(\varphi,\xi)$ between the transgressed LBG objects.

To start with, the transgression of $\ueins$-bundle gerbes is functorial  \cite{waldorf5}: from the isomorphism $\mathcal{A}:\mathcal{G} \to \mathcal{G}'$ we obtain an isomorphism $\varphi:\inf L \to \inf L'$, which  over a loop $\tau\in LM$ is given by $[\mathcal{T},z] \mapsto [\mathcal{T}\circ \tau^{*}\mathcal{A}^{-1},z]$, for $\mathcal{T}$ a trivialization of $\tau^{*}\mathcal{G}$. The isomorphism $\varphi$  is connection-preserving and fusion-preserving.

Next, we define a vector bundle isomorphism $\xi_{ij}:\inf R_{ij} \to \inf R_{ij}'$ for all $i,j\in I$, using the 2-isomorphisms $\psi_i: \mathcal{E}_i \Rightarrow \mathcal{E}_i' \circ \mathcal{A}|_{Q_i}$.  Let $\gamma\in P_{ij}$ be a path with $x := \gamma(0)$ and $y:= \gamma(1)$,  let  $\mathcal{T}:\gamma^{*}\mathcal{G} \to \mathcal{I}_0$ be a trivialization, and let $\mathcal{T}':= \mathcal{T} \circ \gamma^{*}\mathcal{A}^{-1}$. We consider  vector bundle isomorphisms
\begin{equation*}
\psi_{i,0}:\Des(\mathcal{E}_i|_x,\mathcal{T}|_0) \to \Des(\mathcal{E}_i'|_x,\mathcal{T}'|_0) \quand
\psi_{i,1}:\Des(\mathcal{E}_j|_y,\mathcal{T}|_1) \to \Des(\mathcal{E}_j'|_y,\mathcal{T}'|_1) \end{equation*}
defined as follows. The bicategory $\ugrbcon {M}$  provides a  2-isomorphism $\delta: \mathcal{A}^{-1}\circ \mathcal{A} \Rightarrow \id$, which  induces a 2-isomorphism $\id \circ \delta^{-1}: \mathcal{T} \Rightarrow \mathcal{T}' \circ \gamma^{*}\mathcal{A}$. Now, $\psi_{i,0}$ is the composite
\begin{equation*}
\alxydim{@C=0.5cm}{\Des(\mathcal{E}_i|_x,\mathcal{T}|_0) \ar[rr]^-{\Des(\psi_i,\id)} && \Des(\mathcal{E}_i'|_x \circ \mathcal{A}|_x,\mathcal{T}|_0)\ar[rrr]^-{\Des(\id,\id \circ \delta^{-1})} &&& \Des(\mathcal{E}_i'|_x \circ \mathcal{A}|_x,\mathcal{T}|_0 '\circ \mathcal{A}|_x) \ar[r]^-{\Des_{\mathcal{A}|_x}} & \Des(\mathcal{E}_i'|_x ,\mathcal{T}'|_0)\text{,}} 
\end{equation*}
where  $\Delta_{\mathcal{A}|_x}$ is defined in \cref{rem:propDes:2}. 
$\psi_{i,1}$ is defined analogously. We have
\begin{align*}
\inf R_{ij}|_{\gamma}(\mathcal{T}) &= \mathrm{Hom}(\Des(\mathcal{E}_i|_x,\mathcal{T}|_0),\Des(\mathcal{E}_j|_y,\mathcal{T}|_1)) \\
\inf R_{ij}'|_{\gamma}(\mathcal{T}') &= \mathrm{Hom}(\Des(\mathcal{E}_i'|_x,\mathcal{T'}|_0),\Des(\mathcal{E}_j'|_y,\mathcal{T}'|_1)) \text{,}
\end{align*}
and we finally define
\begin{equation*}
\xi_{ij}: \inf R_{ij}|_{\gamma}(\mathcal{T}) \to \inf R'_{ij}|_{\gamma}(\mathcal{T}'): \varphi \mapsto \psi_{i,1} \circ \varphi \circ \psi_{i,0}^{-1}\text{.}
\end{equation*}

\begin{lemma}
This defines a connection-preserving bundle isomorphism, which  depends only on the 2-isomorphism class of $(\mathcal{A},\psi)$.
\end{lemma}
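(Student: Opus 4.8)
The plan is to verify three things for the assignment $\gamma \mapsto \xi_{ij}|_\gamma$: that it is independent of the chosen trivialization $\mathcal{T}$, that it is smooth (hence a genuine bundle morphism) and connection-preserving, and finally that it descends to the $2$-isomorphism class of $(\mathcal{A},\psi)$. I would organize the proof exactly parallel to the preceding lemmas in \cref{sec:transbranes,sec:trans:superficialconnection,sec:fusionrep}, since the structure of $\xi_{ij}$ (fiber-wise a conjugation $\varphi \mapsto \psi_{i,1}\circ\varphi\circ\psi_{i,0}^{-1}$ by isomorphisms built from $\psi_i$ and $\delta$) is formally identical to the maps $r_\psi$ already analyzed there.

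First I would check independence of $\mathcal{T}$. Given a second trivialization $\tilde{\mathcal{T}}$ of $\gamma^{*}\mathcal{G}$ with $\tilde{\mathcal{T}}' := \tilde{\mathcal{T}}\circ\gamma^{*}\mathcal{A}^{-1}$, pick a $2$-isomorphism $\kappa:\mathcal{T}\Rightarrow\tilde{\mathcal{T}}$, which induces $\kappa':= \kappa\circ\id:\mathcal{T}'\Rightarrow\tilde{\mathcal{T}}'$. The claim reduces to the commutativity of the square relating $r_{\kappa}$, $r_{\kappa'}$ and the two maps $\psi_{i,0},\psi_{i,1}$; concretely one must show $\psi_{i,0}\circ r_{\kappa|_0}=r_{\kappa'|_0}\circ\psi_{i,0}'$ at the endpoint $0$, and analogously at $1$. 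This is a diagram of $\Des$-functors, and it follows from the naturality of the canonical isomorphisms in \cref{rem:propDes} together with the compatibility of $\delta$ with $\kappa$; I would dispatch it as a short naturality computation, exactly as in the proof that $r_\psi=r_{\psi'}$.

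Next I would treat smoothness and the connection. For smoothness I would, as in the proof of \cref{prop:Rijvectorbundle}, work over a contractible $W$ with a trivialization $\mathcal{T}$ and an induced $\mathcal{T}'=\mathcal{T}\circ(c^{\vee})^{*}\mathcal{A}^{-1}$; the isomorphisms $\psi_{i,0},\psi_{i,1}$ then upgrade to connection-preserving bundle isomorphisms $\inf V_{\mathcal{T}}\to\inf V_{\mathcal{T}'}$ and $\inf W_{\mathcal{T}}\to\inf W_{\mathcal{T}'}$ over $W$ (since $\psi_i$, $\delta$ and $\Des_{\mathcal{A}}$ are all connection-preserving), so under local trivializations $\xi_{ij}$ is represented by the smooth conjugation on $\mathrm{Hom}$-bundles. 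For the connection, the key point is that passing to $\mathcal{T}'=\mathcal{T}\circ\gamma^{*}\mathcal{A}^{-1}$ leaves the curving $2$-form $\rho$ unchanged, because $\mathcal{A}$ is a $1$-\emph{iso}morphism and hence rank one, so $\tfrac{1}{\rk}\tr(\mathrm{curv})$ of $\gamma^{*}\mathcal{A}$ vanishes; thus the exponential factor $\exp(\int_{[0,1]^2}\rho)$ in \cref{def:connectionRij} is identical on both sides, and $\xi_{ij}$ commutes with parallel transport because $\psi_{i,0},\psi_{i,1}$ are fiber-wise restrictions of connection-preserving bundle maps over the path of endpoints. The resulting commuting square of parallel transports gives the claim.

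Finally, to see that $\xi_{ij}$ depends only on the $2$-isomorphism class of $(\mathcal{A},\psi)$, I would let $\varphi:(\mathcal{A},\psi)\Rightarrow(\mathcal{A}',\psi')$ be a TBG $2$-morphism, so that $\varphi$ is a $2$-isomorphism $\mathcal{A}\Rightarrow\mathcal{A}'$ with $(\id_{\mathcal{E}_i'}\circ\varphi|_{Q_i})\bullet\psi_i=\psi_i'$; I would then verify that the two resulting maps $\psi_{i,0},\psi_{i,0}'$ (and likewise at the endpoint $1$) agree under the identification induced by $\varphi$ on the line bundle $\inf L$, so that the conjugations coincide. This is again the naturality diagram of \cref{rem:propDes:2}, now fed by the triangle identity defining a TBG $2$-morphism. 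I expect the main obstacle to be bookkeeping in this last step: keeping track of how the coherence $2$-isomorphism $\delta:\mathcal{A}^{-1}\circ\mathcal{A}\Rightarrow\id$ and the map $\Des_{\mathcal{A}}$ transform under $\varphi$, and checking that the ambiguities all cancel. Everything else is a routine transcription of the arguments already established for $r_\psi$ and for $\phi_{ij}$, so the real content is confirming that the rank-one property of $\mathcal{A}$ makes the curving contributions drop out cleanly and that the two-categorical coherences assemble as expected.
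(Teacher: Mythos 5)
Your proposal follows essentially the same route as the paper's proof: trivialization-independence as a routine naturality check; smoothness by passing to local trivializations built from a family trivialization $\mathcal{T}$ and its composite $\mathcal{T}'=\mathcal{T}\circ(c^{\vee})^{*}\mathcal{A}^{-1}$, in which $\xi_{ij}$ becomes conjugation by the family versions of $\psi_{i,0},\psi_{i,1}$ (the paper even arranges $\xi_{ij}$ to be the identity by defining the trivialization of $\inf R'_{ij}$ via $\tau':=\psi_{i,1}\circ\tau\circ\psi_{i,0}^{-1}$); connection-preservation from the equality of the $2$-forms together with the fact that $\psi_{i,0},\psi_{i,1}$ are connection-preserving over $[0,1]$; and $2$-isomorphism-class independence by showing that the conjugating maps for $(\mathcal{A},\psi)$ and $(\mathcal{A}',\psi')$ differ by the canonical identification $r_{\eta}$ attached to the induced $2$-isomorphism $\eta:=\id\circ\gamma^{*}\varphi^{\#}\colon\mathcal{T}'\Rightarrow\mathcal{T}''$ — this is what your ``identification induced by $\varphi$'' must be; note it lives on the trivializations of $\gamma^{*}\mathcal{G}'$, not on $\inf L$.

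One justification in your argument is incorrect, although the step it supports is true: you claim the $2$-form is unchanged ``because $\mathcal{A}$ is a $1$-isomorphism and hence rank one, so $\tfrac{1}{\mathrm{rk}}\mathrm{tr}(\mathrm{curv})$ of $\gamma^{*}\mathcal{A}$ vanishes.'' Rank one does not force this quantity to vanish — a line bundle can carry arbitrary curvature, and for a $1$-morphism of gerbes with connection this trace is governed by the difference of the curvings, not by zero. The correct (and much simpler) reason is definitional: the $2$-form of a trivialization is read off from its target, and $\mathcal{T}':=\mathcal{T}\circ(\Gamma^{\vee})^{*}\mathcal{A}^{-1}$ has target $\mathcal{I}_{\rho}$ by construction, so its $2$-form is $\rho$. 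With that repair your proof coincides with the paper's.
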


\begin{proof}
Independence of the choice of $\mathcal{T}$ is routine. For the smoothness, we consider a plot $c: U \to P_{ij}$, an open subset $W \subset U$, a trivialization $\mathcal{T}: (c|_W^{\vee})^{*}\mathcal{G} \to \mathcal{I}_{\rho}$, and a bundle isomorphism $\tau: W \times \C^{n} \to \mathrm{Hom}(\inf V_{\mathcal{T}},\inf W_{\mathcal{T}})$ over $W$, so that we have a local trivialization $\phi$ of $\inf R_{ij}$. We set  $\mathcal{T}' := \mathcal{T} \circ (c|_W^{\vee})^{*}\mathcal{A}^{-1}$
and obtain bundle isomorphisms $\psi_{i,0}:\inf V_{\mathcal{T}} \to \inf V_{\mathcal{T}'}$ and $\psi_{i,1}:\inf W_{\mathcal{T}} \to \inf W_{\mathcal{T}'}$ over $W$ from exactly the same definition as above. Then we obtain a local trivialization of $\inf R_{ij}'$ using $\mathcal{T}'$ and $\tau'(w,v) :=  \psi_{i,1}\circ \tau(w,v) \circ \psi_{i,0}^{-1}$ for $w\in W$ and $v\in \C^{n}$; in these local trivializations the map $\xi_{ij}$ is the identity, and hence smooth.

If $\Gamma:[0,1] \to P_{ij}$ is a path,  we choose a trivialization $\mathcal{T}:(\Gamma^{\vee})^{*}\mathcal{G} \to \mathcal{I}_{\rho}$. Then, the trivialization $\mathcal{T}' := \mathcal{T} \circ (\Gamma^{\vee})^{*}\mathcal{A}^{-1}$ has the same 2-form $\rho$. Since the isomorphisms $\psi_{i,0}$ and $\psi_{i,1}$ over $[0,1]$ are connection-preserving, the induced isomorphism between Hom-bundles is connection-preserving, too. This shows that $\xi_{ij}$ preserves  connections.

We consider a 2-isomorphism $\varphi$ between $(\mathcal{A},\psi)$ and $(\mathcal{A}',\psi')$
and the corresponding vector bundle isomorphisms $\xi_{ij}$ and $\xi_{ij}'$. For a path $\gamma\in P_{ij}$, let $\mathcal{T}:(\gamma^{\vee})^{*}\mathcal{G} \to \mathcal{I}_0$ be a trivialization,  $\mathcal{T}' := \mathcal{T} \circ \gamma^{*}\mathcal{A}^{-1}$ and  $\mathcal{T}'' := \mathcal{T} \circ \gamma^{*}\mathcal{A}'^{-1}$. From the 2-isomorphism  $\varphi: \mathcal{A} \Rightarrow \mathcal{A}'$ we construct another 2-isomorphism $\eta := \id\circ \gamma^{*}\varphi^{\#}: \mathcal{T}' \Rightarrow \mathcal{T}''$, where $\varphi^{\#}:\mathcal{A}^{-1} \Rightarrow \mathcal{A}'^{-1}$ denotes the inverse with respect to horizontal composition. We consider the following diagram of linear maps:
\begin{equation*}
\small
\alxydim{@C=0.9cm@R=1.5em}{ & \Des(\mathcal{E}_i'|_x \circ \mathcal{A}|_x,\mathcal{T}|_0) \ar[dd]^{\Des(\id \circ \varphi,\id)} \ar[rr]^-{\Des(\id,\id \circ \delta_{\mathcal{A}}^{-1})} && \Des(\mathcal{E}_i'|_x \circ \mathcal{A}|_x,\mathcal{T}'|_0 \circ \mathcal{A}|_x) \ar[dd]^{\Des(\id \circ \varphi,\eta|_0\circ \varphi)} \ar[r]^-{\Des_{\mathcal{A}|_x}} & \Des(\mathcal{E}_i'|_x,\mathcal{T}'|_0 )\ar[dd]^{\Des(\id,\eta|_0)}
\\
\Des(\mathcal{E}_i|_x,\mathcal{T}|_0) \mquad  \ar[dr]_{\Des(\psi_i',\id)} \ar[ur]^-{\Des(\psi_i,\id)}
\\ & \Des(\mathcal{E}_i'|_x \circ \mathcal{A}'|_x,\mathcal{T}|_0) \ar[rr]_-{\Des(\id,\id \circ \delta_{\mathcal{A}'}^{-1})} && \Des(\mathcal{E}_i'|_x \circ \mathcal{A}|_x',\mathcal{T}''|_0 \circ \mathcal{A}'|_x) \ar[r]_-{\Des_{\mathcal{A}|_y}} & \Des(\mathcal{E}_i'|_x ,\mathcal{T}''|_0)} 
\end{equation*}
The diagram is commutative: the triangular diagram on the left commutes due to the condition for TBG 2-isomorphisms, $\psi_i' = (\id \circ \varphi) \bullet \psi_i$. The diagram in the middle commutes because the 2-isomorphism $\delta_{\mathcal{A}}: \mathcal{A}^{-1} \circ \mathcal{A} \Rightarrow \id$ is natural with respect to 2-morphisms, 
\begin{comment}
Naturality of  the  2-isomorphism $\delta: \mathcal{A}^{-1}\circ \mathcal{A} \Rightarrow \id$ means that the diagram
\begin{equation*}
\alxydim{}{\mathcal{T}' \circ \gamma^{*}\mathcal{A} \ar@{=>}[rr]^{\psi \circ \gamma^{*}\varphi} \ar@{=>}[dr]_{\id \circ \delta} && \mathcal{T}''\circ \gamma^{*}\mathcal{A}' \ar@{=>}[dl]^{\id \circ \delta} \\  & \mathcal{T}}
\end{equation*}
is commutative. 
\end{comment}
and the diagram on the right-hand side commutes by definition of  $\Delta_{\mathcal{A}|_x}$. The diagram gives the equation $\psi'_{i,0}=\Des(\id,\eta|_0) \circ \psi_{i,0}$. Analogously, we have $\psi'_{i,1}=\Delta(\id,\eta|_1) \circ \psi_{i,1}$. Thus, for $\inf R_{ij}|_{\gamma}(\mathcal{T})$ we have:
\begin{multline*}
\xi'_{ij}(\varphi) = \psi_{i,1}' \circ \varphi \circ \psi_{i,0}^{\prime-1}  =\Des(\id,\eta|_0) \circ \psi_{i,0}\circ \varphi \circ \psi_{i,1}^{-1} \circ \Delta(\id,\eta|_1)^{-1}\\=\Des(\id,\eta|_0) \circ \xi_{ij}(\varphi) \circ \Delta(\id,\eta|_1)^{-1} =r_{\eta}(\xi_{ij}(\varphi))\text{.}
\end{multline*}
This shows that $\xi_{ij}=\xi_{ij}'$.
\end{proof}

The following lemma is a straightforward check, only using the definitions; its proof is left out for brevity. 

\begin{lemma}
The bundle isomorphism $\xi_{ij}$ intertwines the fusion representations, and respects the lifted path composition,  lifted constant paths, and lifted path reversal. 
\qed
\end{lemma}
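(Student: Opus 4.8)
The final statement to prove is that the bundle isomorphism $\xi_{ij}$ intertwines the fusion representations and respects the lifted path composition, lifted constant paths, and lifted path reversal. This is asserted to be a straightforward check using only the definitions, so my plan is to verify each of the four compatibilities by working fiberwise in suitably chosen gerbe trivializations, exactly as was done in establishing the LBG axioms in the preceding subsections.

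The key organizing principle is that over any path the choice of a trivialization $\mathcal{T}$ of the pulled-back gerbe identifies each fiber $\inf R_{ij}|_{\gamma}$ with a concrete $\mathrm{Hom}$-space, and under this identification both the transgressed structure maps (fusion representation, lifted concatenation, constant paths, reversal) and the morphism $\xi_{ij}$ become elementary linear-algebraic operations. Concretely, for a path $\gamma$ I would fix $\mathcal{T}$ of $\gamma^{*}\mathcal{G}$ and set $\mathcal{T}' := \mathcal{T} \circ \gamma^{*}\mathcal{A}^{-1}$, so that $\xi_{ij}$ acts as conjugation $\varphi \mapsto \psi_{i,1}\circ \varphi \circ \psi_{i,0}^{-1}$ by the isomorphisms $\psi_{i,0},\psi_{i,1}$ coming from the TBG 2-morphisms $\psi_i$. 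The plan is then: first, for the fusion representation, recall from \cref{sec:fusionrep} that $\phi_{ij}$ is simply scalar multiplication by the $\ueins$-coordinate $z$ of $[\mathcal{T},z]\in\inf L$; since $\varphi:\inf L\to\inf L'$ preserves these coordinates (it sends $[\mathcal{T},z]\mapsto[\mathcal{T}\circ\tau^{*}\mathcal{A}^{-1},z]$), and scalar multiplication commutes with the conjugation defining $\xi_{ij}$, the intertwining diagram commutes at once. Second, for the lifted path concatenation, recall from \cref{sec:liftpathcommp} that $\chi_{ijk}$ is composition of linear maps in compatible trivializations $\mathcal{T}_{12}=\iota_1^{*}\mathcal{T}$, $\mathcal{T}_{23}=\iota_2^{*}\mathcal{T}$; the point is that the endpoint isomorphisms $\psi_{i,0}$ at the shared midpoint $y$ cancel in the composite $\xi_{jk}(\varphi_{23})\circ\xi_{ij}(\varphi_{12}) = \psi_{k,1}\circ\varphi_{23}\circ\psi_{j,\text{mid}}^{-1}\circ\psi_{j,\text{mid}}\circ\varphi_{12}\circ\psi_{i,0}^{-1} = \xi_{ik}(\varphi_{23}\circ\varphi_{12})$, which is exactly the required compatibility.

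Third, for the lifted constant paths, recall from \cref{sec:liftedconstandrev} that $\epsilon_i(x)=\id_{\Des(\mathcal{E}_i|_x,\mathcal{T})}$; I must check $\xi_{ii}(\epsilon_i(x))=\epsilon_i'(x)$. Under $\xi_{ii}$ the identity is sent to $\psi_{i,0}\circ\id\circ\psi_{i,0}^{-1}=\id_{\Des(\mathcal{E}_i'|_x,\mathcal{T}')}$, which is precisely $\epsilon_i'(x)$; here I use that the two endpoint isomorphisms coincide for a constant path, so conjugation of the identity gives the identity. Fourth, for the lifted path reversal, recall from \cref{sec:trans:liftedpathreversal} that $\alpha_{ij}|_{\gamma}(\varphi)=\varphi^{*}$ is the Hilbert-space adjoint; I must verify $\alpha_{ij}'\circ\xi_{ij}=\xi_{ji}\circ\alpha_{ij}$. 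Since each $\psi_{i,k}$ is a unitary (isometric) isomorphism, we have $(\psi_{i,1}\circ\varphi\circ\psi_{i,0}^{-1})^{*}=\psi_{i,0}\circ\varphi^{*}\circ\psi_{i,1}^{-1}$, and matching this against the reversed trivialization $\overline{\mathcal{T}}=\rev^{*}\mathcal{T}$ (which swaps the roles of the two endpoints) shows the two composites agree.

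The only genuine subtlety, and where I would spend the most care, is bookkeeping the endpoint isomorphisms $\psi_{i,0},\psi_{i,1}$ consistently across the different trivialization restrictions $\iota_1^{*}\mathcal{T}$, $\iota_2^{*}\mathcal{T}$, $\rev^{*}\mathcal{T}$, $\cp_x^{*}\mathcal{T}$ and confirming they are genuinely the same linear maps at the shared points. These checks are purely formal because all the structure maps reduce to composition, scalar multiplication, identities, and adjoints in fixed $\mathrm{Hom}$-spaces, and $\xi_{ij}$ is conjugation by unitaries that respect every one of these operations; the unitarity of the $\psi_{i,k}$ (needed only for the reversal compatibility) and the cancellation of midpoint isomorphisms (needed for concatenation) are the two facts doing all the work. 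For this reason the verification is indeed routine and I would present it compactly, mirroring the fiberwise arguments already carried out for the corresponding LBG axioms, rather than spelling out each diagram in full.
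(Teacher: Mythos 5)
Your proof is correct and is precisely the verification the paper intends: the paper omits the argument as \quot{a straightforward check, only using the definitions}, and your fiberwise reduction --- expressing $\xi_{ij}$ as conjugation by the unitaries $\psi_{i,0},\psi_{i,1}$ in compatible trivializations, with scalar multiplication commuting with conjugation for the fusion representation, cancellation of the midpoint isomorphisms for concatenation, equality of the two endpoint isomorphisms for constant paths, and unitarity of the $\psi$'s for the reversal --- is exactly that check carried out. No gaps.
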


Thus, the pair $(\varphi,\xi)$, with $\xi=\{\xi_{ij}\}_{i,j\in I}$ is a morphism in $\lsg(M,Q)$, and we call it the \emph{transgression of the 1-morphism $(\mathcal{A},\psi)$}. It is again straightforward to show that the assignment $(\mathcal{A},\psi) \mapsto (\varphi,\xi)$ preserves identities and  the composition, so that we have the following:

\begin{proposition}
\label{prop:transfunct}
Transgression is a functor $\mathscr{T}:\tsg(M,Q) \to \lsg(M,Q)$.
\qed
\end{proposition}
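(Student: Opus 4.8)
The statement to prove is \cref{prop:transfunct}: that transgression $\mathscr{T}$ is a functor $\tsg(M,Q) \to \lsg(M,Q)$. By this point in the excerpt, essentially all the hard analytical work has already been carried out in the preceding subsections: the object $\mathscr{T}(\mathcal{G},\mathcal{E})$ has been constructed piece by piece (the line bundle $\inf L$ with its fusion product, the vector bundles $\inf R_{ij}$ with superficial connections, the fusion representation $\phi_{ij}$, lifted path concatenation $\chi_{ijk}$, lifted constant paths $\epsilon_i$, lifted path reversal $\alpha_{ij}$), and all ten LBG axioms have been verified in the series of lemmas. Moreover, the action on 1-morphisms has been defined via the pair $(\varphi,\xi)$, and the two immediately preceding lemmas establish that $(\varphi,\xi)$ is a well-defined morphism in $\lsg(M,Q)$ (i.e.\ that $\varphi$ and $\xi_{ij}$ are connection- and fusion-preserving and compatible with all the structure). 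So the only remaining content is functoriality itself.

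The plan is therefore to verify the two functor axioms: preservation of identities and preservation of composition. First I would check identities: the identity 1-morphism $(\id_{\mathcal{G}},\id)$ on an object $(\mathcal{G},\mathcal{E})$ induces, on the loop space side, the isomorphism $\varphi$ defined by $[\mathcal{T},z]\mapsto[\mathcal{T}\circ\tau^{*}\id^{-1},z]=[\mathcal{T},z]$, which is the identity on $\inf L$; and on the path space side, the maps $\psi_{i,0},\psi_{i,1}$ degenerate to identities (since $\psi_i=\id$, $\mathcal{A}=\id$, and the structural 2-isomorphism $\delta$ for the identity is canonical), so $\xi_{ij}(\varphi)=\psi_{i,1}\circ\varphi\circ\psi_{i,0}^{-1}=\varphi$ is the identity on each $\inf R_{ij}$. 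Thus $\mathscr{T}(\id_{(\mathcal{G},\mathcal{E})})=\id_{\mathscr{T}(\mathcal{G},\mathcal{E})}$.

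Second, for composition, I would take a composable pair of TBG 1-morphisms $(\mathcal{A},\psi):(\mathcal{G},\mathcal{E})\to(\mathcal{G}',\mathcal{E}')$ and $(\mathcal{A}',\psi'):(\mathcal{G}',\mathcal{E}')\to(\mathcal{G}'',\mathcal{E}'')$, whose composite is $(\mathcal{A}'\circ\mathcal{A},\,(\psi'\circ\id)\bullet\psi)$ in the appropriate sense. On $\inf L$, functoriality of gerbe transgression \cite{waldorf5} already gives $\varphi_{\mathcal{A}'\circ\mathcal{A}}=\varphi_{\mathcal{A}'}\circ\varphi_{\mathcal{A}}$. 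On $\inf R_{ij}$, I would choose a trivialization $\mathcal{T}$ of $\gamma^{*}\mathcal{G}$ and track the induced trivializations $\mathcal{T}'=\mathcal{T}\circ\gamma^{*}\mathcal{A}^{-1}$ and $\mathcal{T}''=\mathcal{T}'\circ\gamma^{*}\mathcal{A}'^{-1}$, then compare the linear maps $\psi_{i,0}$ (and $\psi_{i,1}$) for the composite against the composites of the individual maps. The key computation is that the "difference cells" $\Des_{\mathcal{A}}$ of \cref{rem:propDes:2} compose correctly, which follows from the coherence of the 2-isomorphisms $\delta_{\mathcal{A}'\circ\mathcal{A}}$, $\delta_{\mathcal{A}}$, $\delta_{\mathcal{A}'}$ in the bicategory $\ugrbcon{M}$ together with the pasting law $\Des_{\mathcal{A}'\circ\mathcal{A}}=\Des_{\mathcal{A}}\circ(\text{restriction of }\Des_{\mathcal{A}'})$. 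Granting this, $(\xi_{ij})_{\mathcal{A}'\circ\mathcal{A}}=(\xi_{ij})_{\mathcal{A}'}\circ(\xi_{ij})_{\mathcal{A}}$ reduces to the associativity of ordinary composition of linear maps in $\mathrm{Hom}$-spaces.

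I do not expect any genuine obstacle here: everything substantive — well-definedness, smoothness in the diffeological sense, connection-preservation, and compatibility with the LBG structure — has been discharged in the preceding lemmas, and what remains is bookkeeping with the structural 2-isomorphisms of the gerbe bicategory. The one point requiring slight care is the bicategorical coherence in the composition step: the functor $\Des$ and the cells $\Des_{\mathcal{A}}$ from \cref{rem:propDes} are only associative up to canonical 2-isomorphism, so I would make sure that the canonical identifications $r_{\mathcal{T},\mathcal{T}'}$ absorb these coherence isomorphisms, exactly as in the proof that $\xi_{ij}$ depends only on the 2-isomorphism class of $(\mathcal{A},\psi)$. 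Since the excerpt explicitly flags this as "straightforward to show," the honest proof is simply to record that both functor axioms hold by the computations just indicated, which is why the statement is presented with a \qed and no separate proof body.
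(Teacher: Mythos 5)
Your proposal is correct and matches the paper's own (largely implicit) argument: the paper likewise treats functoriality as bookkeeping, and its composition check proceeds exactly as you describe, by fixing a trivialization $\mathcal{T}$ of $\gamma^{*}\mathcal{G}$, forming $\mathcal{T}' := \mathcal{T}\circ\gamma^{*}\mathcal{A}^{-1}$ and $\mathcal{T}'' := \mathcal{T}'\circ\gamma^{*}\mathcal{A}'^{-1}$, and verifying that the descent cells of \cref{rem:propDes} paste coherently so that $\xi$ of the composite equals the composite of the $\xi$'s. No gap; your flagged coherence point is precisely the content of the diagram the authors chose to suppress.
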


\subsection{Transgression of a trivial bundle gerbe, and reduction to the point}

\label{sec:trivialgerbe}

We analyze the transgression of TBG with a trivial bundle gerbe, i.e. $\mathcal{G}=\mathcal{I}_{\rho}$. We note that  the bundle gerbe modules $\mathcal{E}_i$ are nothing but ordinary hermitian vector bundles $E_i$ with connections over $Q_i$, and the 2-forms $\omega_i$ are determined by
\begin{equation*}
\omega_i  = \rho +  \frac{1}{\mathrm{rk}(E_i)}\mathrm{tr}(\mathrm{curv}(E_i))\text{.}
\end{equation*}
Under transgression, we obtain the following. The line bundle is $\inf L = \C_{\rho_{LM}}$,  the trivial line bundle equipped with the connection 1-form $\rho_{LM} :=\int_{S^1}\ev^{*}\rho$, and the fusion product is just multiplication in the fibers \cite[Lemma 3.6]{waldorf13}. Concerning the vector bundles $\inf R_{ij}$, we claim that we have a canonical, connection-preserving, unitary bundle isomorphism
\begin{equation}
\label{eq:trivgrbiso}
\inf R_{ij} \cong \mathrm{Hom}(\ev_0^{*}E_i, \ev_1^{*}E_j) \otimes \C_{\rho_{PM}}
\end{equation}
over $P_{ij}$, where $\rho_{PM} := \int_{[0,1]}\ev^{*}\rho$. Thus, $\inf R_{ij}$ is globally a bundle of homomorphisms between two vector bundles.

The isomorphism \cref{eq:trivgrbiso} is defined in the following way. We consider a path $\gamma\in P_{ij}$ from $x\in Q_i$ to $y\in Q_j$ and the trivialization $\mathcal{T}:=\id_{\mathcal{I}_0}$ of $\gamma^{*}\mathcal{G}=\gamma^{*}\mathcal{I}_{\rho}=\mathcal{I}_0$. Then, $\Des(\mathcal{E}_i|_x,\mathcal{T}|_0)=E_i|_x$ and $\Des(\mathcal{E}_j|_y,\mathcal{T}|_1)= E_j|_y$, so that  $\inf R_{ij}|_{\gamma}(\id_{\mathcal{I}_0}) = \mathrm{Hom}(E_i|_x,E_j|_y)$. It is straightforward to see that this defines a smooth bundle morphism \cref{eq:trivgrbiso}.
\begin{comment}
In order to see that this identification  is smooth, let $c:U \to P_{ij}$ be  a plot, and let $W \subset U$ be an open subset admitting a bundle isomorphism
\begin{equation*}
\tau: W \times \C^{k} \to \mathrm{Hom}(c^{*}\ev_0^{*}E_i, c^{*}\ev_1^{*}E_j)\text{.} \end{equation*}
Then, $\phi: W \times \C^{k} \to W \times_{P_{ij}} \inf R_{ij}$ defined by $\phi(w,v) := [\id_{\mathcal{I}_0},\tau(w,v)]$ is a local trivialization, and \cref{eq:trivgrbiso} is the identity in this local trivialization.
\end{comment}
Concerning the connection, let $\Gamma:[0,1] \to P_{ij}$ be a path. Considering the trivialization $\mathcal{T}:=\smash{\id_{\mathcal{I}_{(\Gamma^{\vee})^{*}\rho}}}$ of $(\Gamma^{\vee})^{*}\mathcal{I}_{\rho}$, we obtain $\inf V_{\mathcal{T}}=(\ev_0\circ \Gamma)^{*}E_i$ and $\inf W_{\mathcal{T}}=(\ev_1\circ \Gamma)^{*}E_j$. Now, the parallel transport in $\inf R_{ij}$ is
\begin{equation*}
pt_{ij}|_{\Gamma}: \inf R_{ij}|_{\Gamma(0)}(\id_{\mathcal{I}_0}) \to \inf R_{ij}|_{\Gamma(1)}(\id_{\mathcal{I}_0}): \varphi \mapsto  \exp \left (\int_{[0,1]^2}(\Gamma^{\vee})^{*}\rho \right )\cdot pt(\varphi)\text{.}
\end{equation*}
This is precisely the parallel transport in the vector bundle $\mathrm{Hom}(\ev_0^{*}E_i, \ev_1^{*}E_j) \otimes \C_{\rho_{PM}}$.

The fusion representation is trivial. 
\begin{comment}
More precisely, it is a bundle isomorphism
\begin{equation*}
\cup^{*}\C_{\rho_{LM}} \otimes \pr_2^{*}(\mathrm{Hom}(\ev_0^{*}E_i, \ev_1^{*}E_j) \otimes \C_{\rho_{PM}}) \to \pr_1^{*}(\mathrm{Hom}(\ev_0^{*}E_i, \ev_1^{*}E_j) \otimes \C_{\rho_{PM}})
\end{equation*}
whose underlying map is the identity (note that $\pr_2^{*}(\mathrm{Hom}(\ev_0^{*}E_i, \ev_1^{*}E_j)=\pr_1^{*}(\mathrm{Hom}(\ev_0^{*}E_i, \ev_1^{*}E_j)$); this  is connection-preserving since $\cup^{*}\rho_{LM}=\pr_1^{*}\rho_{PM}-\pr_2^{*}\rho_{PM}$. \end{comment}
The lifted  path concatenation is the composition, under the isomorphisms \cref{eq:trivgrbiso}, which is connection-preserving because $\pcomp^{*}\rho_{PM}=\pr_{jk}^{*}\rho_{PM}+\pr_{ij}^{*}\rho_{PM}$ on $P_{jk} \times_{Q_j} P_{ij}$.
The lifted constant path $\epsilon_i$ is the identity section, and the lifted path reversal is given by taking adjoint homomorphisms.

Finally, we consider the yet more trivial case of a target space  $(M,Q)$ with $M=\{\ast\}$ and $Q=\{\ast\}_{i\in I}$. The bundle gerbe $\mathcal{G}$ is trivial, as before, while the twisted vector bundles $\mathcal{E}_i$ give just a family $\{E_i\}_{i\in I}$ of finite-dimensional hermitian vector spaces. In other words,
\begin{equation*}
\tsg^{I}:= \hc 1\tsg(\ast,\{\ast\}_{i\in I}) = \coprod_{i\in I} \HVect {fin}-\text{.}
\end{equation*}
Transgression to the category $\lsg^{(I)}$ described in \cref{lem:LBGstrpt} takes a family $\{E_i\}_{i\in I}$ to the object $(\inf L,\lambda, \inf R,\phi,\chi,\epsilon,\alpha)$ with $\inf L:=\C$, $\lambda := \id$, $\inf R_{ij} := \mathrm{Hom}(E_i,E_j)$, $\phi := \id$, $\chi_{ijk} := \circ$, $\epsilon_i:=\id_{E_i}$, $\alpha_{ij} = ()^{*}$.

\setsecnumdepth{1}

\section{Regression}

\label{sec:regression}

In this section we construct our regression functor $\mathscr{R}_{i_0,x_0,\inf F_0}: \lsg(M,Q) \to \tsg(M,Q)$, depending on three parameters as explained below. We consider a LBG object $(\inf L,\lambda, \inf R,\phi,\chi,\epsilon,\alpha)$.

  We first treat the line bundle $\inf L$ with its superficial connection and the fusion product $\lambda$. The regression of $(\inf L,\lambda)$ is discussed in \cite{waldorf10}.
It requires the choice of a base  point $x_0\in M$. The regressed bundle gerbe $\mathcal{G}:=\mathscr{R}_{x_0}(\inf L,\lambda)$ has the subduction (the diffeological analogue of a surjective submersion) $\ev_1: P_{x_0}M \to M$, where $P_{x_0}M:= \{\gamma\in PM \sep \gamma(0)=x_0\}$. The 2-fold fiber product is equipped with the smooth map $\cup: P_{x_0}M^{[2]} \to LM$, and the hermitian line bundle of $\mathcal{G}$ is $\cup^{*}\inf L$. The bundle gerbe product is the restriction of the fusion product $\lambda$ to $P_{x_0}M^{[3]}$.

In the presence of branes, we need to choose a brane index $i_0\in I$ such that $x_0\in Q_{i_0}$. Over $x_0$ we have the simple algebra $\inf A_0 := \inf A_{i_0}|_{x_0}$ (\cref{co:aisimple}), which is isomorphic to a Matrix algebra $\C^{n_0 \times n_0}$, where $n_0=\sqrt{\mathrm{rk}(\inf R_{i_0i_0})}=\|\epsilon_{i_0}(x_0)\|^2$ by  \cref{re:lsg:rank}. Thus, there exists  an irreducible  $n_0$-dimensional left $\inf A_{0}$-module $\inf F_0$, which we  fix, too. In the following we consider $\inf A_0$ and $\inf F_0$ as bundles of algebras and left $\inf A_0$-modules over the point $x_0$, respectively.

For $i\in I$ we consider the subspace $P_{i} :=\{\gamma\in P_{i_0i} \sep \gamma(0)=x_0\} \subset P_{x_0}M$, over which we find the bundle $\inf R_{i_0i}$ of  $\ev_1^{*}\inf A_{i}$-$\ev_0^{*}\inf A_{0}$-bimodules (\cref{prop:bimodule}). Tensoring over $\inf A_0$ with the bundle $\ev_0^{*}\inf F_0$ of left $\ev_0^{*}\inf A_{0}$-modules, we obtain a bundle $E_i :=  \inf R_{i_0i} \otimes _{\ev_0^{*}\inf A_0} \ev_0^{*}\inf F_0 $ of left $\ev_1^{*}\inf A_i$-modules over $P_i$. We forget the left $\ev_1^{*}\inf A_i$-action and consider $E_i$ as a hermitian vector bundle. For later purpose, we note the following result.

\begin{lemma}
\label{lem:rkEi}
We have $\mathrm{rk}(E_i)=\sqrt{\mathrm{rk}(\inf R_{ii})}$ for all $i\in I$. 
\end{lemma}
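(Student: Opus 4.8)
The plan is to compute the rank of $E_i$ directly from its definition $E_i = \inf R_{i_0i} \otimes_{\ev_0^{*}\inf A_0} \ev_0^{*}\inf F_0$ by evaluating the dimension of the fiber over a single path $\gamma \in P_i$ and using that the algebras, modules, and bimodules involved all have constant ranks. Since $\inf A_0 = \inf A_{i_0}|_{x_0} \cong \C^{n_0 \times n_0}$ with $n_0 = \sqrt{\mathrm{rk}(\inf R_{i_0 i_0})}$ (by \cref{re:lsg:rank}, as recalled just before the statement), and $\inf F_0$ is the irreducible left $\inf A_0$-module of dimension $n_0$, the computation reduces to a purely linear-algebraic dimension count for a tensor product of a bimodule with a module over a matrix algebra.

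First I would fix $\gamma \in P_i$ with $x := \gamma(0) = x_0 \in Q_{i_0}$ and $y := \gamma(1) \in Q_i$, and consider the fiber
\begin{equation*}
E_i|_{\gamma} = \inf R_{i_0 i}|_{\gamma} \otimes_{\inf A_0} \inf F_0,
\end{equation*}
where $\inf R_{i_0 i}|_{\gamma}$ is a right $\inf A_0$-module (via $\rho_{i_0 i}$, forgetting the commuting left $\inf A_i|_y$-action from \cref{prop:bimodule}) and $\inf F_0$ is a left $\inf A_0$-module. The key structural input is \cref{th:main:3} / \cref{prop:bimodule}: $\inf R_{i_0 i}|_{\gamma}$ is an $\inf A_i|_y$-$\inf A_0$-bimodule, and by \cref{lem:morita} (the Corollary on Morita equivalence) it is an invertible bimodule, hence in particular faithfully balanced. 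Over the matrix algebra $\inf A_0 \cong \C^{n_0\times n_0}$, every right module is free, so $\inf R_{i_0i}|_{\gamma}$ is a free right $\inf A_0$-module of some rank $m$, i.e. $\inf R_{i_0 i}|_{\gamma} \cong (\inf A_0)^{m}$ as right $\inf A_0$-modules.

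Next I would determine $m$ by comparing complex dimensions: $\dim_{\C}\inf R_{i_0i}|_{\gamma} = \mathrm{rk}(\inf R_{i_0 i})$, while $\dim_{\C}(\inf A_0)^{m} = m\, n_0^2$, so $m = \mathrm{rk}(\inf R_{i_0 i})/n_0^2$. Tensoring over $\inf A_0$ with the irreducible module $\inf F_0$ (of dimension $n_0$) gives
\begin{equation*}
\dim_{\C} E_i|_{\gamma} = m \cdot \dim_{\C}(\inf A_0 \otimes_{\inf A_0}\inf F_0) = m \cdot n_0 = \frac{\mathrm{rk}(\inf R_{i_0 i})}{n_0}.
\end{equation*}
It then remains to simplify $\mathrm{rk}(\inf R_{i_0 i})/n_0$ using the rank relation established in \cref{re:lsg:rank2}, namely $\mathrm{rk}(\inf R_{ij})^2 = \mathrm{rk}(\inf R_{ii})\cdot \mathrm{rk}(\inf R_{jj})$. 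Applying this with indices $(i_0, i)$ gives $\mathrm{rk}(\inf R_{i_0 i}) = \sqrt{\mathrm{rk}(\inf R_{i_0 i_0})}\cdot \sqrt{\mathrm{rk}(\inf R_{ii})} = n_0\sqrt{\mathrm{rk}(\inf R_{ii})}$, whence $\dim_{\C}E_i|_{\gamma} = \sqrt{\mathrm{rk}(\inf R_{ii})}$, which is the claim.

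The only subtlety, and the step I would present most carefully, is that this fiberwise dimension is the \emph{rank} of $E_i$ as a diffeological vector bundle over $P_i$ — that is, that $E_i$ is locally free of the asserted constant rank. This is not a serious obstacle: the relevant local triviality follows from the machinery already set up for bimodule and module bundles (the results on $\mathcal{A}$-module and bimodule bundles in the appendix, cf.\ \cref{re:bimod:a} and the references invoked in \cref{prop:bimodule}), since a tensor product over a bundle of simple algebras of a bundle of bimodules with a fixed module is again a bundle of the expected constant rank. Because all the ranks entering the count ($\mathrm{rk}(\inf R_{i_0 i})$, $n_0$, $\mathrm{rk}(\inf R_{ii})$) are constant on the connected target space, the fiber dimension computed above is independent of $\gamma$, so $E_i$ has the stated constant rank and the proof is complete.
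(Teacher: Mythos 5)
Your overall strategy --- a direct fiberwise dimension count for $\inf R_{i_0 i}|_{\gamma} \otimes_{\inf A_0} \inf F_0$ followed by the rank relation of \cref{re:lsg:rank2} --- is genuinely different from the paper's and can be made to work, but one step fails as stated: it is \emph{not} true that every right module over the matrix algebra $\inf A_0 \cong \C^{n_0\times n_0}$ is free. Right modules over $\C^{n_0\times n_0}$ are direct sums of copies of the unique simple right module $S\cong\C^{n_0}$ (row vectors), and a sum of $k$ copies is free only when $n_0$ divides $k$. This failure is not hypothetical in your situation: by \cref{re:lsg:rank2} one has $\mathrm{rk}(\inf R_{i_0i}) = n_0\sqrt{\mathrm{rk}(\inf R_{ii})}$, so the fiber $\inf R_{i_0i}|_{\gamma}$ consists of $n := \sqrt{\mathrm{rk}(\inf R_{ii})}$ copies of $S$, and $n$ need not be divisible by $n_0$; the invertible $\C^{3\times3}$-$\C^{2\times2}$-bimodule $\C^{3\times2}$ is an explicit invertible (hence faithfully balanced) bimodule that is not free as a right module. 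Consequently your ``rank'' $m = \mathrm{rk}(\inf R_{i_0i})/n_0^2 = n/n_0$ need not even be an integer, so the freeness step, and the identification $\inf R_{i_0i}|_{\gamma}\cong (\inf A_0)^{m}$, cannot be used.

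The repair is small: replace freeness by semisimplicity. Writing $\inf R_{i_0i}|_{\gamma} \cong S^{\oplus k}$ as a right $\inf A_0$-module with $k = \mathrm{rk}(\inf R_{i_0i})/n_0$, and using $S \otimes_{\inf A_0}\inf F_0 \cong \C$, you get
\begin{equation*}
\dim_{\C} E_i|_{\gamma} = k = \frac{\mathrm{rk}(\inf R_{i_0 i})}{n_0} = \sqrt{\mathrm{rk}(\inf R_{ii})}\text{,}
\end{equation*}
which is the claim; the rest of your argument (constancy of the rank and local triviality via the bimodule bundle machinery) is fine. For comparison, the paper's proof avoids any analysis of the module structure over $\inf A_0$: it observes that $E_i|_{\gamma}$ is the image of the irreducible module $\inf F_0$ under the Morita equivalence implemented by $\inf R_{i_0i}|_{\gamma}$ (\cref{lem:morita}), that Morita functors preserve irreducibility, and hence that $E_i|_{\gamma}$ is the unique irreducible module of the simple algebra $\inf A_i|_{\gamma(1)}$, whose dimension is $\sqrt{\mathrm{rk}(\inf R_{ii})}$ by \cref{co:aisimple} and \cref{re:lsg:rank}. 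That route needs neither the structure theory of modules over $\inf A_0$ nor the rank relation of \cref{re:lsg:rank2}, whereas your (corrected) count buys a completely explicit, elementary computation at the cost of invoking that relation.
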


\begin{proof}
By \cref{lem:morita}, $\inf R_{i_0i}|_{\gamma}$ establishes, for any $\gamma\in P_{ij}$ with $\gamma(1)=:x$, a Morita equivalence between $\inf A_0$ and $\inf A_i|_x$. The module $E_i|_{\gamma}$ is the image of $\inf F_0$ under the corresponding functor between representation categories. Irreducibility is preserved under Morita equivalence; hence, $E_i|_{\gamma}$  is an irreducible module of the algebra $\inf A_i|_x$, which is a simple algebra (\cref{co:aisimple}) and has dimension $\mathrm{rk}(\inf R_{ii})$ (\cref{re:lsg:rank}). Matrix algebras have only one irreducible module up to isomorphism, namely the standard one; thus it has dimension $\sqrt{\mathrm{rk}(\inf R_{ii})}$.
\end{proof}

As bundles over a point,  $\inf F_0$ and $\inf A_0$ are equipped with trivial connections, for which the module action $\inf F_0 \otimes \inf A_0 \to \inf F_0$ is connection-preserving. Since  the right module action $\inf R_{i_0i} \otimes \ev_0^{*}\inf A_0\to \inf R_{i_0i}$ is connection-preserving (\cref{lem:actionsconnectionpreserving}), it follows that $E_i$ comes equipped with a connection, see \cref{re:bimodulebundle:connections}.

The fusion representation $\phi$ induces a connection-preserving, unitary  bundle isomorphism
\begin{equation*}
\zeta_i:\cup^{*}\inf L \otimes \pr_2^{*}E_i \to \pr_1^{*}E_i\text{.}
\end{equation*} 
over $P_i^{[2]}$; explicitly, it is induced by  
\begin{equation*}
\alxydim{@C=1.4cm}{ \cup^{*}\inf L \otimes \pr_2^{*}\inf R_{i_0i} \otimes\ev_0^{*}\inf F_0  \ar[r]^-{ \rho_{i_0i} \otimes \id} & \pr_1^{*}\inf R_{i_0i} \otimes  \ev_0^{*}\inf F_0\text{.}}
\end{equation*}
Since the fusion representation intertwines the algebra actions  (\cref{lem:fusionandbimodule}), this is well-defined under taking the quotient by the $\ev_0^{*}\inf A_0$-action. The compatibility between fusion representation and the fusion product \cref{eq:lsg:fusionass} ensures that the bundle isomorphism $\zeta_i$ satisfies the condition for making $\mathcal{E}_i:=(E_i,\zeta_i)$ a $\mathcal{G}|_{Q_i}$-module.
\begin{comment}
Its 2-form $\omega_i\in \Omega^2(Q_i)$ is in fact determined by the condition that its pullback to $P_i$ is
\begin{equation*}
\ev^{*}\omega_i=B + \frac{1}{\sqrt{\inf R_{ii}}}\mathrm{curv}(E_i)\text{,}
\end{equation*}
where $B$ is the curving of $\mathcal{G}$.
\end{comment}
This  completes the definition of our regression functor on the level of objects.

Next, we consider an LBG morphism $(\varphi,\xi)$ between LBG objects $(\inf L,\lambda, \inf R,\phi,\chi,\epsilon,\alpha)$ and $(\inf L',\lambda', \inf R',\phi',\chi',\epsilon',\alpha')$. 
The regression of  $\varphi:\inf L \to \inf L'$ to a 1-morphism $\mathcal{A}:\mathcal{G} \to \mathcal{G}'$ in $\ugrbcon M$ is discussed in \cite{waldorf10}; it is a \quot{refinement} consisting simply of the bundle morphism $\cup^{*}\varphi:\cup^{*}\inf L \to \cup^{*}\inf L'$. The bundle isomorphisms $\xi_{ij}:\inf R_{ij} \to \inf R_{ij}'$ induce in the obvious way isomorphisms of the corresponding  bundles of algebras and bimodules. In particular, we obtain an algebra isomorphism $\xi_0:\inf A_{0} \to \inf A'_{0}$. Again, since the modules $\inf F_0$ and $\inf F_0'$ are irreducible modules over isomorphic matrix algebras, they must be isomorphic, and we can choose an intertwiner $f_0: \inf F_0 \to \inf F_0'$.  Then, the  bundle isomorphism \begin{equation*}
\xi_{i_0i} \otimes \ev_0^{*}f_0:\inf R_{i_0i} \otimes \ev_0^{*}\inf F_0 \to \inf R_{i_0i}' \otimes \ev_0^{*}\inf F_0
\end{equation*}
induces a connection-preserving, unitary bundle isomorphism $\tilde\psi_i:E_i \to E_i'$, since $\xi_{i_0i}$ and $f_0$ intertwine the $\inf A_0$ and $\inf A_0'$ actions. The diagram  
\begin{equation}
\label{eq:regfunc}
\small
\alxydim{@R=3em}{\cup^{*}\inf L \otimes \pr_2^{*}E_i \ar[d]_{\cup^{*}\varphi \otimes \pr_2^{*}\tilde\psi_i} \ar[r]^-{\epsilon_i} & \pr_1^{*}E_i \ar[d]^{\pr_1^{*}\tilde\psi_i} \\ \cup^{*}\inf L' \otimes \pr_2^{*}E_i' \ar[r]_-{\epsilon_i'} & \pr_1^{*}E_i'}
\end{equation}
is commutative, because $\xi_{i_0i}$ commutes with the fusion representation. 
This means that $\tilde\psi_i$ gives rise to a 2-isomorphism $\psi_i: \mathcal{E}_i \Rightarrow \mathcal{E}_i'\circ \mathcal{A}$. 
Thus, forming the collection $\psi=\{\psi_i\}_{i\in I}$,  the pair $(\mathcal{A},\psi)$ is a TBG 1-morphism. It  depends on the (non-canonical) choice of the intertwiner $f_0$; however, we have the following.

\begin{lemma}
The 2-isomorphism class of $(\mathcal{A},\psi)$ is independent of the choice of  $f_0$.
\end{lemma}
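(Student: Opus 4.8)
The plan is to show that two different choices of intertwiner yield 2-isomorphic TBG 1-morphisms by producing an explicit 2-isomorphism, and then to verify it is compatible with the $\psi_i$. Suppose $f_0, f_0': \inf F_0 \to \inf F_0'$ are two intertwiners of $\inf A_0$-modules. Since $\inf F_0$ is an irreducible module over the simple algebra $\inf A_0 \cong \C^{n_0\times n_0}$ (using \cref{co:aisimple} and \cref{re:lsg:rank}), Schur's lemma applies: the composite $f_0'^{-1}\circ f_0$ is an $\inf A_0$-module endomorphism of the irreducible module $\inf F_0$, hence a scalar $c\in \C^{\times}$. Because both $f_0$ and $f_0'$ are unitary (they are isometric intertwiners between the hermitian modules $\inf F_0$ and $\inf F_0'$), this scalar must have $|c|=1$, so $c\in \ueins$. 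Thus $f_0' = c^{-1} f_0$ for a constant $c\in\ueins$.

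First I would trace through how the scalar $c$ propagates. The two choices $f_0, f_0'$ produce two bundle isomorphisms $\tilde\psi_i, \tilde\psi_i': E_i \to E_i'$ via $\xi_{i_0i}\otimes \ev_0^{*}f_0$ and $\xi_{i_0i}\otimes \ev_0^{*}f_0'$, respectively, after descent along the quotient by the $\inf A_0$-action. Since $f_0' = c^{-1} f_0$ and $c$ is a constant, the two induced bundle isomorphisms differ by the same global constant: $\tilde\psi_i' = c^{-1}\tilde\psi_i$. This is a connection-preserving, unitary automorphism of the trivial-connection type, i.e. multiplication by the constant $c^{-1}\in\ueins$ on each $E_i$ defines a 2-isomorphism $\varphi_{\mathcal{A}}: \mathcal{A} \Rightarrow \mathcal{A}$ that is simply multiplication by $c^{-1}$. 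More precisely, since $\mathcal{A}$ is built from $\cup^{*}\varphi$ with no dependence on $f_0$, the two morphisms $(\mathcal{A},\psi)$ and $(\mathcal{A},\psi')$ share the same underlying 1-isomorphism $\mathcal{A}$, and I would exhibit the 2-morphism between them as $\id_{\mathcal{A}}$ scaled appropriately, realized on the bundle level by the constant $c^{-1}$.

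The key step is then to verify that this constant rescaling is a legitimate TBG 2-isomorphism between $(\mathcal{A},\psi)$ and $(\mathcal{A},\psi')$, i.e. that it satisfies the commuting-triangle condition from the definition of 2-morphisms in $\tsg(M,Q)$ given in \cref{sec:targetspacegeometry}. Concretely, I need the diagram relating $\psi_i$, $\psi_i'$, and $\id_{\mathcal{E}_i'}\circ \varphi_{\mathcal{A}}$ to commute, where $\varphi_{\mathcal{A}}=c^{-1}\cdot\id_{\mathcal{A}}$. Because $\psi_i$ is determined by $\tilde\psi_i$ through the descent construction and $\tilde\psi_i' = c^{-1}\tilde\psi_i$, the compatibility reduces to the identity $\psi_i' = (\id_{\mathcal{E}_i'}\circ \varphi_{\mathcal{A}})\bullet \psi_i$, which holds because inserting the global scalar $c^{-1}$ commutes through all the $\Des$ and $\mathrm{Hom}$ constructions (these are $\C$-linear). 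I would note that $\varphi_{\mathcal{A}}$ is genuinely a 2-isomorphism of $\mathcal{A}$ with itself, using that constant-scalar automorphisms of a 1-isomorphism in $\ugrbcon M$ are exactly the global $\ueins$-valued functions, here the constant $c^{-1}$.

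The main obstacle is the bookkeeping of where the scalar $c$ enters after the descent along the tensor product over $\inf A_0$: one must check that $\tilde\psi_i' = c^{-1}\tilde\psi_i$ survives this quotient intact as a global constant, rather than producing a position-dependent factor. This is handled by observing that $c$ is constant over the single point $x_0$ (the module $\inf F_0$ lives over the point), and the $\ev_0$-pullback of a constant remains constant on $P_i$; hence tensoring $\xi_{i_0i}\otimes \ev_0^{*}(c^{-1}f_0) = c^{-1}(\xi_{i_0i}\otimes \ev_0^{*}f_0)$ carries the scalar through the descent unchanged. Once this is established, the commutativity of diagram \labelcref{eq:regfunc} for both choices, combined with the scalar relation, yields the desired 2-isomorphism, completing the proof.
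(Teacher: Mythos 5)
Your proposal is correct and follows essentially the same route as the paper's proof: Schur's lemma plus compatibility with the hermitian metrics forces the two intertwiners to differ by a constant in $\ueins$, this constant induces a scalar 2-automorphism of $\mathcal{A}$ (an automorphism of its underlying line bundle), and the TBG triangle identity $\psi_i' = (\id_{\mathcal{E}_i'}\circ\tilde\varphi)\bullet\psi_i$ then holds because the global scalar passes through the $\C$-linear descent constructions. Your extra bookkeeping that the scalar survives the quotient over $\inf A_0$ is left implicit in the paper but is the same argument.
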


\begin{proof}
If $f_0$ and $f_0'$ are two choices, then, by Schur's lemma and because both respect hermitian metrics, they differ by a number $z\in \ueins$, say $f_0'=f_0\cdot z$. This number determines a 2-isomorphism $\tilde\varphi: \mathcal{A}\Rightarrow \mathcal{A}$, by inducing it via the automorphism $z\cdot \id$ of $\cup^{*}\inf L'$.  Let $\tilde \psi_i$ and $\tilde \psi'_i$ be the vector bundle morphisms determined by $f_0$ and $f_0'$, respectively, so that $\tilde\psi_i' = \tilde\psi_i \cdot z$. 
Then, we have $\psi_i' = (\id_{\mathcal{E}_i'} \circ \tilde \varphi) \bullet \psi_i$; this shows that $\tilde\varphi$ is the required 2-isomorphism. 
\begin{comment}
In other words,  the diagram
\begin{equation*}
\alxydim{}{\pr_1^{*}E_i \ar[dr]_{\psi_i} \ar[rr]^-{\psi_i'} && \cup^{*}\inf L' \otimes \pr_2^{*}E_i' \\ & \cup^{*}\inf L' \otimes \pr_2^{*}E'_i \ar[ur]_{\tilde \varphi \otimes \id} &} 
\end{equation*}
\end{comment}
\end{proof}

We have thus provided the data for our regression functor $\mathscr{R}_{i_0,x_0,\inf F_0}$. The fact that it preserves the composition is easy to see by choosing, for the composed morphism, the composition of the intertwiners $f_0$ of the separate morphisms. 

\begin{remark}
Up to canonical natural isomorphism, the regression functor is independent of all choices. This can be seen either manually, or it can be deduced from \cref{th:main}, which says that it is inverse to one fixed functor, the transgression functor. 
\end{remark}

\setsecnumdepth{2}

\section{Equivalence of target space and loop space perspectives}

\label{sec:equivalence}

In this section we prove our main results: \cref{th:main:2:2} in \cref{sec:coincidence}, and \cref{th:main} in \cref{sec:regaftertr,sec:traftereg}.
Throughout this section, we fix a target space $(M,Q)$, with $Q=\{Q_i\}_{i\in I}$.

\subsection{Coincidence of the algebra bundles}

\label{sec:coincidence}

We recall that TBG and LBG independently induce   algebra bundles over the branes (\cref{re:algebrabundle,sec:algebrabundle:lsg}). We show that they coincide under transgression and regression.

We start with  a TBG object $(\mathcal{G},\mathcal{E})$. Let $\inf R_{ij}$ be the transgressed vector bundle over $P_{ij}$, and let $\inf A_i = \cp^{*}\inf R_{ii}$ be the induced  algebra bundle over $Q_i$. We let $\pi:Y \to M$ be the surjective submersion of the bundle gerbe $\mathcal{G}$. For a point $y\in Y$ with $x := \pi(y)\in Q_i$, we denote by $\mathcal{T}_y$ the corresponding trivialization of $\mathcal{G}|_{x}$, see \cref{rem:sectionsandtrivializations}. Applying that remark to $s=\cp_y$ as a section of $\pi:Y \to M$ along $\{x\} \to M$,  we obtain a canonical vector bundle isomorphism $\varphi_y: \Des(\mathcal{E}_i|_x,\mathcal{T}_y) \to E_i|_y$.
It is straightforward to see that
\begin{equation*}
\psi_y:\mathrm{End}(E_i)|_y \to \inf A_i|_{x}  : \varphi \mapsto (\cp_{x}^{*}\mathcal{T}_y,\varphi_y^{-1}\circ \varphi \circ \varphi_y)
\end{equation*}
induces a smooth, connection-preserving vector bundle isomorphism $\psi: \mathrm{End}(E_i) \to \pi^{*}\inf A_i$ over $\pi^{-1}(Q_i)$ that preserves the fiber-wise algebra structures. If $y'\in Y$ is another point with $\pi(y')=x$, we choose $\ell \in L|_{y,y'}$ and obtain a 2-isomorphism $\psi: \mathcal{T}_y \Rightarrow \mathcal{T}_{y'}$ in such a way that the diagram
\begin{equation}
\label{eq:algbun:diag}
\small
\alxydim{@R=3em}{\Des(\mathcal{E}_i|_x,\mathcal{T}_{y'}) \ar[d]_{\Des(\psi,\id)} \ar[r]^-{\varphi_{y'}} & E_i|_{y'} \ar[d]^{\zeta_i(\ell \otimes -)} \\ \Des(\mathcal{E}_i|_x,\mathcal{T}_y) \ar[r]_-{\varphi_{y}} & E_i|_y\text{.}}
\end{equation}
is commutative (see again \cref{rem:sectionsandtrivializations}).
By \cref{eq:algbun:diag}, the bundle morphism 
\begin{equation}
\label{eq:algbundesc}
\pr_1^{*}\psi^{-1} \circ \pr_2^{*}\psi: \pr_2^{*}\mathrm{End}(E_i) \to \pr_1^{*}\mathrm{End}(E_i)
\end{equation}
over $Y^{[2]}$ is given  at $(y,y')$  by $\varphi \mapsto \zeta_i(\ell \otimes -)\circ \varphi \circ \zeta_i(\ell\otimes -)^{-1}$. 
\begin{comment}
Indeed, 
\begin{align*}
\varphi &\mapsto (\cp_{x}^{*}\mathcal{T}_{y'},\varphi_{y'}^{-1}\circ \varphi \circ \varphi_{y'}) 
\\&\qquad= (\cp_{x}^{*}\mathcal{T}_y, \Des(\id,\psi)\circ \varphi_{y'}^{-1}\circ \varphi \circ \varphi_{y'}\circ \Des(\id,\psi)^{-1}) 
\\&\qquad= (\cp_{x}^{*}\mathcal{T}_y,  \varphi_{y}^{-1} \circ \zeta_i(\ell \otimes -)\circ \varphi \circ \zeta_i(\ell\otimes -)^{-1}\circ \varphi_y)
\\&\mapsto \zeta_i(\ell \otimes -)\circ \varphi \circ \zeta_i(\ell\otimes -)^{-1}\text{.} 
\end{align*}
\end{comment}
One can check that this is independent of $\ell$; in fact, under the isomorphism $\mathrm{End}(E_i)=E_i^{*}\otimes E_i$,  it is precisely the descent isomorphism $\tilde \zeta_i$ that defines the bundle $\mathrm{End}(\mathcal{E}_i)$.    \begin{comment} 
For $\varphi=\varphi_{\lambda \otimes v}$ under the identification of $\mathrm{End}(E_i)=E_i^{*} \otimes E_i$  we have
\begin{equation*}
(\zeta_i(\ell \otimes -)\circ \varphi_{\lambda\otimes v} \circ \zeta_i(\ell\otimes -)^{-1})= \lambda(\zeta_i(\ell\otimes -)^{-1}) \otimes \zeta_i(\ell \otimes v)=(\zeta_i^{tr-1} \otimes \zeta_i) \circ (\id \otimes b \otimes \id) 
\end{equation*} 
This shows that $\pr_1^{*}\psi^{-1} \circ \pr_2^{*}\psi$ is the descent isomorphism of $\mathrm{End}(E_i)$ that was used in order to define the algebra bundle $A_i$. \end{comment}
This proves that $\psi$ descends to an  isomorphism $\mathrm{End}(\mathcal{E}_i) \cong \inf A_i$ of  algebra bundles over $Q_i$.

 Now, consider an  LBG object $(\inf L,\lambda, \inf R,\phi,\chi,\epsilon,\alpha)$, and the corresponding regression $(\mathcal{G},\mathcal{E})$. We recall from \cref{sec:regression} that $E_i =  \inf R_{i_0i} \otimes _{\ev_0^{*}\inf A_0} \ev_0^{*}\inf F_0$, which is a bundle of left $\ev_1^{*}\inf A_i$-modules over $P_i$. In particular, we have a  homomorphism of  algebra bundles
\begin{equation*}
\ev_1^{*}\inf A_i \to \mathrm{End}(E_i)\text{. }
\end{equation*}
We remark that $E_i$ is the composition of two Morita equivalences: $\inf R_{i_0i}$ is a Morita equivalence by \cref{lem:morita}, and, since it is an irreducible module, $\inf F_0$ is a Morita equivalence between $\inf A_0$ and $\C$. Hence, $E_i$  is a Morita equivalence; in particular it is faithfully balanced, which implies that the above map is an isomorphism. 
The  bundle  of algebras $\mathrm{End}(\mathcal{E}_i)$ of TBG is now obtained via descent of $\mathrm{End}(E_i)$ along the path evaluation $\ev_1: P_i\to Q_i$. Using that the bundle morphism $\zeta_i$ is defined from the fusion representation (see \cref{sec:regression}) and that the fusion representation commutes with the action of $\inf A_i$ (\cref{lem:fusionandbimodule}) one can show that the diagram
\begin{equation*}
\small
\alxydim{@R=3em}{&\ev_1^{*}\inf A_i \ar[dl] \ar[dr] & \\  \pr_2^{*}\mathrm{End}(E_i) \ar[rr]_{\tilde\zeta_i} &&\pr_1^{*}\mathrm{End}(E_i)}
\end{equation*} 
is commutative.  
\begin{comment}
The descent isomorphism is for $(\gamma_1,\gamma_2)\in P_{x_0}M^{[2]}$ given by
\begin{equation*}
\alxydim{@C=-6em}{\mathrm{End}(E_i)|_{\gamma_2} \cong & E_i^{*}|_{\gamma_2} \otimes \inf L^{*}_{\gamma_1\cup\gamma_2} \otimes \inf L_{\gamma_1\cup\gamma_2} \otimes E_i|_{\gamma_2} \ar[d] \\ &  ( \inf L_{\gamma_1\cup\gamma_2}  \otimes  \inf R_{i_0i}|_{\gamma_2})^{*}\otimes_{\inf A_{x_0}} \inf F_0^{*}\otimes \inf F_0 \otimes_{\inf A_{x_0}}\otimes ( \inf L_{\gamma_1\cup\gamma_2}  \otimes  \inf R_{i_0i}|_{\gamma_2}) \ar[d]^-{  \phi_{_{i_0i}}^{tr-1}|_{\gamma_1,\gamma_2} \otimes \id \otimes \id \otimes  \phi_{_{i_0i}|_{\gamma_1,\gamma_2}}} & \\  &   \inf R_{i_0i}|_{\gamma_{1}}\otimes_{\inf A_{x_0}} \inf F_0^{*}\otimes \inf F_0 \otimes_{\inf A_{x_0}}\otimes  \inf R_{i_0i}|_{\gamma_1} \ar[d]  \\ & E_i^{*}|_{\gamma_1} \otimes E_i|_{\gamma_1}  &\cong \mathrm{End}(E_i)|_{\gamma_1}\text{,}}
\end{equation*} 
\end{comment}
Thus, we obtain an algebra bundle isomorphism $\inf A_i \cong \mathrm{End}(\mathcal{E}_i)$.

\subsection{Regression after Transgression}

\label{sec:regaftertr}

In this section we provide a natural isomorphism $\mathscr{R}_{i_0,x_0,\inf F_0} \circ \mathscr{T}\cong \id_{\tsg(M,Q)}$, contributing one half of the equivalence $\tsg(M,Q)\cong \lsg(M,Q)$.
Thus, we construct for each TBG object $(\mathcal{G},\mathcal{E})$ a 1-morphism $(\mathcal{A},\psi): (\mathscr{R}_{i_0,x_0,\inf F_0} \circ \mathscr{T})(\mathcal{G},\mathcal{E}) \to (\mathcal{G},\mathcal{E})$, and show that this is natural with respect to TBG morphisms.

We consider a TBG object $(\mathcal{G},\mathcal{E})$,  form the transgressed LBG object $(\inf L,\lambda, \inf R,\phi,\chi,\epsilon,\alpha)$, and consider the regressed bundle gerbe  $\mathscr{R}_{x_0}(\inf L,\lambda)$. 
The bundle gerbe $\mathcal{G}$ consists of a surjective submersion $\pi:Y \to M$, a hermitian line bundle $P$ over $Y^{[2]}$, and a bundle gerbe product $\mu$.
A 1-isomorphism 
$\mathcal{A}:\mathscr{R}_{x_0}(\inf L,\lambda)  \to \mathcal{G}$
was constructed in \cite[Lemma 6.1.1]{waldorf10}
over the fiber product $Z:= P_{x_0}M \times_M Y$. We review and slightly reformulate its construction in the following. The 1-isomorphism $\mathcal{A}$ consists   of a hermitian line bundle $K$ with connection over $Z$. Its construction involves the choice of a point $y_0\in Y$ with $\pi(y_0)=x_0$. 
\begin{comment}
Such a point induces a trivialization $\mathcal{T}_0: \mathcal{G}|_{x_0} \to \mathcal{I}_0$.   \end{comment}
Let $\mathcal{T}: \gamma^{*}\mathcal{G} \to \mathcal{I}_0$ be a trivialization of $\mathcal{G}$ along $\gamma\in P_{x_0}M$, consisting of a hermitian line bundle $T$ over $Y_{\gamma} :=[0,1] \ttimes{\gamma}{\pi} Y$ and a bundle isomorphism $\tau$ over $Y_{\gamma}^{[2]}$. We define the complex inner product space
\begin{equation}
\label{eq:defK}
K|_{\gamma,y}(\mathcal{T}) := T|_{(0,y_0)} \otimes T|_{(1,y)}^{*} \text{.}
\end{equation} 
If $\psi: \mathcal{T} \Rightarrow \mathcal{T}'$ is a 2-isomorphism in $\ugrbcon{[0,1]}$, then it is given by a bundle isomorphism $\psi: T \to T'$ over $Y_{\gamma}$, and hence induces an obvious isomorphism $K|_{\gamma,y}(\mathcal{T}) \cong K|_{\gamma,y}(\mathcal{T}')$. We let $K|_{\gamma,y}$ be the set of equivalence classes of pairs $(\mathcal{T},s\otimes \sigma)$ with a trivialization $\mathcal{T}$ and $s\otimes\sigma \in K|_{\gamma,y}(\mathcal{T})$.

The disjoint union of all these fibers define the total space of $K$. It can be equipped with a diffeology such that it becomes a hermitian  line bundle. Further, one can define a connection on $K$ in a canonical way.
\begin{comment}
The diffeology on $K$ is the obvious one: locally, a map $\tilde c: U \to K$ is a plot if it lifts a plot $c: U \to Z$, and there exists a trivialization $\mathcal{T}:(c^{\vee})^{*}\mathcal{G} \to \mathcal{I}_{\rho}$ and smooth sections $s_0:U \to T$ along  $x \mapsto (\ev_0 \times \id)(c(x))$ and $\sigma_0: U \to\ T^{*}$ along $x \mapsto (\ev_1 \times \id)(c(x))$ such that $\tilde c(x)=[\iota_x^{*}\mathcal{T},s_0(x)\otimes \sigma_0(x)]$. 

The connection on $K$ is defined as follows. If $(\Gamma,\eta): [0,1] \to Z$ is a path, let $\mathcal{T}:(\Gamma^{\vee})^{*}\mathcal{G} \to \mathcal{I}_{\rho}$ be a trivialization. Then, we define the connection by
\begin{equation*}
pt_{(\Gamma,\eta)}: K|_{\Gamma(0),\eta(0)} (i_0^{*}\mathcal{T}) \to K|_{\Gamma(1),\eta(1)} (i_1^{*}\mathcal{T}): (s\otimes \sigma) \mapsto  \left ( \exp\int_{[0,1]^2} \rho\right ) \cdot (s\otimes pt^{*}(\sigma))\text{,}
\end{equation*} 
where $pt^{*}$ is the parallel transport in $T^{*}$ along the path $t \mapsto (1,t,\eta(t)) \in [0,1]^2 \times_M Y$. It is proved in \cite[Lemmata 6.1.2]{waldorf10} that this defines a connection. 
\end{comment}
It remains to provide the bundle isomorphism over $Z^{[2]}$ which is part of the 1-isomorphism $\mathcal{A}$. Here, 
\begin{equation*}
\kappa|_{(\gamma_1,y_1),(\gamma_2,y_2)} : \inf L_{\gamma_1 \cup \gamma_2} \otimes K|_{\gamma_2,y_2} \to K|_{\gamma_1,y_1} \otimes P_{y_1,y_2}  
\end{equation*}
is defined as follows.  We choose a trivialization $\mathcal{T}: (\gamma_1 \cup \gamma_2)^{*}\mathcal{G} \to \mathcal{I}_0$ and consider the induced trivializations $\mathcal{T}_1:\gamma_1^{*}\mathcal{G} \to \mathcal{I}_0$ and $\mathcal{T}_2:\gamma_2^{*}\mathcal{G} \to \mathcal{I}_0$. Let $\tau$ be the bundle isomorphism of the trivialization $\mathcal{T}$. 
\begin{comment}
Thus,
\begin{equation*}
\tau^{tr-1}|_{(1,y_1),(1,y_2)}: T|_{(1,y_2)}^{*} \to T|_{(1,y_1)}^{*}\otimes P_{y_1,y_2} \text{.}
\end{equation*} 
\end{comment}
We define for $s_0\in T|_{(0,y_0)}$ and $\sigma\in T|_{(1,y_2)}^{*}$:
\begin{equation*}
\kappa|_{(\gamma_1,y_1),(\gamma_2,y_2)}(( \mathcal{T},z) \otimes (\mathcal{T}_2,s_0\otimes  \sigma )) := z\cdot (\mathcal{T}_1,s_0 \otimes \tau^{tr-1}|_{(1,y_1),(1,y_2)}(\sigma))\in K|_{\gamma_1,y_1}(\mathcal{T}_1) \otimes P_{y_1,y_2}   \text{.}
\end{equation*}
It is proved in \cite[Lemmata 6.1.1 \& 6.1.3]{waldorf10} that $\kappa$ defines a connection-preserving unitary bundle isomorphism. This finishes the definition of the 1-isomorphism $\mathcal{A}:\mathscr{R}_{x_0}(\inf L,\lambda)  \to \mathcal{G}$.

Next, we construct, for each brane index $i\in I$, a 2-isomorphism 
$\varphi_i: \mathcal{E}_i \circ \mathcal{A}|_{Q_i} \Rightarrow \mathcal{E}_i'$, 
where $\mathcal{E}_i'$ is the regressed bundle gerbe module. The inverses $\varphi:=\{\varphi_i^{-1}\}_{i\in I}$ will then complete $\mathcal{A}$ into a TBG morphism $(\mathcal{A},\varphi)$.  
We set $Z_i := Z|_{Q_i}$ and $K_i := K|_{Z_i}$. The 2-isomorphism $\varphi_i$ consists of a  bundle isomorphism 
$\varphi_i: K_i \otimes \pr_1^{*}E_i  \to \pr_2^{*}E_i'$
over $Z_i$, where $E_i'$ is the vector bundle of the regressed module $\mathcal{E}_i'$. Over a point $(y,\gamma) \in Z_i$ with $\pi(y)=\gamma(1)\in Q_i$ and $\gamma(0)=x_0$, this is an isomorphism
\begin{equation}
\label{eq:varphiiygamma}
\varphi_i|_{y,\gamma}: K_i|_{y,\gamma} \otimes E_i|_{y}  \to  \inf R_{i_0i}|_{\gamma}\otimes_{\inf A_{0}}\inf F_0 \text{.}
\end{equation}  
Here, $\inf A_{0}=\inf R_{i_0i_0}|_{\cp_{x_0}}$ is the fiber of the algebra bundle $\inf A_{i_0}=\cp^{*}\inf R_{i_0i_0}$ over $Q_{i_0}$ over the point $x_0$.  
In order to define $\varphi_i|_{y,\gamma}$, we make two observations. The first is the algebra isomorphism $\varphi_{y_0}: \mathrm{End}(E_{i_0})|_{y_0} \to \inf A_{0}$ constructed in \cref{sec:coincidence}, that allows us to choose $\inf F_0 = E_{i_0}|_{y_0}$ in the definition of the regression functor.
The second observation is the following: the chosen lifts $y_0$ of $x_0=\gamma(0)$ and $y$ of $\gamma(1)$ yield a vector space isomorphism
\begin{multline}
\label{eq:transreg:1}
\inf R_{i_0i}|_{\gamma}(\mathcal{T}) =\mathrm{Hom}(\Des(\mathcal{E}_{i_0}|_{x_0},\mathcal{T}|_0),\Des(\mathcal{E}_i|_{\gamma(1)},\mathcal{T}|_1))
\\ \alxydim{}{\ar[r]^{\delta_{y_0,y}} &} \mathrm{Hom}(T|_{(0,y_0)}^{*} \otimes E_{i_0}|_{y_0} ,T|_{(1,y)}^{*}\otimes E_i|_y)\\\cong E_{i_0}|_{y_0}^{*} \otimes T|_{(0,y_0)} \otimes  T|_{(1,y)}^{*}\otimes E_i|_y\text{.}
\end{multline}
for any trivialization $\mathcal{T}:\gamma^{*}\mathcal{G} \to \mathcal{I}_0$, where $\delta_{y_0,y}$ is obtained from the isomorphisms $\Des(\mathcal{E}_{i_0}|_{x_0},\mathcal{T}|_0)\cong T|_{(0,y_0)}^{*} \otimes E_{i_0}|_{y_0}$ and $\Des(\mathcal{E}_i|_{\gamma(1)},\mathcal{T}|_1)\cong T|_{(1,y)}^{*}\otimes E_i|_y$ that result from the definition of $\Des$ via descent.  
For another point $y'\in Y$ with $\pi(y)=\gamma(1)$, the two isomorphisms $\delta_{y_0,y}$ and $\delta_{y_0,y'}$ are related by the descent isomorphism
\begin{equation}
\label{eq:pointydiff}
\alxydim{@C=5em}{T^{*}|_{(1,y')} \otimes E_i|_{y'} \cong T^{*}|_{(1,y')} \otimes P_{y,y'}^{*} \otimes P_{y,y'} \otimes E_i|_{y'} \ar[r]^-{\tau^{tr-1} \otimes \zeta_i} &  T^{*}|_{(1,y)} \otimes E_i|_{y}}
\end{equation}
that produces $\Des(\mathcal{E}_i|_{\gamma(1)},\mathcal{T}|_1)$. 
Now we are ready to give the definition of the isomorphism $\varphi_i|_{y,\gamma}$ of \cref{eq:varphiiygamma}. We consider the linear map
\begin{equation*}
T|_{(0,y_0)} \otimes T|_{(1,y)}^{*}\otimes E_i|_{y}  \to  E_{i_0}|_{y_0}^{*} \otimes T|_{(0,y_0)} \otimes T|_{(1,y)}^{*}\otimes E_i|_y \otimes E_{i_0}|_{y_0}
\end{equation*}
induced by the counit $\C \to E^{*} \otimes E$ of the vector space $E=E_{i_0}|_{y_0}$. The vector space on the left-hand side is isomorphic to $K_i|_{y,\gamma}(\mathcal{T}) \otimes E_i|_{y}$, and the vector space on the right-hand side is isomorphic to $\inf R_{i_0i}|_{\gamma}(\mathcal{T})\otimes E_{i_0}|_{y_0}$ via $\delta_{y_0,y}^{-1} \otimes \id$. Composed with the projection to the quotient $\inf R_{i_0i}|_{\gamma}(\mathcal{T})\otimes E_{i_0}|_{y_0} \to \inf R_{i_0i}|_{\gamma}(\mathcal{T})\otimes_{\inf A_0} E_{i_0}|_{y_0}$, it becomes an isomorphism, and this is the definition of $\varphi_i|_{y,\gamma}$. It is straightforward to check that this definition is compatible with a change of the trivialization, smooth, and connection-preserving.

\begin{lemma}
For all $i\in I$, the bundle isomorphism $\varphi_i$ induces a 2-isomorphism $\varphi_i: \mathcal{E}_i \circ \mathcal{A}|_{Q_i} \Rightarrow \mathcal{E}_i'$, i.e., the diagram
\begin{equation*}
\small
\alxydim{@C=1.8cm@R=3em}{\inf L|_{\gamma_1\cup \gamma_2}\otimes K_i|_{y_2,\gamma_2}\otimes E_i|_{y_2} \ar[d]_{\id \otimes \varphi_i} \ar[r]^-{\kappa \otimes \id} &  K_i|_{\gamma_1,y_1} \otimes P_{y_1,y_2}\otimes E_i|_{y_2} \ar[r]^-{\id \otimes \zeta_i} & K_i|_{y_1,\gamma_1}\otimes E_i|_{y_1}  \ar[d]^{\varphi_i} \\ \inf L|_{\gamma_1\cup \gamma_2}\otimes E_i'|_{\gamma_2} \ar[rr]_{\zeta_i'} && E_i'|_{\gamma_1}}
\end{equation*}
is commutative for any point $((y_1,\gamma_1),(y_2,\gamma_2)) \in Z_i^{[2]}$. \end{lemma}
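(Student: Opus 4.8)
The plan is to verify commutativity of the stated diagram fiber-wise over each point $((y_1,\gamma_1),(y_2,\gamma_2)) \in Z_i^{[2]}$ by choosing a single trivialization and unwinding all the maps into elementary tensor-algebra operations. Since everything in sight ($\kappa$, $\zeta_i$, $\zeta_i'$, $\varphi_i$) has already been shown to be smooth and connection-preserving, it suffices to check commutativity pointwise on fibers; the bundle-morphism statement then follows.

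First I would fix a trivialization $\mathcal{T}:(\gamma_1\cup\gamma_2)^{*}\mathcal{G}\to \mathcal{I}_0$ and let $\mathcal{T}_1,\mathcal{T}_2$ be its restrictions to $\gamma_1,\gamma_2$ via the maps $\iota_1,\iota_2$, so that $\inf L|_{\gamma_1\cup\gamma_2}$ is represented by a class $[\mathcal{T},z]$ and the fibers $K_i|_{\gamma_a,y_a}(\mathcal{T}_a)=T|_{(0,y_0)}\otimes T|_{(1,y_a)}^{*}$ are described explicitly via \cref{eq:defK}. With these choices, $\kappa$ is (by its definition in \cref{sec:regaftertr}) just scalar multiplication by $z$ together with the application of $\tau^{tr-1}|_{(1,y_1),(1,y_2)}$ to the $T|_{(1,y_2)}^{*}$-factor, landing in $K_i|_{\gamma_1,y_1}\otimes P_{y_1,y_2}$; the fusion representation $\phi_{ij}$, and hence the regressed $\zeta_i'$, is also scalar multiplication in these trivializations (see \cref{sec:fusionrep}). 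The key computational point is that the lower horizontal composite $\zeta_i'$, which uses the fusion representation on $\inf R_{i_0i}$, matches the upper composite built from $\kappa$, the descent isomorphism \cref{eq:pointydiff} for $\zeta_i$ on $E_i$, and the definition of $\varphi_i$ through $\delta_{y_0,y}$.

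The heart of the matter is to track how the two descent prescriptions agree: $\varphi_i$ is defined in \cref{eq:varphiiygamma} using $\delta_{y_0,y}$, which encodes precisely the descent data $\Des(\mathcal{E}_i|_{\gamma(1)},\mathcal{T}|_1)\cong T|_{(1,y)}^{*}\otimes E_i|_y$. I would compute both paths around the diagram applied to a generic element $[\mathcal{T},z]\otimes (s_0\otimes\sigma)\otimes v$ with $s_0\in T|_{(0,y_0)}$, $\sigma\in T|_{(1,y_2)}^{*}$, $v\in E_i|_{y_2}$. Going clockwise, $\kappa$ produces $z\cdot s_0\otimes\tau^{tr-1}(\sigma)$ paired with an element of $P_{y_1,y_2}$, then $\zeta_i$ transports $v\in E_i|_{y_2}$ to $E_i|_{y_1}$ exactly via the map \cref{eq:pointydiff}; applying $\varphi_i|_{y_1,\gamma_1}$ and using $\delta_{y_0,y_1}^{-1}$ then gives an element of $\inf R_{i_0i}|_{\gamma_1}\otimes_{\inf A_0}\inf F_0$. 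Going counter-clockwise, $\varphi_i|_{y_2,\gamma_2}$ followed by $\zeta_i'$ (the fusion representation, i.e. multiplication by $z$) gives the same element once one uses the relation between $\delta_{y_0,y_1}$ and $\delta_{y_0,y_2}$ furnished by the $P$-descent isomorphism \cref{eq:pointydiff}. The main obstacle will be bookkeeping the several identifications — the counit $\C\to E_{i_0}|_{y_0}^{*}\otimes E_{i_0}|_{y_0}$ used in defining $\varphi_i$, the $\tau^{tr-1}$ appearing in both $\kappa$ and \cref{eq:pointydiff}, and the quotient by the $\inf A_0$-action — and confirming that the $\tau^{tr-1}$ factors introduced by $\kappa$ and by the $\zeta_i$-descent cancel correctly against those hidden inside $\delta_{y_0,y_1}$ versus $\delta_{y_0,y_2}$.

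Concretely, the cancellation reduces to the cocycle/compatibility condition for the bundle isomorphism $\tau$ of $\mathcal{T}$ over $Y_{\gamma}^{[2]}$ together with the defining module condition $\zeta_i$ for the twisted vector bundle $\mathcal{E}_i$, which is exactly the same descent datum that \cref{eq:pointydiff} expresses; once these are aligned, the only surviving scalar is $z$ on both sides and the diagram commutes. I would therefore present the proof as: (i) reduce to fibers and fix $\mathcal{T}$; (ii) rewrite $\kappa$, $\zeta_i'$, and $\varphi_i$ explicitly in this trivialization; (iii) observe that both composites equal the class of $z\cdot\delta_{y_0,y_1}^{-1}(s_0\otimes\tau^{tr-1}(\sigma)\otimes \zeta_i(v))\otimes(\text{counit term})$ modulo the $\inf A_0$-action, invoking \cref{eq:pointydiff} to identify the $y_2$-description with the $y_1$-description. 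Independence of the trivialization is already established for each constituent map, so no separate argument is needed there.
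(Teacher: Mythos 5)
Your proposal is correct and follows essentially the same route as the paper's proof: fix a trivialization $\mathcal{T}$ of $\mathcal{G}$ along $\gamma_1\cup\gamma_2$ to identify $\inf L|_{\gamma_1\cup\gamma_2}\cong\C$, unwind $\kappa$, $\zeta_i'$ and $\varphi_i$ (via $\delta_{y_0,y_2}$ and $\delta_{y_0,y_1}$) in that trivialization, and invoke \cref{eq:pointydiff} to reduce everything to an explicit diagram of linear maps between tensor products of fibers of $T$, $P$ and $E_i$, whose commutativity follows from the cocycle condition for $\tau$ and the module condition for $\zeta_i$. This matches the paper's argument step for step, with your element-level bookkeeping simply making explicit what the paper summarizes as ``consulting the definitions.''
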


\begin{proof}
Choosing a trivialization $\mathcal{T}$ of $\mathcal{G}$ along $\gamma_1\cup\gamma_2$, we obtain an isomorphism $\inf L_{\gamma_1\cup\gamma_2}\cong \C$. Substituting  the definitions  $\zeta_i'$ (via the fusion representation), of $K_i$, and of and $\varphi_i$ (on the left using $\delta_{y_0,y_2}$ and on the right using $\delta_{y_0,y_1}$) and using \cref{eq:pointydiff}, the diagram can be reduced to 
\begin{equation*}
\small
\alxydim{@C=-3cm@R=3em}{ &  T|_{(0,y_0)} \otimes T|_{(1,y_1)}^{*}\otimes P_{y_1,y_2}\otimes E_i|_{y_2} \ar[rd]^-{\id \otimes \zeta_i} &  \\  T|_{(0,y_0)} \otimes T|_{(1,y_2)}^{*}\otimes E_i|_{y_2} \ar[ur]^-{\kappa|_{\inf L=\C} \otimes \id} \ar[dr] & &
T|_{(0,y_0)} \otimes T|_{(1,y_1)}^{*}\otimes E_i|_{y_1} \\& T|_{(0,y_0)} \otimes T^{*}|_{(1,y_2)} \otimes P_{y_1,y_2}^{*} \otimes P_{y_1,y_2} \otimes E_i|_{y_2} \ar[ur]_-{\id \otimes \tau^{tr-1} \otimes \zeta_i} &  \text{.}}
\end{equation*}
\begin{comment}
First of all, it reduces to
\begin{equation*}
\alxydim{@C=0.8cm}{T|_{(0,y_0)} \otimes T|_{(1,y_2)}^{*}\otimes E_i|_{y_2} \ar[d]_{} \ar[r]^-{\kappa \otimes \id} &  T|_{(0,y_0)} \otimes T|_{(1,y_1)}^{*}\otimes P_{y_1,y_2}\otimes E_i|_{y_2} \ar[r]^-{\id \otimes \zeta_i} & T|_{(0,y_0)} \otimes T|_{(1,y_1)}^{*}\otimes E_i|_{y_1}  \ar[d]^{} \\ E_{i_0}|_{y_0}^{*} \otimes T|_{(0,y_0)} \otimes T|_{(y_2,1)}^{*}\otimes E_i|_{y_2} \otimes E_{i_0}|_{y_0}  \mqqqquad  \ar[d]_{\delta_{y_0,y_2}^{-1} \otimes \id}  && \mqqqquad E_{i_0}|_{y_0}^{*} \otimes T|_{(0,y_0)} \otimes T|_{(1,y_1)}^{*}\otimes E_i|_{y_1} \otimes E_{i_0}|_{y_0} \ar[d]^{\delta_{y_0,y_2}^{-1} \otimes \id} \\ \inf R_{i_0i}|_{\gamma_2}(\iota_2^{*}\mathcal{T})\otimes E_{i_0}|_{y_0} \ar@{=}[rr] && \inf R_{i_0i}|_{\gamma_1}(\iota_1^{*}\mathcal{T})\otimes E_{i_0}|_{y_0}}
\end{equation*}
and then to the before-mentioned diagram.
\end{comment}
Consulting the definitions, we observe that this diagram is commutative. \end{proof}

So far we have constructed a TBG 1-isomorphism $(\mathcal{A},\varphi): (\mathcal{G}',\mathcal{E}_i') \to  (\mathcal{G},\mathcal{E})$ for every TBG object $(\mathcal{G},\mathcal{E})$. With the following lemma we show that this assignment is natural with respect to TBG morphisms, and hence results in  a natural isomorphism.  

\begin{lemma}
Let $(\mathcal{A},\psi): (\mathcal{G}_1,\mathcal{E}_1) \to (\mathcal{G}_2,\mathcal{E}_2)$ be a TBG 1-morphism. For $a=1,2$, let $(\mathcal{G}_a',\mathcal{E}_{a}')$ be the TBG objects obtained by transgressing and then regressing $(\mathcal{G}_a,\mathcal{E}_a)$, and let $(\mathcal{A}_a,\varphi_a)$ be the corresponding 1-isomorphisms constructed above. Further, let $(\mathcal{A}',\psi')$ be obtained by transgressing and regressing $(\mathcal{A},\psi)$. Then, there exists a 2-isomorphism 
\begin{equation*}
\small
\alxydim{@=3em}{(\mathcal{G}_1',\mathcal{E}'_1) \ar[d]_{(\mathcal{A}_1,\varphi_1)} \ar[r]^-{(\mathcal{A}',\psi')} & (\mathcal{G}'_2,\mathcal{E}'_2) \ar@{=>}[dl]|*+{\phi} \ar[d]^{(\mathcal{A}_2,\varphi_2)} \\ (\mathcal{G}_1,\mathcal{E}_1)  \ar[r]_{(\mathcal{A},\psi)} & (\mathcal{G}_2,\mathcal{E}_2) }
\end{equation*}
in the bicategory $\tsg(M,Q)$. In other words, the diagram 
\begin{equation*}
\small
\alxydim{@=3em}{(\mathcal{G}_1',\mathcal{E}'_1) \ar[d]_{(\mathcal{A}_1,\varphi_1)} \ar[r]^-{(\mathcal{A}',\psi')} & (\mathcal{G}'_2,\mathcal{E}'_2)  \ar[d]^{(\mathcal{A}_2,\varphi_2)} \\ (\mathcal{G}_1,\mathcal{E}_1)  \ar[r]_{(\mathcal{A},\psi)} & (\mathcal{G}_2,\mathcal{E}_2) \text{.}}
\end{equation*}
is commutative in $\hc 1 (\tsg(M,Q))$.
\end{lemma}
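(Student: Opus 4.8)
The plan is to establish the required 2-isomorphism $\phi$ by unwinding all four 1-morphisms into their underlying line bundles and bundle morphisms over the appropriate fiber products, and then exhibiting a single explicit bundle isomorphism intertwining them. Recall that $\mathcal{A}_a:\mathscr{R}_{x_0}(\inf L_a,\lambda_a)\to \mathcal{G}_a$ is built from the line bundle $K_a$ of \cref{eq:defK} (depending on a chosen lift $y_0^{(a)}$ of $x_0$ in $Y_a$), that $\mathcal{A}'=\mathscr{R}(\mathscr{T}(\mathcal{A}))$ is the refinement induced by $\cup^{*}\varphi$ with $\varphi$ the transgressed line-bundle isomorphism, and that $(\mathcal{A},\psi)$ is the original morphism with underlying line bundle morphism coming from $\mathcal{A}$ itself. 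The first step is therefore to compute the two composite 1-morphisms $(\mathcal{A},\psi)\circ(\mathcal{A}_1,\varphi_1)$ and $(\mathcal{A}_2,\varphi_2)\circ(\mathcal{A}',\psi')$ explicitly over the common fiber product (roughly $P_{x_0}M\times_M Y_1\times_M Y_2$), tracking both the gerbe part and the module part.

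First I would handle the gerbe part, i.e.\ the 2-isomorphism between the composite 1-isomorphisms of bundle gerbes, ignoring the modules. This is essentially the statement that transgression--regression of 1-morphisms is compatible with the canonical 1-isomorphisms $\mathcal{A}_a:\mathscr{R}_{x_0}\mathscr{T}(\mathcal{G}_a)\to\mathcal{G}_a$, which is exactly the naturality statement already proved for the pure gerbe case in \cite[Sec. 6]{waldorf10}. So for this layer I would simply invoke that reference to obtain a canonical 2-isomorphism $\phi_0$ between the underlying gerbe 1-morphisms, given fiber-wise by an explicit isomorphism of the line bundles $K_a$ built from the descent data $\tau_a$ and the transgressed $\varphi$. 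The key point to verify is that the two choices of lifts $y_0^{(1)},y_0^{(2)}$ and the compatibility of $\mathcal{A}$ with them produce no obstruction beyond a $\ueins$-phase that is absorbed into $\phi_0$.

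The substantive new step is the module layer: I would show that $\phi_0$, when tensored with the identity on the relevant $E_i$-bundles, is compatible with the four 2-morphisms $\varphi_1,\varphi_2,\psi,\psi'$, i.e.\ that the resulting square of bundle isomorphisms over each $Z_i$-type fiber product commutes. Here I would choose trivializations $\mathcal{T}_a$ of $\gamma^{*}\mathcal{G}_a$ so that $\inf L_a|_{\gamma_1\cup\gamma_2}\cong\C$ and the various $\Des(\mathcal{E}_{a,i},\mathcal{T}_a)$ reduce to the concrete descriptions $T_a^{*}\otimes E_{a,i}$ used in \cref{eq:transreg:1}; then each of $\varphi_a$, $\psi$, $\psi'$ becomes an explicit composite of a counit $\C\to E^{*}\otimes E$, the descent identifications $\delta$, and the transgressed $\psi_i=\Des(\psi_i,\id)$-type maps. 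The commutativity then reduces, after cancelling the descent isomorphisms, to the single naturality identity that the chosen intertwiner $f_0:\inf F_0\to\inf F_0'$ (equivalently the identification $\inf F_0=E_{i_0}|_{y_0}$) is respected by $\psi_{i_0}|_{x_0}$ under $\varphi_{y_0}$; this is forced by the construction of $\psi'$ from $\psi$ in \cref{sec:nattrans} together with the coincidence of algebra bundles from \cref{sec:coincidence}.

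The main obstacle I expect is bookkeeping the base-point and lift data coherently: all four 1-morphisms secretly depend on choices ($x_0$, $i_0$, $\inf F_0$, and the lifts $y_0^{(a)}\in Y_a$), and the 2-isomorphism must absorb the discrepancies between the choices made on the two sides of the square. Concretely, the danger is a residual $\ueins$-ambiguity in $f_0$ versus $\psi_{i_0}$, and I would resolve it exactly as in the proof that $(\mathcal{A},\psi)$ is independent of $f_0$: Schur's lemma forces any two admissible identifications to differ by a scalar, and that scalar is precisely the phase already recorded in the gerbe-level 2-isomorphism $\phi_0$, so the module square commutes on the nose. Once the square commutes fiber-wise and the isomorphism is checked to be smooth and connection-preserving (both routine, using the local trivializations from \cref{prop:Rijvectorbundle} and the fact that all constituent maps are connection-preserving), we obtain the desired $\phi$, and hence commutativity of the square in $\hc 1(\tsg(M,Q))$.
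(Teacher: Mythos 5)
Your proposal follows essentially the same route as the paper's proof: invoke the gerbe-level naturality 2-isomorphism from \cite[Lemma 6.1.4]{waldorf10}, reduce the remaining claim to a fiber-wise check that this 2-isomorphism intertwines the four module 2-morphisms, and make that check succeed by coordinating the choice of the intertwiner $f_0$ with the data defining the gerbe-level 2-isomorphism, which is licensed by the previously proved independence of the regressed morphism's 2-isomorphism class from $f_0$. The paper realizes your ``absorb the Schur scalar into $\phi_0$'' step concretely by setting $f_0 := (\chi_0 \otimes \id) \circ \psi_{i_0}|_{y_{1,0},y_{2,0}}$ for the very element $\chi_0 \in Q^{*}|_{y_{1,0},y_{2,0}}$ that pins down $\phi$, and then verifies the resulting pentagon by direct computation.
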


\begin{proof}
The 2-isomorphism $\phi$ itself is already part of the naturality of ordinary bundle gerbe transgression and has been constructed in \cite[Lemma 6.1.4]{waldorf10}. \begin{comment}
Namely, as a 2-isomorphism
\begin{equation*}
\alxydim{}{\mathcal{G}_1' \ar[d]_{\mathcal{A}_1} \ar[r]^-{\mathcal{A}'} & \mathcal{G}_2' \ar@{=>}[dl]|*+{\phi} \ar[d]^{\mathcal{A}_2} \\ \mathcal{G}_1 \ar[r]_{\mathcal{A}} & \mathcal{G}_2 \text{.}}
\end{equation*}
\end{comment}
It remains to show that it is compatible with the 2-isomorphisms of the TBG morphisms. This is equivalent to the commutativity of the following pentagon diagram:
\begin{equation*}
\small
\alxydim{@R=3em@C=0.5em}{&& \mathcal{E}_{2,i}\circ \mathcal{A}_2\circ \mathcal{A}' \ar@{<=}[drr]^-{\varphi_{2,i}\circ \id} \ar@{=>}[dll]_-{\id \circ \phi} && \\ \mquad\mathcal{E}_{2,i} \circ \mathcal{A} \circ \mathcal{A}_1\mquad \ar@{<=}[dr]_{\psi_i \circ \id} &&&& \mathcal{E}_{2,i}' \circ \mathcal{A}'\quad \ar@{<=}[dl]^{\psi_i'} \\ & \mathcal{E}_{1,i} \circ \mathcal{A}_1 \ar@{<=}[rr]_-{\varphi_{1,i}} && \mathcal{E}'_{1,i}}
\end{equation*}
We translate this further into a diagram of the bundle isomorphisms underlying these 2-isomorphisms. Let $Q$ be the vector bundle of the 1-isomorphism $\mathcal{A}:\mathcal{G}_1\to \mathcal{G}_2$, defined over the fibre product $Y_1 \times_M Y_2$ of the surjective submersions of the two bundle gerbes.  
Further, we recall that $\mathcal{A}': \mathcal{G}_1' \to \mathcal{G}_2'$ is the 1-isomorphism induced by the identity map on $P_{x_0}M$ and the bundle isomorphism $\varphi:\inf L_1 \to \inf L_2: [\mathcal{T},z] \mapsto [\mathcal{T} \circ \mathcal{A}^{-1},z]$. Thus, its hermitian line bundle is $\inf L_2$.
In the following we work over a point $(y_1,y_2,\gamma_1,\gamma_2)\in Y_1 \times_M Y_2 \times_M P_{x_0}M^{[2]}$, over which our diagram becomes
 the following diagram of linear maps:
\begin{equation*}
\small
\alxydim{@R=3em@C=-2em}{&& \inf L_2|_{\gamma_1\cup\gamma_2} \otimes K_2|_{y_2,\gamma_2}\otimes  E_{2,i}|_{y_2} \ar@{<-}[drr]^-{\id \otimes \varphi_{2,i}} \ar[dll]_-{\phi \otimes \id} && \\ \mqquad K_1|_{y_1,\gamma_1} \otimes  Q_{y_1,y_2} \otimes  E_{2,i}|_{y_2}\mqqquad \ar@{<-}[dr]_{\id \otimes \psi_i} &&&& \inf L_2|_{\gamma_1\cup\gamma_2} \otimes E'_{2,i}|_{\gamma_2} \ar@{<-}[dl]^{\psi'_i} \\ & \mquad K_1|_{y_1,\gamma_1} \otimes E_{1,i}|_{y_1} \ar@{<-}[rr]_-{\varphi_{1,i}} && E'_{1,i}|_{\gamma_1}}
\end{equation*}
For further elaboration, we choose a trivialization $\mathcal{T}_{1}$ of $\mathcal{G}_1$ along $\gamma_1\cup\gamma_2$, and consider its two halves $\iota_1^{*}\mathcal{T}_{1}$ and $\iota_2^{*}\mathcal{T}_{1}$, as well as the composite $\mathcal{T}_2 := \mathcal{T}_1 \circ \mathcal{A}^{-1}$ and its two halves $\iota_1^{*}\mathcal{T}_{2}$ and $\iota_2^{*}\mathcal{T}_{2}$. Then,  using \cref{eq:defK},     we are able to express the line bundles $K_1$ and $K_2$ in terms of the line bundles $T_1$ of $\mathcal{T}_1$ and the line bundle $Q$, and we can express using \cref{eq:transreg:1} and the points $y_1,y_2$  the vector bundle        $E_{1,i}'$ and $E_{2,i}$ in terms of $E_{1,i}$, $E_{2,i}$, $T_1$, and $Q$. \begin{comment}
In detail, we use the following identifications:
\begin{align*}
K_1|_{y_1,\gamma_1} &= K_1|_{y_1,\gamma_1}(\iota_1^{*}\mathcal{T}_{1}) 
\\&= T_1|_{(0,y_{1,0})} \otimes T_1|_{(1,y_1)} ^{*}
\\
K_2|_{y_2,\gamma_2} &= K_2|_{y_2,\gamma_2}(\iota_2^{*}\mathcal{S}_{2}) 
\\&=  T_2|_{(0,y_{2,0})} \otimes T_2|_{(1,y_2)} ^{*} 
\\&=  Q^{*}_{y_{1,0},y_{2,0}} \otimes T_1|_{(0,y_{1,0})} \otimes T_1|_{(1,y_1)}^{*}\otimes Q_{y_1,y_2}
\\
E'_{1,i}|_{\gamma_1} &=  \inf R_{1,i_0i}|_{\gamma_1}(\iota_1^{*}\mathcal{T}_{1})\otimes_{\inf A_{1,x_0}}\inf F_{1,0} 
\\&=   E_{1,i_0}|_{y_{1,0}}^{*} \otimes T_1|_{(0,y_{1,0})} \otimes  T_1|_{(1,y_1)}^{*}\otimes E_{1,i}|_{y_1} \otimes E_{1,i_0}|_{y_{1,0}}
\\
E'_{2,i}|_{\gamma_2} &=   \inf R_{2,i_0i}|_{\gamma_2}(\iota_2^{*}\mathcal{S}_{2})\otimes_{\inf A_{2,x_0}} \inf F_{2,0}
\\&=    E_{2,i_0}|_{y_{2,0}}^{*} \otimes T_2|_{(0,y_{2,0})} \otimes  T_2|_{(1,y_2)}^{*}\otimes E_{2,i}|_{y_2}\otimes E_{2,i_0}|_{y_{2,0}}
\\&=    E_{2,i_0}|_{y_{2,0}}^{*} \otimes Q^{*}_{y_{1,0},y_{2,0}} \otimes T_1|_{(0,y_{1,0})} \otimes T_1|_{(1,y_1)}^{*}\otimes Q_{y_1,y_2}\otimes E_{2,i}|_{y_2}\otimes E_{2,i_0}|_{y_{2,0}}
\end{align*}
\end{comment}
The main point now is to compute the 2-isomorphism $\psi_i': \mathcal{E}'_{1,i} \Rightarrow \mathcal{E}'_{2,i} \circ \mathcal{A}'$, defined as the regression of the isomorphism $(\varphi,\xi)$, which in turn is the transgression of $(\mathcal{A},\psi)$. For the regression we only need  
\begin{equation*}
\xi_{i_0i}:\inf R_{1,i_0i}|_{\gamma_1}(\iota_1^{*}\mathcal{T}_{1}) \to  \inf R_{2,i_0i}|_{\gamma_1}(\iota_1^{*}\mathcal{T}_{2})\text{,} 
\end{equation*}
which is -- under the above isomorphisms -- determined by $\psi_i$ via
\begin{align*}
&\mquad  E_{1,i_0}|_{y_{1,0}}^{*} \otimes T_1|_{(0,y_{1,0})} \otimes T_1|_{(1,y_1)} ^{*}\otimes E_{1,i}|_{y_1}
\\*&\quad\alxydim{@=7em}{\ar[r]^-{\psi_{i_0}^{tr-1} \otimes \id \otimes \id \otimes \psi_i} & } E_{2,i_0}|_{y_{2,0}}^{*} \otimes Q^{*}_{y_{1,0},y_{2,0}} \otimes T_1|_{(0,y_{1,0})} \otimes  T_1|_{(1,y_1)}^{*}\otimes Q_{y_1,y_2} \otimes E_{2,i}|_{y_2}\text{.}
\end{align*}
For the subsequent regression of $\xi_{i_0i}$ we need to fix an intertwiner  $f_0: E_{1,i_0}|_{y_{1,0}} \to E_{2,i_{0}}|_{y_{2,0}}$, see \cref{sec:regression}. Then  -- under our isomorphisms -- the bundle isomorphism  $\psi_i':E'_{1,i}|_{\gamma_1} \to \inf L_2|_{\gamma_1\cup\gamma_2} \otimes E'_{2,i}|_{\gamma_2}$ becomes a map
\begin{multline*}
E_{1,i_0}|_{y_{1,0}}^{*} \otimes T_1|_{(0,y_{1,0})} \otimes  T_1|_{(1,y_1)}^{*}\otimes E_{1,i}|_{y_1} \otimes E_{1,i_0}|_{y_{1,0}} \\\to \inf L_2|_{\gamma_1\cup\gamma_2} \otimes E_{2,i_0}|_{y_{2,0}}^{*} \otimes Q^{*}_{y_{1,0},y_{2,0}} \otimes T_1|_{(0,y_{1,0})} \otimes  T_1|_{(1,y_1)}^{*}\otimes Q_{y_1,y_2}\otimes E_{2,i}|_{y_2}\otimes E_{2,i_0}|_{y_{2,0}}
\end{multline*}
given by
\begin{equation*}
\omega \otimes s\otimes\sigma\otimes v \otimes v_0 \mapsto [\mathcal{T}_2,1] \otimes \psi_{i_0}^{tr-1}(\omega) \otimes s \otimes \sigma \otimes  \psi_i(v) \otimes f_0(v_0 )\text{.}
\end{equation*}
\begin{comment}
This result was obtained in the following way.
The induced bundle morphism $\tilde\psi_i': E'_{1,i} \to E'_{2,i}$ is  $\tilde\psi_i' = \xi_{i_0i} \otimes f_0$, where $f_0: E_{1,i_0}|_{y_{1,0}} \to E_{2,i_{0}}|_{y_{2,0}}$ is some intertwiner. Next we compute the map $\psi_i' = \zeta_{2,i}'^{-1} \circ \pr_1^{*}\tilde\psi_i$   (see \cref{sec:regression}). Here, $\zeta_{2,i}':\cup^{*}\inf L_2 \otimes \pr_2^{*}E_{2,i}' \to \pr_1^{*}E_{2,i}'$ is the fusion representation, i.e.
\begin{align*}
\zeta_{2,i}'|_{\gamma_1,\gamma_2}:\inf L|_{\gamma_1,\gamma_2} \otimes  \inf R_{2,i_0i}|_{\gamma_2}(\mathcal{S}_{2,2})\otimes_{\inf A_{2,x_0}} \inf F_{2,0} &\to  \inf R_{2,i_0i}|_{\gamma_1}(\mathcal{S}_{2,1})\otimes_{\inf A_{2,x_0}} \inf F_{2,0}
\\
[\mathcal{S}_2,z] \otimes [\mathcal{S}_{2,2},\varphi] \otimes v_0 &\mapsto [\mathcal{S}_{2,1},z\varphi] \otimes v_0\text{.}
\end{align*}
Thus, the map $\psi_i':E'_{1,i}|_{\gamma_1} \to \inf L_2|_{\gamma_1\cup\gamma_2} \otimes E'_{2,i}|_{\gamma_2}$ is given by
\begin{equation*}
\omega \otimes s \otimes \sigma \otimes v \otimes v_0 \mapsto [\mathcal{S}_2,1] \otimes \psi_{i_0}^{tr-1}(\omega) \otimes s \otimes \sigma \otimes \psi_i(v) \otimes f_0(v_0 )
\end{equation*}
\end{comment}
In the following we will use a specific intertwiner $f_0$, obtained by choosing an element $\chi_0\in Q^{*}|_{y_{1,0},y_{2,0}}$ and setting $f_0:= (\chi_0 \otimes \id) \circ \psi_{i_0}|_{y_{1,0},y_{2,0}}$. 

Finally, we recall the definition of the 2-isomorphism $\phi$ from \cite[Lemma 6.1.4]{waldorf10}. Its bundle morphism
\begin{align*}
\phi: \inf L_2|_{\gamma_1\cup\gamma_2} \otimes Q^{*}_{y_{1,0},y_{2,0}} \otimes T_1|_{(0,y_{1,0})} \otimes T_1|_{(1,y_1)}^{*}\otimes Q_{y_1,y_2} &\to T_1|_{(0,y_{1,0})} \otimes T_1|_{(1,y_1)}\otimes  Q_{y_1,y_2}
\end{align*}
is determined so that it sends $[\mathcal{T}_2,1] \otimes  \chi_0 \otimes s \otimes \sigma \otimes q$ to $s \otimes \sigma \otimes q$,
where $\chi_0$ is the element fixed above. We can now prove the commutativity of the pentagon diagram. The counter-clockwise composition acts as
\begin{equation*}
\omega \otimes s \otimes \sigma \otimes v \otimes v_0 \mapsto \omega_0(f_0(v_0)) \cdot  s \otimes \sigma \otimes  q \otimes v'\text{,}
\end{equation*}
\begin{comment}
Indeed,
\begin{align*}
\omega \otimes s \otimes \sigma \otimes v \otimes v_0 &\mapsto [\mathcal{S}_2,1] \otimes \omega_0 \otimes \chi_0\otimes s \otimes \sigma \otimes  q \otimes v'  \otimes f_0(v_0 ) 
\\&\mapsto \omega_0(f_0(v_0)) \cdot [\mathcal{S}_2,1] \otimes  \chi_0\otimes s \otimes \sigma \otimes  q \otimes v'  
\\&\mapsto \omega_0(f_0(v_0)) \cdot  s \otimes \sigma \otimes  q \otimes v'  
\end{align*}
\end{comment}
where $\omega_0$ is determined so that $\psi_{i_0}^{tr-1}(\omega)  = \omega_0 \otimes \chi_0$, and $q \otimes v' := \psi_i(v)$. On the other hand, the clock-wise composition yields
\begin{equation*}
\omega \otimes s \otimes \sigma \otimes v \otimes v_0 \mapsto \omega(v_0) \cdot s \otimes\sigma \otimes  q \otimes v'\text{,}
\end{equation*}
\begin{comment}
Indeed,
\begin{align*}
\omega \otimes s \otimes \sigma \otimes v \otimes v_0 &\mapsto \omega(v_0) \cdot s \otimes \sigma \otimes v
\\&\mapsto \omega(v_0) \cdot s \otimes\sigma \otimes  q \otimes v'\text{.} \end{align*}
\end{comment}
and it is straightforward to deduce the coincidence $\omega(v_0)= \omega_0(f_0(v_0))$ from the given definitions.
\begin{comment}
Indeed,
\begin{align*}
\omega(v_0)=\psi_{i_0}^{tr}(\omega_0 \otimes \chi_0)(v_0)=(\chi_0 \otimes \omega_0 )(\psi_{i_0}(v_0))=( \omega_0\circ(\chi_0 \otimes \id)\circ  \psi_{i_0})(v_0)=\omega_0(f_0(v_0))\text{.}
\end{align*}
\end{comment}
Thus, the pentagon diagram is commutative. 
\end{proof}

\subsection{Transgression after Regression}

\label{sec:traftereg}

In this section we provide a natural equivalence $\mathscr{T}\circ \mathscr{R}_{i_0,x_0,\inf F_0}\cong \id_{\lsg(M,Q)}$, thus establishing the second half of the equivalence $\tsg(M,Q)\cong \lsg(M,Q)$.
To that end, we construct for each LBG object $(\inf L,\lambda, \inf R,\phi,\chi,\epsilon,\alpha)$  a 1-morphism 
\begin{equation*} 
(\varphi,\xi): (\mathscr{T}\circ \mathscr{R}_{i_0,x_0,\inf F_0})(\inf L,\lambda, \inf R,\phi,\chi,\epsilon,\alpha) \to (\inf L,\lambda, \inf R,\phi,\chi,\epsilon,\alpha)\text{,}
\end{equation*}
and show that it is natural with respect to LBG morphisms.

We let $(\mathcal{G},\mathcal{E}) := \mathscr{R}_{i_0,x_0,\inf F_0}(\inf L,\lambda, \inf R,\phi,\chi,\epsilon,\alpha)$ be the regressed TBG, and  denote its transgressed LBG by $(\inf L',\lambda', \inf R',\phi',\chi',\epsilon',\alpha') := \mathscr{T}(\mathcal{G},\mathcal{E})$. In \cite[Section 6.2]{waldorf10} we have constructed an isomorphism $\varphi: \inf L' \to \inf L$ of hermitian line bundles over $LM$. We recall this construction in a slightly adapted version. We consider a loop $\tau \in LM$  of the form $\tau =(\gamma_1 \pcomp \gamma_0)\cup (\gamma_2\pcomp \gamma_0)$, where $\gamma_0\in P_{x_0}M$ and $(\gamma_1,\gamma_2)\in PM^{[2]}$. Up to thin homotopy, every loop in $M$ is of this form. Now let $\mathcal{T}:\tau^{*}\mathcal{G} \to \mathcal{I}_0$ be a trivialization, including a hermitian line bundle $T$ with connection over $Z := S^1 \ttimes{\tau}{\ev_1} P_{x_0}M$ and a connection preserving bundle isomorphism
\begin{equation*}
\delta|_{(t,\beta_1),(t,\beta_2)}: \inf L|_{\beta_1\cup\beta_2} \otimes T_{(t,\beta_2)} \to T_{(t,\beta_1)}
\end{equation*}
over $Z^{[2]}$. We consider the obvious paths $\alpha_i$ in $Z$ connecting $(0,\id\pcomp\gamma_0)$ with $(\frac{1}{2},\gamma_i\pcomp \gamma_0)$. Then, there exists a unique element $p\in \inf L|_{\tau}=\inf L|_{(\gamma_1 \pcomp \gamma_0)\cup (\gamma_2\pcomp \gamma_0)}$ such that
\begin{equation*}
\delta(p\otimes pt_{\alpha_2}(q)) = pt_{\alpha_1}(q)
\end{equation*}
for all $q\in T_{(0,\id\pcomp \gamma_0)}$. We define $\varphi([\mathcal{T},z]) := p\cdot z$. Using the superficial connection on $\inf L$ and its parallel transport along thin homotopies, this definition can be extended to all loops $\tau\in LM$. This results in a fusion-preserving bundle morphism $\varphi:\inf L' \to \inf L$ that is independent of all involved choices, see \cite[Lemma 6.2.1 \& 6.2.2]{waldorf10}.

Next, we construct, for $i,j\in I$, a vector bundle isomorphism
\begin{equation*}
\xi_{ij}:\inf R'_{ij} \to \inf R_{ij}
\end{equation*}
over $P_{ij}$. First, we compute $\inf R'_{ij}$ over a path $\gamma\in P_{ij}$ connecting $x\in Q_i$ with $y\in Q_j$. Let $x_0\in Q_{i_0}$ be the point and $\inf F_0$ be the $\inf A_{x_0}$-module chosen for regression,  and recall that we constructed the bundle $E_i :=  \inf R_{i_0i} \otimes _{\ev_0^{*}\inf A_0} \inf F_0 $ over $P_i$. We choose a path $\gamma_0\in P_{i_0i}$ with $\gamma_0(0)=x_0$ and $\gamma_0(1)= x$, and obtain a smooth map
\begin{equation}
\label{eq:usualsection}
s:[0,1] \to P_{x_0}M: t\mapsto  \gamma^t \star \gamma_0\text{,}
\end{equation}
where $\gamma^t$ denotes the restriction of a path to $[0,t]$, reparameterized to $[0,1]$ (with sitting instants). 
In fact, $s$ is a section along $\gamma$ into the surjective submersion of $\gamma^{*}\mathcal{G}$. By \cref{rem:sectionsandtrivializations} this induces  a trivialization $\mathcal{T}: \gamma^{*}\mathcal{G} \to \mathcal{I}_0$, together with an isomorphism
\begin{align}
\label{eq:ident:1}
\inf R'_{ij}|_{\gamma}(\mathcal{T}) &= \mathrm{Hom}(\Des(\mathcal{E}_i|_{x},\mathcal{T}|_0),\Des(\mathcal{E}_j|_{y}),\mathcal{T}|_1) \nonumber
\\&\cong  \mathrm{Hom} ((s^{*}E_i)|_0,(s^{*}E_j)|_1)
= \mathrm{Hom}(E_i|_{\cp_{x} \star \gamma_0} , E_j|_{\gamma \star \gamma_0})\text{.}
\end{align}
Using the definition of $E_i$ and $E_j$ via regression,  as well as the  lifted path concatenation, we obtain another isomorphism
\begin{align}
\mathrm{Hom}(E_i|_{\cp_{x} \star \gamma_0} , E_j|_{\gamma \star \gamma_0})&=\mathrm{Hom}(\inf R_{i_0i}|_{\cp_x \pcomp \gamma_0} \otimes _{\inf A_{x_0}} \inf F_0,   \inf R_{i_0j}|_{\gamma\star \gamma_0}\otimes _{\inf A_{x_0}} \inf F_0)\nonumber \\ &\cong \mathrm{Hom}(\inf R_{i_0i}|_{\gamma_0} \otimes _{\inf A_{x_0}} \inf F_0,\inf R_{ij}|_{\gamma}\otimes _{\inf A_i|_{x}}    \inf R_{i_0i}|_{\gamma_0}\otimes _{\inf A_{x_0}} \inf F_0) \nonumber
\\
&=\mathrm{Hom}(E_i|_{\gamma_0},\inf R_{ij}|_{\gamma}\otimes _{\inf A_i|_{x}}  E_i|_{\gamma_0})\text{.}
\label{eq:ident:2}
\end{align}
Next, we consider the linear map
\begin{equation}
\label{eq:ident:3}
\inf R_{ij}|_{\gamma} \to \mathrm{Hom}(E_i|_{\gamma_0},\inf R_{ij}|_{\gamma}\otimes _{\inf A_i|_{x}}  E_i|_{\gamma_0})
\end{equation}
that sends a vector $v\in \inf R_{ij}|_{\gamma}$ to the homomorphism $w \mapsto v \otimes_{\inf A_i|_x} w$. Using that $\inf A_i|_x$ is simple and that $E_i|_{\gamma_0}$ is irreducible, it is straightforward to see that  \cref{eq:ident:3} is injective. \begin{comment}
We consider the situation in a basis, where $\inf A_i|_x = \C^{n\times n}$ and $E_i|_{\gamma_0}=\C^{n}$. We consider the standard basis vectors $e_1,...,e_n$ and consider in 
\begin{equation*}
\mathrm{Hom}_{\C^{n \times n}}(\C^{n},\C^{n \times n})
\end{equation*}
the vectors $\varphi_1,...,\varphi_n$ given by $\varphi_i(v)=(0,...,0,v,0,...,0)$. For any right $\C^{n\times n}$-module $V$ and any $\psi\in \mathrm{Hom}(\C^{n},V \otimes_{\C^{n\times n}} \C^{n})$ we consider $\psi_i$ given by
\begin{equation*}
\alxydim{@C=5em}{\C^{n} \ar[r]^-{\psi} & V \otimes_{\C^{n\times n}} \C^{n} \ar[r]^-{\id \otimes_{\C^{n \times n}} \varphi_i} & V\otimes_{\C^{n\times n}}\C^{n\times n} \cong V\text{,}}
\end{equation*}
and consider the map
\begin{equation*}
\Psi: \mathrm{Hom}(\C^{n},V \otimes_{\C^{n\times n}} \C^{n}) \to V : \psi \mapsto \sum_{i} \psi_i(e_i)\text{.}
\end{equation*}

We claim that for $v\in V$ and $\psi^v:\C^{n } \to V \otimes_{\C^{n\times n}} \C^{n}$ defined by $\psi^v(w) := v \otimes_{\C^{n\times n}} w$, we get $\Psi(\psi^v)=v$ . Indeed,
\begin{equation*}
\sum_{i} \psi_i^{v}(e_i)= \sum_i (\id \otimes \varphi_i)(\psi^{v}(e_i))= \sum_i (\id \otimes \varphi_i)(v \otimes e_i)= v \otimes \sum_i \varphi_i(e_i)=v \otimes E_n\cong v\text{.}
\end{equation*} 
Thus, the map
\begin{equation*}
V \to \mathrm{Hom}(\C^{n},V \otimes_{\C^{n\times n}} \C^{n}):v \mapsto \psi^{v}
\end{equation*}
has a right inverse, and is hence injective. 
\end{comment}
Moreover, the dimensions on both sides coincide, so that it is in fact an isomorphism,
\begin{multline*}
\dim(\inf R_{ij}|_{\gamma}) \eqcref{re:lsg:rank2}  \sqrt{\mathrm{rk}(\inf R_{ii})\cdot \mathrm{rk}(\inf R_{jj})}
 \eqcref{lem:rkEi} \mathrm{rk}(E_i)\cdot \mathrm{rk}(E_j) \\[-2em]=\dim(\mathrm{Hom}(E_i|_{\cp_{x} \star \gamma_0} , E_j|_{\gamma \star \gamma_0})) \eqcref{eq:ident:2}  \dim(\mathrm{Hom}(E_i|_{\gamma_0},\inf R_{ij}|_{\gamma}\otimes _{\inf A_i|_{x}}  E_i|_{\gamma_0}))\text{.}
\end{multline*}
Combining \cref{eq:ident:1,eq:ident:2} with the inverse of \cref{eq:ident:3}, we obtain the isomorphism
\begin{equation*}
\xi_{ij}|_{\gamma}:\inf R_{ij}'|_{\gamma} \to \inf R_{ij}|_{\gamma}\text{.}
\end{equation*}

\begin{lemma}
 $\xi_{ij}|_{\gamma}$ is independent of the choice of the path $\gamma_0$.
\end{lemma}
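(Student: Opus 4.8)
The plan is to compare the constructions obtained from two choices $\gamma_0,\gamma_0' \in P_{i_0 i}$ of paths from $x_0$ to $x:=\gamma(0)$, and to show that the two resulting maps $\inf R'_{ij}|_{\gamma} \to \inf R_{ij}|_{\gamma}$ coincide. Observe first that both the source $\inf R'_{ij}|_{\gamma}$ and the target $\inf R_{ij}|_{\gamma}$ are manifestly independent of $\gamma_0$: the path $\gamma_0$ enters only through the auxiliary section $s$ of \cref{eq:usualsection}, the induced trivialization $\mathcal{T}$, and the module identifications \cref{eq:ident:1,eq:ident:2,eq:ident:3}. Hence it suffices to trace the $\gamma_0$-dependence through this chain and exhibit the requisite cancellation.

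First I would relate the two sections $s(t)=\gamma^t \star \gamma_0$ and $s'(t)=\gamma^t \star \gamma_0'$. Both are sections along $\gamma$ into the subduction $\ev_1:P_{x_0}M \to M$ of the regressed gerbe $\mathcal{G}$, whose line bundle is $\cup^{*}\inf L$; moreover $\ev_1(s(t))=\gamma(t)=\ev_1(s'(t))$, so $(s,s')$ factors through $P_{x_0}M^{[2]}$. Since $[0,1]$ is contractible and one-dimensional, the pullback of $\cup^{*}\inf L$ along $(s,s')$ admits a parallel unit-length section $\tilde s$, which by \cref{rem:sectionsandtrivializations} produces a 2-isomorphism $\mathcal{T}_{\gamma_0} \Rightarrow \mathcal{T}_{\gamma_0'}$ and hence the canonical identification $r_{\mathcal{T}_{\gamma_0},\mathcal{T}_{\gamma_0'}}$ on the fibres of $\inf R'_{ij}$. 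The commutative diagram at the end of \cref{rem:sectionsandtrivializations} then expresses, at the endpoints $t=0,1$, the discrepancy between the isomorphisms $\varphi_s$ and $\varphi_{s'}$ used in \cref{eq:ident:1}: it is precisely the twist $\zeta_i(\tilde s \otimes -)$ by the parallel section, i.e. the fusion representation evaluated on $\tilde s$ (recall from \cref{sec:regression} that $\zeta_i$ is induced by the right action $\rho_{i_0 i}$, which is built from the fusion representation).

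The core of the argument is then to show that the two endpoint twists, inserted at the $\gamma_0$-end and at the $\gamma$-end of the concatenation, cancel after applying the lifted-path-concatenation isomorphism \cref{eq:ident:2}. Concretely, $\tilde s(0) \in \inf L|_{\gamma_0 \cup \gamma_0'}$ and $\tilde s(1) \in \inf L|_{(\gamma \star \gamma_0)\cup(\gamma \star \gamma_0')}$ are identified under fusion and the superficial connection, the two $\gamma$-segments being flat, and the compatibility axiom \cref{eq:lsg:comp} between the fusion representation and the lifted path concatenation---together with \cref{lem:fusionandbimodule}---guarantees that transporting the $\tilde s(0)$-twist through \cref{eq:ident:2} reproduces exactly the $\tilde s(1)$-twist. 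Since \cref{eq:ident:3}, sending $v \mapsto (w \mapsto v \otimes_{\inf A_i|_x} w)$, makes no reference to $\gamma_0$ and is an isomorphism by irreducibility of $E_i|_{\gamma_0}$ (\cref{co:aisimple,lem:morita}), the net effect on $\inf R_{ij}|_{\gamma}$ is trivial, yielding $\xi_{ij}^{\gamma_0}|_{\gamma}=\xi_{ij}^{\gamma_0'}|_{\gamma}$.

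I expect the main obstacle to be this last cancellation: the two twists live over different loops ($\gamma_0 \cup \gamma_0'$ versus its $\gamma$-prolongation), so matching them requires propagating the parallel section $\tilde s$ through the bimodule-composition isomorphism of \cref{lem:compositionofbimodules} and invoking the associativity \cref{eq:lsg:pentagon} and the fusion-compatibility \cref{eq:lsg:comp} in the correct order. A clean way to organize this is to fix a single trivialization $\mathcal{T}$ along $\gamma$, under which the lifted path concatenation and the fusion representation become, respectively, composition and scalar multiplication of linear maps; the whole comparison then reduces to an identity among such composites, which is a direct consequence of the LBG axioms.
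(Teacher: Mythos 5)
Your proposal is correct and follows essentially the same route as the paper: a parallel section of $\inf L$ along the pair of sections $(s,s')$ (the paper parameterizes it by an element $p\in\inf L|_{\gamma_0\cup\gamma_0'}$ propagated by thin homotopy) yields via \cref{rem:sectionsandtrivializations} a 2-isomorphism between the two induced trivializations, and one then checks commutativity of the resulting ladder over the chain \cref{eq:ident:1,eq:ident:2,eq:ident:3}, the middle square resting on the fact that $\zeta_i$ comes from the fusion representation (your \cref{eq:lsg:comp}/\cref{lem:fusionandbimodule}) and the square involving \cref{eq:ident:3} cancelling by inspection. The only differences are presentational, so nothing further is needed.
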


\begin{proof}
If $\gamma_0'$ is another path connecting $x_0$ with $x=\gamma(0)$, then we have the two sections $s,s'$ into the submersion of $\gamma^{*}\mathcal{G}$, inducing trivializations $\mathcal{T}$ and $\mathcal{T}'$, respectively. Any element $p\in \inf L_{\gamma_0\cup\gamma_0'}$ determines (via a thin homotopy) a section of $\inf L$  along $(s,s')$, and hence a hence a 2-isomorphism $\psi_p: \mathcal{T} \Rightarrow \mathcal{T}'$, see \Cref{rem:sectionsandtrivializations}. \begin{comment}
More precisely, the loops $\gamma_0\cup \gamma_0'$ and $(\gamma \pcomp \gamma_0) \cup (\gamma \pcomp \gamma_0')$ are thin homotopic, and we define $\tilde p := d(p)$.
Over $[0,1] \ttimes{\gamma}{\ev_1} P_{x_0}M$  we have
\begin{equation*}
\alxydim{}{T_{t,\beta} \otimes \inf L_{s(t),s'(t)} = \inf L_{\beta\cup s(t)} \otimes \inf L_{s(t),s'(t)}  \ar[r]^-{\lambda} &  \inf L_{\beta\cups'(t)} = T'_{t,\beta}\text{,}}
\end{equation*}
so that $\psi_p$ is induced by $\lambda(-\otimes \tilde p)$.
\end{comment}
Our aim is to show that the following diagram is commutative, whose top row and bottom row are the isomorphisms $\xi_{ij}^{-1}$, defined using $\gamma_0'$ and $\gamma_0$, respectively.
\begin{equation*}
\small
\alxydim{@R=1.5em@C=4em}{\inf R_{ij}|_{\gamma} \ar@{=}[dd] \ar[r]^-{\labelcref{eq:ident:3}} & \mathrm{Hom}(E_i|_{\gamma_0'},\inf R_{ij}|_{\gamma}\otimes _{\inf A_i|_{x}}  E_i|_{\gamma_0'}) \ar[dd]|{((\id\otimes \rho_p \otimes \id_{\inf F_0} ) \circ - \circ(\rho_p \otimes \id_{\inf F_0} )^{-1})} \ar[r]^-{\labelcref{eq:ident:2}} & \mathrm{Hom}(E_i|_{\cp_{x} \star \gamma_0'} , E_j|_{\gamma \star \gamma_0'}) \ar[dd]|{\zeta'_j \circ - \circ \zeta_i^{\prime-1}} \ar[r]^-{\labelcref{eq:ident:1}} & \inf R'_{ij}|_{\gamma}(\mathcal{T}') \ar[dd]^{r_{\psi_p}}  \\  & \\ \inf R_{ij}|_{\gamma} \ar[r]_-{\labelcref{eq:ident:3}}  & \mathrm{Hom}(E_i|_{\gamma_0},\inf R_{ij}|_{\gamma}\otimes _{\inf A_i|_{x}}  E_i|_{\gamma_0}) \ar[r]_-{\labelcref{eq:ident:2}} & \mathrm{Hom}(E_i|_{\cp_{x} \star \gamma_0} , E_j|_{\gamma \star \gamma_0})\ar[r]_-{\labelcref{eq:ident:1}} & \inf R'_{ij}|_{\gamma}(\mathcal{T})\text{.}}
\end{equation*}
The vertical maps in this diagram all depend  on the choice of the point $ p\in \inf L_{\gamma_0\cup\gamma_0'}$: first, we have an isomorphism $\rho_p: \inf R_{i_0i}|_{\gamma_0'} \to \inf R_{i_0i}|_{\gamma_0}$ of right $\inf A_{x_0}$-modules, obtained by fusion  with $p$.
\begin{comment}
That is,
\begin{equation*}
\rho_p(v) := \phi_{i_0i}|_{\gamma_0,\gamma_0'}(p \otimes v)\text{.}
\end{equation*}
\end{comment}
Second, we have  linear isomorphisms $\zeta_i': E_i|_{\cp_{x} \star \gamma_0'} \to  E_i|_{\cp_{x} \star \gamma_0}$ and  $\zeta_j': E_j|_{\gamma \star \gamma_0'} \to  E_j|_{\gamma \star \gamma_0}$ obtained using the gerbe module isomorphisms $\zeta_i$ and $\zeta_j$, with the first argument fixed by the elements in $\inf L_{(\cp_x \pcomp \gamma_0) \cup (\cp_x \pcomp \gamma_0')}$  and $\inf L_{(\gamma \star\gamma_0)\cup (\gamma\pcomp \gamma_0')}$ determined by $p$ under thin homotopies. Now, all three subdiagrams are commutative: the one on the left commutes obviously by inspection,  the one in the middle commutes due to the fact that $\zeta_i$ is defined (in the process of regression) by the fusion representation, and the one on the right-hand side commutes by definition of the isomorphism $r_{\psi_p}$ (see \cref{sec:transbranes}) and \cref{rem:sectionsandtrivializations}. \begin{comment}
For the diagram on the right hand side we may consider the diagram
\begin{equation*}
\alxydim{}{\mathrm{Hom}(E_i|_{\cp_{x} \star \gamma_0'} , E_j|_{\gamma \star \gamma_0'}) \ar[d]_{\zeta_j' \circ - \circ\zeta_i'^{-1}} \ar@{=}[r] & \mathrm{Hom}(\Des(\mathcal{E}_i|_{x},\mathcal{S}'|_0),\Des(\mathcal{E}_j|_{y)},\mathcal{S}'|_1)) \ar[d]^{\Des(\psi|_1 , \id)\circ - \circ \Des(\psi|_0 , \id)^{-1}} \ar@{=}[r] & \inf R'_{ij}|_{\gamma}(\mathcal{S}') \ar[d]^{\psi|_1 \circ - \circ (\psi|_0)^{-1}} \\ \mathrm{Hom}(E_i|_{\cp_{x} \star \gamma_0} , E_j|_{\gamma \star \gamma_0}) \ar@{=}[r] & \mathrm{Hom}(\Des(\mathcal{E}_i|_{x},\mathcal{S}|_0),\Des(\mathcal{E}_j|_{y},\mathcal{S}|_1))  \ar@{=}[r] & \inf R'_{ij}|_{\gamma}(\mathcal{S})}
\end{equation*}
in which the subdiagram one the right hand side commutes by definition. The commutativity of the other part follows from \cref{rem:sectionsandtrivializations}. \end{comment}
\end{proof}

By the previous lemma, the isomorphisms $\xi_{ij}|_{\gamma}$ assemble into a well-defined map $\xi_{ij}:\inf R_{ij}'\to\inf R_{ij}$. 

\begin{lemma}
\label{lem:632}
For all $i,j\in I$, the map $\xi_{ij}$ is a connection-preserving bundle morphism. 
\end{lemma}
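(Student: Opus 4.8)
The plan is to establish the two defining properties of a morphism of diffeological vector bundles separately: smoothness, and compatibility with the two connections (the transgression connection $pt_{ij}'$ on $\inf R_{ij}'$, defined in \cref{sec:trans:superficialconnection}, and the given LBG connection $pt_{ij}$ on $\inf R_{ij}$). In both cases the key is that $\xi_{ij}|_{\gamma}$ is the composite of the three isomorphisms \labelcref{eq:ident:1}, \labelcref{eq:ident:2} and the inverse of \labelcref{eq:ident:3}, each built from structure already shown to be smooth and connection-preserving, so the real task is to organise these pieces into genuine families over plots and over paths.

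For smoothness I would take a plot $c\maps U\to P_{ij}$, restrict to a contractible open $W\subset U$, and choose a smooth family of connecting paths $\gamma_0^{u}\in P_{i_0i}$ from $x_0$ to $(\ev_0\circ c)(u)$, which exists over a contractible base. This yields the smooth family of sections $s^{u}$ of \labelcref{eq:usualsection}, the associated trivialisations $\mathcal{T}^{u}$ of \cref{rem:sectionsandtrivializations}, and hence local trivialisations of both $\inf R_{ij}'$ and $\inf R_{ij}$ over $W$ in the sense of \cref{prop:Rijvectorbundle}. In these trivialisations \labelcref{eq:ident:1} and \labelcref{eq:ident:2} are smooth bundle isomorphisms, since the descent isomorphism $\varphi_{s}$, the lifted path concatenation $\chi$ and the $\inf A_i$-balanced tensor product all depend smoothly on their arguments; and \labelcref{eq:ident:3} is a smooth, fibrewise injective bundle morphism between bundles of equal rank (by the dimension count preceding the statement), so its inverse is smooth as well. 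Composing shows $\xi_{ij}$ is smooth.

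For connection-preservation I would fix a path $\Gamma\in PP_{ij}$, write $\eta(s):=\Gamma(s)(0)$ for the path of initial points, fix a path $\gamma_0^{0}\in P_{i_0i}$ from $x_0$ to $\eta(0)$, and set $\gamma_0^{s}:=\eta|_{[0,s]}\star\gamma_0^{0}$, a smooth family of connecting paths. The claim then reduces to checking that each constituent isomorphism is connection-preserving along $\Gamma$ with respect to the obvious induced connections. The isomorphism \labelcref{eq:ident:1} is connection-preserving because $\varphi_{s}$ is (\cref{rem:sectionsandtrivializations}) and because $pt_{ij}'$ was defined in \cref{sec:trans:superficialconnection} precisely as parallel transport in the Hom-bundle of the $\Des$-bundles $\inf V_{\mathcal{T}},\inf W_{\mathcal{T}}$. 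The isomorphism \labelcref{eq:ident:2} is connection-preserving because the bimodule isomorphism $\inf R_{i_0j}|_{\gamma\star\gamma_0}\cong\inf R_{ij}|_{\gamma}\otimes_{\inf A_i|_{x}}\inf R_{i_0i}|_{\gamma_0}$ provided by lifted path concatenation is connection-preserving (\cref{lem:compositionofbimodules}), and because the identifications $E_i|_{\cp_{x}\star\gamma_0}\cong E_i|_{\gamma_0}$ involve only reparameterisations handled by the superficial connection. Finally \labelcref{eq:ident:3} is connection-preserving because the left and right bimodule actions, hence the balanced tensor product over $\inf A_i$, are connection-preserving (\cref{lem:actionsconnectionpreserving}, \cref{re:bimodule:b}), so that $v\mapsto(w\mapsto v\otimes_{\inf A_i}w)$ commutes with parallel transport.

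The main obstacle I expect is the bookkeeping of the moving connecting path $\gamma_0^{s}$ in the verification for \labelcref{eq:ident:3}: since the basepoint $\eta(s)$ varies, the fibre $E_i|_{\gamma_0^{s}}$ is transported along $s\mapsto\gamma_0^{s}$ while $\inf R_{ij}|_{\Gamma(s)}$ is transported by $pt_{ij}|_{\Gamma}$, and one must show these transports are compatible under the $\inf A_i$-balanced tensor product. Here I would use that the connection on $E_i$ is induced via regression from the connection on $\inf R_{i_0i}$ together with the trivial connection on $\inf F_0$ (\cref{re:bimodulebundle:connections}), that lifted path concatenation intertwines these connections, and that both $pt_{ij}$ and $pt_{ij}'$ are superficial, so that the sitting-instant adjustments implicit in the concatenations $\star$ enter only through the canonical thin-homotopy isomorphisms $d$ and cancel. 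Granting this compatibility, the three connection-preservation statements compose to give the commuting parallel-transport square, completing the proof.
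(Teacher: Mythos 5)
Your smoothness argument is fine and is essentially the paper's own: over a contractible chart one builds a smooth family of connecting paths, obtains section-induced trivializations of the regressed gerbe, and observes that in the resulting local trivializations of $\inf R_{ij}'$ and $\inf R_{ij}$ the map $\xi_{ij}$ becomes the identity.

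The connection-preservation part, however, has a genuine gap, and it sits exactly where you wave it through. You assert that \labelcref{eq:ident:1} is connection-preserving ``because $pt_{ij}'$ was defined in \cref{sec:trans:superficialconnection} precisely as parallel transport in the Hom-bundle of the $\Des$-bundles''. That misstates \cref{def:connectionRij}: the transgressed parallel transport is the Hom-bundle transport \emph{multiplied by} the factor $\exp\left(\int_{[0,1]^2}\rho\right)$, where $\rho$ is the 2-form of the chosen trivialization of the gerbe over $\Gamma^{\vee}$. Fiberwise this factor is invisible (a 2-form on $[0,1]$ vanishes, which is why the pointwise definition of $\xi_{ij}$ can use $\mathcal{I}_0$), but along a path $\Gamma\in PP_{ij}$ the section-induced trivialization coming from \labelcref{eq:transreg:sigma} has 2-form equal to the pullback $\sigma^{*}B$ of the curving $B$ of the \emph{regressed} bundle gerbe, and this does not vanish. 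So the parallel-transport square you want for \labelcref{eq:ident:1} is, with your ingredients, off by exactly $\exp\left(\int_{[0,1]^2}(\id\times\Gamma)^{*}\sigma^{*}B\right)$, and none of the lemmas you cite (\cref{lem:compositionofbimodules}, \cref{lem:actionsconnectionpreserving}, \cref{re:bimodule:b}) touches this term. Your ``main obstacle'' paragraph about bookkeeping the moving basepoint in the bimodule tensor product misidentifies where the difficulty lies.

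What actually has to be proved is that the fiber integral $\int_{[0,1]}\sigma^{*}B\in\Omega^1(W)$ of \labelcref{eq:fibreintegralvanishes} vanishes, and this is the technical core of the paper's proof. The paper first reduces connection-preservation to a comparison of local connection 1-forms, $\omega_{\phi'}=\omega_{\tau'}+\int_{[0,1]}\sigma^{*}B$ (via \labelcref{eq:localoneform} from the proof of \cref{lem:connection}) together with $\omega_{\tau'}=\omega_{\phi}$; the latter is the part your factor-by-factor argument morally captures. But since the regressed curving $B$ is only defined through the correspondence between 2-forms on $P_{x_0}M$ and smooth 2-functors (the bigon map $G_{\inf L}$), the vanishing of its fiber integral requires the dedicated fiber-integration result of \cref{sec:fibreintegration}, identifying the 1-form with $\gamma\mapsto G_{\inf L}(\sigma_{*}(\Sigma_{\gamma}))$, followed by an explicit rank-two homotopy deforming the associated path in $LM$ into a path through flat loops; the conclusion $G_{\inf L}(\sigma_{*}(\Sigma_{\gamma}))=1$ then uses the \emph{superficiality of the connection on $\inf L$} and the parallelness of the fusion section $\nu$. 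Your proposal never invokes superficiality of $\inf L$'s connection at all (only that of $pt_{ij}$, for reparameterizations), so this cancellation cannot be recovered from the ingredients you list; without it the proof does not close.
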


\begin{proof}
We first show the smoothness of $\xi_{ij}$.
Let $c: U \to P_{ij}$ be a plot and let  $W \subset U$ be a contractible open subset such that we can find a local trivialization $\phi: W \times \C^{k} \to W \times_{P_{ij}} \inf R_{ij}$. Due to the contractibility of $W$, there exists a point $w_0\in W$ and a smooth map $W \to PW:w \mapsto \gamma_w$ with $\gamma_w(0)= w_0$ and $\gamma_w(1)=w$ for all $w\in W$. We let $c_0 := \ev_0 \circ c$ be the plot of initial points, set $x := c_0(w_0)$ and choose a path $\gamma_0$ in $M$ connecting $x_0$ with $x$. For $w\in W$, we have the path $\gamma^{c}_w :=c_0 \circ \gamma_w\in P_{ii}$ connecting $x$  with $c_0(w)$. Finally (upon choosing a smoothing function which we suppress from the notation), for every $w\in W$ and $t\in[0,1]$ we have a path $\beta_{w,t}\in PM$ defined by $\beta_{w,t}(s) := c(w)(ts)$; it connects $\beta_{w,t}(0)=c_0(w)$ with $\beta_{w,t}(1)=c(w)(t)=c^{\vee}(w,t)$. The concatenation of these paths defines a smooth map
\begin{equation}
\label{eq:transreg:sigma}
\sigma: [0,1] \times W \to P_{x_0}M: (t,w) \mapsto \beta_{w,t} \pcomp (\gamma_w^{c} \pcomp \gamma_0)
\end{equation}
such that $\ev_1 \circ \sigma=c^{\vee}$. In other words, $\sigma$ is a lift of $c^{\vee}$ to the surjective submersion of the regressed bundle gerbe $\mathcal{G}$ and hence defines a trivialization $\mathcal{T}: (c^{\vee})^{*}\mathcal{G} \to \mathcal{I}_{\sigma^{*}B}$. We  want to compute  the corresponding  vector bundles $\inf V_{\mathcal{T}}$ and $\inf W_{\mathcal{T}}$ over $W$, as defined in \cref{sec:transbranes}.
For this purpose, we consider the map $\sigma': W \to P_i$ defined by $\sigma'(w):=\gamma_w^{c} \pcomp \gamma_0$, so that $(\sigma\circ j_0)(w)= \cp_{c_0(w)} \pcomp \sigma'(w)$.
We obtain an isomorphism of vector bundles over $W$:
\begin{multline}
\label{eq:transreg:VT}
\inf V_{\mathcal{T}}\cong (\sigma\circ j_0)^{*}E_i =(\sigma\circ j_0)^{*}\inf R_{i_0 i} \otimes_{\inf A_{x_0}} \inf F_0 \\\cong(\cp \circ c_0)^{*}\inf R_{ii} \otimes_{c_0^{*}\inf A_i} \sigma'^{*}\inf R_{i_0 i} \otimes_{\inf A_{x_0}} \inf F_0\cong  \sigma'^{*}\inf R_{i_0 i} \otimes_{\inf A_{x_0}} \inf F_0
\end{multline}
using, respectively,  \cref{rem:sectionsandtrivializations}, the definition of $E_i$ under  regression, \cref{lem:compositionofbimodules}, and the definition  $\inf A_i|_{c_0(w)} = \inf R_{ii}|_{\cp_{c_0(w)}}$. 
\begin{comment}
More explicitly,\begin{align*}
\inf V_{\mathcal{T}} &= (\sigma\circ j_0)^{*}E_i &&  \text{\cref{rem:sectionsandtrivializations}}
\\&= (\sigma\circ j_0)^{*}\inf R_{i_0 i} \otimes_{\inf A_{x_0}} \inf F_0 && \text{Definition of $E_i$ as the regression of $\inf R$} 
\\&= (\cp \circ c_0)^{*}\inf R_{ii} \otimes_{c_0^{*}\inf A_i} \sigma'^{*}\inf R_{i_0 i} \otimes_{\inf A_{x_0}} \inf F_0 && \text{\cref{lem:compositionofbimodules}} 
\\&=  \sigma'^{*}\inf R_{i_0 i} \otimes_{\inf A_{x_0}} \inf F_0 && \text{By definition, $\inf A_i|_{c_0(w)} = \inf R_{ii}|_{\cp_{c_0(w)}}$} 
\end{align*}
\end{comment}
Analogously, we obtain an isomorphism
\begin{equation}
\label{eq:transreg:WT}
\inf W_{\mathcal{T}} \cong c^{*}\inf R_{ij} \otimes_{c_0^{*}\inf A_i} \sigma'^{*}\inf R_{i_0i}\otimes_{\inf A_{x_0}} \inf F_0\text{.} 
\end{equation}
\begin{comment}
Indeed,
\begin{align*}
\inf W &= (\sigma\circ j_1)^{*}E_j &&  \text{\cref{rem:sectionsandtrivializations}}
\\&= (\sigma\circ j_1)^{*}\inf R_{i_0 j} \otimes_{\inf A_{x_0}} \inf F_0 && \text{Definition of $E_i$ as the regression of $\inf R$} 
\\&= (\beta\circ j_1)^{*}\inf R_{ij} \otimes_{(\ev_0 \circ c)^{*}\inf A_i} \sigma'^{*}\inf R_{i_0 i} \otimes_{\inf A_{x_0}} \inf F_0 && \text{\cref{lem:compositionofbimodules}} \\&= c^{*}\inf R_{ij} \otimes_{(\ev_0 \circ c)^{*}\inf A_i} \sigma'^{*}\inf R_{i_0 i} \otimes_{\inf A_{x_0}} \inf F_0 && 
\end{align*}
In the last step we have used that (by choosing a smooth homotopy between the smoothing function $\varphi$ and $\id_{[0,1]}$) we obtain a smooth family of homotopies $\Gamma_w$ between $\beta_{w,1}$ and $c(w)$. Then, parallel transport along $\Gamma_w$ defines the isomorphism $(\beta\circ j_1)^{*}\inf R_{ij} \cong c^{*}\inf R_{ij}$. 
\end{comment}
Combining these two isomorphisms, we have an isomorphism 
\begin{equation*}
\psi_{ij}:\mathrm{Hom}(\inf V_{\mathcal{T}},\inf W_{\mathcal{T}}) \to c^{*}\inf R_{ij} \otimes_{c_0^{*}\inf A_i} \mathrm{End}(\inf V_{\mathcal{T}})\text{,}
\end{equation*} 
 of vector bundles over $W$,
and by construction it restricts over each point $w\in W$ to the isomorphism $\xi_{ij}|_{c(w)}$. Using the given local trivialization $\phi$ of $\inf R_{ij}$, we consider the  trivialization
$\tau': W \times \C^{k} \to \mathrm{Hom}(\inf V_{\mathcal{T}},\inf W_{\mathcal{T}})$
defined by $\tau'(w,v) := \psi_{ij}^{-1}(\phi(w,v) \otimes \id)$, and consider the associated local trivialization $\phi'$ of $\inf R_{ij}'$, which reads as
\begin{equation*}
\phi'(w,v) :=(w,[i_w^{*}\mathcal{T},\tau'(w,v)])\text{.}
\end{equation*}
Under  the two local trivializations $\phi$ and $\phi'$, the map $\xi_{ij}$ corresponds to the identity, and hence it is smooth. 
\begin{comment}
More explicitly, we define a smooth vector bundle isomorphism
\begin{equation*}
\tau: W \times \C^{k} \to \mathrm{Hom}(\inf V,\inf W)
\end{equation*}
by $\tau(w,v) := \psi_{ij}^{-1}(\phi(w,v) \otimes \id)$, where $\phi:W \times \C^{k} \to c^{*}\inf R_{ij}$ is the local trivialization. Then, we have a local trivialization $\phi'$ of $\inf R'_{ij}$ defined by $\phi'(w,v) =(w,[\iota_w^{*}\mathcal{T},\tau(w,v)])$. By construction, we have a commutative diagram
\begin{equation*}
\alxydim{}{W \times \C^{k} \ar[d]_{\phi} \ar@{=}[r] & W \times \C^{k} \ar[d]^{\phi'} \\ \inf R_{ij} \ar[r]_{\xi_{ij}}  & \inf R_{ij}'\text{.}}
\end{equation*}
In particular, $\xi_{ij}$ is smooth. 
\end{comment}

In order to show that $\xi_{ij}$  is connection-preserving, 
we compute the local connection 1-form $\omega_{\phi'}$ associated to the local trivialization $\phi'$, and prove that $\omega_{\phi'}=\omega_{\phi}$. 
According to the proof of \cref{lem:connection} (see \cref{eq:localoneform}), we have
\begin{equation*}
\omega_{\phi'} = \omega_{\tau'} + \int_{[0,1]}\sigma^{*}B' \in \Omega^1(W,\mathfrak{gl}(\C^{k}))\text{,}
\end{equation*}
where $B'$ is the curving of $\mathcal{G}'$. 
We observe from the constructions that the bundle isomorphisms \cref{eq:transreg:VT,eq:transreg:WT}, and hence the bundle isomorphism $\psi_{ij}$ are connection-preserving. The definition of $\tau'$, together with the fact that the identity section in $\mathrm{End}(\inf V_{\mathcal{T}})$ is parallel, show that $\omega_{\tau'} = \omega_{\phi}$. 
\begin{comment}
Suppose $s_1,...,s_k \in \Gamma(E)$ is a local frame in a vector bundle $E$. Let $r_1,...,e_k \in \Gamma(E^{*})$ be the dual frame, i.e. $d(s_i,r_j)=\delta_{ij}$, where $d: E\otimes E^{*}\to \C \times M$ is the canonical pairing. We have
\begin{equation*}
\id = \sum_{i=1}^{k} r_i \otimes s_i \in \Gamma(\mathrm{End}(E))\text{.}
\end{equation*}
We check for $t\in \Gamma(E)$:
\begin{align*}
D_X(\id)(s_j) &= \sum_{i=1}^{k }D_X(r_i \otimes s_i)(s_j)
\\&= \sum_{i=1}^{k }X(r_i(s_j))\cdot s_i-r_i(D_X(s_j)) \cdot  s_i\otimesd   + r_i (s_j)\cdot  D_X(s_i)
\\&= \sum_{i=1}^{k }-r_i(D_X(s_j)) \cdot  s_i\otimesd   +  D_X(s_j)
\\&= \sum_{i=1}^{k }\sum_{l=1}^{k}-r_i(\Gamma_{jl}s_l) \cdot  s_i\otimesd   + \sum_{l=1}^{n}\Gamma_{jl}s_l
\\&= \sum_{l=1}^{k }-\Gamma_{jl}\cdot  s_l\otimesd   + \sum_{l=1}^{k}\Gamma_{jl}s_l
  \\&=0\text{.}
\end{align*}
In the last but one step we wrote
\begin{equation*}
D_X(s_j)=\sum_{l=1}^{k}\Gamma_{jl}s_l
\end{equation*}
\end{comment}
It remains to prove that the 1-form
\begin{equation}
\label{eq:fibreintegralvanishes}
\int_{[0,1]}\sigma^{*}B'  \in \Omega^1(W)
\end{equation}
vanishes. This is complicated by the fact that the regressed curving $B'$ is defined using a correspondence between 2-forms and smooth functions on a space of bigons; see \cite{schreiber5} for the general theory of this correspondence.
In our case, the 2-form $B' \in \Omega^2(P_{x_0}M)$ corresponds to  a certain  map $G_{\inf L}$ whose definition we will recall below. In \cref{sec:fibreintegration} we show that the 1-form \cref{eq:fibreintegralvanishes} corresponds to the map
\begin{equation*}
PW \to \ueins: \gamma \mapsto G_{\inf L}(\sigma_{*}(\Sigma_{\gamma}))\text{,}
\end{equation*}
where the bigon $\Sigma_{\gamma}$ is defined by $\Sigma_{\gamma} := (\id \times \gamma)_{*}(\Sigma_{1,1})$, where $\Sigma_{1,1}:[0,1]^2 \to [0,1]^2$ is the so-called standard bigon.
Using the definition of $\sigma$ from \cref{eq:transreg:sigma}, the bigon $ \sigma_{*}(\Sigma_{\gamma})$ is 
\begin{equation*}
(s,t) \mapsto \beta_{\gamma(\xi_2(s,t)),\xi_1(s,t)} \pcomp (\gamma_{\gamma(\xi_2(s,t))}^{c} \pcomp \gamma_0)\text{,}
\end{equation*}
where $\xi_1,\xi_2$ are the two components of $\Sigma_{1,1}$, i.e. $\Sigma_{1,1}=(\xi_1,\xi_2)$. In the following we prove that $G_{\inf L}(\sigma_{*}(\Sigma_{\gamma}))=1$ for all $\gamma\in PW$; this shows that \cref{eq:fibreintegralvanishes} is zero. In order to do this,  we recall the definition of $G_{\inf L}(\Sigma)$ given in  \cite[Section 5.2]{waldorf10}, which is essentially by parallel transport in $\inf L$  along a path $\gamma_{\Sigma}$  in $LM$ (see Figure 1 in Section 5.2 of \cite{waldorf10} for a picture of this path).   We write $\chi_{\gamma} := \gamma_{\sigma_{*}(\Sigma_{\gamma})}$ for simplicity; this is a path in $LM$ given by
\begin{multline}
\label{eq:gammaSigma}
\chi_\gamma(t) =(\beta_{\gamma(\xi_2(-,t)),\xi_1(-,t)} (1) \pcomp \beta_{\gamma(\xi_2(0,t)),\xi_1(0,t)} \pcomp \gamma_{\gamma(\xi_2(0,t))}^{c} \pcomp \gamma_0)\\ \cup (\cp \pcomp  \beta_{\gamma(\xi_2(1,t)),\xi_1(1,t)} \pcomp\gamma_{\gamma(\xi_2(1,t))}^{c} \pcomp \gamma_0) \in LM\text{.}
\end{multline}
The end loops $\chi_\gamma(0)$ and $\chi_\gamma(1)$ are \quot{flat}, i.e. in the image of the map $\gamma \mapsto \gamma \cup \gamma$, along which $\inf L$ has the canonical flat section $\nu$.
\begin{comment}
Indeed,
\begin{align*}
\chi_\gamma(0) &=(\cp \pcomp \beta_{\gamma(0),0} \pcomp \gamma_{\gamma(0)}^{c} \pcomp \gamma_0) \cup (\cp \pcomp  \beta_{\gamma(0),0} \pcomp\gamma_{\gamma(0)}^{c} \pcomp \gamma_0)
\\
\chi_\gamma(1) &=(\cp \pcomp \beta_{\gamma(1),1} \pcomp \gamma_{\gamma(1)}^{c} \pcomp \gamma_0) \cup (\cp \pcomp  \beta_{\gamma(1),1} \pcomp\gamma_{\gamma(1)}^{c} \pcomp \gamma_0)
\end{align*}
\end{comment}
In particular, this defines elements $\nu_0,\nu_1 \in \inf L$ projecting to $\chi_\gamma(0)$ and $\chi_\gamma(1)$, respectively.  Then, $G_{\inf L}(\sigma_{*}(\Sigma_{\gamma}))$  is defined by
\begin{equation}
\label{eq:transreg:ptcan}
pt_{\chi_{\gamma}}(\nu_0) = \nu_1\cdot G_{\inf L}(\sigma_{*}(\Sigma_{\gamma}))\text{.}
\end{equation} 
We consider for $r\in [0,1]$  maps $\xi_1^r,\xi_2^r :[0,1]^2 \to [0,1]$ that constitute fixed-ends homotopies between $\xi_1$, $\xi_2$  and the map $(s,t)\mapsto t$. That is, we have $\xi_1^1=\xi_1$ and $\xi_2^1=\xi_2$, and $\xi_1^0(s,t)=\xi_2^0(s,t)=t$, as well as $\xi_1^r(s,0)=\xi_2^r(s,0)=0$  and $\xi_1^r(s,1)=\xi_2^r(s,1)=1$, for all $s,t\in [0,1]$. We consider the path $\chi_{\gamma}^{r}$ defined as in \cref{eq:gammaSigma} but using $\xi_1^r$ and $\xi_2^r$ instead of $\xi_1$ and $\xi_2$. We regard $h:[0,1]^2 \to\ LM: (r,t) \mapsto\ \chi_{\gamma}^r(t)$ as a homotopy between the paths $\chi_{\gamma}$ and $\chi_{\gamma}^0$ in $LM$. Calculating the latter paths explicitly, we notice that $\chi_{\gamma}^{0}$ is a path trough flat loops.
\begin{comment}
Indeed,
\begin{equation*}
\chi_{\gamma}^0(t)= (\cp_{} \pcomp \beta_{\gamma(t),t} \pcomp \gamma_{\gamma(t)}^{c} \pcomp \gamma_0)\\ \cup (\cp \pcomp  \beta_{\gamma(t),t} \pcomp\gamma_{\gamma(t)}^{c} \pcomp \gamma_0)\text{.}
\end{equation*} 
\end{comment}
The homotopy $h$ fixes the end-loops. We claim that the adjoint map $h^{\vee}: [0,1]^2 \times S^1 \to M$ has rank two; which can be checked explicitly using \cref{eq:gammaSigma}.
\begin{comment}
We can check this point-wise, and hence for every path appearing in the composition \cref{eq:gammaSigma} separately. The non-trivial pieces are:
\begin{align*}
(r,t,x) &\mapsto \gamma_{\gamma(\xi_2^r(0,t))}(x)
\\
(r,t,x) &\mapsto \beta_{\gamma(\xi_2^r(0,t)),\xi_1^r(0,t)}(x) = c(\gamma(\xi_2^r(0,t)))(\xi_1^r(0,t)\varphi(x))
\\
(r,t,x) &\mapsto \beta_{\gamma(\xi_2^r(x,t)),\xi_1^r(x,t)} (1)
\end{align*} 
All these assignments factor through $[0,1]^2$; hence $h^{\vee}$ has rank two. 

\end{comment}
Since the connection on $\inf L$ is superficial (see \cite[Lemma 2.2.3]{waldorf10}) and  $\nu$  is parallel, we have 
$pt_{\chi_{\gamma}}(\nu_0)=pt_{\chi_{\gamma}^0}(\nu_0)=\nu_1$. Comparing with \cref{eq:transreg:ptcan}, we have the claim. 
\end{proof}

So far we have provided the data $(\varphi,\xi)$ for a LBG morphism. It remains to show that it respects the fusion representation and the lifted path concatenation, see \cref{re:epsilonunique,rem:alphaunique}.  This is done in the following two lemmas.

\begin{lemma}
The bundle isomorphism $\xi_{ij}$ respects  the fusion representations, i.e. the diagram
\begin{equation}
\label{eq:transreg:fusion:2}
\small
\alxydim{@C=2cm@R=3em}{\inf L'|_{\tau} \otimes \inf R_{ij}'|_{\gamma_2 } \ar[d]_{\varphi \otimes \xi_{ij}} \ar[r]^-{\phi_{ij}'|_{\gamma_1,\gamma_2}} &  \inf R'_{ij}|_{\gamma_1 } \ar[d]^{\xi_{ij}} \\ \inf L|_{\tau} \otimes \inf R_{ij}|_{\gamma_2} \ar[r]_-{\phi_{ij}|_{\gamma_1,\gamma_2}} & \inf R_{ij}|_{\gamma_1}}
\end{equation}
is commutative for all
 $\gamma_1,\gamma_2 \in P_{ij}$ with $\gamma_1(0)=\gamma_2(0)$ and $\gamma_1(1)=\gamma_2(1)$, and $\tau := \gamma_1  \cup \gamma_2 $. 
\end{lemma}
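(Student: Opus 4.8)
The plan is to verify the square fibrewise over a pair $(\gamma_1,\gamma_2)\in P_{ij}^{[2]}$ with common initial point $x$ and common end point $y$, writing $\tau:=\gamma_1\cup\gamma_2$. Since $\gamma_1$ and $\gamma_2$ start at the same point $x$, I would fix a single path $\gamma_0\in P_{i_0i}$ from $x_0$ to $x$ and use it to build both $\xi_{ij}|_{\gamma_1}$ and $\xi_{ij}|_{\gamma_2}$ through the chain \cref{eq:ident:1,eq:ident:2,eq:ident:3}, with the trivialization induced (via \cref{rem:sectionsandtrivializations}) by the section $s_a\colon t\mapsto \gamma_a^{t}\star\gamma_0$. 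On the transgression side, recall from \cref{sec:fusionrep} that $\phi_{ij}'$ is computed from a trivialization $\mathcal{T}$ of $\tau^{*}\mathcal{G}$ over $S^1$: setting $\mathcal{T}_1:=\iota_1^{*}\mathcal{T}$ and $\mathcal{T}_2:=\iota_2^{*}\mathcal{T}$ one has $\inf R_{ij}'|_{\gamma_1}(\mathcal{T}_1)=\inf R_{ij}'|_{\gamma_2}(\mathcal{T}_2)$ as one and the same $\mathrm{Hom}$-space, on which $\phi_{ij}'$ acts by scalar multiplication with the coefficient $z$ of $[\mathcal{T},z]\in\inf L'|_{\tau}$. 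The element of $\inf L$ feeding the bottom row is then $\varphi([\mathcal{T},z])$, with $\varphi$ the line bundle isomorphism recalled at the beginning of this section.

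First I would reconcile the two trivialization conventions: $\mathcal{T}_a=\iota_a^{*}\mathcal{T}$ and the section-induced trivialization differ by the canonical isomorphisms $r$ of \cref{sec:transbranes}, and since $\xi_{ij}$ is independent of the chosen trivialization I may freely pass between them. Carrying this out turns the left vertical map into the concrete description of $\xi_{ij}$ and the top horizontal map into scalar multiplication, so that both legs of the square become explicit expressions built only from the LBG data: the fusion representation $\phi$, the lifted path concatenation $\chi$ (which enters through \cref{eq:ident:2}, where $E_i,E_j$ are rewritten using regression and $\chi_{i_0ij}$), and the fusion product $\lambda$ hidden in $\varphi$ and in the gerbe module datum $\zeta_j$, which by \cref{sec:regression} is induced by $\phi_{i_0j}$. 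Because $\gamma_1$ and $\gamma_2$ share the initial point and I use the same $\gamma_0$, the domain lift $\cp_x\star\gamma_0$ is identical in both cases, so only the codomain $E_j|_{\gamma_a\star\gamma_0}$ varies and no correction on the $E_i$-side is needed.

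The heart of the argument is then that the resulting identity is exactly axiom \cref{eq:lsg:comp} specialised to the degenerate configuration in which the $P_{i_0i}$-paths are both equal to $\gamma_0$, so that the corresponding fusion input is the neutral section $\nu_{\gamma_0}$, acting trivially by \cref{re:fusrepcan}, while the genuine fusion is carried by $\gamma_1,\gamma_2\in P_{ij}$. Concretely, under the index substitution $i\mapsto i_0,\ j\mapsto i,\ k\mapsto j$, axiom \cref{eq:lsg:comp} equates $\chi_{i_0ij}(\phi_{ij}(\ell\otimes v)\otimes u)$ with $\phi_{i_0j}(\ell'\otimes\chi_{i_0ij}(v\otimes u))$, where $\ell'\in\inf L|_{(\gamma_1\star\gamma_0)\cup(\gamma_2\star\gamma_0)}$ is obtained from $\ell$ by fusion with $\nu_{\gamma_0}$ and a reparameterization; this is precisely the statement that $\phi_{ij}$ on $\inf R_{ij}|_{\gamma}$ and $\zeta_j$ on $E_j$ intertwine under \cref{eq:ident:2}. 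That \cref{eq:ident:2}, the tensor product over $\inf A_i|_x$, is respected throughout is guaranteed by \cref{lem:fusionandbimodule}, which says the fusion representation commutes with the bimodule actions.

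The main obstacle I expect is the fusion-product bookkeeping, not the algebra. Three separate fusion/reparameterization identifications must be aligned: the one implicit in the definition of $\varphi$ relating the scalar $z$ to $\varphi([\mathcal{T},z])$; the one inside the gerbe module descent $\zeta_j$; and the one converting the loop $\gamma_1\cup\gamma_2$ into $(\gamma_1\star\gamma_0)\cup(\gamma_2\star\gamma_0)$, where the two copies of $\gamma_0$ must cancel by fusion and the superficiality of the connection on $\inf L$. Verifying that these cancel consistently, so that the net $\inf L$-element acting on the bottom row is exactly $\varphi([\mathcal{T},z])$, is the delicate step; once it is settled, the remaining content is the single application of \cref{eq:lsg:comp} together with \cref{re:fusrepcan} and \cref{lem:fusionandbimodule}.
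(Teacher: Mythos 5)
Your proposal is correct and follows essentially the same route as the paper's proof: the same single choice of $\gamma_0$, the same section-induced trivializations reconciled against $\iota_a^{*}\mathcal{T}$ via the canonical isomorphisms $r$, and the same reduction of the key identity to the degenerate case of \cref{eq:lsg:comp} with neutral fusion input $\nu_{\gamma_0}$ (together with \cref{re:fusrepcan}). The delicate bookkeeping you flag is exactly what the paper settles by showing that the change-of-trivialization 2-isomorphism at the common endpoint is fusion with $p=\varphi(\mathcal{T})$, see \cref{eq:transreg:fusion:1}, and then assembling the two commutative diagrams.
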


\begin{proof}
We set $x :=\gamma_1(0) $ and $y := \gamma_1(1)$. 
Let $\mathcal{T}$ be a trivialization of $\tau^{*}\mathcal{G}$, and let $p := \varphi(\mathcal{T})\in \inf L|_{\tau}$. We choose a path $\gamma_0$ connecting $x_0$  with $x$ and consider the corresponding sections 
\begin{equation*}
s_a: [0,1] \to P_{x_0}M: t\mapsto \gamma_a^t \star \gamma_0
\end{equation*}
 along $\gamma_a$ into the surjective submersions of $\mathcal{G}$, for $a=1,2$. We have to study  the induced trivializations $\mathcal{T}_a: \gamma_a^{*}\mathcal{G} \to \mathcal{I}_0$: their line bundles are $T_a|_{(t,\beta)} := \inf L_{\beta \cup s_a(t)}$ and their   isomorphisms are $\sigma_a|_{(t,\beta_1,\beta_2)}:= \lambda_{\beta_1,\beta_2,s_a(t)}$, for $t\in [0,1]$ and paths $\beta,\beta_1,\beta_2\in P_{x_0}M$ ending at $\gamma_a^{t}(1)$;  see \cref{rem:sectionsandtrivializations}. Let $\iota_a^{*}\mathcal{T}$ be the  restrictions of $\mathcal{T}$ to trivializations along $\gamma_a$. Then, there exist parallel unit-length sections $\sigma_a$ in $\Des(\mathcal{T}_a,\iota_a^{*}\mathcal{T})$. They induce unit-length sections into $\Des(\mathcal{T}_2|_0,\mathcal{T}_1|_0)$ and  into $\Des(\mathcal{T}_2|_1,\mathcal{T}_1|_1)$. Since $s_1(0)=s_2(0)$, we have $\mathcal{T}_1|_0=\mathcal{T}_2|_0$ and  can thus assume that the induced section into $\Des(\mathcal{T}_2|_0,\mathcal{T}_1|_0)$ is the trivial one. The induced section  into  $\Des(\mathcal{T}_2|_1,\mathcal{T}_1|_1)$ defines a 2-isomorphism $\sigma: \Des(\mathcal{E}_j|_{y},\mathcal{T}_2|_1) \Rightarrow \Des(\mathcal{E}_j|_{y},\mathcal{T}_1|_1)$. From the fact that the sections $\sigma_k$ are parallel together with the definition of $p$, we conclude that $\sigma$ is induced by   
\begin{equation}
\label{eq:transreg:fusion:1}
\alxydim{@C=2cm}{T_1|_{(1,\beta)}= \inf L_{\beta,\gamma_1\pcomp \gamma_0}\ar[r]^-{\lambda(-\otimes p)} &  \inf L_{\beta, \gamma_2\pcomp \gamma_0}  = T_2|_{(1,\beta)}\text{.} }
\end{equation}
We apply this to $\beta=\gamma_2\pcomp\gamma_0$ and obtain the diagram shown in  \cref{diag:1}. Its outer vertical arrows are the maps of \cref{rem:sectionsandtrivializations} that have been used in order to define the isomorphisms of \cref{eq:ident:1} over $\gamma_1$ and $\gamma_2$, respectively.
\begin{figure}[t]
\begin{equation*}
\small
\alxydim{@C=6em@R=3em}{
\Des(\mathcal{E}_j|_{y},\mathcal{T}_2|_1) \ar[dr] \ar[rrr]^{\sigma} \ar[ddd] &&& \Des(\mathcal{E}_j|_{y},\mathcal{T}_1|_1) \ar[dl]\ar[ddd]
\\& \mqqqquad\inf L|_{ (\gamma_2 \star \gamma_0) \cup (\gamma_2 \star \gamma_0)}^{*}\otimes E_j|_{\gamma_2\pcomp \gamma_0} \ar[r]^{\lambda(-\otimes p)^{tr} \otimes \id} \ar[d]_{\tilde\lambda\otimes\id} &\inf L|_{ (\gamma_2 \star \gamma_0) \cup (\gamma_1 \star \gamma_0)}^{*}\otimes E_j|_{\gamma_2\pcomp \gamma_0}\mqqqquad \ar[d]^{\tilde\lambda \otimes \id} &
\\& \mqqqquad\inf L|_{(\gamma_2 \star \gamma_0)\cup (\gamma_2 \star \gamma_0)}\otimes E_j|_{\gamma_2\pcomp \gamma_0}\ar[r]_{\lambda(p \otimes-) \otimes \id} \ar[dl]_{\zeta_j} &\inf L|_{ (\gamma_1 \star \gamma_0) \cup (\gamma_2 \star \gamma_0)}\otimes E_j|_{\gamma_2\pcomp \gamma_0}\mqqqquad  \ar[dr]^{\zeta_j} &
\\
 E_j|_{\gamma_2 \star \gamma_0} \ar[rrr]_{\zeta_j(p \otimes -)} &&& E_j|_{\gamma _1\star \gamma_0} }
\end{equation*}
\caption{}
\label{diag:1}
\end{figure}
The diagram in   \cref{diag:1} is commutative: the subdiagrams on the sides give the construction of  \cref{rem:sectionsandtrivializations}, the subdiagram on the bottom is the relation between $\zeta_j$ and $\lambda$ from the definition of the 1-morphism $\mathcal{E}_j$, the subdiagram in the middle is obviously commutative, and the one on top is \cref{eq:transreg:fusion:1}. We embed the diagram of  \cref{diag:1} twice as subdiagram B into a new diagram shown in \cref{diag:2}.
\begin{figure}[t]
\begin{equation*}
\small
\alxydim{@C=-2em@R=2em}{\inf R'_{ij}|_{\gamma_2}(\mathcal{T}_2) \ar@{=}[drdd] \ar[dddddddd]_{\xi_{ij}|_{\gamma_2}}  &\inf R_{ij}'|_{\gamma_2}(\iota_2^{*}\mathcal{T})\ar@{=}[rr] \ar@{=}[dr] \ar[l]_{r_{\psi_{2}}}&&\inf R_{ij}'|_{\gamma_1}(\iota_1^{*}\mathcal{T}) \ar@{=}[dl]   \ar[r]^{r_{\psi_{1}}}& \inf R_{ij}'|_{\gamma_1}(\mathcal{T}_1) \ar[dddddddd]^{\xi_{ij}|_{\gamma_1}}
 \\ && \mqquad{\mathrm{Hom}(\Des(\mathcal{E}_i|_{x},\mathcal{T}|_0),\Des(\mathcal{E}_j|_{y}),\mathcal{T}|_{\frac{1}{2}})}\mqquad \ar[ddl]|{\tilde\sigma_2}\ar[dr]|{\tilde\sigma_1}&&
 \\ && *+[F-:<10pt>]{A}& \mathrm{Hom}(\Des(\mathcal{E}_i|_{x},\mathcal{T}_1|_0),\Des(\mathcal{E}_j|_{y},\mathcal{T}_1|_1))\ar[dd]^{\labelcref{eq:ident:1}}
 \ar@{=}[uur]
\\ & \mathrm{Hom}(\Des(\mathcal{E}_i|_{x},\mathcal{T}_2|_0),\Des(\mathcal{E}_j|_{y},\mathcal{T}_2|_1)) \mquad\ar[dd]_{\labelcref{eq:ident:1}}
\ar[urr]|{\sigma \circ -} &&& \\ &&*+[F-:<10pt>]{B}& \mathrm{Hom}(E_i|_{\cp_{x} \star \gamma_0} , E_j|_{\gamma _1\star \gamma_0})
\ar[dd]^{\labelcref{eq:ident:2}}\\& \mathrm{Hom}(E_i|_{\cp_{x} \star \gamma_0} , E_j|_{\gamma_2 \star \gamma_0}) \ar[urr]|{\zeta_j(p \otimes -)\circ -} \ar[dd]_{\labelcref{eq:ident:2}} &*+[F-:<10pt>]{C}
\\&&& \mathrm{Hom}(E_i|_{\gamma_0} ,\inf R_{ij}|_{\gamma_1}\otimes _{\inf A_i|_{x}}  E_i|_{\gamma_0}) \ar@{<-}[ddr]_{\labelcref{eq:ident:3}}
\\
& \mathrm{Hom}(E_i|_{\gamma_0} ,\inf R_{ij}|_{\gamma_2}\otimes _{\inf A_i|_{x}}   E_i|_{\gamma_0}) \ar[urr]|{(\phi_{ij}(p\otimes -)\otimes \id)\circ -} \ar@{<-}[dl]^<<<<{\labelcref{eq:ident:3}}
\\ *[r]{\inf R_{ij}|_{\gamma_2}}  \ar[rrrr]_{\phi_{ij}(p\otimes -)} &&&& *[l]{\inf R_{ij}|_{\gamma_1} }}
\end{equation*}
\caption{}
\label{diag:2}
\end{figure}
In that diagram, subdiagram A commutes by construction of $\sigma$, and C commutes by definition of $\epsilon_j$ under regression. All other subdiagrams are obviously commutative. Since the equality $\inf R_{ij}'|_{\gamma_2}(\iota_2^{*}\mathcal{T})=\inf R_{ij}'|_{\gamma_1}(\iota_1^{*}\mathcal{T})$ on top of this diagram is realized by $\phi'_{ij}|_{\gamma_1\cup \gamma_2}([\mathcal{T},1] \otimes -)$, it is now straightforward to conclude the commutativity of \cref{eq:transreg:fusion:2}. \end{proof}

\begin{lemma}
The bundle isomorphism $\xi_{ij}$ respects the lifted path concatenations, i.e., the diagram
\begin{equation*}
\small
\alxydim{@R=3em}{\inf R'_{jk}|_{\gamma_{23}} \otimes \inf R'_{ij}|_{\gamma_{12}} \ar[r]^-{\chi'_{ijk}} \ar[d]_{\xi_{jk} \otimes \xi_{ij}} &  \inf R'_{ik}|_{\gamma_{23} \pcomp \gamma_{12}} \ar[d]^{\xi_{ik}} \\ \inf R_{jk}|_{\gamma_{23}} \otimes \inf R_{ij}|_{\gamma_{12}} \ar[r]_-{\chi_{ijk}} &  \inf R_{ik}|_{\gamma_{23} \pcomp \gamma_{12}}}
\end{equation*} 
is commutative for all composable paths $\gamma_{12}\in P_{ij}$  and $\gamma_{23}\in P_{jk}$. 
\end{lemma}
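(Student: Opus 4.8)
The plan is to reduce the commutativity of the square to the associativity of the tensor product of bimodule bundles, using the freedom—granted by the lemma establishing that $\xi_{ij}$ is independent of the auxiliary path $\gamma_0$—to choose the three base paths entering $\xi_{ij}$, $\xi_{jk}$ and $\xi_{ik}$ in a mutually compatible fashion. Equivalently, writing the square with the inverse isomorphisms $\xi^{-1}$ as vertical arrows (so that the transgressed $\chi'_{ijk}$, being composition of homomorphisms in the trivializations coming from a section, occupies the top), the goal becomes to show $\xi_{jk}^{-1}(w)\circ\xi_{ij}^{-1}(v)=\xi_{ik}^{-1}(\chi_{ijk}(w\otimes v))$ for $v\in\inf R_{ij}|_{\gamma_{12}}$ and $w\in\inf R_{jk}|_{\gamma_{23}}$.

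First I would fix the geometry. Set $x:=\gamma_{12}(0)$, $y:=\gamma_{12}(1)=\gamma_{23}(0)$ and $z:=\gamma_{23}(1)$, and fix one path $\gamma_0\in P_{i_0i}$ from $x_0$ to $x$. I use $\gamma_0$ itself as the auxiliary path in $\xi_{ij}|_{\gamma_{12}}$ and in $\xi_{ik}|_{\gamma_{23}\pcomp\gamma_{12}}$, and the concatenation $\gamma_{12}\star\gamma_0\in P_{i_0j}$ (a path from $x_0$ to $y$) as the auxiliary path in $\xi_{jk}|_{\gamma_{23}}$. By the independence lemma this is legitimate, and it is precisely the choice that aligns the intermediate objects: the section \cref{eq:usualsection} defining the trivialization for $\xi_{jk}$ then agrees over the common point $y$ with the one used for $\xi_{ij}$, so that under \cref{eq:ident:1} the three transgressed Hom-spaces become $\inf R'_{ij}|_{\gamma_{12}}\cong\mathrm{Hom}(E_i|_{\gamma_0},E_j|_{\gamma_{12}\star\gamma_0})$, $\inf R'_{jk}|_{\gamma_{23}}\cong\mathrm{Hom}(E_j|_{\gamma_{12}\star\gamma_0},E_k|_{\gamma_{23}\star(\gamma_{12}\star\gamma_0)})$ and $\inf R'_{ik}|_{\gamma_{23}\pcomp\gamma_{12}}\cong\mathrm{Hom}(E_i|_{\gamma_0},E_k|_{(\gamma_{23}\pcomp\gamma_{12})\star\gamma_0})$. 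With the shared middle term $E_j|_{\gamma_{12}\star\gamma_0}$, the route through $\chi'_{ijk}$ (recalled from \cref{sec:liftpathcommp} to be composition of linear maps) is literally composition of homomorphisms.

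Next I would unwind $\xi^{-1}$ through \cref{eq:ident:2,eq:ident:3}. Under \cref{eq:ident:3}, $\xi_{ij}^{-1}(v)$ is the homomorphism $f\mapsto v\otimes_{\inf A_i|_x}f$, and \cref{eq:ident:2}—which is \cref{lem:compositionofbimodules} applied to $\inf R_{ij}\otimes_{\inf A_i}\inf R_{i_0i}\cong\inf R_{i_0j}$ tensored with $\inf F_0$—turns this into the induced map $E_i|_{\gamma_0}\to E_j|_{\gamma_{12}\star\gamma_0}$. Consequently the composite $\xi_{jk}^{-1}(w)\circ\xi_{ij}^{-1}(v)$ sends $f$ to $w\otimes_{\inf A_j|_y}(v\otimes_{\inf A_i|_x}f)$. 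By associativity of the tensor product, together with \cref{lem:compositionofbimodules} identifying $w\otimes_{\inf A_j|_y}v$ with $\chi_{ijk}(w\otimes v)$, this equals $f\mapsto\chi_{ijk}(w\otimes v)\otimes_{\inf A_i|_x}f$, which is exactly $\xi_{ik}^{-1}(\chi_{ijk}(w\otimes v))$. This is the asserted commutativity.

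The only non-formal ingredient is the reparameterization: the base paths $\gamma_{23}\star(\gamma_{12}\star\gamma_0)$ and $(\gamma_{23}\pcomp\gamma_{12})\star\gamma_0$ differ by a fixed-ends reparameterization and are thus thin homotopic, so I would absorb the discrepancy through the canonical isomorphism $d$ of \cref{LBGstr:3}. The main obstacle I expect is the bookkeeping of these $d$'s: one must check that the reparameterizations built into $\chi'_{ijk}$ and into \cref{eq:ident:1} cancel exactly against the associator of \cref{lem:compositionofbimodules} and the non-strict associativity of $\chi$ recorded in \cref{eq:lsg:pentagon}, leaving no residual scalar. I anticipate this will follow from the same superficiality argument as in \cref{lem:632}, since $\inf L$ and all $\inf R_{ij}$ carry superficial connections along which parallel transport over thin homotopies is canonical, combined once more with the independence of $\xi$ from $\gamma_0$.
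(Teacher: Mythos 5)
Your proposal is correct and takes essentially the same route as the paper's proof: the same compatible choice of auxiliary paths ($\gamma_0$ for $\xi_{ij}$ and $\xi_{ik}$, and $\gamma_{12}\pcomp\gamma_0$ for $\xi_{jk}$), the same reduction of $\chi'_{ijk}$ to composition of homomorphisms against $\chi_{ijk}$ realized as the bimodule tensor product of \cref{lem:compositionofbimodules}, with \cref{eq:lsg:pentagon}, \cref{eq:lsg:unit} and superficiality absorbing the reparameterizations. The bookkeeping of the $d$-isomorphisms that you defer at the end is exactly what the paper carries out explicitly, by comparing the restricted section $s_{23}$ with the section $s'_{23}$ through a thin homotopy and a 2-isomorphism of trivializations, and then chasing the large diagram of \cref{diag:3}.
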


\begin{proof}
We choose a path $\gamma_0$ connecting $x_0$ with $\gamma_{12}(0)$, induce the  section $s$ of \cref{eq:usualsection} along $\gamma_{23}\pcomp \gamma_{12}$, and obtain a trivialization $\mathcal{T}_{13}$ of $(\gamma_{23} \pcomp \gamma_{12})^{*}\mathcal{G}$. We have its restrictions to trivializations  $\mathcal{T}_{12}$ and $\mathcal{T}_{23}$ over $\gamma_{12}$ and $\gamma_{23}$, respectively; then we can use the definition of $\chi'_{ijk}$ under transgression, see \cref{sec:liftpathcommp}. It is straightforward to see that $\mathcal{T}_{12}$ is induced by the section $s_{12} := s \circ \iota_1$, and $\mathcal{T}_{23}$ is induced by the section $s_{23} := s \circ \iota_2$. 
\begin{comment}
Explicitly, we have
\begin{align*}
s(t) &= (\gamma_{23} \pcomp \gamma_{12})^{t} \pcomp \gamma_0
\\
s_{12}(t)&=\gamma_{12}^{t} \pcomp \gamma_0
\\
s_{23}(t)&=(\gamma_{23}^{t} \pcomp \gamma_{12}) \pcomp \gamma_0
\end{align*}
Indeed,
the maps $\iota_1,\iota_2: [0,1] \to [0,1]$ are covered by maps
\begin{align*}
[0,1] \ttimes{\gamma_{12}}{\ev_1} P_{x_0}M \to [0,1] \ttimes{\gamma_{23} \pcomp \gamma_{12}}{\ev_1} P_{x_0}M  &: (t,\beta) \mapsto (\iota_1(t), \beta )
\\
[0,1] \ttimes{\gamma_{23}}{\ev_1} P_{x_0}M \to [0,1] \ttimes{\gamma_{23} \pcomp \gamma_{12}}{\ev_1} P_{x_0}M  &: (t,\beta) \mapsto (\iota_2(t), \beta )\text{,}
\end{align*}
which are well-defined since 
\begin{align*}
\beta(1) &=\gamma_{12}(t)=(\gamma_{23}\pcomp \gamma_{12})(\frac{1}{2}t)=(\gamma_{23}\pcomp \gamma_{12})(\iota_1(t))
\\
\beta(1) &=\gamma_{23}(t)=(\gamma_{23}\pcomp \gamma_{12})(\frac{1}{2}t+\frac{1}{2})=(\gamma_{23}\pcomp \gamma_{12})(\iota_2(t))\text{.}
\end{align*}
As $s \circ \iota_1=s_{12}$, we have
\begin{equation*}
 \inf L|_{\beta \cup s_{12}(t)}=\inf L_{\beta\cup s(\iota_1(t))}=T|_{\iota_1(t),\beta} = T_{12}|_{(t,\beta)}\text{,}
\end{equation*}
and thus $\mathcal{T}_{12}$ is induced by $s_{12}$. We have
\begin{equation*}
\inf L|_{\beta \cup s_{23}(t)} =\inf L_{\beta \cup s(\iota_2(t))}=T|_{\iota_2(t),\beta}= T_{23}|_{(t,\beta) }
\end{equation*}
and thus $\mathcal{T}_{23}$ is induced by $s_{23}$. 
\end{comment}
While $s_{12}$ is again of the form \cref{eq:usualsection}, and thus can be used to define $\xi_{ij}$, the section $s_{23}$ is not of this form. Instead, we have $s_{23}(t) = (\gamma_{23}^{t} \pcomp \gamma_{12})\pcomp\gamma_{12}$.  In order to describe $\xi_{jk}$, we use the trivialization $\mathcal{T}_{23}'$ induced by the section $s'_{23}$ defined by $s'_{23}(t) := \gamma_{23}^{t} \pcomp (\gamma_{12} \pcomp \gamma_0)$. In order to compare $\mathcal{T}_{23}$ and $\mathcal{T}_{23}'$, we consider the section $\tilde s: [0,1] \to \inf L$ along $(s_{23},s_{23}')$, defined by
\begin{equation*}
\tilde s(t) := d_{s_{23}(t) \cup s_{23}(t),s_{23}(t) \cup s_{23}'(t)}(\nu_{s_{23}(t)})\text{.}
\end{equation*} 
Then, by \cref{rem:sectionsandtrivializations}, we obtain a 2-isomorphism $\psi: \mathcal{T}_{23} \Rightarrow \mathcal{T}_{23}'$ and  a commutative diagram
\begin{equation}
\label{eq:diagramtopleft}
\small
\alxydim{@C=1em@R=3em}{\inf R'_{jk}|_{\gamma_{23}}(\mathcal{T}'_{23}) \ar@{=}[r] \ar[d]_-{r_{\psi}} & \mathrm{Hom}(\Des(\mathcal{E}_j|_y,\mathcal{T}_{23}'|_0),\Des(\mathcal{E}_k|_{z},\mathcal{T}_{23}'|_1))  \ar[d]|{\mathrm{Hom}(\psi_0, \psi_1)} \ar[r] & \mathrm{Hom}(E_j|_{\cp_y \pcomp (\gamma_{23} \pcomp \gamma_{12})},E_k|_{\gamma_{23}\pcomp (\gamma_{12} \pcomp \gamma_0)}) \ar[d]|{\mathrm{Hom}(\zeta_j(\tilde s(0) \otimes -),\zeta_k(\tilde s(1) \otimes -))} \\ \inf R'_{jk}|_{\gamma_{23}}(\mathcal{T}_{23}) \ar@{=}[r] & \mathrm{Hom}(\Des(\mathcal{E}_j|_y,\mathcal{T}_{23}|_0),\Des(\mathcal{E}_k|_{z},\mathcal{T}_{23}|_1)) \ar[r] & \mathrm{Hom}(E_j|_{\gamma_{23}\pcomp\gamma_{12}},E_k|_{(\gamma_{23}\pcomp \gamma_{12})  \pcomp \gamma_0})\text{.}}
\end{equation} 
Using a thin homotopy $\Gamma_t$ between $s'_{23}(t) := \gamma_{23}^{t} \pcomp (\gamma_{12} \pcomp \gamma_0)$ and $s_{23}(t)=(\gamma_{23}^{t} \pcomp \gamma_{12}) \pcomp \gamma_0$, the fact that $\zeta_i$ is connection-preserving, and the fact that the canonical elements $\nu_{s_{23}(t)}$ are neutral under under $\zeta_i$, one can show 
that
$\zeta_j(\tilde s(0) \otimes -) = pt_{\Gamma_0}$ and $\zeta_k(\tilde s(1) \otimes -) = pt_{\Gamma_1}$, so that the morphism on the right hand side of the previous diagram is just $\mathrm{Hom}(pt_{\Gamma_0},pt_{\Gamma_1})$.We now replace $E_i$ by its explicit form as obtained from regression. In the first component,  $pt_{\Gamma_0}$ then becomes
\begin{equation*}
d_{\cp_y \pcomp (\gamma_{12} \pcomp \gamma_0),\gamma_{12} \pcomp \gamma_0} \otimes_{\mathcal{A}_{x_0}} \id_{\inf F_0}: \inf R_{i_0j}|_{\cp_y \pcomp (\gamma_{12} \pcomp \gamma_0)}\otimes_{\mathcal{A}_{x_0}}\inf F_0 \to \inf R_{i_0j}|_{\gamma_{12} \pcomp \gamma_0}\otimes _{\mathcal{A}_{x_0}}\inf F_0\text{,}
\end{equation*}
which can be identified via \cref{eq:lsg:unit} with $\chi_{i_0jj}|_{\gamma_{12} \pcomp\gamma_0,\cp_y}(\epsilon_j(y)\otimes -)^{-1}$. In the second component, $pt_{\Gamma_1}$ becomes
\begin{equation*}
d_{\gamma_{23}\pcomp (\gamma_{12} \pcomp \gamma_0),(\gamma_{23}\pcomp \gamma_{12})  \pcomp \gamma_0} \otimes_{\mathcal{A}_{x_0}} \id_{\inf F_0}: \inf R_{i_0k}|_{\gamma_{23}\pcomp (\gamma_{12} \pcomp \gamma_0)}\otimes _{\mathcal{A}_{x_0}}\inf F_0 \to   \inf R_{i_0k}|_{(\gamma_{23}\pcomp \gamma_{12})  \pcomp \gamma_0}\otimes _{\mathcal{A}_{x_0}}\inf F_0\text{.}
\end{equation*}
In combination with the commutativity of diagram \cref{eq:diagramtopleft}, this shows the commutativity of a diagram, which we find as a subdiagram in the upper left corner of the diagram shown in \cref{diag:3}.
\begin{figure}
\begin{equation*}
\small
\alxydim{@C=-4.9em@R=4em}{*[r]\txt{$\mquad\inf R'_{jk}|_{\gamma_{23}}(\mathcal{T}_{23}') \otimes \inf R'_{ij}|_{\gamma_{12}}(\mathcal{T}_{12})$} \ar[dddd]|{\labelcref{eq:ident:1}} &\hspace{22em} & *[l]\txt{$\inf R'_{jk}|_{\gamma_{23}}(\mathcal{T}_{23}) \otimes \inf R'_{ij}|_{\gamma_{12}}(\mathcal{T}_{12})$}  \ar[ll]_-{r_{\mathcal{T}_{23},\mathcal{T}_{23}' }\otimes \id} \ar[r]^-{\chi_{ijk}'} & \txt{$\inf R'_{ik}|_{\gamma_{23} \pcomp \gamma_{12}}(\mathcal{T})$} \ar@{=}[dl]\ar[dd]^>>>>>>>>>>{\labelcref{eq:ident:1}}
\\ & \mqqqquad\txt{$\mathrm{Hom}(\Des(\mathcal{E}_j|_{y},\mathcal{T}_{23}|_0),\Des(\mathcal{E}_k|_z,\mathcal{T}_{23} |_1)) $\\$ \otimes \mathrm{Hom}(\Des(\mathcal{E}_i|_x,\mathcal{T}_{12}|_0),\Des(\mathcal{E}_j|_y,\mathcal{T}_{12}|_{1}))$} \ar@{=}[ru] \ar[r]_-{\circ } \ar[d]|{\mathrm{Hom}(\varphi_{s_{23}(0)},\varphi_{s_{23}(1)}) \otimes \mathrm{Hom}(\varphi_{s_{12}(0)},\varphi_{s_{12}(1)})}  & \mathrm{Hom}(\Des(\mathcal{E}_i|_{x},\mathcal{T}|_0),\Des(\mathcal{E}_k|_z,\mathcal{T} |_1))\qquad\qquad \ar[dr]_{\mathrm{Hom}(\varphi_{s(0)},\varphi_{s(1)})\quad}  \mqquad  & \\   & \txt{$\mathrm{Hom}(E_j|_{\gamma_{12} \pcomp \gamma_0},E_k|_{(\gamma_{23} \pcomp \gamma_{12}) \pcomp \gamma_0})$\\$ \otimes \mathrm{Hom}(E_i|_{\cp_x \pcomp \gamma_0},E_j|_{\gamma_{12} \pcomp \gamma_0})$}  \ar@{<-}[rd]|{\mathrm{Hom}(\id,\chi_{i_0ik}) \otimes \mathrm{Hom}(\chi_{i_0ii},\id)} \ar[rr]_-{\circ} && *[l]\txt{$\mathrm{Hom}(E_i|_{\cp_x \pcomp \gamma_{0}},E_k|_{(\gamma_{23} \pcomp \gamma_{12})\pcomp \gamma_0})\mquad$} \ar[ddddd]|<<<<<<<<<<<<<<<<<<<<<<<<<<<<<<<{ \labelcref{eq:ident:3}^{-1}\circ \labelcref{eq:ident:2}} \\
 &  & \mqqquad\txt{$\mathrm{Hom}(E_j|_{\gamma_{12} \pcomp \gamma_0},\inf R_{ik}|_{\gamma_{23} \pcomp \gamma_{12}} \otimes_{\inf A_i|_{x}} E_i|_{\gamma_0})$\\$ \otimes \mathrm{Hom}(E_i|_{\gamma_0},E_j|_{\gamma_{12} \pcomp \gamma_0}$)}   & 
\\ *[r]\txt{$\mquad\mathrm{Hom}(E_j|_{\cp_y \pcomp (\gamma_{12} \pcomp \gamma_0)},E_k|_{\gamma_{23} \pcomp (\gamma_{12} \pcomp \gamma_0)})$\\$ \otimes \mathrm{Hom}(E_i|_{\cp_x \pcomp \gamma_0},E_j|_{\gamma_{12} \pcomp \gamma_0})$} \ar[ddd]|{\labelcref{eq:ident:3}^{-1}\circ \labelcref{eq:ident:2}} \ar@{<-}[rdd]|{\quad\mathrm{Hom}(\chi_{i_0jj},\chi_{i_0jk}) \otimes \mathrm{Hom}(\chi_{i_0ii},\chi_{i_0ij})} \ar[ruu]|-{\mathrm{Hom}(\chi_{i_0jj}^{-1},d) \otimes \id} &&& 
\\
&& \mqqquad\txt{$\mathrm{Hom}(E_j|_{\gamma_{12} \pcomp \gamma_0},\inf R_{jk}|_{\gamma_{23}} \otimes_{\inf A_{j}|_y} \inf R_{ij}|_{\gamma_{12} }\otimes_{\inf A_i|_{x}} E_i|_{\gamma_0})$\\$ \otimes \mathrm{Hom}(E_i|_{\gamma_0},\inf R_{ij}|_{\gamma_{12}} \otimes_{\inf A_i|_{x}}E_i|_{\gamma_0})$}  \ar[uu]|{\mathrm{Hom}(\id,\chi_{ijk} \otimes \id) \otimes \mathrm{Hom}(\id,\chi_{i_0ij})}
\\
&\txt{$\mathrm{Hom}(E_j|_{\gamma_{12} \pcomp \gamma_0},\inf R_{jk}|_{\gamma_{23}} \otimes_{\inf A_{j}|_y} E_j|_{\gamma_{12} \pcomp \gamma_0})$\\$ \otimes \mathrm{Hom}(E_i|_{\gamma_0},\inf R_{ij}|_{\gamma_{12}} \otimes_{\inf A_i|_{x}} E_i|_{\gamma_0})$}  \ar@{<-}[dl]_<<<<<<<<<<<<<<<<{\labelcref{eq:ident:3}} \ar@{<-}[ru]|-{\mathrm{Hom}(\id,\id \otimes \chi_{i_0ij}) \otimes \id}&&  && 
\\
 *[r]\txt{\mquad$\inf R_{jk}|_{\gamma_{23}} \otimes \inf R_{ij}|_{\gamma_{12}} $} 
\ar[rrr]_{\chi_{ijk}} &&& \txt{$\inf R_{ik}|_{\gamma_{23} \pcomp \gamma_{12}} $}}
\end{equation*}
\caption{}
\label{diag:3}
\end{figure}
The commutativity of this diagram is what we want to show. The triangular subdiagrams commute by definition of one of their arrows. The four-sided diagram at the top is the definition of $\chi'_{ijk}$  under regression. The four-sided diagram below is just linear algebra, given that $s(0)=s_{12}(0)$, $s_{12}(1)=s_{23}(0)$, and $s_{23}(1)=s(1)$. The Pentagon diagram in the middle has four tensor factors, split as $\mathrm{Hom}(-,-) \otimes \mathrm{Hom}(-,-)$, which commute separately. Indeed, commutativity in the first, third, and fourth factor is obvious, and in the second factor it is precisely the Pentagon diagram of \cref{eq:lsg:pentagon}. Finally, there is a strangely shaped diagram at the lower right corner; this diagram commutes again by pure linear algebra.
\begin{comment}
Let $v\in \inf R_{ij}|_{\gamma_{12}}$ and $w\in \inf R_{jk}|_{\gamma_{23}}$.
If $\psi^{v}$, $\psi^{w}$, and $\psi^{\chi_{ijk}(w \otimes v)}$ denote the operations of taking the pure tensor with the index, then commutativity boils down to the identity
\begin{equation*}
(\chi_{ijk} \otimes \id) \circ (\id \otimes \chi_{i_0ij}^{-1}) \circ \psi^{w} \circ \chi_{i_0ij} \circ \psi^{v} = \psi^{\chi_{ijk}(w \otimes v)}\text{.}
\end{equation*}
This can easily be checked to be satisfied. 
\end{comment}
\end{proof}

\FloatBarrier

So far we have completed the definition of a LBG morphism  $(\varphi,\xi)$ between the transgression of the regression  of a LBG object and this LBG object. 
It remains to prove that this construction depends naturally on the given LBG object; this is the content of the following lemma.

\begin{lemma}
Let $(\varphi,\xi):(\inf L_1,\lambda_1, \inf R_1,\phi_1,\chi_1,\epsilon_1,\alpha_1) \to (\inf L_2,\lambda_2, \inf R_2,\phi_2,\chi_2,\epsilon_2,\alpha_2)$ be a LBG morphism. We label the transgression of regressions of LBG objects and morphisms with primes, and denote by
\begin{align*}
(\varphi_1,\xi_1):(\inf L_1',\lambda_1', \inf R_1',\phi_1',\chi_1',\epsilon_1',\alpha_1') \to (\inf L_1,\lambda_1, \inf R_1,\phi_1,\chi_1,\epsilon_1,\alpha_1)
\\
(\varphi_2,\xi_2):(\inf L_2',\lambda_2', \inf R_2',\phi_2',\chi_2',\epsilon_2',\alpha_2') \to (\inf L_2,\lambda_2, \inf R_2,\phi_2,\chi_2,\epsilon_2,\alpha_2)
\end{align*}
the LBG morphisms associated to the two LBG objects. Then, we have
\begin{equation*}
(\varphi_2,\xi_2) \circ (\varphi',\xi') =(\varphi,\xi) \circ (\varphi_1,\xi_1)\text{.} \end{equation*}
\begin{comment}
This is, the diagram
\begin{equation*}
\alxydim{@=4em}{(\inf L_1',\lambda_1', \inf R_1',\phi_1',\chi_1',\epsilon_1',\alpha_1') \ar[d]_{(\varphi_1,\xi_1)} \ar[r]^-{(\varphi',\xi')} & (\inf L_2',\lambda_2', \inf R_2',\phi_2',\chi_2',\epsilon_2',\alpha_2') \ar[d]^{(\varphi_2,\xi_2)} \\ (\inf L_1,\lambda_1, \inf R_1,\phi_1,\chi_1,\epsilon_1,\alpha_1) \ar[r]_{(\varphi,\xi)} & (\inf L_2,\lambda_2, \inf R_2,\phi_2,\chi_2,\epsilon_2,\alpha_2)}
\end{equation*}
is commutative. 
\end{comment}
\end{lemma}

\begin{proof}
Commutativity of the line bundle isomorphisms, $\varphi_2 \circ \varphi'=\varphi \circ \varphi_1$, has been shown in \cite[Section 6.2]{waldorf10}. For the vector bundle isomorphisms, the relevant statement is the commutativity of the  diagram
\begin{equation}
\label{eq:diagramxixiprime}
\small
\alxydim{@R=3em}{\inf R'_{1,ij}|_{\gamma} \ar[r]^{\xi'_{ij}}  \ar[d]_{\xi_{1,ij}} & \inf R'_{2,ij}|_{\gamma} \ar[d]^{\xi_{2,ij}}  \\ \inf R_{1,ij}|_{\gamma} \ar[r]_{\xi_{ij}} & \inf R_{2,ij}|_{\gamma} \text{,}}
\end{equation}
for all paths $\gamma\in P_{ij}$. We pick an arbitrary $\gamma\in P_{ij}$ and set $x:=\gamma(0)$ and $y:=\gamma(1)$. For the construction of $\xi_{1,ij}$ and $\xi_{2,ij}$  we need to choose a path $\gamma_0\in P_{i_0i}$ with $\gamma_0(1)=x$, inducing sections $s_1$ and $s_2$ into the surjective submersions of $\gamma^{*}\mathcal{G}_1$ and $\gamma^{*}\mathcal{G}_2$, respectively, where $\mathcal{G}_1,\mathcal{G}_2$ are the regressed bundle gerbes, see \cref{eq:usualsection}. These sections, in turn, induce trivializations $\mathcal{T}_1$ and $\mathcal{T}_2$  of $\gamma^{*}\mathcal{G}_1$ and $\gamma^{*}\mathcal{G}_2$, respectively.

Let $(\mathcal{A},\psi)$ be the regression of $(\varphi,\xi)$, i.e. $\mathcal{A}:\mathcal{G}_1 \to \mathcal{G}_2$ is a 1-isomorphism induced from  the line bundle morphism $\cup^{*}\varphi$, and $\psi=\{\psi_i\}_{i\in I}$ consists of 2-isomorphisms $\psi_{i}: \mathcal{E}_{1,i} \Rightarrow \mathcal{E}_{2,i} \circ \mathcal{A}$  induced from the vector bundle homomorphism $\tilde\psi_{i} := \xi_{i_0i} \otimes \ev_0^{*}f_0:E_{1,i} \to E_{2,i}$, see \cref{sec:regression}. The transgression of $(\mathcal{A},\varphi)$ is defined by
\begin{equation*}
\xi_{ij}': \inf R_{1,ij}|_{\gamma}(\mathcal{T}_1) \to \inf R_{2,ij}|_{\gamma}(\mathcal{T}_2): \varphi \mapsto \psi_{i,1} \circ \varphi \circ \psi_{i,0}^{-1}\text{,}
\end{equation*}
see \cref{sec:nattrans}.
Evaluating the construction of the isomorphisms $\psi_{i,1}$ and $\psi_{i,0}$ in the present situation, using that the trivializations $\mathcal{T}_1$ and $\mathcal{T}_2$ are induced from sections, we obtain commutative diagrams
\begin{equation*}
\small
\alxydim{@C=1.3cm@R=3em}{\Des(\mathcal{E}_{1,i}|_x,\mathcal{T}_1|_0) \ar[d]_{\varphi_s} \ar[r]^-{\psi_{i,0}}  & \Des(\mathcal{E}_{2,i}|_x ,\mathcal{T}_2|_0)\ar[d]^{\varphi_s}  \\ E_{1,i}|_{\cp_x \pcomp \gamma_0} \ar[r]_{\tilde\psi_i} & E_{2,i}|_{\cp_x \pcomp \gamma_0}}
\quand
\alxydim{@C=1.3cm@R=3em}{\Des(\mathcal{E}_{1,j}|_y,\mathcal{T}_1|_1) \ar[d]_{\varphi_s} \ar[r]^{\psi_{i,1}} & \Des(\mathcal{E}_{2,j}|_y,\mathcal{T}_2|_1 )\ar[d]^{\varphi_s}  \\ E_{1,j}|_{\gamma \pcomp \gamma_0} \ar[r]_{\tilde\psi_j} & E_{2,j}|_{\gamma \pcomp \gamma_0}}
\end{equation*}
where vertical arrows are the bundle morphisms of \cref{rem:sectionsandtrivializations}. This shows the commutativity of the first subdiagram of the diagram shown in \cref{diag:4}.
\begin{figure}[t]
\begin{equation*}
\small
\alxydim{@C=12em@R=3em}{\inf R'_{1,ij}|_{\gamma}(\mathcal{T}_1) \ar[d]_{\labelcref{eq:ident:1}} \ar[r]^{\xi'_{ij}} & \inf R'_{2,ij}|_{\gamma}(\mathcal{T}_2) \ar[d]^{\labelcref{eq:ident:1}}
\\
\mathrm{Hom}(E_{1,i}|_{\cp_x \pcomp \gamma_0},E_{1,j}|_{\gamma \pcomp \gamma_0}) \ar[d]_{\labelcref{eq:ident:2}} \ar[r]^{\mathrm{Hom}(\tilde\psi_i,\tilde\psi_j)} & \mathrm{Hom}(E_{1,i}|_{\cp_x \pcomp \gamma_0},E_{1,j}|_{\gamma \pcomp \gamma_0}) \ar[d]^{\labelcref{eq:ident:2}}
\\
\mathrm{Hom}(E_{1,i}|_{\gamma_0},\inf R_{1,ij}|_{\gamma}\otimes _{\inf A_{1,i}|_{x}}  E_{1,i}|_{\gamma_0}) \ar[d]_{\labelcref{eq:ident:3}} \ar[r]^{\mathrm{Hom}(\xi_{i_0i} \otimes \ev_0^{*}f_0,\xi_{ij} \otimes \xi_{i_0i} \otimes \ev_0^{*}f_0)} & \mathrm{Hom}(E_{2,i}|_{\gamma_0},\inf R_{2,ij}|_{\gamma}\otimes _{\inf A_{2,i}|_{x}}  E_{2,i}|_{\gamma_0})\ar[d]^{\labelcref{eq:ident:3}}
\\
\inf R_{1,ij}|_{\gamma} \ar[r]_{\xi_{ij}} & \inf R_{2,ij}|_{\gamma}\text{.}}
\end{equation*} 
\caption{}
\label{diag:4}
\end{figure}
The diagram in the middle is commutative because $\xi_{ij}$ is compatible with the lifted path concatenation, and the diagram at the bottom is obviously commutative. The outer shape of the diagram of \cref{diag:4} is \cref{eq:diagramxixiprime}. 
\end{proof}

\begin{appendix}

\setsecnumdepth{2}

\section{Vector bundles and algebra bundles over diffeological spaces}

\subsection{Diffeological vector bundles}

\label{sec:diffvectorbundle}

In this section we provide the basics about vector bundles over diffeological spaces. For the definition we follow  \cite[Art. 8.9]{iglesias1}.
Let $X$ be a diffeological space.

\begin{definition}
\label{def:vb}
A \emph{complex vector bundle} of rank $k$ over $X$ consists of the following structure:
\begin{enumerate}[(a)]

\item 
a diffeological space $E$, the total space,

\item
a smooth map $\pi:E \to X$, the projection,

\item
a complex vector space structure on each fiber $E|_x := \pi^{-1}(\{x\})$, for all $x\in X$. 

\end{enumerate}
The following condition has to be satisfied: for each plot $c:U\to X$ and each point $u\in U$ there exists an open neighborhood $u\in W \subset U$ and a diffeomorphism
\begin{equation*}
\phi: W \times \C^{k} \to W \times_X E
\end{equation*}
that covers the identity map on $W$, and restricts to a linear map $\phi|_w: \C^{k} \to E|_{c(w)}$   over each point $w\in W$. 

\end{definition}

A  \emph{morphism} between vector bundles $E$ and $E'$ over $X$ is a smooth map $\varphi:E\to E'$  that commutes with the projections and restricts to a linear map  $\varphi|_x: E|_x \to E|_x'$ over each fiber. \begin{comment}
One can show that a map $\varphi:E \to E'$ with $\pi'\circ \varphi=\pi$ is a bundle morphism if for every plot $c:U \to X$ and every point $u\in U$ there exist local trivializations $\phi$ and $\phi'$ defined on an open neighborhood $W\subset U$, such that the map $W \times \C^{k} \to W \times \C^{k}:(w,v) \mapsto \phi'^{-1}((\id \times \varphi)(\phi(w,v)))$ is smooth and fibrewise linear.

Indeed, if the latter is true, consider a plot $c:U \to E$. We have to show that $\varphi \circ c:U \to E'$ is a plot. First, $\pi \circ c:U \to X$ is a plot, and for each $u\in U$ there exists an open neighborhood $W\subset U$ and trivializations $\phi$ and $\phi'$ such that $f:=\phi'^{-1}\circ (\id \times \varphi)\circ \phi$ is smooth. Now,
\begin{equation*}
(\varphi \circ c)(w)=(\pr_{E'}\circ  (\id \times \varphi))(w,c(w)) =(\pr_{E'}\circ \phi')\circ  (f\circ \phi^{-1}\circ (\id \times c)\circ \Delta)(w)
\end{equation*}
where $\Delta$ is the diagonal map. The map $\phi^{-1}\circ (\id \times c)\circ \Delta:W \to W \times \C^{k}$ is smooth, and so is the composition with $f$. The map $\pr_{E'}\circ \phi'$ is a a plot of $E'$. This shows that $\varphi\circ c$ is a plot. Linearity is clear.
\end{comment}
Hermitian vector bundles and unitary bundle morphisms are defined in the obvious analogous way.

As usual, vector bundles can be associated to principal bundles via representations. For principal bundles over diffeological spaces we use the definition of \cite{waldorf9}.

\begin{lemma}
\label{lem:assbun}
Suppose $G$ is a Lie group and $\rho:G \to \mathrm{GL}(\C^{k})$ is a Lie group homomorphism. Suppose further that $P$ is a principal $G$-bundle over $X$. Let $E := P \times_G \C^{k}$ be equipped with the quotient diffeology, the projection induced from $P$, and the fibrewise vector space structure of $\C^{k}$. Then, $E$ is a vector bundle over $X$. 
\end{lemma}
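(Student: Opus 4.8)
The plan is to build the local trivializations required by \cref{def:vb} directly from a local section of $P$, using the quotient diffeology to handle smoothness. Recall from the notion of diffeological principal bundle in \cite{waldorf9} that $P$ is locally trivial over plots: for every plot $c\maps U \to X$ and every $u\in U$ there is an open neighbourhood $u\in W\subset U$ and a smooth map $s\maps W \to P$ with $\pi_P\circ s=c|_W$; moreover the division map $\tau\maps P\times_X P \to G$, determined by $p'=p\cdot\tau(p,p')$, is smooth. These two facts are the essential inputs.

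First I would write $[p,v]$ for the class of $(p,v)$ in $E=P\times_G\C^{k}$ and let $\mathrm{pr}\maps P\times\C^{k}\to E$ be the quotient projection. Given $c$, $u$, $W$, $s$ as above, define
\[
\phi\maps W\times\C^{k}\to W\times_X E,\qquad (w,v)\mapsto\big(w,[s(w),v]\big).
\]
Since $\pi_E([s(w),v])=\pi_P(s(w))=c(w)$, this lands in $W\times_X E$ and covers $\mathrm{id}_W$. Over each $w$ the map $v\mapsto[s(w),v]$ is exactly the canonical linear isomorphism $\C^{k}\to E|_{c(w)}$ that defines the fibrewise vector space structure (its independence of the chosen point of $P|_{c(w)}$ rests on the linearity of $\rho$), so $\phi$ is fibrewise linear and bijective. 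Smoothness of $\phi$ is immediate, since in the second component it equals $\mathrm{pr}\circ(s\times\mathrm{id}_{\C^{k}})$, a composite of smooth maps.

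The remaining and only nontrivial point is the smoothness of the inverse, which I expect to be the main obstacle. Its formula is $\phi^{-1}(w,[p,v])=(w,\rho(\tau(s(w),p))\cdot v)$; well-definedness under $(p,v)\mapsto(p\cdot g,\rho(g)^{-1}v)$ is a one-line check using $\tau(s(w),p\,g)=\tau(s(w),p)\,g$ and the multiplicativity of $\rho$. To prove it is smooth I would take an arbitrary plot of $W\times_X E$; by the very definition of the quotient diffeology on $E$, after shrinking the domain its $E$-component lifts through $\mathrm{pr}$ to a plot $(\tilde p,\tilde v)$ of $P\times\C^{k}$. Writing $w$ for the $W$-component of the plot, the maps $s\circ w$ and $\tilde p$ take values in the same fibre of $P$ over the common base, so $g:=\tau(s\circ w,\tilde p)$ is a smooth $G$-valued map, and hence $\phi^{-1}$ carries the plot to $(w,\rho(g)\cdot\tilde v)$, which is smooth because $\rho$ is. This shows $\phi^{-1}$ is smooth, so $\phi$ is a diffeomorphism and $E$ satisfies all conditions of \cref{def:vb}. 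The crux is thus the interplay of the quotient diffeology (which supplies the local lifts) with the smoothness of the division map of the diffeological principal bundle; the algebraic verifications are routine.
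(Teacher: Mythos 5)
Your proof is correct and follows essentially the same route as the paper: both construct the trivialization $\phi(w,v)=(w,[s(w),v])$ from a local lift of the plot (the paper invokes the subduction property of $\pi$, which is your "local triviality over plots"), define the inverse via the division map composed with $\rho$, and prove smoothness of the inverse by locally lifting plots of $E$ through the quotient projection to plots of $P\times\C^{k}$ and appealing to the smoothness of the division map $P\times_X P\to G$. The only difference is cosmetic bookkeeping in how the plot of $W\times_X E$ is unpacked.
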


\begin{proof}
Since the projection $\pi:P\to X$ of a principal $G$-bundle is a subduction, every plot $c:U \to X$ and every point $u\in U$ admit an open neighborhood $u\in W \subset U$ with a lift:  a plot $\tilde c: W \to P$ such that $\pi\circ \tilde c=c$. Then we define a local trivialization of $E$ by
\begin{equation*}
\phi(w,v) := (w,[\tilde c(w),v])\text{.}
\end{equation*} 
This is obviously smooth and fiber-wise linear. An inverse is defined in the following way. Suppose $(w,[p,v])\in W \times_X E$. Since $P$ is a principal $G$-bundle, there exists a unique $g_{p,w}\in G$ such that $p=\tilde c(w)g_{p,w}$. We set $\phi^{-1}(w,[p,v]) := (w,\rho(g_{p,w})(v))$. It is easy to check that this is inverse to $\phi$.
\begin{comment}
For the properties of an inverse, we check that
\begin{equation*}
\phi^{-1}(\phi(w,v))=\phi^{-1}(w,[\tilde c(w),v])=(w,\rho(1)(v))=(w,v)\text{,}
\end{equation*}
since $g_{\tilde c(w),\tilde c(w)}=1$, and
\begin{equation*}
\phi(\phi^{-1}(w,[p,v]))=\phi(w,\rho(g_{p,w})(v))=(w,[\tilde c(w),\rho(g_{p,w})(v)])=(w,[\tilde c(w)g_{p,w},v])=(w,[p,v])\text{.}
\end{equation*}
\end{comment}
In order to check the smoothness of $\phi^{-1}$, let $d:W' \to W \times_X E$ be a plot, i.e. $d=(d_1,d_2)$ where $d_1: W' \to W$ is a smooth map and $d_2: W' \to E$ is a plot of $E$, such that $\pi\circ d_2=c \circ d_1$. We have to show that $\phi^{-1} \circ d:W' \to\ W \times \C^{k}$ is smooth, which can be done  locally. 
 By definition of the quotient diffeology of $E$, $W'$  can be covered by smaller open sets $W''$ such that $d_2|_{W''} = [p,v]$, where $p: W'' \to P$ is a plot of $P$ and $v: W'' \to \C^{k}$ is smooth. Now we have to check that 
\begin{equation*}
\phi^{-1}\circ d|_{W''}: W'' \to W\times \C^{k}:w \mapsto (d_1(w),\rho(g_{p(w),d_1(w)})(v(w)))
\end{equation*}
is smooth. This follows from the definition of principal $G$-bundles, according to which the map
$\delta: P \times_X P \to G$ that induces $g_{p,v} := \delta(p,v)$ is smooth. 
\end{proof}

\begin{remark}
\begin{enumerate}[(a)]

\item 
If the image of $\rho$ in \cref{lem:assbun} is contained in $U(\C^{k})$, then $E$ is a hermitian vector bundle.

\item
It is easy to check that all familiar operations with vector bundles can be performed: pullback, tensor product, dual bundles, Hom-bundles etc. 

\item
If $X$ is a smooth manifold, considered as a diffeological space, and $E$ is a vector bundle over $X$ in the sense of \cref{def:vb}, then there exists a unique smooth manifold structure on $E$ that induces the given diffeology and gives $E$ the structure of a smooth vector bundle over $X$. Indeed, this smooth manifold structure is defined  via local trivializations of $E$, whose transition functions are smooth. 

\end{enumerate}
\end{remark}

We continue with connections on vector bundles over diffeological spaces, for which no established definition exists. Since tangent vectors in diffeological spaces are notoriously difficult to handle, we define connections  via their parallel transport. 
First  we recall the following prerequisite. By a \emph{path} in $X$ we understand a smooth map $\gamma:[0,1] \to X$ with sitting instants, and we denote by $PX \subset C^{\infty}([0,1],X)$ the space of paths, equipped with its natural diffeology. A smooth map $f$ between smooth manifolds is said to have \emph{rank $k$} if $\mathrm{rk}(\mathrm{d}f_x)\leq k$ for all points $x$ in its domain.
 A smooth map has rank $k$ if and only if the pullback of every $(k+1)$-forms vanishes \cite[Lemma 4.2]{schreiber5}. That condition makes sense for smooth maps between diffeological spaces, and we define the rank of maps between diffeological spaces in this way. 

\begin{definition}
\label{def:thinhomotopic}
Two paths $\gamma_1,\gamma_2\in PX$ in a diffeological space $X$ are called \emph{thin homotopic}, if there exists a path $h\in PPX$ such that
\begin{enumerate}[(a)]

\item 
it is a homotopy:
$h(0)=\gamma_1$ and $h(1)=\gamma_2$

\item
it fixes  end-points: $h(s)(0)=\gamma_1(0)=\gamma_2(0)$ and $h(s)(1)=\gamma_1(1)=\gamma_2(1)$ for all $s\in [0,1]$. 

\item
it is thin: the map $h^{\vee}:[0,1]^2\to X:(s,t) \mapsto h(s)(t)$ has rank one.  

\end{enumerate}
\end{definition}

Now we are in position to define a connection on a vector bundle $E$ over a diffeological space $X$.

\begin{definition}
\label{def:connection}
A \emph{connection} on $E$ is a family of  linear maps $pt_{\gamma}:E|_{\gamma(0)} \to E|_{\gamma(1)}$, for each path $\gamma$ in $X$, such that the following conditions are satisfied:
\begin{enumerate}[(a)]

\item 
\label{def:connection:a}
$pt_{\gamma}$ depends only on the thin homotopy class of the path $\gamma$.

\item
\label{def:connection:b}
$pt_{\gamma_2\pcomp \gamma_1}=pt_{\gamma_2} \circ pt_{\gamma_1}$ for all composable paths in $X$.

\item
\label{def:connection:c}
for each local trivialization $\phi: W \times \C^{k} \to W \times_X E$ of $E$ there exists a 1-form $\omega_{\phi}\in\Omega^1(W,\mathfrak{gl}(\C^{k}))$ such that for every  $\gamma \in PW$ the diagram
\begin{equation*}
\small
\alxydim{@C=2cm@R=3em}{\C^{k} \ar[r]^{\exp(\omega_{\phi})(\gamma)}  \ar[d]_{\phi|_{\gamma(0)}} & \C^{k} \ar[d]^{\phi|_{\gamma(1)}} \\E_{c(\gamma(0))} \ar[r]_{pt_{c\circ\gamma}} & E_{c(\gamma(1))}}
\end{equation*}
is commutative, where $\exp(\omega_{\phi})(\gamma)\in \mathrm{GL}(\C^{k})$ is the path-ordered exponential of $\omega_{\phi}$ along $\gamma$.

\end{enumerate} 
\end{definition}

\begin{remark}
Under the assumption that conditions \cref{def:connection:a*} and \cref{def:connection:b*} hold,  \cref{def:connection:c*} is equivalent to  the following  condition. 
\begin{enumerate}[(c')]

\item 
\label{def:connection:cprime}
for each local trivialization $\phi: W \times \C^{k} \to W \times_X E$ of $E$ the map $p_{\phi}:PW \to \mathrm{GL}(\C^{k})$, where $p_{\phi}(\gamma)\in \mathrm{GL}(\C^{k})$ is the linear isomorphism
\begin{equation*}
\alxydim{}{\C^{k} \ar[r]^-{\phi|_{\gamma(0)}} &  E_{c(\gamma(0))} \ar[r]^{pt_{c\circ\gamma}} & E_{c(\gamma(1))} \ar[r]^-{\phi|_{\gamma(1)}^{-1}} & \C^{k}\text{,}}
\end{equation*}
is smooth. 
\end{enumerate}
The equivalence uses the theory of smooth functors \cite{schreiber3}. Indeed, if $\omega_{\phi}$ exists, then the map $p_{\phi}$ is $\exp(\omega_{\phi}):PW \to \mathrm{GL}(\C^{k})$, and hence smooth. Conversely, if $p_{\phi}$ is smooth, then it follows from \cref{def:connection:a*,def:connection:b*} that it defines a smooth functor $p_{\phi}:\mathcal{P}_1(W) \to B\mathrm{GL}(\C^{k})$, corresponding to a 1-form $\omega_{\phi}$ such that $p_{\phi}=\exp(\omega_{\phi})$. 

\end{remark}

If $E$ is hermitian, then a connection is called \emph{unitary} if $pt_{\gamma}$ is unitary and the 1-forms $\omega_{\phi}$ of all local trivializations $\phi$ take values in $\mathfrak{u}(n)$.  A connection is called \emph{flat} if $pt_{\gamma}$ depends only on the homotopy class of $\gamma$.

Let $E$ and $E'$ be vector bundles over $X$ equipped with connections $pt$ and $pt'$, respectively. 
A bundle morphism $\varphi:E \to E'$ is called \emph{connection-preserving} if it commutes with the parallel transport, i.e. the diagram
\begin{equation*}
\small
\alxydim{@R=3em}{E|_x \ar[r]^{pt_{\gamma}} \ar[d]_{\varphi} & E|_y \ar[d]^{\varphi} \\ E'|_{x} \ar[r]_{pt'_{\gamma}} & E'|_y}
\end{equation*}
is commutative for all paths $\gamma\in PX$, with $x:=\gamma(0)$ and $y := \gamma(1)$. 

We shall verify that our notion of a connection is compatible with the existing notion of a connection on a principal bundle (defined in \cite{waldorf9} as a Lie algebra-valued 1-form) under the associated bundle construction of \cref{lem:assbun}.

\begin{lemma}
\label{lem:assbuncon}
Suppose $P$ is a principal $G$-bundle over $X$ and $\omega\in \Omega^1(P,\mathfrak{g})$ is a connection on $P$. We denote by $\tau^{\omega}_{\gamma}: P_{\gamma(0)} \to P_{\gamma(1)}$ the parallel transport along a path $\gamma \in PX$. Let $\rho: G \to \mathrm{GL}(\C^{k})$ be a Lie group homomorphism. Then, the formula
\begin{equation*}
pt_{\gamma}([p,v])  := [\tau_{\gamma}^{\omega}(p),v]
\end{equation*}
defines a connection on the associated vector bundle $P \times_{G} \C^{k}$. \end{lemma}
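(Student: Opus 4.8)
The plan is to verify the three conditions of \cref{def:connection}, after first checking that the formula is well-posed. Well-definedness on equivalence classes and fiber-wise linearity rest on the $G$-equivariance of the parallel transport on $P$, namely $\tau^{\omega}_{\gamma}(pg)=\tau^{\omega}_{\gamma}(p)\,g$ for all $g\in G$, a standard property of connections on principal bundles in the diffeological framework of \cite{waldorf9}. Indeed, since $[pg,v]=[p,\rho(g)v]$, we compute
\begin{equation*}
pt_{\gamma}([pg,v])=[\tau^{\omega}_{\gamma}(p)\,g,v]=[\tau^{\omega}_{\gamma}(p),\rho(g)v]=pt_{\gamma}([p,\rho(g)v])\text{,}
\end{equation*}
so $pt_{\gamma}$ descends to the quotient; linearity is immediate because, with $p\in P_{\gamma(0)}$ fixed, the fiber structure on $E|_{\gamma(0)}$ is transported by the linear isomorphism $v\mapsto[p,v]$, while $\tau^{\omega}_{\gamma}(p)$ is a single point of $P_{\gamma(1)}$.

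Conditions \cref{def:connection:a} and \cref{def:connection:b} then follow directly from the corresponding properties of $\tau^{\omega}$ established in \cite{waldorf9}: thin-homotopy invariance of parallel transport gives \cref{def:connection:a}, and its functoriality under concatenation, $\tau^{\omega}_{\gamma_2\pcomp\gamma_1}=\tau^{\omega}_{\gamma_2}\circ\tau^{\omega}_{\gamma_1}$, yields $pt_{\gamma_2\pcomp\gamma_1}([p,v])=[\tau^{\omega}_{\gamma_2}(\tau^{\omega}_{\gamma_1}(p)),v]=pt_{\gamma_2}(pt_{\gamma_1}([p,v]))$, which is \cref{def:connection:b}.

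The main work is condition \cref{def:connection:c}, for which I would use the local trivialization $\phi(w,v)=(w,[\tilde c(w),v])$ built in the proof of \cref{lem:assbun} from a lift $\tilde c:W\to P$ of a plot $c:U\to X$. I would compute the isomorphism $p_{\phi}(\gamma)\in\mathrm{GL}(\C^{k})$ of the equivalent condition \cref{def:connection:cprime}: starting from $v\in\C^{k}$, one has $\phi|_{\gamma(0)}(v)=[\tilde c(\gamma(0)),v]$, and then $pt_{c\circ\gamma}$ sends this to $[\tau^{\omega}_{c\circ\gamma}(\tilde c(\gamma(0))),v]$. Writing the unique $g(\gamma)\in G$ with $\tau^{\omega}_{c\circ\gamma}(\tilde c(\gamma(0)))=\tilde c(\gamma(1))\,g(\gamma)$ and applying $\phi|_{\gamma(1)}^{-1}$ gives $p_{\phi}(\gamma)=\rho(g(\gamma))$. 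The map $g:PW\to G$ is precisely the parallel transport of the pulled-back connection $A:=\tilde c^{*}\omega\in\Omega^1(W,\mathfrak{g})$ in the trivialized bundle $W\times G$; by \cite{waldorf9} it equals the path-ordered exponential $\exp(A)$ and, in particular, is smooth. As $\rho$ is a Lie group homomorphism, $p_{\phi}=\rho\circ g$ is smooth, which already verifies \cref{def:connection:cprime}. To exhibit $\omega_{\phi}$ explicitly one sets $\omega_{\phi}:=\rho_{*}\circ A\in\Omega^1(W,\mathfrak{gl}(\C^{k}))$ with $\rho_{*}=\mathrm{d}\rho_e$; functoriality of the path-ordered exponential under $\rho$ then gives $\exp(\omega_{\phi})(\gamma)=\rho(\exp(A)(\gamma))=p_{\phi}(\gamma)$, establishing \cref{def:connection:c}.

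I expect the only genuine obstacle to be the identification of $g$ with the path-ordered exponential of $\tilde c^{*}\omega$ together with its smoothness as a map on the path space $PW$; this is where one must appeal to the construction of parallel transport on diffeological principal bundles in \cite{waldorf9} and the smooth-functor formalism of \cite{schreiber3}, rather than to any elementary manipulation. As a byproduct, if $\rho$ takes values in $\mathrm{U}(\C^{k})$ the same computation shows $pt$ is unitary, since then $\tau^{\omega}$ is compatible with the unitary structure and $\omega_{\phi}$ takes values in $\mathfrak{u}(k)$.
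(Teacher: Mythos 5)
Your proposal is correct and follows essentially the same route as the paper: well-definedness and conditions \labelcref{def:connection:a}, \labelcref{def:connection:b} from the $G$-equivariance, thin-homotopy invariance, and concatenation property of $\tau^{\omega}$ in \cite{waldorf9}, followed by the local trivialization $\phi(w,v)=(w,[\tilde c(w),v])$ with $\omega_{\phi}:=\rho_{*}(\tilde c^{*}\omega)$ and the identity $\rho(\exp(\omega)(\tilde c\circ\gamma))=\exp(\rho_{*}(\tilde c^{*}\omega))(\gamma)$. Your unique group element $g(\gamma)$ is exactly the path-ordered exponential $\exp(\omega)(\tilde\gamma)$ appearing in \cite[Def.\ 3.2.9]{waldorf9}, so the two arguments coincide up to the cosmetic choice of verifying \labelcref{def:connection:cprime} instead of the diagram in \labelcref{def:connection:c}.
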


\begin{proof}
We use  \cite[Prop. 3.2.10]{waldorf9} for the properties of the parallel transport $\tau^{\omega}$. It is $G$-equivariant by item (b); hence $pt_{\gamma}$ is well-defined. Further, $pt_{\gamma}$ is  linear by construction. It is compatible with path composition due to item (a), and it only depends on the thin homotopy class due to item (b). It remains to check the compatibility with a local trivialization $\phi$, obtained as described in \cref{lem:assbun} as $\phi(w,v) := (w,[\tilde c(w),v])$. We  set $\omega_{\phi} := \rho_{*}(\tilde c^{*}\omega)$. Suppose $\gamma\in PW$. Set $\tilde\gamma := \tilde c\circ \gamma$; this is a lift of $c \circ \gamma \in PX$. By \cite[Def. 3.2.9]{waldorf9} we have $\tau_{c \circ \gamma}^{\omega}(\tilde\gamma(0)) = \tilde\gamma(1)\cdot \exp(\omega)(\tilde\gamma)$. The calculus for path ordered exponentials implies that 
\begin{equation}
\label{eq:calcpoe}
\rho(\exp(\omega)(\tilde\gamma))=\exp(\rho_{*}(\tilde c^{*}\omega))(\gamma)=\exp(\omega_{\phi})(\gamma)\text{.}
\end{equation}  
Now, the commutativity of the diagram in \cref{def:connection} is straightforward to check.
\begin{comment}
Indeed,
\begin{align*}
\phi|_{\gamma(1)}(\exp(\omega_{\phi})(\gamma)(v)) &\eqcref{eq:calcpoe}[\tilde c(\gamma(1)),\rho(\exp(\omega)(\tilde\gamma))\cdot v]
\\&=[\tilde\gamma(1)\cdot \exp(\omega)(\tilde\gamma),v]
\\&=[\tau^{\omega}_{c\circ \gamma}(\tilde \gamma(0)),v]
\\&=pt_{c\circ \gamma}([\tilde c(\gamma(0)),v])
\\&=pt_{c \circ \gamma}(\phi|_{\gamma(0)}(v))\text{.}
\end{align*}
\end{comment}
\end{proof}

\begin{remark}
If $X$ is a smooth manifold, then a connection in the sense of \cref{def:connection} furnishes a transport functor \cite{schreiber3}. These are equivalent to ordinary connections on ordinary vector bundles. Thus, our approach to vector bundles and connections over diffeological spaces reduces consistently to the classical theory over smooth manifolds. 
\end{remark}

\begin{remark}
\label{re:curvature}
The treatment of curvature in terms of parallel transport involves transport 2-functors, and drops a bit out of the context of this article, see \cite[Sec. 7.2]{schreiber3} and \cite[Sec. 3.4]{schreiber2}. The content of \cite[Lemma 3.4.3]{schreiber2} is that  the curvature of a connection on a vector bundle $E$ is given locally by the endomorphism valued 2-form 
\begin{equation*}
\Omega_{\phi} := \mathrm{d}\omega_{\phi} + \frac{1}{2}[\omega_{\phi} \wedge \omega_{\phi}] \in \Omega^2(W,\mathfrak{gl}(\C^{k}))\text{,}
\end{equation*}
where $\omega_{\phi}\in\Omega^1(W,\mathfrak{gl}(\C^{k}))$ is the 1-form of a local trivialization. Globally, one can consider its trace, which is a globally-defined 2-form $\mathrm{tr}(\Omega_{\phi})= \mathrm{tr}(\mathrm{d}\omega_{\phi})$.
\end{remark}

\subsection{Superficial connections on path spaces}

\label{sec:superficial}

Let $M$ be a smooth manifold, and let $PM$ denote the diffeological space of paths in $M$ with sitting instants. A path $\Gamma \in PPM$ is called \emph{thin} if $\Gamma^{\vee}:[0,1]^2 \to M$ has rank one, and it is called  \emph{fixed-ends}, if the end-paths $s\mapsto \Gamma(s)(0)$ and $s\mapsto \Gamma(s)(1)$ are constant. A fixed-ends thin path $\Gamma$ makes the paths $\Gamma(0),\Gamma(1) \in PM$ thin homotopic in the sense of \cref{def:thinhomotopic}.

\begin{definition}
\label{def:ranktwohomotopic}
Two paths $\Gamma_1,\Gamma_2\in PPM$ are called \emph{rank-two-homotopic}, if  there exists $h\in PPPM$ satisfying the following conditions:
\begin{enumerate}[(a)]

\item 
\label{def:ranktwohomotopic:a}
It is a homotopy, i.e. $h(0)=\Gamma_1$ and $h(1)=\Gamma_2$\text{.}

\item
\label{def:ranktwohomotopic:b}
It fixes the paths of end-points: for all $r,s\in [0,1]$ we have
\begin{equation*}
h(r)(s)(0)=\Gamma_1(s)(0)=\Gamma_2(s)(0)
\quand
h(r)(s)(1)=\Gamma_1(s)(1)=\Gamma_2(s)(1)\text{.}
\end{equation*}

\item
\label{def:ranktwohomotopic:c}
$h^{\vee}:[0,1]^3 \to M$ has rank two. 

\end{enumerate}
\end{definition}

For such a homotopy we write $h_0 \in PPM$ for the path $h_0(r) := h(r)(0)$ connecting $\Gamma_1(0)$ with $\Gamma_2(0)$, and $h_1\in PPM$ for the path $h_1(r) := h(r)(1)$ connecting $\Gamma_1(1)$ with $\Gamma_2(1)$. Note that $h_0$ and $h_1$ are fixed-ends paths by \cref{def:ranktwohomotopic:b*}. 
\begin{definition}
\label{def:superficial}
Let $E$ be a vector bundle over $PM$. A connection $pt$ on $E$ is called \emph{superficial}, if the following two conditions are satisfied:
\begin{enumerate}[(i)]

\item 
\label{def:superficial:i}
Parallel transport along a fixed-ends thin path is independent of that path. More precisely,  if  $\Gamma_1,\Gamma_2\in PPM$ are fixed-ends thin paths with $\Gamma_1(0)=\Gamma_2(0)$ and $\Gamma_1(1)=\Gamma_2(1)$, then 
$pt_{\Gamma_1}=pt_{\Gamma_2}$.

\item
\label{def:superficial:ii}
If $\Gamma_1,\Gamma_2\in PPM$ are rank-two-homotopic via $h\in PPPM$, then the following diagram is commutative: 
\begin{equation*}
\small
\alxydim{@R=3em}{E|_{\Gamma_1(0)} \ar[r]^{pt_{\Gamma_1}} \ar[d]_{pt_{h_0}} & E|_{\Gamma_1(1)}  \ar[d]^{pt_{h_1}} \\ E|_{\Gamma_2(0)} \ar[r]_{pt_{\Gamma_2}}  & E|_{\Gamma_2(1)} }
\end{equation*}

\end{enumerate}
\end{definition}

Via \cref{def:superficial:i}, a superficial connection on $E$ determines for each pair $(\gamma,\gamma')$ of thin homotopic paths a canonical map $d_{\gamma,\gamma'}:E|_{\gamma} \to E|_{\gamma'}$ by putting $d_{\gamma,\gamma'} := pt_{\Gamma}$ for some fixed-ends thin path $\Gamma$ connecting $\gamma$ with $\gamma'$. We also note that \cref{def:superficial:ii} implies that $pt_{\Gamma_1}=pt_{\Gamma_2}$ if the rank two homotopy $h$ fixes the end-points.

\begin{remark}
Two paths $\Gamma_1,\Gamma_2\in PPM$ can be rank-two-homotopic in the sense of \cref{def:ranktwohomotopic} or thin homotopic as paths in $X=PM$ in the sense of \cref{def:thinhomotopic}. In general, both conditions are different and none implies the other.  
\end{remark}

\subsection{Bundles of algebras and bimodules}

\label{sec:algebrabundles}

By an algebra we will always mean a unital, associative algebra over $\C$, and all algebra homomorphisms and representations will be unital. We first fix some terminology. Let $X$ be a diffeological space.\begin{enumerate}[(a)]

\item
An \emph{algebra structure} on a vector bundle $E$ over $X$ is a bundle morphism $\mu:E \otimes E \to E$ over $X$ such that over each point $x\in X$  the  map $\mu|_x: E|_x \otimes E|_x \to E|_x$ equips  $E|_x$ with the structure of an algebra, and the section $x \mapsto 1_x$ of unit elements is smooth.

\item 
An algebra structure on a vector bundle $E$ is called \emph{local}, if  for each plot $c: U \to X$ and each point $x\in U$ there exist an algebra $A_{c,x}$, an open neighborhood $x\in V \subset U$ and a diffeomorphism 
$\phi: V \times A_{c,x} \to V \times_X E$
that induces the identity on $V$ and its restriction $\phi|_v: A_{c,x} \to E|_{c(v)}$   to the fiber over each $v\in V$ is an algebra isomorphism. 

\item 
A vector bundle $E$ with local algebra structure is called \emph{algebra bundle} or \emph{bundle of algebras}, if the algebras $A_{c,x}$ can be chosen independently of  the plot $c$   and the point $x$. 
\end{enumerate}
Analogous terminology will be used for various types of algebras, for instance, involutive algebras and Frobenius algebras.
We remark that the necessity of carefully distinguishing between these types of  bundles is \emph{not} caused by the fact that we work over diffeological spaces; the same types exist over smooth manifolds and have to be distinguished.

The following results explain in a nice way the role of connections in relation to  algebra bundles.

\begin{lemma}
\label{lem:algloctriv}
Suppose a vector bundle $E$ with algebra structure $\mu$ admits a connection $pt$ for which  $\mu$ is connection-preserving. Then, $\mu$ is local. Moreover, the restriction of  $E$ to each path-connected component of $X$ is an algebra bundle. 
\end{lemma}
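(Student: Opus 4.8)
The plan is to prove the statement in \cref{lem:algloctriv}: given a vector bundle $E$ over a diffeological space $X$ equipped with an algebra structure $\mu$ and a connection $pt$ making $\mu$ connection-preserving, the algebra structure is local, and moreover $E$ restricted to each path-connected component is an algebra bundle. First I would fix a plot $c:U \to X$ and a point $x\in U$. By \cref{def:connection}\cref{def:connection:c}, after shrinking $U$ to a contractible open neighborhood $V$ of $x$, there is a local trivialization $\phi:V \times \C^{k} \to V \times_X E$ together with a connection $1$-form $\omega_{\phi}\in\Omega^1(V,\mathfrak{gl}(\C^{k}))$ encoding the parallel transport. The goal is to upgrade this vector-bundle trivialization into an \emph{algebra}-bundle trivialization by transporting the algebra structure along the connection from the single fiber over $x$.

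The key step is the following. Set $A_{c,x}:=E|_{c(x)}$ with its fiber algebra structure $\mu|_{c(x)}$. Since $V$ is contractible (in particular path-connected), every point $v\in V$ can be joined to $x$ by a smooth path $\gamma_v$ in $V$, and I define a new trivialization
\begin{equation*}
\Phi:V \times A_{c,x} \to V \times_X E:(v,a) \mapsto (v,pt_{c\circ\gamma_v}(a))\text{.}
\end{equation*}
Because $pt$ depends only on the thin-homotopy class of the path (\cref{def:connection:a}) and $V$ is contractible, $\Phi$ is independent of the choice of $\gamma_v$ and is well-defined; comparing $\Phi$ with the given $\phi$ using the path-ordered exponential $\exp(\omega_{\phi})$ shows that $\Phi$ is a diffeomorphism covering $\id_V$ and fiber-wise linear. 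The crucial point is that each $\Phi|_v=pt_{c\circ\gamma_v}$ is an \emph{algebra} isomorphism: since $\mu$ is connection-preserving, parallel transport intertwines the fiber products, i.e. $pt_{c\circ\gamma_v}(\mu|_{c(x)}(a\otimes b))=\mu|_{c(v)}(pt_{c\circ\gamma_v}(a)\otimes pt_{c\circ\gamma_v}(b))$, and parallel transport sends the unit $1_{c(x)}$ to the unit $1_{c(v)}$ (the unit section being smooth and parallel as the neutral element is preserved fiber-wise). This is exactly the condition in item (b) of \cref{sec:algebrabundles}, so $\mu$ is local.

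For the second assertion, I would observe that within a single path-connected component of $X$ the model algebra $A_{c,x}$ can be chosen independently of $c$ and $x$: any two points in the component are joined by a path, and parallel transport along that path gives an algebra isomorphism between the two model fibers, so all fibers over the component are mutually isomorphic as algebras. Fixing one representative algebra $A$ per component and composing the above trivializations with these isomorphisms yields trivializations with a common model fiber, which is precisely the definition of an algebra bundle in item (c). The only mild obstacle is bookkeeping the independence of $\Phi$ from the chosen path and verifying smoothness of $\Phi^{-1}$; both follow routinely from \cref{def:connection}\cref{def:connection:a,def:connection:b} together with the smoothness of the path-ordered exponential, exactly as in the proof of \cref{lem:assbun}, so no genuinely new difficulty arises.
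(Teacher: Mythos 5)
Your overall route is the same as the paper's: transport the algebra structure of a single fiber along paths inside a contractible chart, use \cref{def:connection:c} to identify the transported trivialization with the path-ordered exponential of $\omega_{\phi}$, and for the second statement use parallel transport between base points of different charts to fix a common typical fiber on each path-component. However, there is one genuine error, and it is load-bearing. You claim that because $pt$ is invariant under thin homotopies (\cref{def:connection:a}) and $V$ is contractible, the map $\Phi$ is independent of the choice of the paths $\gamma_v$. This is false: two paths in a contractible set with the same endpoints are homotopic rel endpoints, but in general they are \emph{not} thin homotopic, and parallel transport of a non-flat connection genuinely depends on the path (indeed the paper reserves the term \emph{flat} precisely for connections whose transport depends only on the homotopy class, a strictly stronger condition than \cref{def:connection:a}). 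The claim matters because you use it to dispense with any coherence requirement on the choice $v\mapsto\gamma_v$: if the paths are chosen arbitrarily for each $v$, the map $v\mapsto\exp(\omega_{\phi})(\gamma_v)$ need not be smooth, and then your comparison of $\Phi$ with $\phi$ does not establish that $\Phi$ is a diffeomorphism.

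The repair is exactly what the paper does: fix a \emph{smooth contraction} of $V$ onto the base point, which yields a smooth map $V \to PV$, $v\mapsto \gamma_v$; well-definedness of $\Phi$ is then automatic (no path-independence is needed, only a fixed choice), and smoothness follows from
\begin{equation*}
\Phi|_v \;=\; pt_{c\circ\gamma_v}\;=\;\phi|_v \circ \exp(\omega_{\phi})(\gamma_v)\circ \phi|_{x}^{-1}\text{,}
\end{equation*}
since $v \mapsto \exp(\omega_{\phi})(\gamma_v)$ is now smooth. With this modification your argument coincides with the paper's proof; the remaining points (unitality of $pt_{\gamma}$ as an algebra isomorphism, and the path-component argument for a common typical fiber) are fine as you state them.
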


\begin{proof}
That $\mu$ is connection-preserving means that the isomorphisms $pt_{\gamma}:E|_x \to E|_y$ are algebra isomorphisms. 
Consider a local trivialization
$\phi: V \times \C^{k} \to V \times_X E$
of the vector bundle $E$, for a plot $c:U \to X$ and $V \subset U$ a contractible open set. For a fixed smooth contraction of $V$ to a point $x_0\in V$, we obtain a smooth map $\gamma: V \to PV$ assigning to a point $x\in V$ a path $\gamma_x$ from $x_0$ to $x$. We induce an algebra structure on $\C^{k}$ such that $\phi|_{x_0}:\C^{k} \to E|_{c(x_0)}$ is an algebra isomorphism, and denote that algebra  by $A_{c,x_0}$. In general, $\phi|_x: A_{c,x_0} \to E|_{c(x)}$ is not an algebra homomorphism for  $x\neq x_0$. However, consider the new trivialization $\phi' :V \times A_{c,x_0} \to V \times_X E$ defined by $\phi'(x,v):=\phi(x,\exp(\omega_{\phi})(\gamma_x)v)$, where $\omega_{\phi}$ is a local connection 1-form for $\phi$. We claim that $\phi'|_x: A_{c,x_0} \to E|_{c(x)}$ is an algebra homomorphism for all $x\in V$. Indeed, we get from \cref{def:connection:c} 
\begin{equation*}
\phi'|_x = \phi|_x \circ \exp(\omega_{\phi})(\gamma_x) =pt_{c \circ \gamma_x} \circ \phi|_{x_0}\text{,}
\end{equation*}  
and this is a composition of algebra homomorphisms. This shows that $\mu$ is local. 

Now, fix an arbitrary plot $c_0:U_0 \to X$, a point $x_0\in U_0$, an algebra $A_{c_0,x_0}$ with a local algebra trivialization $\phi_0$ around $x_0$. For any other plot $c:U \to X$, $x\in U$ and local algebra trivialization $\phi: V \times A_{c,x} \to V \times_X E$ defined in $x\in V \subset U$, choose a path $\gamma\in PX$ connecting $x_0$ with $x$. Parallel transport and the algebra isomorphisms $\phi_0|_{x_0}$ and $\phi|_x$ determine an algebra isomorphism $A_{c_0,x_0} \cong A_{c,x}$. Pre-composing with $\phi$ produces a new local trivialization $\phi'$ defined over $V$ with typical fiber $A_{c_0,x_0}$.
Thus, $E$ is an algebra bundle over the path-connected component of $x_0$.
\end{proof}

If $X$ is not path-connected, then local algebra structures have  non-isomorphic typical fibers over the different connected components, in general. However, since the underlying vector bundle has the same rank everywhere, all these algebras have the same dimension. Since simple algebras of the same dimension are necessarily isomorphic, we obtain the following.

\begin{lemma}
\label{re:loctriv:c}
If $E$ carries a local algebra structure such that for all $x\in X$ the algebra $E|_x$ is simple, then $E$  is an algebra bundle. \qed
\end{lemma}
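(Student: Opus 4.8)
The plan is to exploit the classification of finite-dimensional simple complex algebras to show that the locally occurring algebras $A_{c,x}$ can all be replaced by a single fixed algebra, independent of the plot $c$ and the point $x$. First I would observe that, by the definition of a local algebra structure, every $A_{c,x}$ is isomorphic as an algebra (via the fiberwise restriction $\phi|_v$ of the given local algebra trivialization) to a fiber $E|_{c(v)}$, which is simple by hypothesis. Moreover, since $\phi|_v$ is in particular a linear isomorphism and the rank of a vector bundle (\cref{def:vb}) is a fixed number $k=\mathrm{rk}(E)$, we have $\dim_{\C} A_{c,x}=k$. Hence each $A_{c,x}$ is a simple complex algebra of one and the same dimension $k$.

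The key input is the Artin--Wedderburn theorem: over the algebraically closed field $\C$ every finite-dimensional simple algebra is isomorphic to a matrix algebra $\C^{n\times n}$, of dimension $n^2$. Thus $k=n^2$ is a perfect square and $A_{c,x}\cong \C^{n\times n}=:A$ for all $c$ and $x$. This is the only nontrivial point, and it is precisely where both the simplicity of the fibers and the fact that we work over $\C$ enter; over a non-algebraically-closed field simple algebras of equal dimension need not be isomorphic.

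To conclude, I would upgrade each local algebra trivialization to one with the uniform typical fiber $A$. Given a plot $c:U\to X$, a point $x$, and a local algebra trivialization $\phi:V\times A_{c,x}\to V\times_X E$, I would fix an algebra isomorphism $\psi:A\to A_{c,x}$ (which exists by the previous paragraph) and set $\phi':=\phi\circ(\id_V\times\psi)$. Since $\psi$ is a linear isomorphism of finite-dimensional complex vector spaces, $\id_V\times\psi$ is a diffeomorphism, so $\phi'$ is again a diffeomorphism inducing the identity on $V$; and its fiberwise restriction $\phi'|_v=\phi|_v\circ\psi$ is a composition of algebra isomorphisms, hence an algebra isomorphism. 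Therefore $\phi'$ is a local algebra trivialization whose typical fiber $A$ is independent of the plot $c$ and the point $x$, which is exactly what is required for $E$ to be an algebra bundle. There is no genuine obstacle here beyond invoking the classification of simple $\C$-algebras; the remaining verifications are the routine bookkeeping already foreshadowed in the paragraph preceding the statement.
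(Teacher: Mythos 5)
Your proof is correct and follows essentially the same route as the paper: since the underlying vector bundle has constant rank, all local typical fibers are simple complex algebras of the same dimension, and by Artin--Wedderburn such algebras are all isomorphic to one matrix algebra $\C^{n\times n}$, so the local trivializations can be rewritten with a uniform typical fiber. The only difference is that you spell out the (routine) step of composing each local trivialization with a fixed isomorphism $\psi$, which the paper leaves implicit.
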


Next, we introduce terminology for bundles of bimodules. Let $A$ and $B$ be vector bundles over $X$ with  algebra structures.  
\begin{enumerate}[(a)]

\item 
An \emph{$A$-$B$-bimodule structure} on a vector bundle $M$ over $X$ is a pair $(\lambda,\rho)$ of  vector bundle morphisms $\lambda: A \otimes M \to M$ and $\rho: M \otimes B \to M$ such that over each point $x\in X$ the linear maps $\lambda|_x: A|_x \otimes M|_x \to\ M|_x$ and $\rho|_x : M|_x \otimes B|_x \to M|_x$ define commuting  left and right algebra actions. 
\begin{comment}
That is,
$\lambda|_x(u(x) \otimes -) = \id_{M|_x}$ and $\rho|_x(- \otimes u(x))=\id_{M|_x}$. \end{comment} 
 
\begin{comment}
We recall that this means that $\lambda|_x$ and $\rho|_x$ are linear maps making the diagrams
\begin{equation*}
\alxydim{@C=1.5cm}{A|_x \otimes A|_x \otimes M|_x \ar[r]^-{\id \otimes \lambda|_x} \ar[d]_{\mu|_x \otimes \id} & A|_x \otimes M|_x \ar[d]^{\lambda|_x} \\ A|_x \otimes M|_x \ar[r]_{\lambda|_x} & M|_x}
\quand
\alxydim{@C=1.5cm}{M|_x \otimes B|_x \otimes B|_x  \ar[r]^-{\rho|_x \otimes \id} \ar[d]_{\id \otimes \mu|_x} & M|_x \otimes B|_x \ar[d]^{\rho|_x} \\ M|_x \otimes B|_x \ar[r]_{\rho|_x} & M|_x}
\end{equation*} 
commutative, and making the diagram
\begin{equation*}
\alxydim{@C=1.5cm}{A|_x \otimes M|_x \otimes B|_x \ar[d]_{\lambda|_x \otimes \id} \ar[r]^-{\id \otimes \rho|_x} & A|_x \otimes M|_x \ar[d]^{\lambda|_x} \\ M|_x \otimes B|_x \ar[r]_{\rho|_x} & M|_x }
\end{equation*}
commutative.
\end{comment}

\item
An $A$-$B$-bimodule structure is called \emph{local}, if for each plot $c:U \to X$ and each $x\in U$ there exist an open neighborhood $x\in V \subset U$,  algebras $A_0$ and $B_0$, an $A_0$-$B_0$-bimodule $M_0$ 
\begin{comment}
via a left action $\lambda_0: A_0 \otimes M_0 \to\ M_0$ and a right action $\rho_0: M_0 \otimes B_0 \to M_0$
\end{comment}
and diffeomorphisms
\begin{align*}
\phi_A&: V \times A_0 \to V \times_X A
\;\;\text{, }\;\;
\phi_B&: V \times B_0 \to V \times_X B
\;\;\text{, }\;\;
\phi_M&: V \times M_0 \to V \times_X M
\end{align*}
covering the identity on $V$, such that for each $v\in V$ the restrictions $\phi_A|_v: A_0 \to A|_{c(v)}$ and $\phi_B|_v: B_0 \to B|_{c(v)}$  are  algebra isomorphisms, and the restriction $\phi_M|_v: M_0 \to M|_{c(v)}$ is a bimodule intertwiner (along $\phi_A|_v$ and $\phi_B|_v$).
\begin{comment}
Explicitly, this means that the diagrams
\begin{equation*}
\alxydim{}{A_0 \otimes M_0 \ar[r]^-{\lambda_0} \ar[d]_{\phi_A|_v \otimes \phi_M|_v } & M_0 \ar[d]^{\phi_M|_v} \\ A|_{c(v)}  \otimes M|_{c(v)}  \ar[r]_-{\lambda|_{c(v)}} & M|_{c(v)}  }
\quand
\alxydim{}{M_0 \otimes B_0 \ar[r]^-{\rho_0} \ar[d]_{\phi_M|_v \otimes \phi_B|_v } & M_0 \ar[d]^{\phi_M|_v} \\ M|_{c(v)} \otimes B|_{c(v)}  \ar[r]_-{\rho|_{c(v)}} & M|_{c(v)}  }
\end{equation*}
are commutative. 
\end{comment}

\item
We say that $M$ is a \emph{bundle of $A$-$B$-bimodules}, or \emph{$A$-$B$-bimodule bundle}, if  $A$  and $B$ are  bundles of  algebras,  with typical fibers $A_0$ and $B_0$, respectively, and $M_0$  can be chosen independently of $c$ and $x$ as an $A_0$-$B_0$-bimodule. 
 
\end{enumerate}

 \begin{lemma}
\label{lem:bimodloctriv}
Suppose $A,B$ are vector bundles over $X$ with connections and connection-preserving algebra structures, and suppose $M$ is a vector bundle over $X$ with connection and a connection-preserving $A$-$B$-bimodule structure $(\lambda,\rho)$. Then, $(\lambda,\rho)$ is local. Moreover, the restriction of $M$ to each path-connected component is a bundle of $A$-$B$-bimodules.
\end{lemma}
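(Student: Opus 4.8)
The plan is to mimic closely the structure of the proof of \cref{lem:algloctriv}, transporting the argument for algebra bundles to the bimodule setting, since the two statements are entirely parallel. The statement has two parts: first that the bimodule structure $(\lambda,\rho)$ is local, and second that over each path-connected component $M$ is a genuine bundle of bimodules (with a fixed typical fiber $M_0$). I would treat these in turn, exactly as in \cref{lem:algloctriv}.

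For locality, first I would observe that since $\lambda$ and $\rho$ are connection-preserving, the parallel transport $pt_{\gamma}^{M}: M|_x \to M|_y$ is a bimodule intertwiner along the algebra isomorphisms $pt_{\gamma}^{A}$ and $pt_{\gamma}^{B}$ (which are algebra isomorphisms because the algebra structures on $A$ and $B$ are connection-preserving). Then I would fix a plot $c:U\to X$ and a contractible open $V\subset U$ containing the chosen point $x_0$, together with a smooth contraction yielding a smooth assignment $x\mapsto\gamma_x\in PV$ of paths from $x_0$ to $x$. Choosing vector-bundle trivializations of $A$, $B$ and $M$ over $V$ and using the fibers over $c(x_0)$, I would set $A_0:=A|_{c(x_0)}$, $B_0:=B|_{c(x_0)}$ and $M_0:=M|_{c(x_0)}$ as an $A_0$-$B_0$-bimodule. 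I would then modify the chosen vector-bundle trivializations by the path-ordered exponentials $\exp(\omega_{\phi_A})(\gamma_x)$, $\exp(\omega_{\phi_B})(\gamma_x)$, $\exp(\omega_{\phi_M})(\gamma_x)$, exactly as in \cref{lem:algloctriv}, so that via \cref{def:connection:c} the resulting trivializations $\phi_A'|_x$, $\phi_B'|_x$, $\phi_M'|_x$ become $pt^{A}_{c\circ\gamma_x}\circ\phi_A|_{x_0}$ and similarly for $B$ and $M$. Because parallel transport is an algebra isomorphism (for $A$, $B$) and a bimodule intertwiner (for $M$), these restrictions are exactly the required algebra isomorphisms and bimodule intertwiner. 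The one extra point beyond \cref{lem:algloctriv} is that a single contraction, hence a single family $\gamma_x$, must be used simultaneously for all three bundles, so that the intertwining condition relating $\phi_M'|_x$, $\phi_A'|_x$ and $\phi_B'|_x$ holds fiberwise; this is automatic since $pt^{M}_{c\circ\gamma_x}$ intertwines $pt^{A}_{c\circ\gamma_x}$ and $pt^{B}_{c\circ\gamma_x}$.

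For the second part, I would fix once and for all a reference plot $c_0:U_0\to X$, a point $x_0$, and local trivializations producing the typical fibers $A_0$, $B_0$ and the bimodule $M_0$ (this uses that $A$ and $B$ are already assumed to be algebra bundles, so their typical fibers are well-defined). For any other plot $c:U\to X$, point $x$, and local bimodule trivialization over $V\ni x$, I would pick a path $\gamma$ in the path-connected component joining $x_0$ to $x$; parallel transport together with the isomorphisms $\phi_{A,0}|_{x_0}$, $\phi_{A}|_x$ (and likewise for $B$, $M$) yields algebra isomorphisms $A_0\cong A|_{c(x)}$, $B_0\cong B|_{c(x)}$ compatible with a bimodule isomorphism $M_0\cong M|_{c(x)}$, because $pt^M_{\gamma}$ is a bimodule intertwiner. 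Pre-composing the given trivialization with these isomorphisms produces a new local trivialization with typical fiber the fixed bimodule $M_0$, which proves that $M$ is a bundle of $A$-$B$-bimodules over the component of $x_0$.

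I do not expect a serious obstacle here, since the whole argument is a faithful adaptation of \cref{lem:algloctriv}. The only point requiring care is bookkeeping: one must verify that the three separate path-ordered-exponential corrections, built from a common contraction, are mutually compatible so that the corrected trivialization $\phi_M'$ really is a bimodule intertwiner along $\phi_A'$ and $\phi_B'$ rather than merely a vector-space isomorphism. This compatibility follows precisely from the fact (established in the first paragraph) that parallel transport in $M$ intertwines the two actions, so the step is routine once that observation is in place.
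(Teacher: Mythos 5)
Your proposal is correct and follows essentially the same route as the paper's (omitted) proof: both arguments correct the local vector-bundle trivializations by path-ordered exponentials along a smooth family of paths coming from a contraction of $V$, so that via \ref{def:connection:c} each corrected trivialization becomes parallel transport composed with the fiber isomorphism at $x_0$, and then invoke the fact that connection-preservingness makes parallel transport an algebra isomorphism on $A$, $B$ and a bimodule intertwiner on $M$; the second part is likewise handled by transporting a fixed typical fiber along paths within a connected component, exactly as in \ref{lem:algloctriv}. Your extra remark that one must use a single common family $\gamma_x$ for all three bundles is precisely the point that makes the fiberwise intertwining condition hold, and is implicit in the paper's version as well.
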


\begin{proof}
The proof is analogous to \cref{lem:algloctriv} and left out for brevity.
\end{proof}

\begin{lemma}
\label{re:bimod:a}
Suppose $A$ and $B$ are bundles of simple algebras over $X$, and  $M$ is a  vector bundle with a  local  $A$-$B$-bimodule structure $(\lambda,\rho)$. Suppose further that $M$ is  faithfully balanced, i.e., the induced maps $\tilde \lambda: A \to \mathrm{End}_B(M)$ and $\tilde\rho: B \to \mathrm{End}_A(M)^{op}$ are fiber-wise isomorphisms. Then, $M$ is a  bundle of $A$-$B$-bimodules. 
\end{lemma}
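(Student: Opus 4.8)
The plan is to reduce the statement to a purely algebraic uniqueness fact about faithfully balanced bimodules over simple algebras. Since $A$ and $B$ are by hypothesis bundles of algebras, they carry global typical fibers $A_0$ and $B_0$, which are finite-dimensional simple $\C$-algebras, hence matrix algebras $A_0\cong \C^{m\times m}$ and $B_0\cong \C^{n\times n}$. Given a plot $c:U\to X$ and a point $x\in U$, the local bimodule structure furnishes (after shrinking $U$) trivializations $\phi_A,\phi_B,\phi_M$; using that $A$ and $B$ are algebra bundles, I would arrange that $\phi_A|_v$ and $\phi_B|_v$ are algebra isomorphisms onto $A_0$ and $B_0$, so that $M_0$ acquires the structure of an $A_0$-$B_0$-bimodule and $\phi_M|_v$ is a compatible bimodule intertwiner over each $v$. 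The goal is then to show that $M_0$ may be chosen independently of $c$ and $x$.

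First I would transport the faithfully balanced hypothesis to the typical fiber. The maps $\tilde\lambda$ and $\tilde\rho$ are fibre-wise isomorphisms by assumption; since $\phi_A|_v,\phi_B|_v$ are algebra isomorphisms and $\phi_M|_v$ is a compatible intertwiner, the induced maps $A_0\to \mathrm{End}_{B_0}(M_0)$ and $B_0\to \mathrm{End}_{A_0}(M_0)^{op}$ of the bimodule $M_0$ are again isomorphisms. Hence $M_0$ is itself a faithfully balanced $A_0$-$B_0$-bimodule (in particular $M_0\neq 0$).

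The heart of the argument --- and essentially its only nontrivial step --- is the claim that a faithfully balanced bimodule over two simple finite-dimensional $\C$-algebras is unique up to isomorphism. To see this, I would regard $M_0$ as a left module over $C:=A_0\otimes_\C B_0^{op}$, which is again simple (indeed $C\cong \C^{mn\times mn}$), so that every $C$-module is a direct sum of copies of the unique irreducible module $S$ of dimension $mn$. Viewing $M_0$ merely as a left $A_0$-module it is a sum of $p$ copies of $\C^m$, whence $\mathrm{End}_{A_0}(M_0)\cong \C^{p\times p}$; the isomorphism $B_0\cong \mathrm{End}_{A_0}(M_0)^{op}$ forces $p=n$ and therefore $\dim_\C M_0 = mn = \dim_\C S$, so that $M_0\cong S$ is the unique irreducible $C$-module. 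Fixing once and for all this irreducible $A_0\otimes_\C B_0^{op}$-module $M_0$, every local trivialization $\phi_M$ can then be post-composed with a bimodule isomorphism onto this $M_0$, producing an $A_0$-$B_0$-bimodule trivialization whose typical fiber is independent of $c$ and $x$; this is exactly the assertion that $M$ is a bundle of $A$-$B$-bimodules. The main obstacle is thus not analytic but bookkeeping: one must check that the chosen $M_0$ is genuinely global, which reduces --- via the constant rank of the vector bundles and \cref{re:loctriv:c} --- to the simplicity-driven observation that $A_0$ and $B_0$, and hence their unique faithfully balanced bimodule, do not vary across the path-connected components of $X$.
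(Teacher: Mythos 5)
Your proof is correct, and it lands on the same algebraic pivot as the paper's own argument: the tensor product $A_0 \otimes_\C B_0^{op}$ of the simple typical fibers is again simple, hence has a unique irreducible module, and the fiber of $M$ must be that module, which gives the desired global typical fiber. The difference lies in how the fiber is identified with this irreducible module. The paper argues Morita-theoretically: a faithfully balanced bimodule establishes a Morita equivalence, and an invertible bimodule must be indecomposable as an $A\otimes B^{op}$-module (a direct sum could not be invertible), so it is the unique irreducible one. You instead run a Wedderburn-type dimension count: writing $A_0 \cong \C^{m\times m}$ and $B_0 \cong \C^{n\times n}$, the isomorphism $B_0 \cong \mathrm{End}_{A_0}(M_0)^{op}$ forces $M_0 \cong (\C^m)^{\oplus n}$ as a left $A_0$-module, so $\dim_\C M_0 = mn$ equals the dimension of the unique irreducible module of $A_0\otimes_\C B_0^{op}\cong \C^{mn\times mn}$, and the semisimple decomposition of $M_0$ over this matrix algebra then gives the isomorphism outright. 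Your route is more elementary, bypassing the Morita-theoretic input entirely (indeed you only use that $\tilde\rho$ is an isomorphism, not $\tilde\lambda$), at the cost of being tied to $\C$ and finite dimensionality, which is all the paper needs anyway. Two minor remarks: your closing appeal to constant rank and to \cref{re:loctriv:c} is superfluous, since $A$ and $B$ are assumed outright to be bundles of algebras, so their typical fibers $A_0$, $B_0$ (and hence the unique faithfully balanced $A_0$-$B_0$-bimodule) are global from the start; and the post-composition of local trivializations by bimodule isomorphisms onto the fixed model is exactly the bookkeeping that the paper's shorter proof leaves implicit.
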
 

\begin{proof}
Faithfully balanced bimodules establish  Morita equivalences. The claim  then follows  from the statement that Morita equivalent simple algebras are Morita equivalent in a unique way (up to isomorphism). Indeed, every Morita equivalence has to be irreducible (as an $A\otimes B^{op}$-module), because if it was a direct sum of two bimodules, it would not   be invertible. However, if $A$ and $B$ are simple, then $A \otimes B^{op}$ is again simple, so it has a unique irreducible module. 
\end{proof}

Finally, we discuss the composition of  bimodule bundles, i.e. their tensor product over an algebra bundle. This tends to be difficult and is treated in \cite{Kristel2022}. 
Let $A,B,C$ be vector bundles over $X$ with algebra structures, let $M$ be a vector bundle with   $A$-$B$-bimodule structure $(\lambda_M,\rho_M)$, and let $N$ be a vector bundle with   $B$-$C$-bimodule structure $(\lambda_N,\rho_N)$. Over each point $x\in X$ we consider the subspace $K_x \subset M|_x \otimes N|_x$ generated by elements of the form
\begin{equation*}
\rho_M|_x(m \otimes b) \otimes n - m \otimes \lambda_N|_x(b \otimes n)
\end{equation*}
for all $m\in M|_x$, $n\in N|_x$ and $b\in B|_x$. We consider the disjoint union of the quotient spaces $(M|_x \otimes N|_x)/K_x$ for all $x\in X$ and denote it by $M \otimes_B N$. It will be equipped with the obvious left $A$-action $\lambda_M$ and the right $C$-action $\rho_N$,  and be equipped with the unique diffeology making the projection $M \otimes N \to M \otimes_B N$ a subduction. 
In general, $M \otimes_B N$ will not even be a vector bundle. Under assumptions of locality and semisimplicity, however, we have the following result \cite[Thm. 4.2.6 \& Cor. 3.1.12]{Kristel2022}: 

\begin{lemma}
If the algebra structures on $A,B,C$ are  local and semisimple, and the bimodule structures on $M$ and $N$ are local, then $M \otimes_B N$ is a vector bundle, and $(\lambda_M,\rho_N)$ is a local $A$-$C$-bimodule structure.
\end{lemma}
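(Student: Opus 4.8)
The plan is to reduce the global statement to a purely local verification using the local triviality hypotheses on $A$, $B$, $C$, $M$ and $N$. Fix a plot $c: U \to X$ and a point $x \in U$. By locality of all the structures involved, I can pass to a common open neighborhood $x \in V \subset U$ on which there simultaneously exist local algebra trivializations $\phi_A: V \times A_0 \to V \times_X A$, $\phi_B: V \times B_0 \to V \times_X B$, $\phi_C: V \times C_0 \to V \times_X C$, together with local bimodule trivializations $\phi_M: V \times M_0 \to V \times_X M$ and $\phi_N: V \times N_0 \to V \times_X N$, where $M_0$ is an $A_0$-$B_0$-bimodule and $N_0$ is a $B_0$-$C_0$-bimodule. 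Shrinking $V$ if necessary guarantees all five trivializations are defined over the same $V$.

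The key point is that, fiberwise over each $v \in V$, the subspace $K_{c(v)} \subset M|_{c(v)} \otimes N|_{c(v)}$ used to define the tensor product over $B$ is carried by the isomorphism $\phi_M|_v \otimes \phi_N|_v$ to the fixed subspace $K_0 \subset M_0 \otimes N_0$ generated by the elements $\rho_{M_0}(m \otimes b) \otimes n - m \otimes \lambda_{N_0}(b \otimes n)$. This is because $\phi_M|_v$ and $\phi_N|_v$ are bimodule intertwiners along the algebra isomorphisms $\phi_A|_v$, $\phi_B|_v$, $\phi_C|_v$, so they intertwine the right $B$-action on $M$ and the left $B$-action on $N$ with the corresponding actions of the fixed algebra $B_0$ on $M_0$ and $N_0$. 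Consequently the induced map $\phi_M|_v \otimes \phi_N|_v$ descends to a linear isomorphism $(M_0 \otimes N_0)/K_0 \to (M|_{c(v)} \otimes N|_{c(v)})/K_{c(v)}$, which is exactly the fiber $(M \otimes_B N)|_{c(v)}$. Setting $(M \otimes_B N)_0 := (M_0 \otimes N_0)/K_0$, this assembles into a candidate local trivialization $\psi: V \times (M \otimes_B N)_0 \to V \times_X (M \otimes_B N)$ covering the identity on $V$. I would then check that $\psi$ is a diffeomorphism by using that $\phi_M \otimes \phi_N$ is a diffeomorphism onto $V \times_X (M \otimes N)$, that the quotient diffeology on $M \otimes_B N$ is defined so that $M \otimes N \to M \otimes_B N$ is a subduction, and that the fixed quotient $M_0 \otimes N_0 \to (M_0 \otimes N_0)/K_0$ is a linear surjection; smoothness of $\psi^{-1}$ follows by lifting plots through the subduction and composing with the smooth fiberwise quotient. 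This establishes that $M \otimes_B N$ is a vector bundle of rank $\dim (M_0 \otimes N_0)/K_0$, and simultaneously exhibits the required data for the $A$-$C$-bimodule structure to be local, with typical fiber the $A_0$-$C_0$-bimodule $(M_0 \otimes N_0)/K_0$.

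It remains to verify that the left $A$-action $\lambda_M$ and right $C$-action $\rho_N$ indeed descend to well-defined bundle morphisms on $M \otimes_B N$ and that under $\psi$ they match the fixed bimodule structure on $(M_0 \otimes N_0)/K_0$; this is a routine check that the actions preserve the relations generating $K_0$, since $A_0$ acts only on the $M_0$-factor and $C_0$ only on the $N_0$-factor, both commuting with the $B_0$-relation. The main obstacle I anticipate is the diffeological bookkeeping rather than the algebra: specifically, confirming that the quotient diffeology on $M \otimes_B N$ really makes $\psi$ a diffeomorphism, which requires carefully lifting an arbitrary plot of $V \times_X (M \otimes_B N)$ through the subduction $M \otimes N \to M \otimes_B N$ locally (as in the proof of \cref{lem:assbun}), and then recognizing the resulting composite as a smooth map into $V \times (M_0 \otimes N_0)/K_0$ via the smooth linear quotient projection. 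Once this identification of the quotient diffeology with the product diffeology on $V \times (M \otimes_B N)_0$ is in place, the locality of $(\lambda_M, \rho_N)$ follows immediately from the local statements for $M$ and $N$.
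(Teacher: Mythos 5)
Your proposal is correct and follows essentially the same route as the paper's proof: both use the simultaneous local trivializations to form the tensor-product trivialization $\phi_M\otimes\phi_N$, observe that the intertwiner property makes it respect the subspaces $K_0$ and $K_{c(v)}$ generating the relations, and hence descend it to a local trivialization of the quotient $M\otimes_B N$, after which locality of the $A$-$C$-actions is immediate. The only difference is that you spell out the diffeological verification (lifting plots through the subduction $M\otimes N \to M\otimes_B N$) that the paper compresses into the assertion that the restricted inverse is a well-defined smooth map, which is a welcome elaboration rather than a deviation.
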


\begin{remark}
\label{re:bimodulebundle:connections}
Consider again  bundles  of algebras $A,B,C$ and bundles $M$  of $A$-$B$-bimodules and $N$ of $B$-$C$-bimodules. We assume that all bundles are equipped with connections, in such a way that the algebra structures and the bimodule structures are connection-preserving. Then, there is a naturally defined   connection on $M \otimes_B N$, for which the $A$-$C$-bimodule structure is connection-preserving. In order to see this, we only have to observe that the sub-vector bundle $U$ is invariant under  parallel transport.  
\end{remark}

\section{Fiber integration for smooth 2-functors}

\label{sec:fibreintegration}

\setsecnumdepth{1}

In this section we provide a result for the theory of smooth functors of \cite{schreiber3,schreiber5}. It is used in \cref{lem:632} as one step to establish our main theorem, but also might be interesting in other contexts. Let $X$ be a smooth manifold. We recall that there is a bijection
\begin{equation}
\label{eq:fibint:1}
\fun^{\infty}(\mathcal{P}_1(X),B\ueins) \cong \Omega^1(X)\text{,}
\end{equation}
where the left hand side consists of smooth functors defined on the smooth path groupoid of $X$  with values in the Lie groupoid $B\ueins$ (it has a single object and $\ueins$ as the manifold of morphisms) \cite[Prop. 4.7]{schreiber3}. Analogously, there is a bijection
\begin{equation}
\label{eq:fibint:2}
\fun^{\infty}(\mathcal{P}_2(X),BB\ueins) \cong \Omega^2(X)\text{,}
\end{equation}
where the left hand side consists of smooth 2-functors defined on the smooth path 2-groupoid of $X$ with values in the Lie 2-groupoid $BB\ueins$ (one object, one 1-morphism, and $\ueins$ as the manifold of 2-morphisms) \cite[Theorem 2.21]{schreiber5}. We define  a ,,fiber integration map``  for smooth 2-functors, i.e. a map
\begin{equation}
\label{eq:fibint}
\int_{[0,1]} : \fun^{\infty}(\mathcal{P}_2([0,1] \times X),BB\ueins) \to \fun^{\infty}(\mathcal{P}_1(X),B\ueins)\text{,}
\end{equation}
and prove the following result.

\begin{proposition}
The fiber integration map  \cref{eq:fibint} corresponds under above bijections to the ordinary fiber integration of differential forms, i.e.
 the diagram
\begin{equation}
\label{eq:fibintdiagram}
\small
\alxydim{@C=4em@R=3em}{\fun^{\infty}(\mathcal{P}_2([0,1] \times X),BB\ueins) \ar[r]^-{\int_{[0,1]}}\ar[d] & \fun^{\infty}(\mathcal{P}_1(X),B\ueins) \ar[d] \\ \Omega^2([0,1] \times X) \ar[r]_{\int_{[0,1]}} & \Omega^1(X)}
\end{equation}
is commutative. 
\end{proposition}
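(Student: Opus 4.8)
The plan is to first supply the definition of the fiber integration map \labelcref{eq:fibint} — which the statement presupposes but which must be made precise — and then to identify the differential form it produces. Given a smooth $2$-functor $F \in \fun^{\infty}(\mathcal{P}_2([0,1]\times X),BB\ueins)$, I would define the functor $\int_{[0,1]}F$ on a path $\gamma\in PX$, regarded as a $1$-morphism of $\mathcal{P}_1(X)$, by
\[
\Big(\int_{[0,1]}F\Big)(\gamma) := F(\Sigma_\gamma)\in\ueins, \qquad \Sigma_\gamma := (\id_{[0,1]}\times\gamma)_{*}(\Sigma_{1,1}),
\]
where $\Sigma_{1,1}$ is the standard bigon and $\id_{[0,1]}\times\gamma\colon[0,1]^2\to[0,1]\times X$, $(a,b)\mapsto(a,\gamma(b))$. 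Here $\Sigma_\gamma$ is a bigon in $[0,1]\times X$, hence a $2$-morphism of $\mathcal{P}_2([0,1]\times X)$, so $F(\Sigma_\gamma)$ is a well-defined element of $\ueins$, i.e.\ a morphism of $B\ueins$.

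First I would check that $\int_{[0,1]}F$ is a smooth functor. Smoothness is inherited from the smoothness of $F$ together with the smooth dependence of $\Sigma_\gamma$ on $\gamma$. Functoriality — compatibility with concatenation $\gamma_2\pcomp\gamma_1$ and with constant paths — follows from the corresponding gluing law expressing $\Sigma_{\gamma_2\pcomp\gamma_1}$ as a horizontal composite of $\Sigma_{\gamma_2}$ and $\Sigma_{\gamma_1}$, combined with the facts that $F$ respects horizontal composition and that the $1$-morphisms of $BB\ueins$ (hence all fiber-direction contributions) are trivial. This places $\int_{[0,1]}F$ in $\fun^{\infty}(\mathcal{P}_1(X),B\ueins)$ and makes \labelcref{eq:fibint} well defined.

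The core of the commutativity of \labelcref{eq:fibintdiagram} is then a direct computation with the explicit forms of the two bijections. Since the target $BB\ueins$ is abelian with trivial $1$-morphisms, the bijection \labelcref{eq:fibint:2} is surface holonomy, $F(\Sigma)=\exp\big(\int_{[0,1]^2}\Sigma^{*}\omega\big)$ for the $2$-form $\omega$ corresponding to $F$, and \labelcref{eq:fibint:1} is $\tilde F(\gamma)=\exp\big(\int_{[0,1]}\gamma^{*}\eta\big)$ for the $1$-form $\eta$ corresponding to $\tilde F$. Evaluating at $\gamma$ and using $\Sigma_\gamma=(\id\times\gamma)\circ\Sigma_{1,1}$ I would obtain
\[
\Big(\int_{[0,1]}F\Big)(\gamma)=\exp\Big(\int_{[0,1]^2}\Sigma_{1,1}^{*}(\id\times\gamma)^{*}\omega\Big).
\]
Two standard facts then identify $\eta$ with $\int_{[0,1]}\omega$: (i) the standard bigon $\Sigma_{1,1}$ is an orientation-preserving degree-one reparameterization of $[0,1]^2$, so by change of variables $\int_{[0,1]^2}\Sigma_{1,1}^{*}\beta=\int_{[0,1]^2}\beta$ for every $\beta\in\Omega^2([0,1]^2)$, applied to $\beta=(\id\times\gamma)^{*}\omega$; and (ii) fiber integration of forms commutes with pullback along the fiber-preserving map $\id\times\gamma$, so together with Fubini, $\int_{[0,1]^2}(\id\times\gamma)^{*}\omega=\int_{[0,1]}\gamma^{*}\big(\int_{[0,1]}\omega\big)$. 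Comparing exponents yields $\eta=\int_{[0,1]}\omega$, which is exactly the asserted commutativity.

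The hard part will be making point (i) rigorous: one must use the precise definition of the standard bigon $\Sigma_{1,1}=(\xi_1,\xi_2)$ to see that pulling back along it preserves the surface integral of an arbitrary (not necessarily closed) $2$-form, rather than only of closed forms, which is where orientation and sitting-instant conventions have to be tracked carefully. A second, more bookkeeping-type obstacle is to pin down the conventions of \cite{schreiber3,schreiber5} so that the surface- and line-holonomy formulas above hold on the nose, and to verify the horizontal-composition gluing of $\Sigma_{\gamma_2\pcomp\gamma_1}$ needed for functoriality in the second paragraph.
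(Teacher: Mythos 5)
Your definition of the fiber integration map coincides with the paper's (\cref{eq:deffibint}), and your well-definedness checks mirror the paper's lemma, with one omission: you verify smoothness and compatibility with concatenation, but not invariance under thin homotopies of $\gamma$, which is needed because the morphisms of $\mathcal{P}_1(X)$ are thin homotopy classes. The paper proves this by showing that a thin homotopy of paths induces a rank-two homotopy of the bigons $\Sigma_{\gamma}$. In your setup this is not a fatal gap, since your final formula exhibits $\int_{[0,1]}F$ as the functor associated under \cref{eq:fibint:1} to the $1$-form $\int_{[0,1]}\omega$, and such functors are automatically thin-homotopy invariant; but you should reorganize so that this is deduced from the computation rather than tacitly assumed when you declare $\int_{[0,1]}F \in \fun^{\infty}(\mathcal{P}_1(X),B\ueins)$.

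For the commutativity of \cref{eq:fibintdiagram} your route is genuinely different from the paper's, and both are sound. You argue globally: granting the holonomy description of the bijections \cref{eq:fibint:1,eq:fibint:2} (the reconstruction direction of \cite{schreiber3,schreiber5}), the statement reduces to a change of variables along $\Sigma_{1,1}$ plus Fubini and naturality of fiber integration. The paper argues infinitesimally: it evaluates both compositions on a tangent vector represented by a curve, using the derivative-based extraction of forms from functors; its key steps are the additivity law $\Sigma_{s+r,t}=(\tilde\Gamma_r(\Sigma_{s,t})\circ\cp)\bullet(\cp\circ\Sigma_{r,t})$, which converts the second derivative of $F$ into a Maurer--Cartan integral that can be integrated up in $r$, and a final thin homotopy between $\Sigma_{\gamma_{*}(\gamma_t)}$ and $(\id\times\gamma)(\Sigma_{1,t})$. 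Your approach is shorter once the explicit holonomy formulas are granted and gets well-definedness for free; the paper's stays entirely within the $2$-functor calculus and uses only the differentiation direction of the cited bijections. Finally, the step you single out as the hard part is easier than you fear: the form you pull back along $\Sigma_{1,1}$ lives on the two-dimensional square, hence is automatically closed (it is of top degree), so Stokes' theorem applies; since $\Sigma_{1,1}$ maps $\partial[0,1]^2$ into $\partial[0,1]^2$, traversing the target boundary exactly once, the equality of integrals follows, and the only genuine content left is the orientation/sign convention, which, as you correctly note, must be matched against the conventions of \cite{schreiber3,schreiber5}.
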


The fiber integration map \cref{eq:fibint} is defined as follows. Suppose $F$ is a smooth 2-functor on $[0,1] \times X$, and $\gamma\in PX$ is a path in $X$. Then,
\begin{equation}
\label{eq:deffibint}
\left ( \int_{[0,1]} F  \right )(\gamma) := F(\Sigma_{\gamma})
\quith
\Sigma_{\gamma} := (\id \times \gamma)_{*}(\Sigma_{1,1})\text{.}
\end{equation}
Here, $\Sigma_{s,t}$ is the \emph{standard bigon} in $[0,1]^2$, and $\id \times \gamma: [0,1]^2 \to [0,1] \times X$ pushes it to a bigon in $[0,1] \times X$. The standard bigon $\Sigma_{s,t}$ is the uniquely defined bigon that fills the rectangle spanned by $(0,0)$ and $(s,t)$, see \cite[Sec. 2.2.1]{schreiber5}. We note that $\Sigma_{\gamma}$ is a bigon between the path $(\gamma_1,\cp_{\gamma(1)}) \pcomp (\cp_0 ,\gamma)$ and the path $(\cp_1,\gamma) \pcomp (\gamma_1,\cp_{\gamma(0)})$, where $\gamma_1$ is the standard path in $[0,1]$.

\begin{lemma}
\Cref{eq:deffibint} defines a smooth functor $\displaystyle \int_{[0,1]} F: \mathcal{P}_1(X) \to B\ueins$.
\end{lemma}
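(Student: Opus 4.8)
The plan is to check the three defining conditions of a functor $\mathcal{P}_1(X)\to B\ueins$ — invariance under thin homotopy of paths, preservation of identities, and preservation of composition — together with smoothness. Since $\ueins$ is the morphism manifold of $B\ueins$ and $BB\ueins$ has a single $1$-morphism, everything rests on two structural facts about the smooth $2$-functor $F$: (i) $F$ assigns $1\in\ueins$ to every bigon whose underlying map has rank $\le 1$, because such a bigon is an identity $2$-morphism of $\mathcal{P}_2([0,1]\times X)$; and (ii) $F$ sends any composite of bigons to the product of their $F$-values, and is invariant under rank-$\le 2$ homotopies of bigons rel boundary.

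First I would dispose of identities and thin invariance. For $\gamma=\cp_x$ the map $\id\times\cp_x$ factors through the projection $[0,1]^2\to[0,1]$, so $\Sigma_{\cp_x}=(\id\times\cp_x)_*\Sigma_{1,1}$ has rank $\le 1$, and by (i) we get $F(\Sigma_{\cp_x})=1$. For thin invariance, if $h\in PPX$ is a thin homotopy between $\gamma$ and $\gamma'$ rel endpoints (so $h^{\vee}$ has rank $\le 1$, as in \cref{def:thinhomotopic}), then $(s,a,b)\mapsto(s,h^{\vee}(a,b))$ on $[0,1]^3$ has rank $\le 2$ and exhibits $\Sigma_\gamma$ and $\Sigma_{\gamma'}$ as rank-$\le 2$ homotopic rel boundary; by (ii) their $F$-values coincide. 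Hence $\int_{[0,1]}F$ descends to thin homotopy classes, i.e. to the morphisms of $\mathcal{P}_1(X)$.

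The main step is functoriality under concatenation: for composable $\gamma_1,\gamma_2\in PX$ with common point $y$ one must show $F(\Sigma_{\gamma_2\pcomp\gamma_1})=F(\Sigma_{\gamma_2})\,F(\Sigma_{\gamma_1})$. The images of $\Sigma_{\gamma_1}$ and $\Sigma_{\gamma_2}$ lie over the two halves of the concatenated path and share the fiber over $y$ as a common edge; however, their source and target paths do not match up for a direct composition, because of the \emph{staircase} shape of the boundary paths recorded just above. I would split $[0,1]^2$ along the midline of the path coordinate, reparametrize the two halves, and use rank-$\le 2$ homotopy invariance to present $\Sigma_{\gamma_2\pcomp\gamma_1}$ as a composite of $(\id\times\gamma_1)_*\Sigma_{1,1}$, $(\id\times\gamma_2)_*\Sigma_{1,1}$ and thin whisker bigons inserted along the common fiber edge and along the boundary segments in which one coordinate is constant. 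Every whisker factors through a $1$- or $2$-dimensional map and so has $F$-value $1$ by (i), and then (ii) collapses the composite to $F(\Sigma_{\gamma_2})F(\Sigma_{\gamma_1})$ (the horizontal/vertical ambiguity being immaterial since $\ueins$ is abelian).

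Finally, a functor into $B\ueins$ is smooth precisely when the induced map $PX\to\ueins$ is smooth (the remark following \cref{def:connection}, via \cite{schreiber3}); and $\gamma\mapsto\Sigma_\gamma=(\id\times\gamma)_*\Sigma_{1,1}$ is a smooth family of bigons because $\Sigma_{1,1}$ is fixed and the pushforward map depends smoothly on $\gamma$, so composing with the smooth $2$-functor $F$ makes $\gamma\mapsto F(\Sigma_\gamma)$ smooth. The hard part is the decomposition in the composition step: pinning down the correct thin whisker bigons and verifying that, after the required rank-$\le 2$ homotopy, the pushed-forward standard bigon over the concatenation genuinely factors as the asserted composite. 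This is exactly where the combinatorics of the standard bigons $\Sigma_{s,t}$ of \cite{schreiber5} is needed, and it is the only delicate point; the remaining verifications are formal consequences of (i) and (ii).
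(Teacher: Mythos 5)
Your overall strategy is the same as the paper's, and your structural facts (i) and (ii) are correct properties of smooth $2$-functors on the path $2$-groupoid. However, the step you explicitly defer --- exhibiting $\Sigma_{\gamma_2\pcomp\gamma_1}$ as a composite of the two smaller standard bigons --- is the entire content of the composition axiom, so as written the proof is incomplete at its crux. The paper closes this with a single identity of $2$-morphisms in $\mathcal{P}_2([0,1]\times X)$, namely
\begin{equation*}
\Sigma_{\gamma' \pcomp \gamma} \;=\; (\Sigma_{\gamma'}\circ \cp_{\gamma}) \bullet (\cp_{\gamma'}\circ \Sigma_{\gamma})\text{,}
\end{equation*}
where $\cp_{\gamma}$ and $\cp_{\gamma'}$ are the \emph{constant} bigons, i.e.\ identity $2$-morphisms. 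This also shows that your plan overcomplicates the whiskering: the bigons you want to "insert along the common fiber edge" are not auxiliary thin bigons that need your rank argument (i); they are horizontal compositions with identity $2$-morphisms, so $2$-functoriality of $F$ together with the fact that $BB\ueins$ has a single $1$-morphism gives $F(\Sigma_{\gamma'}\circ\cp_{\gamma})=F(\Sigma_{\gamma'})$ and $F(\cp_{\gamma'}\circ\Sigma_{\gamma})=F(\Sigma_{\gamma})$ directly, whence $F(\Sigma_{\gamma'\pcomp\gamma})=F(\Sigma_{\gamma'})\cdot F(\Sigma_{\gamma})$. Moreover, no additional rank-two homotopy "rel boundary" has to be constructed: $2$-morphisms of $\mathcal{P}_2$ are by definition classes of bigons modulo such homotopies, so the displayed equation is an identity of $2$-morphisms, verified from the combinatorics of the standard bigons $\Sigma_{s,t}$; your sketch never pins down which bigons occur, so the composition law remains asserted rather than proved.

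Two smaller points. First, your thin-homotopy formula is garbled: the map $(s,a,b)\mapsto(s,h^{\vee}(a,b))$ is not a homotopy from $\Sigma_{\gamma}$ to $\Sigma_{\gamma'}$ (at $s=0$ it equals $(0,h(a)(b))$, not $\Sigma_{\gamma}(a,b)$). The correct map, as in the paper, is $(r,s,t)\mapsto(\id\times h(r,-))(\Sigma_{1,1}(s,t))$, which has rank two because $h$ has rank one and which restricts to thin homotopies on the boundary paths; the conclusion you draw from it is then right. Second, your verification of identity preservation via fact (i) is correct but unnecessary: since every element of $\ueins$ is invertible, $F(\Sigma_{\cp_x})=1$ already follows from the composition law, which is why the paper does not treat identities separately.
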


\begin{proof}
We have to check that $\smash{\int_{[0,1]} F}$ respects the composition, that it only depends on the thin homotopy class of $\gamma$, and that it is smooth. The latter follows directly from the smoothness of $F$. If $h:[0,1]^2 \to X$ is a thin homotopy between $\gamma$ and $\gamma'$, then it induces a homotopy
\begin{equation*}
[0,1]^3 \to [0,1] \times X: (r,s,t) \mapsto (\id \times h(r,-))(\Sigma_{1,1}(s,t))
\end{equation*}
between $(\id \times \gamma)_{*}(\Sigma_{1,1})$  and $(\id \times \gamma')_{*}(\Sigma_{1,1})$. This homotopy has rank  two, because $h$ has rank one, and thus the product $\id \times h$ has rank two. By inspection one checks that it restricts to a thin homotopy between the boundary paths.
\begin{comment}
It restricts to a thin homotopy at the boundary paths:
\begin{align*}
s=0\text{ and }0\leq t \leq \frac{1}{2} &: (r,s,t) \mapsto (0,h(r,\gamma_1(2t)) \\
s=0\text{ and }\frac{1}{2}\leq t \leq 1 &: (r,s,t) \mapsto (\gamma_1(2t-1),h(r,1)) \\
s=1\text{ and }0\leq t \leq \frac{1}{2} &: (r,s,t) \mapsto (\gamma_1(2t),h(r,0)) \\
s=1\text{ and }\frac{1}{2}\leq t \leq 1 &: (r,s,t) \mapsto (1,h(r,\gamma_1(2t-1)))\text{.} \end{align*}
\end{comment}
This shows that $\Sigma_{\gamma}$ and $\Sigma_{\gamma'}$ are rank-two-homotopic, which implies $F(\Sigma_{\gamma})=F(\Sigma_{\gamma'})$. Finally, if  $\gamma$ and $\gamma'$ are composable paths in $X$, then 
\begin{equation*}
\Sigma_{\gamma' \pcomp \gamma} = (\Sigma_{\gamma'}\circ \cp_{\gamma}) \bullet  (\cp_{\gamma'}\circ \Sigma_{\gamma})\text{,}
\end{equation*}
where $\cp_{\gamma}$ is the constant bigon for  $\gamma$, and $\bullet$ and $\circ$ denote the vertical and horizontal composition of 2-morphisms in $\mathcal{P}_2(X)$, respectively.
Since $F$ is a 2-functor and $BB\ueins$-valued, this gives
$F(\Sigma_{\gamma' \pcomp \gamma}) = F(\Sigma_{\gamma'})\cdot F(\Sigma_{\gamma})$.
\end{proof}

Now we prove the commutativity of the diagram \cref{eq:fibintdiagram}.  Consider a point $x\in X$ and a tangent vector $v\in T_xX$ represented by a curve $\gamma:\R \to X$ with $\gamma(0)=x$. Using the description of the bijections \cref{eq:fibint:1,eq:fibint:2}, and starting with a 2-functor $F$, following  diagram \cref{eq:fibintdiagram} clockwise yields
\begin{equation*}
\left . -\frac{\mathrm{d}}{\mathrm{d}t} \right|_0 \left ( \int_{[0,1]} F  \right )(\gamma_{*}(\gamma_t))=\left . \frac{\mathrm{d}}{\mathrm{d}t} \right|_0 F(\Sigma_{\gamma_{*}(\gamma_t)})\text{,}
\end{equation*}
where $\gamma_t$ is the standard path in $\R$ from $0$ to $t$.
Counter-clockwise, we have
\begin{equation*}
-\int_{0}^1 \mathrm{d}r \; \left. \frac{\mathrm{d}^2}{\mathrm{d}s\mathrm{d}t}\right|_{0,0} F(\Gamma_{r}(\Sigma_{s,t}))\text{,}
\end{equation*}
where $\Gamma_r: \R^2 \to [0,1] \times X$ represents the tangent vectors $\partial_r\in T_r[0,1]$ and $v\in T_xX$; for instance, we can put $\Gamma_r(s,t) := (r+s,\gamma(t))$. In order to compare these two expressions, we consider the map 
\begin{equation*}
\tilde \Gamma_r: \R^2 \to \R^2: (s,t) \to (r+s,t)\text{.}
\end{equation*}
Then, we have
\begin{equation*}
(\tilde \Gamma_r(\Sigma_{s,t}) \circ \cp) \bullet (\cp \circ  \Sigma_{r,t}) = \Sigma_{s+r,t}
\end{equation*}
as bigons in $\R^2$. Further, $\Gamma_r = (\id \times \gamma) \circ \tilde\Gamma_r$, and thus
\begin{equation*}
(\Gamma_r(\Sigma_{s,t}) \circ\cp) \bullet (\id \times \gamma)(\cp \circ  \Sigma_{r,t}) = (\id \times \gamma)(\Sigma_{s+r,t})\text{.}
\end{equation*}
We apply $F$ and obtain
\begin{equation*}
F(\Gamma_r(\Sigma_{s,t}))\cdot F((\id \times \gamma)(\Sigma_{r,t})) = F((\id \times \gamma)(\Sigma_{s+r,t}))\text{.}
\end{equation*}
\begin{comment}
Recall that $F$ is $BB\ueins$-valued, so that all whiskering disappears.
\end{comment}
Differentiating, we get
\begin{align*}
\left . \frac{\mathrm{d}}{\mathrm{d}s}\right|_{s=0} F(\Gamma_r(\Sigma_{s,t})) =  \left .\frac{\mathrm{d}}{\mathrm{d}s} \right|_{s=r} F((\id \times \gamma)(\Sigma_{s,t})) \cdot F((\id \times \gamma)(\Sigma_{r,t}))^{-1}\text{.}
\end{align*}
\begin{comment}
Indeed,
\begin{align*}
\left . \frac{\mathrm{d}}{\mathrm{d}s}\right|_0 F(\Gamma_r(\Sigma_{s,t})) &= \left . \frac{\mathrm{d}}{\mathrm{d}s}\right|_0 F((\id \times \gamma)(\Sigma_{s+r,t})) \cdot F((\id \times \gamma)(\Sigma_{r,t}))^{-1}
\\&=  \left .\frac{\mathrm{d}}{\mathrm{d}s} \right|_{r} F((\id \times \gamma)(\Sigma_{s,t})) \cdot F((\id \times \gamma)(\Sigma_{r,t}))^{-1}
\end{align*}
\end{comment}
This is the pullback of the Maurer-Cartan form on $\ueins$ along $r \mapsto F((\id \times \gamma)(\Sigma_{r,t}))$; hence we have
\begin{equation*}
\exp\left ( \int_{0}^1 \mathrm{d}r \left . \frac{\mathrm{d}}{\mathrm{d}s}\right|_0 F(\Gamma_r(\Sigma_{s,t}))   \right ) = F((\id \times \gamma)(\Sigma_{1,t}))\text{.} \end{equation*}
Taking the derivative with respect to $t$ and evaluating at $t=0$ gives
\begin{equation*}
\int_{0}^1 \mathrm{d}r \left . \frac{\mathrm{d}^2}{\mathrm{d}s\mathrm{d}t}\right|_{0,0} F(\Gamma_r(\Sigma_{s,t})) = \left . \frac{\mathrm{d}}{\mathrm{d}t} \right|_0 F((\id \times \gamma)(\Sigma_{1,t}))\text{.}
\end{equation*}
It remains to observe that the bigons $\Sigma_{\gamma_{*}(\gamma_t)}=(\id \times \gamma_{*}(\gamma_t))(\Sigma_{1,1})$ and $(\id \times \gamma)(\Sigma_{1,t})$  are thin homotopic, which is straightforward to see. 
\begin{comment} 
 This can be reduced to the statement that the bigons $(\id \times \gamma_t)(\Sigma_{1,1})$ and $\Sigma_{1,t}$ are thin homotopic. This is indeed obvious. 
\end{comment}

\section{Pullback of bundle gerbes along rank-one maps}

\setsecnumdepth{1}

We provide the following,  general result about bundle gerbes and bundle gerbe morphisms. We need it in the proof of \cref{lem:connection}.

\begin{theorem}
\label{th:rankonepullback}
Let $M$ be a smooth manifold,  $X$ be a compact smooth manifold, and  $\phi:X \to M$ be a smooth map of rank at most one. 
\begin{enumerate}[(a)]

\item 
\label{th:rankonepullback:a}
If $\mathcal{G}$ is a bundle gerbe with connection over $M$, then its pullback along $\phi$ admits a parallel  trivialization, i.e. a 1-isomorphism $\mathcal{T}: \phi^{*}\mathcal{G} \to \mathcal{I}_0$ in $\ugrbcon X$. 

\item
\label{th:rankonepullback:b}
If $\mathcal{A}:\mathcal{G} \to \mathcal{G}'$ is a 1-morphism in $\ugrbcon M$, and $\mathcal{T}: \phi^{*}\mathcal{G}\to \mathcal{I}_0$ and $\mathcal{T}': \phi^{*}\mathcal{G}'\to \mathcal{I}_0$ are parallel trivializations, then there exists a hermitian vector bundle $E$ with flat connection, and a 2-isomorphism in $\ugrbcon X$:
\begin{equation*}
\small
\alxydim{@R=3em}{\phi^{*}\mathcal{G} \ar[r]^{\phi^{*}\mathcal{A}} \ar[d]_{\mathcal{T}} & \phi^{*}\mathcal{G}' \ar@{=>}[dl] \ar[d]^{\mathcal{T}'} \\ \mathcal{I}_0 \ar[r]_{E} & \mathcal{I}_0}
\end{equation*}

\end{enumerate}
\end{theorem}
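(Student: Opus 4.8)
The plan is to handle the two parts with a single geometric input: since $\phi$ has rank at most one, the pullback of every $2$- and $3$-form vanishes (by \cite[Lemma 4.2]{schreiber5}), and wherever its rank is actually one the constant rank theorem lets $\phi$ locally factor as $\iota\circ p$ through a $1$-manifold $N$, with $p\colon U\to N$.

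For part \ref{th:rankonepullback:a} I would first note that $\phi^{*}\mathrm{curv}(\mathcal{G})=0$, as $\mathrm{curv}(\mathcal{G})$ is a $3$-form; hence $\phi^{*}\mathcal{G}$ is flat. I would then build a parallel trivialization locally and glue. Cover $X$ by contractible opens $U$ on which $\phi$ is either constant or factors as $\iota\circ p$. Over such a $U$ one chooses a section $s$ of the pullback submersion of $\phi^{*}\mathcal{G}$ that is itself pulled back through $p$; by \cref{rem:sectionsandtrivializations} this yields a trivialization $\mathcal{T}_{s}$ whose curving is the pullback of the curving $B$ of $\mathcal{G}$ along a map of rank at most one, and hence is the zero $2$-form, so $\mathcal{T}_{s}$ is parallel. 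Over a contractible overlap two such local parallel trivializations differ by a flat hermitian line bundle, which carries a parallel unit-length section and therefore, again by \cref{rem:sectionsandtrivializations}, a $2$-isomorphism between them; the failure of the cocycle condition over triple overlaps is a locally constant $\mathrm{U}(1)$-valued \v{C}ech $2$-cocycle whose class in $H^{2}(X,\mathrm{U}(1))\cong\mathrm{Hom}(H_{2}(X;\mathbb Z),\mathrm{U}(1))$ is precisely the flat holonomy of $\phi^{*}\mathcal{G}$.

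The heart of the argument, and the step I expect to be the main obstacle, is to show that this class vanishes. Its pairing with a smooth cycle $c\colon\Sigma\to X$ is the surface holonomy $\mathrm{Hol}_{\mathcal{G}}(\phi\circ c)$, so it suffices to prove that a map $g\colon\Sigma\to M$ of rank at most one from a closed surface has trivial gerbe holonomy. Since $g^{*}\mathrm{curv}(\mathcal{G})=0$, the pullback $g^{*}\mathcal{G}$ is flat, and I would trivialize it in parallel over the region where $g$ has rank one (factoring through a $1$-manifold as above, where gerbes always admit parallel trivializations because $2$-forms there vanish) and over the interior of the rank-zero locus (where $g$ is locally constant); a continuity and density argument across the rank-drop locus then forces the flat class on $\Sigma$ to be zero. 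Controlling the gluing exactly where the rank of $\phi$ drops from one to zero, so that the constant rank theorem no longer applies, is the delicate point, and compactness of $X$ and $\Sigma$ is used to reduce to finitely many charts.

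For part \ref{th:rankonepullback:b} I would set $E:=\mathcal{T}'\circ\phi^{*}\mathcal{A}\circ\mathcal{T}^{-1}$, a $1$-morphism $\mathcal{I}_{0}\to\mathcal{I}_{0}$ in $\ugrbcon X$, i.e.\ a hermitian vector bundle with connection over $X$; the $2$-isomorphism $\mathcal{T}^{-1}\circ\mathcal{T}\cong\mathrm{id}$ induces a canonical $2$-isomorphism $E\circ\mathcal{T}\Rightarrow\mathcal{T}'\circ\phi^{*}\mathcal{A}$ filling the required square. It then remains to check that $E$ is flat. Working over the opens $U$ above and choosing every trivialization section pulled back through $p\colon U\to N$, the local connection $1$-form of $E$, computed as in the proof of \cref{lem:connection}, is the pullback of a $1$-form along the rank-$\le 1$ map $\phi|_{U}$, so its curvature is the pullback of a $2$-form and vanishes; on the rank-zero region $\phi$ is locally constant and the curvature vanishes as well. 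Since $\mathrm{curv}(E)$ is a globally defined continuous $2$-form that is zero off the nowhere dense rank-drop locus, it vanishes identically, and $E$ is flat as claimed.
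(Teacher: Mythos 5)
Your reduction of part (a) to the vanishing of surface holonomy along rank-$\le 1$ maps $g\colon\Sigma\to M$ is legitimate (universal coefficients plus representability of degree-two classes by surfaces), but your proposed proof of that surface statement has a gap that is fatal as stated, and it sits exactly at the point you flag as ``delicate''. There are really two problems. First, the cover you want --- contractible opens on which $\phi$ (or $g$) is either constant or factors through a $1$-manifold --- need not exist: the constant rank theorem applies only where the rank is \emph{locally constant}, i.e.\ on the open locus where the rank equals one and on the interior of the rank-zero locus; at boundary points of the rank-one locus there is no normal form, and you give no argument that such points admit a chart of either kind. Second, and more fundamentally, no ``continuity and density'' argument can close this hole even in principle, because the obstruction you are trying to kill is not a local object: the failure of a flat gerbe to admit a parallel trivialization is a class in $H^2(\Sigma,\ueins)$, and this class is invisible on every dense open subset --- a flat gerbe on a closed surface with nontrivial holonomy becomes parallel-trivializable after deleting a single point, since $H^2(\Sigma\setminus\{pt\},\ueins)=0$. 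Hence producing parallel trivializations over the rank-one locus and over the interior of the rank-zero locus, whose union is dense open but misses the rank-drop set, yields no information whatsoever about the holonomy class.

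What is missing is a genuinely \emph{global} consequence of the rank hypothesis, and this is precisely what the paper's proof supplies: by a Sard-type theorem (Church), the image $K=\phi(X)$ of a rank-$\le 1$ map has covering dimension at most one, so one can refine the cover of $M$ in such a way that no triple intersection of distinct sets meets $K$. In \v{C}ech terms the pulled-back transition functions $g_{\alpha\beta\gamma}$ are then identically $1$, a partition-of-unity gauge transformation kills the $1$-forms $A_{\alpha\beta}$ along $K$, and the curvings die because $\phi$ kills $2$-forms; the pullback cocycle is literally $(0,0,1)$. Your proposal uses the rank hypothesis only through pullback of differential forms and the constant rank theorem, and neither captures this dimension-theoretic property of the image, which is the actual heart of the theorem. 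By contrast, your part (b) is essentially sound and genuinely different from the paper's cocycle computation: there $\mathcal{T},\mathcal{T}'$ are \emph{given}, $E:=\mathcal{T}'\circ\phi^{*}\mathcal{A}\circ\mathcal{T}^{-1}$ is a well-defined bundle with connection with the required $2$-isomorphism, and flatness --- unlike the class in (a) --- \emph{is} a local condition: on the open rank-one locus the constant rank theorem applies legitimately and exhibits $E$ locally as a pullback from an interval up to flat line bundles, on the interior of the rank-zero locus $\phi$ is locally constant, and continuity of $\mathrm{curv}(E)$ propagates its vanishing from this dense open set to all of $X$.
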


Before proving \cref{th:rankonepullback}, we shall point out the following corollary, which appeared already as \cite[Prop. 3.3.1]{waldorf10}.
It follows from \cref{th:rankonepullback:a}  and the definition of surface holonomy as the exponential of the integral over $X$ of the 2-form $\rho$ of any trivialization $\mathcal{T}:\phi^{*}\mathcal{G} \to \mathcal{I}_{\rho}$. 

\begin{corollary}
If $X$ is an oriented closed surface and $\mathcal{G}$ is a bundle gerbe with connection over $M$, then the surface holonomy of $\mathcal{G}$ around $\phi$ is trivial. 
\qed
\end{corollary}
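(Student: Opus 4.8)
The claim is that when $X$ is an oriented closed surface and $\phi : X \to M$ has rank at most one, the surface holonomy of $\mathcal{G}$ around $\phi$ vanishes. My plan is to reduce this directly to \cref{th:rankonepullback:a} together with the definition of surface holonomy. Recall that surface holonomy is computed as follows: one chooses any trivialization $\mathcal{T} : \phi^{*}\mathcal{G} \to \mathcal{I}_{\rho}$ of the pulled-back bundle gerbe over the closed surface $X$, and the holonomy is then $\exp\left(\int_X \rho\right)$, which by the general theory is independent of the chosen trivialization (two trivializations differ by a line bundle over $X$, whose curvature contributes an integer multiple of $2\pi\im$ to the integral and hence a trivial factor after exponentiation).

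\textbf{Key steps.} First I would invoke \cref{th:rankonepullback:a}: since $\phi$ has rank at most one and $X$ is compact, the pullback $\phi^{*}\mathcal{G}$ admits a \emph{parallel} trivialization, i.e. a 1-isomorphism $\mathcal{T} : \phi^{*}\mathcal{G} \to \mathcal{I}_0$ in $\ugrbcon X$. The crucial point is that this trivialization has \emph{vanishing} 2-form, $\rho = 0$. Second, I would use this specific trivialization to compute the surface holonomy. By definition,
\begin{equation*}
\mathrm{Hol}_{\mathcal{G}}(\phi) = \exp\left(\int_X \rho\right) = \exp\left(\int_X 0\right) = \exp(0) = 1\text{.}
\end{equation*}
This immediately gives the claim. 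The independence of the choice of trivialization, which underlies the well-definedness of surface holonomy, guarantees that computing with the parallel trivialization yields the correct value.

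\textbf{Main obstacle.} There is essentially no obstacle here beyond correctly citing the definition of surface holonomy and confirming that it agrees with the exponential-of-integral formula used. The entire content is already packaged in \cref{th:rankonepullback:a}; the corollary is the observation that a parallel trivialization forces the defining integral to be zero. The only point requiring minor care is to ensure that the 2-form of a parallel trivialization is genuinely $0$ (which is exactly what ``parallel'' means: the trivialization is a 1-isomorphism into $\mathcal{I}_0$, whose distinguishing feature is the zero 2-form), so that $\int_X \rho = 0$ regardless of the geometry of $X$ or $\phi$. Since $X$ is closed and oriented, the integral $\int_X \rho$ is well-defined, and the computation is complete.
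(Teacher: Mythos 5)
Your proposal is correct and is exactly the paper's argument: the paper derives the corollary from \cref{th:rankonepullback:a} together with the definition of surface holonomy as $\exp\bigl(\int_X \rho\bigr)$ for any trivialization $\mathcal{T}:\phi^{*}\mathcal{G}\to\mathcal{I}_{\rho}$, so a parallel trivialization ($\rho=0$) forces the holonomy to be $1$. Your added remark on independence of the trivialization is the standard well-definedness fact implicitly used in the paper as well.
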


In the remainder of this section we proof \cref{th:rankonepullback}.
We work on the level of cocycle data, with respect to an open cover $\{U_{\alpha}\}_{\alpha\in A}$ of $M$. A bundle gerbe $\mathcal{G}$ with connection is given by a triple $(B,A,g)$, where $g_{\alpha\beta\gamma}:U_{\alpha}\cap U_{\beta}\cap U_{\gamma} \to \ueins$, $A_{\alpha\beta}\in \Omega^1(U_{\alpha} \cap U_{\beta})$, and $B_{\alpha}\in \Omega^2(U_{\alpha})$ satisfy the cocycle conditions
\begin{align*}
B_{\beta} - B_{\alpha} = \mathrm{d}A_{\alpha\beta}
\quomma
A_{\beta\gamma}-A_{\alpha\gamma}+A_{\alpha\beta} = \mathrm{dlog}(g_{\alpha\beta\gamma})
\quand
g_{\beta\gamma\delta}\,g_{\alpha\beta\delta} &= g_{\alpha\gamma\delta}\,g_{\alpha\beta\gamma}\text{.}
\end{align*}
We can assume that each cocycle is normalized in the sense that $A_{\alpha\alpha}=0$ and $g_{\alpha\beta\gamma}=1$ whenever $|\{\alpha,\beta,\gamma\}|<3$. With respect to cocycles $(B,A,g)$ and $(B',A',g')$ for bundle gerbes $\mathcal{G}$ and $\mathcal{G}'$, respectively, a 1-morphism $\mathcal{A}: \mathcal{G} \to \mathcal{G}'$ is given by a pair $(\Pi,G)$, where $G_{\alpha\beta}:U_\alpha\cap U_{\beta} \to \mathrm{U}(n)$ and $\Pi_{\alpha}\in \Omega^1(U_{\alpha},\mathfrak{u}(n))$ satisfy
\begin{align*}
B_{\alpha}'&=B_{\alpha}+ \textstyle\frac{1}{n}\mathrm{tr}(\mathrm{d}\Pi_{\alpha})
\\
\Pi_{\beta}+A_{\alpha\beta} &= A_{\alpha\beta}'+\mathrm{Ad}^{-1}_{G_{\alpha\beta}}(\Pi_{\alpha})+\mathrm{dlog}(G_{\alpha\beta})
\\
G_{\alpha\gamma} \cdot g_{\alpha\beta\gamma} &= g_{\alpha\beta\gamma}'\cdot G_{\alpha\beta}\cdot G_{\beta\gamma}\text{;}
\end{align*}
see, e.g. \cite{gawedzki4}. Here, $n$ is the rank of the vector bundle of the 1-morphism, and we regard $\ueins \subset \mathrm{U}(n)$ as the central subgroup, and similarly we view $\R \subset \mathfrak{u}(n)$ as diagonal matrices.

We construct a refinement of the open cover $\{U_{\alpha}\}_{\alpha\in A}$ with properties adapted to the map $\phi:X \to M$. 
Let $K:=\phi(X)\subset M$ be equipped with the subspace topology, so that $\{U_{\alpha} \cap K\}_{\alpha\in A}$ is an open cover of $K$.  By \cite[Theorem 2]{sard1} and  \cite[Proposition 1.3]{Church1963} the covering dimension of $K$ is  $\dim(K)\leq 2$. Thus, there exists a refinement $\{V_{\alpha}\}_{\alpha\in B}$ consisting of open sets $V_{\alpha}\subset K$ and a refinement map $r:B \to A$ with $V_{\alpha} \subset U_{r(\alpha)} \cap K$, such that all non-trivial 3-fold intersections are empty. Since $V_{\alpha}$ is open in $K$, there exists an open set $\smash{\tilde V_{\alpha}\subset M}$ such that $\smash{V_{\alpha}=\tilde V_{\alpha}\cap K}$.  Now we collect all open sets $\tilde V_{\alpha}$ and all sets $U_{\alpha} \setminus K$, which are open since $K\subset M$ is closed, as it is a compact subset of a Hausdorff space. This results in a new open cover  $\{W_{\alpha}\}_{\alpha\in C}$ of $M$ that is a refinement of the original one, and no point $x\in K$ is contained in an intersection $W_{\alpha} \cap W_{\beta}\cap W_{\gamma}$ with $\alpha,\beta,\gamma$ distinct.

Since we have a refinement, we can assume that our cocycles $(B,A,g)$, $(B',A',g')$ and $(\Pi,G)$ are defined with respect to $\{W_{\alpha}\}_{\alpha\in C}$. Now we prove \cref{th:rankonepullback:a*}. Let $\{\psi_{\alpha}\}_{\alpha\in C}$ be a smooth partition of unity subordinated to the open cover $\{W_{\alpha}\}_{\alpha\in C}$. 
We define
\begin{equation*}
\rho_{\alpha} :=  \sum_{\beta\in C} \psi_{\beta} A_{\alpha\beta}  \in \Omega^1(W_{\alpha})
\end{equation*}
and check that
\begin{equation*}
\rho_{\alpha}-\rho_{\beta} = A_{\alpha\beta} + \sum_{\gamma\in C} \psi_{\gamma}  \mathrm{dlog}(g_{\alpha\beta\gamma})\text{.}
\end{equation*}
\begin{comment}
Indeed,
\begin{eqnarray*}
\rho_{\alpha}-\rho_{\beta} &=&  \sum_{\gamma\in C} \psi_{\gamma} (A_{\alpha\gamma} - A_{\beta\gamma})
\\&=&  \sum_{\gamma\in C} \psi_{\gamma} (A_{\alpha\beta} + \mathrm{dlog}(g_{\alpha\beta\gamma}))
\\&=& A_{\alpha\beta} + \sum_{\gamma\in C} \psi_{\gamma}  \mathrm{dlog}(g_{\alpha\beta\gamma})\text{.}
\end{eqnarray*}
\end{comment}
We change our cocycle $(B,A,g)$ by the 1-forms $\rho_{\alpha}$, and obtain an equivalent cocycle $(\tilde B,\tilde A,g)$ with $\tilde B_{\alpha} = B_{\alpha} + \mathrm{d}\rho_{\alpha}$ and  $\tilde A_{\alpha\beta} = A_{\alpha\beta} - \rho_{\alpha} + \rho_{\beta}$. Then we perform the pullback along $\phi$. The fact that $\phi$ is of rank one implies $\phi^{*}\tilde B_{\alpha}=0$. Since there are no non-trivial 3-fold intersections, we have $\phi^{*}g_{\alpha\beta\gamma}=1$, and the above calculation implies $\phi^{*}\tilde A_{\alpha\beta}=0$.  Thus, the pullback results in $(0,0,1)$, which is a cocycle for $\mathcal{I}_0$. Translating between cocycle data and geometric objects, this implies the existence of the parallel trivialization $\mathcal{T}$.

It remains to prove \cref{th:rankonepullback:b*}. Under the change of local data from $(B,A,g)$ to $(\tilde B,\tilde A,g)$, and similarly for $(B',A',g')$, we obtain new local data $(\tilde\Phi,\tilde G)$ for the 1-morphism. Pulling back along $\phi$, we obtain local data $(\phi^{*}\tilde\Pi,\phi^{*}\tilde G)$ for a 1-endomorphism of $(0,0,1)$. Since  $\phi$ is of rank one, we have $\mathrm{d}(\phi^{*}\tilde\Pi)=\phi^{*}(\mathrm{d}\tilde\Pi)=0$, and the absence of non-trivial 3-fold intersections implies the usual cocycle condition for $\phi^{*}\tilde G$. Thus, $(\phi^{*}\tilde\Pi,\phi^{*}\tilde G)$ is local data for a vector bundle $E$ with flat connection over $X$.

\end{appendix}

\tocsection{Table of Notation}

\label{sec:notation}

\begin{minipage}[t]{0.55\textwidth}
\raggedright
\begin{enumerate}[align=left,labelsep=0em,leftmargin=4em,labelwidth=4em]

\item[LBG] loop space brane geometry, \cref{sec:targetspacegeometry}

\item[TBG] target space brane geometry, \cref{sec:loopspacegeometry}

\item[$PX$]  the space of smooth paths in $X$ with sitting instants

\item[$\gamma_1\pcomp\gamma_2$] denotes the concatenation of paths

\item[$\overline{\gamma}$] denotes the reversed path

\item[$\cp_x$] denotes the constant path at a point $x$

\item[$\gamma_1\cup\gamma_2$]   the loop $\overline{\gamma_2}\pcomp \gamma_1$, when $\gamma_1$ and $\gamma_2$ have a common initial point and a common end point

\item[$d_{\gamma_1,\gamma_2}$] the parallel transport of a superficial connection along a (arbitrary) thin path connecting $\gamma_1$ with $\gamma_2$ 

\item[$\Des(\mathcal{E},\mathcal{F})$] a vector bundle obtained from two twisted vector bundles, see \cref{eq:des}

\end{enumerate}
\end{minipage}
\hfill
\begin{minipage}[t]{0.4\textwidth}
\raggedright
\begin{enumerate}[align=left,labelsep=0em,leftmargin=2em,labelwidth=2em]

\item[$\overline{V}$] the complex conjugate vector space

\item[$V^{*}$] the dual vector space, $V^{*}:=\mathrm{Hom}(V,\C)$

\item[$\varphi^{*}$] the adjoint of a linear map between complex inner product spaces

\item[$S^1$] the circle, $S^1=\R/\Z$.

\item[$\iota_1$] the map $[0,1] \to S^1:t \mapsto \frac{1}{2}t$ 

\item[$\iota_2$] the map $[0,1] \to S^1:t \mapsto 1-\frac{1}{2}t$ 

\item[$i_x$]
for  $x\in X$, is the map
$[0,1] \to X \times [0,1] :t\mapsto (x,t)$

\item[$j_t$]
for $t\in [0,1]$ and some space $X$, is 

$X \to X \times [0,1] :x\mapsto (x,t)$

\item[$f^{\vee}$] for $f:X \to C^{\infty}(Y,Z)$, denotes the adjoint map $X \times Y \to Z:(x,y)\mapsto f(x)(y)$.

\end{enumerate}
\end{minipage}

\def\bibdir{F:/uni/bibliothek/tex/}

\bibliographystyle{kobib}
\bibliography{kobib}

\begin{thebibliography}{BTW04}

\bibitem[Abr96]{Abrams1996}
L.~Abrams, \quot{Two-dimensional topological quantum field theories and
  {F}robenius algebras}.
\newblock {\em J. Knot Theory Ram.}, 5:569--587, 1996.

\bibitem[Alv85]{alvarez1}
Orlando Alvarez, \quot{Topological quantization and cohomology}.
\newblock {\em Commun. Math. Phys.}, 100:279--309, 1985.

\bibitem[AS99]{Alekseev1999}
Anton Alekseev and Volker Schomerus, \quot{D-Branes in the WZW-Model}.
\newblock {\em Phys. Rev. D}, 60(6):061901, 1999.
\newblock \kobiburl{http://arxiv.org/abs/hep-th/9812193}.

\bibitem[Bae07]{baez3}
John~C. Baez, {\em Quantization and Cohomology}.
\newblock Lecture notes, University of California, Riverside, 2007.

\bibitem[BH11]{baez6}
John~C. Baez and Alexander~E. Hoffnung, \quot{Convenient categories of smooth
  spaces}.
\newblock {\em Trans. Amer. Math. Soc.}, 363(11):5789--5825, 2011.
\newblock \kobiburl{http://arxiv.org/abs/0807.1704}.

\bibitem[Bry93]{brylinski1}
Jean-Luc Brylinski, {\em Loop spaces, characteristic classes and geometric
  quantization}.
\newblock Number 107 in Progr. Math. Birkh{\"a}user, 1993.

\bibitem[BS17]{Bunk2017a}
Severin Bunk and Richard~J. Szabo, \quot{Fluxes, bundle gerbes and 2-Hilbert
  spaces}.
\newblock {\em Lett. Math. Phys.}, 107(10):1877--1918, 2017.

\bibitem[BSS18]{Bunk2016}
Severin Bunk, Christian Saemann, and Richard~J. Szabo, \quot{The 2-Hilbert
  Space of a Prequantum Bundle Gerbe}.
\newblock {\em Rev. Math. Phys.}, 30(01):1850001, 2018.
\newblock \kobiburl{https://arxiv.org/abs/1608.08455}.

\bibitem[BTW04]{Bunke2004}
Ulrich Bunke, Paul Turner, and Simon Willerton, \quot{Gerbes and homotopy
  quantum field theories}.
\newblock {\em Algebr. Geom. Topol.}, 4:407--437, 2004.
\newblock \kobiburl{https://arxiv.org/abs/math/0201116}.

\bibitem[Bun17]{Bunk2017}
Severin Bunk, {\em Categorical structures on bundle gerbes and higher geometric
  prequantisation}.
\newblock PhD thesis, Heriot Watt University, 2017.
\newblock \kobiburl{https://arxiv.org/abs/1709.06174}.

\bibitem[BW21]{Bunk2018}
Severin Bunk and Konrad Waldorf, \quot{Smooth functorial field theories from
  B-fields and D-branes}.
\newblock {\em J. Homotopy Relat. Struct.}, 16(1):75--153, 2021.
\newblock \kobiburl{https://arxiv.org/abs/1911.09990}.

\bibitem[Chu63]{Church1963}
P.~T. Church, \quot{Differentiable open maps on manifolds}.
\newblock {\em Trans. Amer. Math. Soc.}, 109:87--100, 1963.

\bibitem[CJM02]{carey2}
Alan~L. Carey, Stuart Johnson, and Michael~K. Murray, \quot{Holonomy on
  {D}-branes}.
\newblock {\em J. Geom. Phys.}, 52(2):186--216, 2002.
\newblock \kobiburl{http://arxiv.org/abs/hep-th/0204199}.

\bibitem[FH00]{Freed}
Daniel~S. Freed and Michael~J. Hopkins, \quot{On {R}amond-{R}amond fields and
  {K}-theory}.
\newblock {\em J. High Energy Phys.}, 05:044, 2000.
\newblock \kobiburl{https://arxiv.org/abs/hep-th/0002027v3}.

\bibitem[FH21]{Freed2016}
Daniel~S. Freed and Michael~J. Hopkins, \quot{Reflection positivity and
  invertible topological phases}.
\newblock {\em Geom. Topol.}, 25:1165--1330, 2021.
\newblock \kobiburl{https://arxiv.org/abs/1604.06527}.

\bibitem[FW99]{Freed1999}
D.~S. Freed and E.~Witten, \quot{Anomalies in string theory with D-branes}.
\newblock {\em Asian J. Math.}, 3(4):819--852, 1999.

\bibitem[Gaw88]{gawedzki3}
Krzysztof Gaw\c{e}dzki, \quot{Topological actions in two-dimensional quantum
  field theories}.
\newblock In G.~{'}t Hooft, A.~Jaffe, G.~Mack, K.~Mitter, and R.~Stora,
  editors, {\em Non-perturbative quantum field theory}, pages 101--142. Plenum
  Press, 1988.

\bibitem[Gaw05]{gawedzki4}
Krzysztof Gaw\c{e}dzki, \quot{Abelian and non-abelian branes in {WZW} models
  and gerbes}.
\newblock {\em Commun. Math. Phys.}, 258:23--73, 2005.
\newblock \kobiburl{http://arxiv.org/abs/hep-th/0406072}.

\bibitem[GR02]{gawedzki1}
Krzysztof Gaw\c{e}dzki and Nuno Reis, \quot{{WZW} branes and gerbes}.
\newblock {\em Rev. Math. Phys.}, 14(12):1281--1334, 2002.
\newblock \kobiburl{http://arxiv.org/abs/hep-th/0205233}.

\bibitem[GR03]{gawedzki2}
Krzysztof Gaw\c{e}dzki and Nuno Reis, \quot{Basic gerbe over non simply
  connected compact groups}.
\newblock {\em J. Geom. Phys.}, 50(1--4):28--55, 2003.
\newblock \kobiburl{http://arxiv.org/abs/math.dg/0307010}.

\bibitem[IZ13]{iglesias1}
Patrick Iglesias-Zemmour, {\em Diffeology}.
\newblock Number 185 in Mathematical Surveys and Monographs. AMS, 2013.
\newblock \kobiburl{http://math.huji.ac.il/$\sim$piz/documents/Diffeology.pdf}.

\bibitem[Kap00]{kapustin1}
Anton Kapustin, \quot{D-branes in a topologically nontrivial B-field}.
\newblock {\em Adv. Theor. Math. Phys.}, 4:127, 2000.
\newblock \kobiburl{http://arxiv.org/abs/hep-th/9909089}.

\bibitem[Kar12]{Karoubia}
Max Karoubi, \quot{Twisted bundles and twisted K-theory}.
\newblock In {\em Topics in noncommutative geometry}, volume~16 of {\em Clay
  Math. Proc.} AMS, 2012.
\newblock \kobiburl{https://arxiv.org/abs/1012.2512}.

\bibitem[KLW]{Kristel2022}
Peter Kristel, Matthias Ludewig, and Konrad Waldorf, \quot{The insidious
  bicategory of algebra bundles}.
\newblock Preprint.
\newblock \kobiburl{https://arxiv.org/abs/2204.03900}.

\bibitem[Koc03]{Kock2003}
Joachim Kock, {\em Frobenius algebras and 2D topological quantum field theory}.
\newblock Cambridge Univ. Press, 2003.

\bibitem[Laz01]{Lazaroiu2001}
C.~I. Lazaroiu, \quot{On the structure of open-closed topological field theory
  in two dimensions}.
\newblock {\em Nuclear Phys. B}, 603:497--530, 2001.
\newblock \kobiburl{https://arxiv.org/abs/hep-th/0010269}.

\bibitem[LP08]{Lauda2008}
Aaron~D. Lauda and Hendryk Pfeiffer, \quot{Open-closed strings: Two-dimensional
  extended TQFTs and Frobenius algebras}.
\newblock {\em Topology Appl.}, 155(7):623--666, 2008.
\newblock \kobiburl{https://arxiv.org/abs/math/0510664}.

\bibitem[Lur09]{lurie1}
Jacob Lurie, \quot{On the classification of topological field theories}.
\newblock In {\em Current developments in mathematics}, volume 2008, pages
  129--280. International Press of Boston, 2009.
\newblock \kobiburl{https://arxiv.org/abs/0905.0465}.

\bibitem[Mei02]{meinrenken1}
Eckhard Meinrenken, \quot{The basic gerbe over a compact simple {L}ie group}.
\newblock {\em Enseign. Math., II. S\'er.}, 49(3--4):307--333, 2002.
\newblock \kobiburl{http://arxiv.org/abs/math/0209194}.

\bibitem[MS00]{murray2}
Michael~K. Murray and Daniel Stevenson, \quot{Bundle gerbes: stable isomorphism
  and local theory}.
\newblock {\em J. Lond. Math. Soc.}, 62:925--937, 2000.
\newblock \kobiburl{http://arxiv.org/abs/math/9908135}.

\bibitem[MS06]{Moore2006}
Gregory~W. Moore and Graeme Segal, \quot{D-branes and K-theory in 2D
  topological field theory}.
\newblock 2006.
\newblock \kobiburl{https://arxiv.org/abs/hep-th/0609042}.

\bibitem[Mur96]{murray}
Michael~K. Murray, \quot{Bundle gerbes}.
\newblock {\em J. Lond. Math. Soc.}, 54:403--416, 1996.
\newblock \kobiburl{http://arxiv.org/abs/dg-ga/9407015}.

\bibitem[NS11]{nikolaus2}
Thomas Nikolaus and Christoph Schweigert, \quot{Equivariance in higher
  geometry}.
\newblock {\em Adv. Math.}, 226(4):3367--3408, 2011.
\newblock \kobiburl{http://arxiv.org/abs/1004.4558}.

\bibitem[Pol96]{polchinski}
Joseph Polchinski, \quot{TASI Lectures on D-Branes}.
\newblock In {\em Fields, strings and duality. Proceedings, Summer School,
  Theoretical Advanced Study Institute in Elementary Particle Physics}, pages
  293--356. 1996.
\newblock \kobiburl{http://arxiv.org/abs/hep-th/9611050}.

\bibitem[Sar65]{sard1}
Arthur Sard, \quot{Hausdorff Measure of Critical Images on Banach Manifolds}.
\newblock {\em Amer. J. Math.}, 87(1):158--174, 1965.

\bibitem[Seg04]{segal1}
Graeme Segal, \quot{The definition of conformal field theory}.
\newblock In {\em Topology, geometry and quantum field theory}, volume 308 of
  {\em London Math. Soc. Lecture Note Ser.}, pages 423--577. Cambridge Univ.
  Press, 2004.

\bibitem[ST04]{stolz1}
Stephan Stolz and Peter Teichner, \quot{What is an elliptic object?}
\newblock In {\em Topology, geometry and quantum field theory}, volume 308 of
  {\em London Math. Soc. Lecture Note Ser. }, pages 247--343. Cambridge Univ. Press, 2004.
\newblock \kobiburl{http://math.berkeley.edu/$\sim$teichner/Papers/Oxford.pdf}.

\bibitem[ST05]{stolz3}
Stephan Stolz and Peter Teichner, \quot{The spinor bundle on loop spaces}.
\newblock Unpublished, 2005.
\newblock
  \kobiburl{http://people.mpim-bonn.mpg.de/teichner/Math/Surveys_files/MPI.pdf}.

\bibitem[Ste00]{stevenson1}
Daniel Stevenson, {\em The geometry of bundle gerbes}.
\newblock PhD thesis, University of Adelaide, 2000.
\newblock \kobiburl{http://arxiv.org/abs/math.DG/0004117}.

\bibitem[STV14]{Schweigert2014}
Christoph Schweigert, Christopher Tropp, and Alessandro Valentino, \quot{A
  Serre-Swan theorem for gerbe modules on \'{e}tale Lie groupoids}.
\newblock {\em Theory Appl. Categ.}, 29:819--835, 2014.
\newblock \kobiburl{https://arxiv.org/abs/1401.2824}.

\bibitem[SW09]{schreiber3}
Urs Schreiber and Konrad Waldorf, \quot{Parallel transport and functors}.
\newblock {\em J. Homotopy Relat. Struct.}, 4:187--244, 2009.
\newblock \kobiburl{http://arxiv.org/abs/0705.0452v2}.

\bibitem[SW11]{schreiber5}
Urs Schreiber and Konrad Waldorf, \quot{Smooth functors vs. differential
  forms}.
\newblock {\em Homology, Homotopy Appl.}, 13(1):143--203, 2011.
\newblock \kobiburl{http://arxiv.org/abs/0802.0663}.

\bibitem[SW13]{schreiber2}
Urs Schreiber and Konrad Waldorf, \quot{Connections on non-abelian gerbes and
  their holonomy}.
\newblock {\em Theory Appl. Categ.}, 28(17):476--540, 2013.
\newblock \kobiburl{http://arxiv.org/abs/0808.1923}.

\bibitem[Wal07]{waldorf1}
Konrad Waldorf, \quot{More morphisms between bundle gerbes}.
\newblock {\em Theory Appl. Categ.}, 18(9):240--273, 2007.
\newblock \kobiburl{http://arxiv.org/abs/math.CT/0702652}.

\bibitem[Wal10]{waldorf5}
Konrad Waldorf, \quot{Multiplicative bundle gerbes with connection}.
\newblock {\em Differential Geom. Appl.}, 28(3):313--340, 2010.
\newblock \kobiburl{http://arxiv.org/abs/0804.4835v4}.

\bibitem[Wal11]{waldorf13}
Konrad Waldorf, \quot{A loop space formulation for geometric lifting problems}.
\newblock {\em J. Aust. Math. Soc.}, 90:129--144, 2011.
\newblock \kobiburl{http://arxiv.org/abs/1007.5373}.

\bibitem[Wal12]{waldorf9}
Konrad Waldorf, \quot{Transgression to loop spaces and its inverse, {I}:
  Diffeological bundles and fusion maps}.
\newblock {\em Cah. Topol. G\'eom. Diff\'er. Cat\'eg.}, LIII:162--210, 2012.
\newblock \kobiburl{http://arxiv.org/abs/0911.3212}.

\bibitem[Wal16]{waldorf10}
Konrad Waldorf, \quot{Transgression to loop spaces and its inverse, {II}:
  Gerbes and fusion bundles with connection}.
\newblock {\em Asian J. Math.}, 20(1):59--116, 2016.
\newblock \kobiburl{http://arxiv.org/abs/1004.0031}.

\end{thebibliography}

\end{document}

 the inclusion of \quot{flat} loops $PM \to